 \titleformat{\subparagraph}[hang]{\normalfont}{\thesubparagraph}{0pt}{\underline}
 \titleformat{\paragraph}[hang]{\normalfont}{\theparagraph}{0pt}{\myuline}
\newtheorem{theorem}{Theorem}[section]
\newtheorem{lemma}[theorem]{Lemma}
\newtheorem{proposition}[theorem]{Proposition}
\newtheorem{definition}[theorem]{Definition}
\newtheorem{remark}[theorem]{Remark}
\newtheorem*{theorem*}{Theorem}
\newtheorem*{prop*}{Proposition}
\newcommand{\bea}{\begin{eqnarray}}
\newcommand{\eea}{\end{eqnarray}}
\def\beaa{\begin{eqnarray*}}
\def\eeaa{\end{eqnarray*}}
\newcommand{\MM}{\mathcal{M}}
\def\DD{{\mathcal D}}
\def\TT{{\mathcal T}}
\def\PP{\mathcal{P}}
\def\D{{\bf D}}
\def\F{{\mathbf{F}}}
\def\R{{\bf R}}
\def\W{{\bf W}}
\def\g{{\bf g}}
\def\CCC{{\Bbb C}}
\def\Ddot{\dot{\D}}
\def\RR{\mathcal{R}}
\def\a{{\alpha}}
\def\al{{\alpha}}
\def\b{{\beta}}
\def\be{{\beta}}
\def\ga{\gamma}
\def\de{\delta}
\def\De{\Delta}
\def\ep{\epsilon}
\def\la{\lambda}
\def\La{\Lambda}
\def\om{\omega}
\def\vphi{\varphi}
\def\si{\sigma}
\def\th{\theta}
\def\ze{\zeta}
\def\nab{\nabla}
\def\trch{{\mbox tr}\, \chi}
\def\chih{{\hat \chi}}
\def\chib{{\underline \chi}}
\def\chibh{{\underline{\chih}}}
\def\etab{{\underline \eta}}
\def\omb{{\underline{\om}}}
\def\bb{{\underline{\b}}}
\def\aa{{\underline{\a}}}
\def\xib{{\underline \xi}}
\def\Xib{\underline{\Xi}}
\def\lap{{\triangle}}
\def\atr{\,^{(a)}\mbox{tr}}
\def\trchb{{\tr \,\chib}}
\def\atrch{\atr\chi}
\def\atrchb{\atr\chib}
\def\rhod{\,\dual\rho}
\def\ZZ{\mathcal{Z}}
\def\qf{\frak{q}}
\def\pf{\mathfrak{p}}
\def\Ffr{\mathfrak{F}}
\def\Bfr{\mathfrak{B}}
\def\sk{\mathfrak{s}}
\def\bF{\,^{(\mathbf{F})} \hspace{-2.2pt}\b}
\def\bbF{\,^{(\mathbf{F})} \hspace{-2.2pt}\bb}
\def\rhoF{\,^{(\mathbf{F})} \hspace{-2.2pt}\rho}
\def\BF{\,^{(\mathbf{F})} \hspace{-2.2pt} B}
\def\BBF{\,^{(\mathbf{F})} \hspace{-2.2pt}\underline{B}}
\def\PF{\,^{(\mathbf{F})} \hspace{-2.2pt} P}
\def\Lb{{\,\underline{L}}}
\def\pr{{\partial}}
\def\les{\lesssim}
\def\c{\cdot}
\def\dual{{\,\,^*}}
\def\div{{\mbox div\,}}
\def\curl{{\mbox curl\,}}
\def\lot{\mbox{ l.o.t.}}
\def\Hb{\,\underline{H}}
\def\Ab{\underline{A}}
\def\Bb{\underline{B}}
\def\Xb{\underline{X}}
\def\tr{\mbox{tr}}
\def\hot{\widehat{\otimes}}
\def\squared{\dot{\square}}
\def\lab{\label}
\def\DDov{\ov{\DD}}
\def\nn{\nonumber}
\def\ov{\overline}
\def\DDs{ \, \DD \hspace{-2.4pt}\dual    \mkern-20mu /}
\def\DDs{ \, \DD \hspace{-2.4pt}\dual    \mkern-20mu /}
\def\DDd{ \, \DD \hspace{-2.4pt}    \mkern-8mu /}
\def\Kh{\,^{(h)}K}
\newcommand{\nabb}{\nab\mkern-13mu /\,}
\def\OO{\mathcal{O}}
\def\KK{\mathcal{K}}
\def\QQ{\mathcal{Q}}
\def\LL{\mathcal{L}}
\def\GG{\mathcal{G}}
\def\BB{\mathcal{B}}
\def\EE{\mathcal{E}}
\def\AA{\mathcal{A}}
\def\VV{\mathcal{V}}
\def\SS{\mathcal{S}}
\def\FF{\mathcal{F}}
\def\UU{\mathcal{U}}
\def\WW{\mathcal{W}}
\def\piX{\, ^{(X)}\pi}
\def\That{\widehat{T}}
\def\HH{\mathcal{H}}
\def\SSS{\mathbb{S}}
\def\gadot{\dot{\gamma}}
\def\e{{\bf e}}
\def\k{{\bf k}}
\def\g{{\bf g}}
\def\lz{{\bf \ell_z}}
\def\aund{{\underline{a}}}
\def\bund{{\underline{b}}}
\def\cund{{\underline{c}}}
\def\psia{{\psi_{\underline{a}}}}
\def\psib{{\psi_{\underline{b}}}}
\def\psic{{\psi_{\underline{c}}}}
\def\pfa{{\pf_{\underline{a}}}}
\def\pfb{{\pf_{\underline{b}}}}
\def\pfc{{\pf_{\underline{c}}}}
\def\qfa{{\qf_{\underline{a}}}}
\def\qfb{{\qf_{\underline{b}}}}
\def\qfc{{\qf_{\underline{c}}}}
     \def\dk{\mathfrak{d}}
\def\Db{{\dot{\D}}}
\def\Sb{{S_\bund}}
\def\Sc{{S_\cund}}
\def\RRtp{{\widetilde{\RR}}}
\newcommand{\Edth}{\textnormal{\DH}}
\begin{document}

 \title{\LARGE \textbf{The Carter tensor and the physical-space analysis \\ in perturbations of Kerr-Newman spacetime}}
 
 \author[1]{{\Large Elena Giorgi\footnote{egiorgi@princeton.edu}}}

\affil[1]{\small  Princeton University, Gravity Initiative, Princeton, 
United States \vspace{0.2cm} \ }

\maketitle

\begin{abstract} 

The Carter tensor is a Killing tensor of the Kerr-Newman spacetime, and its existence implies the separability of the wave equation. Nevertheless, the Carter operator is known to commute with the D'Alembertian only in the case of a Ricci-flat metric. We show that, even though the Kerr-Newman spacetime satisfies the non-vacuum Einstein-Maxwell equations, its curvature and electromagnetic tensors satisfy peculiar properties which imply that the Carter operator still commutes with the wave equation. 
This feature allows to adapt to Kerr-Newman the physical-space analysis of the wave equation in Kerr by Andersson-Blue \cite{And-Mor}, which avoids frequency decomposition of the solution by precisely making use of the commutation with the Carter operator.

We also extend the mathematical framework of physical-space analysis to the case of the Einstein-Maxwell equations on Kerr-Newman spacetime, representing coupled electromagnetic-gravitational perturbations of the rotating charged black hole. The physical-space analysis is crucial in this setting as the coupling of spin-1 and spin-2 fields in the axially symmetric background prevents the separation in modes as observed by Chandrasekhar \cite{Chandra}, and therefore represents an important step towards an analytical proof of the stability of the Kerr-Newman black hole.

\end{abstract}

\bigskip\bigskip

\tableofcontents

\section*{Introduction}
\addcontentsline{toc}{section}{Introduction}

The Kerr-Newman spacetimes $(\MM_{\g_{M, a, Q}}, \g_{M, a, Q})$ \cite{Newman} are a three-parameter family of solutions to the Einstein-Maxwell equations. They are the most general explicit black hole solutions to the Einstein equation, representing rotating and charged black holes, where $M$ is the mass of the black hole, $a$ its angular momentum and $Q$ its charge, in the (sub-extremal and extremal) range $a^2+Q^2\leq M^2$. The Kerr-Newman metric in Boyer-Lindquist coordinates $(t, r,  \th, \vphi)$ takes the form
\beaa
\g_{M, a, Q}=-\frac{\Delta}{|q|^2}\left( dt- a \sin^2\th d\vphi\right)^2+\frac{|q|^2}{\Delta}dr^2+|q|^2 d\th^2+\frac{\sin^2\th}{|q|^2}\left(a dt-(r^2+a^2) d\vphi \right)^2,
\eeaa
where $ \Delta = r^2-2Mr+a^2+Q^2$ and $ |q|^2=r^2+a^2\cos^2\th$.
The Kerr-Newman spacetimes generalize the Kerr solution (for vanishing charge), the Reissner-Nordstr\"om solution (for vanishing angular momentum) and the Schwarzschild solution (for both $a=Q=0$),  and are expected to be the final state of gravitational collapse \cite{NMR}.

The problem of stability of the domain of outer communication of the Kerr-Newman spacetime as solution to the Einstein-Maxwell equations is an important open problem in General Relativity and has implications for the physical relevance of the Kerr-Newman solution as a realistic black hole. Tremendous progress towards the proof of stability of the black hole solutions has been made in the past fifteen years, with works which encompass scalar, electromagnetic, gravitational perturbations of Schwarzschild, Reissner-Nordstr\"om, Kerr and Kerr-Newman spacetimes, from mode stability, to linear and fully non-linear stability, see for example \cite{DR11}\cite{TeukolskyDHR}\cite{KS}\cite{Kerr-lin1}\cite{Kerr-lin2}\cite{Y-R}\cite{DHRT}\cite{KS-Kerr} and references therein. 

\medskip

In \cite{Giorgi7} we started a program aimed to prove the linear stability of the Kerr-Newman family to coupled electromagnetic-gravitational perturbations. Numerical works strongly support stability for the Kerr-Newman family \cite{numerics1}\cite{numerics2}\cite{Mark}\cite{Wang2}, but nevertheless an analytical proof of even its weakest version, the mode stability, has not been obtained. 
 Mode solutions to the wave equation are solutions of the  separated form 
\bea\label{mode-solution}
\psi(r, t, \theta, \vphi)= e^{- i \omega t} e^{i m\vphi} R(r) S(\th),
\eea
where
 $\omega \in \mathbb{C}$ is the time frequency, and $m \in \mathbb{Z}$ is the azimuthal mode. Mode stability consists in showing that there are no solutions of the form \eqref{mode-solution} with bounded initial energy which are exponentially growing in time. 
 
 The above separated form for $\omega \in \mathbb{R}$ and $m \in \mathbb{Z}$ is related to the Fourier transform of the solution with respect to the symmetries of the spacetime (the stationary vectorfield $\partial_t$ and the axially symmetric $\partial_\vphi$), and therefore corresponds to its frequency decomposition. In addition to the two Killing vectorfields, the Kerr-Newman metric admits a Killing tensor, called the Carter tensor \cite{Carter}, which represents a hidden symmetry of the spacetime and which provides a third constant of motion which allows the full integrability of the geodesic flow. 
Because of such integrability, functions of the form \eqref{mode-solution} are solutions to the wave equation as long as $R(r)$ and $S(\th)$ respectively satisfy a radial and an angular ODE. The Carter separation introduces, in addition to the frequency $\omega$ and $m$, a real frequency parameter $\lambda_{m \ell}( a \omega)$ parametrized by $\ell \in \mathbb{N}_0$, which are the eigenvalues of an associated elliptic operator, whose eigenfunctions are the \textit{oblate spheroidal harmonics} $S_{\om m \ell}$ \cite{Teukolsky}.

As observed by Chandrasekhar \cite{Chandra} in the 80s, the methods developed in black hole perturbation theory for the other black hole solutions, such as Kerr or Reissner-Nordstr\"om, involving the separation in modes as in \eqref{mode-solution} and the proof of their mode stability, did not seem to be applicable for treating the coupled electromagnetic-gravitational perturbations of Kerr-Newman spacetime. In fact, in decomposing the system of equations in separated forms \eqref{mode-solution}, the electromagnetic field (of spin $\pm1$) and the gravitational field (of spin $\pm2$) get separated in spheroidal harmonics $S^{[s]}_{\om m \ell}$ of different spins $s$, which do not easily interact with the coupling operators appearing as consequence of the Einstein-Maxwell equations. On the other hand, for gravitational perturbations of Kerr there is no interaction between spheroidal harmonics of different spins, as only the spin $\pm2$ gravitational field is considered. In the case of the spherically symmetric Reissner-Nordstr\"om, the spheroidal harmonics reduce to the standard spherical harmonics, which commute with the coupling operators thanks to the Hodge theory on the sphere. Instead, in the most general case of electromagnetic-gravitational perturbations of Kerr-Newman, 
the interaction between spin-1 and spin-2 fields in the axially symmetric background prevents the separability in modes, and their  frequency decomposition is not possible due to the coupling of the equations of different spins, see Section \ref{section:spheroidal}. For more details, see also \cite{Chandra}\cite{Giorgi7}. 

As a consequence of the non separability in modes, we expect that the proof of the linear stability of the Kerr-Newman family necessitates a \textbf{physical-space analysis} of the coupled system of equations describing the interaction between gravitational and electromagnetic radiations, which in particular does not rely on decomposition in spheroidal harmonics. 

\medskip

The goal of this paper is to derive the mathematical framework to obtain estimates for the scalar wave equation
\bea\label{wave-equation-intro}
\square_{\g_{M, a, Q}} \psi=0
\eea
on the exterior region of Kerr-Newman black hole \textit{exclusively involving physical-space analysis}, i.e. no frequency decomposition of the solution, and which is robust enough to be adapted to the system of coupled Teukolsky and Regge-Wheeler equations obtained in our \cite{Giorgi7}, in the framework of the proof of linear stability. A physical-space analysis has also the advantage of being potentially more adaptable to the non-linear stability problem, see \cite{KS-Kerr} and \cite{GKS2}.
We prove the following:

\begin{theorem*} Boundedness of the energy flux through a global foliation on the exterior of Kerr-Newman spacetime $(\MM_{\g_{M, a, Q}}, \g_{M, a, Q})$ with $|a| \ll M$, and a suitable version of local integrated energy decay can be obtained  \textbf{exclusively through a physical-space analysis} for solutions to the wave equation \eqref{wave-equation-intro} arising from bounded initial energy on a suitable Cauchy surface.
\end{theorem*}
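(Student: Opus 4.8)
The plan is to prove the theorem by a purely physical-space vectorfield argument of the type introduced by Andersson-Blue \cite{And-Mor} for the wave equation on Kerr, in which the role that Fourier analysis plays in handling trapped frequencies is taken over entirely by the second-order Carter operator $\QQ$ built from the Killing tensor of $\g_{M,a,Q}$. The first step, which I expect to be the main obstacle, is to establish the commutation identity $[\square_{\g_{M,a,Q}},\QQ]=0$, so that whenever $\psi$ solves \eqref{wave-equation-intro} the functions $\QQ\psi$, $\pr_t\psi$ and $\pr_\vphi\psi$ do as well. For a Ricci-flat metric this is Carter's classical observation, following from the Killing-Yano equation; in general the obstruction to the commutation is a contraction of the Killing-Yano tensor with the Ricci tensor, which on Kerr-Newman equals the Maxwell stress-energy tensor of the background electromagnetic field rather than vanishing. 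The key point is that this obstruction still vanishes because the electromagnetic tensor is algebraically aligned with the principal null structure of $\g_{M,a,Q}$, hence with the Killing-Yano and Killing tensors; verifying that all the resulting curvature and electromagnetic contractions cancel is the technical heart of the argument.

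Granting the commutation, the second step is to run the energy method on a global foliation $\Si_\tau$ of the exterior by spacelike hypersurfaces terminating on the future event horizon $\HH^+$, with data of finite energy on the initial slice $\Si_0$. For the energy-boundedness part I would use as multiplier a vectorfield that is timelike throughout the exterior, built from the stationary Killing field corrected by a small multiple of $\pr_\vphi$ --- legitimate precisely because $|a|\ll M$ makes the ergoregion narrow and penetrable with $\pr_\vphi$-control, so that no genuine superradiance survives in physical space --- together with the Dafermos-Rodnianski redshift vectorfield near $\HH^+$. The divergence of the associated currents is controlled by the deformation tensors, whose error terms are either of lower order or small in $|a|$ relative to the subextremal Reissner-Nordstr\"om model, and this yields a first, trapping-degenerate, energy estimate.

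Third, for the local integrated energy decay I would use a Morawetz multiplier $f(r)\pr_r$ together with a scalar correction $w(r)\psi$, as in the analysis of $\square_{\g_{M,a,Q}}$ on Reissner-Nordstr\"om. Away from the effective photon sphere --- a band in $r$ that collapses to the Reissner-Nordstr\"om photon sphere $r=r_{\mathrm{ph}}$ as $a\to 0$ --- the resulting bulk term is nonnegative and controls the derivatives with the usual weights; at the trapped band the coercivity in $\pr_r$ degenerates. This is where $\QQ$ enters: adding to the current a carefully weighted contribution involving $\QQ$ and integrating by parts, using $[\square_{\g_{M,a,Q}},\QQ]=0$, produces a bulk term that controls $\QQ\psi$ with a positive coefficient exactly where $f(r)\pr_r$ degenerates. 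Since $\QQ$ agrees on the constant-$(t,r)$ two-surfaces with an elliptic operator up to a term (the $\pr_t^2$ contribution) that is controlled by the coercive energy flux when $|a|\ll M$, this recovers the angular derivatives at trapping, at the price of one extra derivative --- harmless because $\QQ\psi$ itself solves \eqref{wave-equation-intro} and is estimated by the same scheme.

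Finally I would absorb the remaining errors --- lower-order terms, Hardy-type boundary contributions, and the Maxwell-induced curvature corrections --- using the degenerate bulk, Hardy inequalities and smallness in $|a|$, and combine the Morawetz, redshift and boundedness estimates in the standard hierarchy to obtain boundedness of the energy flux through $\Si_\tau$ and the stated suitable version of local integrated energy decay. Beyond the commutation identity, the delicate point is the simultaneous choice of $f$, $w$ and the weight multiplying $\QQ$ so that the total bulk term is globally nonnegative --- strictly positive off the trapped band and coercive in $\QQ\psi$ on it --- while all error terms remain subordinate; this is the Andersson-Blue mechanism, and the genuinely new content is checking that the charge-dependent modifications of $\Delta$, of the metric coefficients and of the curvature do not destroy the required signs in the regime $|a|\ll M$.
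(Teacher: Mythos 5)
Your overall strategy coincides with the paper's: first prove that the Carter operator commutes with $\square_{\g_{M,a,Q}}$ despite the non-vanishing Ricci curvature (the paper does this by a direct null-frame computation of the commutator of Proposition \ref{commutation-KK-square}, using the special values of the Ricci tensor and of $\D K$; your Killing--Yano formulation of the obstruction is an acceptable variant, and you correctly flag the cancellation check as the technical heart), then run energy estimates with $T$ corrected by a small multiple of $Z$ plus the redshift, and finally a Morawetz estimate in which the hidden symmetry replaces frequency decomposition, with all errors absorbed for $|a|\ll M$. Up to that level of description you are following Theorem \ref{theo:Carter-operator-commutes-KN} and Sections \ref{subsection-preliminaries-wave}--\ref{section:energy-estimates}.

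The gap is in your account of how trapping is actually handled. First, a small but telling slip: with $X=\FF(r)\pr_r$ it is not the $\pr_r$-coercivity that degenerates at the trapped band (that degenerates only at the horizon and is repaired by the redshift); what degenerates are the $t$- and angular derivatives, and, worse, for non-axisymmetric solutions the bulk contains the sign-indefinite cross term $u\,\frac{2ar}{(r^2+a^2)^2}\That(\psi)\,\pr_\vphi\psi$ of \eqref{eq:expressions-AA-VV}, which no choice of $\FF$, $w$ and of a single additional ``weight multiplying the Carter operator'' can make nonnegative --- this indefinite term, not a mere radial band of degeneracy, is the physical-space manifestation of frequency-dependent trapping. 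The mechanism that removes it requires commuting with, and using as multipliers, \emph{all four} second-order operators $\SS_1=\pr_t^2$, $\SS_2=\pr_t\pr_\vphi$, $\SS_3=\pr_\vphi^2$, $\SS_4=\OO$, through the double-indexed currents $\QQ[\psi]_{\aund\bund\mu\nu}X^{\aund\bund\nu}$ with $u^{\aund\bund}=-h\,\RRtp'^{\aund}\LL^{\bund}$: only then does the integration by parts of Lemma \ref{lemma:integration-parts} reassemble the trapped and indefinite terms into the perfect square $\tfrac12 h\,L^{\a\b}\pr_\a\Psi\,\pr_\b\Psi$ for the combination $\Psi$ of \eqref{eq:Psi-choice-z}, which mixes $\pr_t^2\psi$, $\OO\psi$ and $\That\pr_\vphi\psi$ with coefficients vanishing at $r_{trap}$. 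Your plan never introduces $\pr_t^2$, $\pr_t\pr_\vphi$, $\pr_\vphi^2$ as multipliers (only $\pr_t\psi,\pr_\vphi\psi$ as commuted solutions), so the combination $\Psi$ cannot be formed and the cross term survives; moreover, the resulting estimate does not ``recover the angular derivatives at trapping'': in the final Morawetz bound the tangential derivatives still carry the cutoff $\widetilde{\mathbbm{1}}_{\{r\neq r^{RN}_{trap}\}}$, the gain from the extra derivatives being positivity of the $\Psi$-square, not removal of the trapping degeneracy. As written, your third step would fail for general (non-axisymmetric) solutions; it only works in the axisymmetric case, where the cross term vanishes and indeed no Carter commutation is needed (Section \ref{section:axysimmetry}).
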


More precisely, we obtain estimates through a physical-space analysis for axially symmetric solutions in the sub-extremal range $a^2+Q^2< M^2$, and  for general solutions in the slowly rotating range $|a| \ll M$, see Theorem \ref{theorem:general}. 

We then show that the physical-space analysis obtained for the wave equation can be extended to the coupled system of generalized Regge-Wheeler equations for $|a| \ll M$ obtained in \cite{Giorgi7} describing perturbations of Kerr-Newman black hole, see Proposition \ref{proposition:modified-OO-system}.

\medskip

\subsubsection*{Previous results on the wave equation in black hole backgrounds}

Boundedness and decay properties for scalar wave equations in Kerr and Kerr-Newman have been obtained in \cite{DR10}\cite{DR11b}\cite{DR13}\cite{Tataru}\cite{And-Mor} in the slowly rotating case, and in \cite{DRSR}\cite{Civin} in the full sub-extremal range. Nevertheless, the only proof obtained exclusively through a physical-space analysis is  the proof of Andersson-Blue \cite{And-Mor} in slowly rotating Kerr spacetimes. We now give a summary of previous results on the wave equation on black hole backgrounds.

In the spherically symmetric Schwarzschild and Reissner-Nordstr\"om  spacetimes, the Killing vectorfield $\partial_t$ is timelike everywhere in the exterior region, and the orbital null geodesics, i.e. null geodesics that neither fall into the black hole nor escape to infinity and which are then an obstruction to decay, all asymptote to a physical-space timelike cylinder, called the \textit{photon sphere}. The first property implies that superradiance is not present, as the vectorfield $\partial_t$ 
 defines a positive definite conserved energy. The second property implies that the trapping region of the spacetime does not depend on the frequency of the solution. 
  As the orbital null geodesics are restricted to a physical-space hypersurface, given by $\{ r=r_{trap}\}$ for some $r_{trap}$ outside the black hole ($r_{trap}=3M$ in Schwarzschild), spacetime energy estimates can be obtained through a vectorfield of the form $\FF(r) \partial_r$, with $\FF$ vanishing at $r=r_{trap}$, and the analysis of the wave equation can be performed in physical-space.
Physical-space analysis of the wave equation in spherically symmetric backgrounds have been obtained in \cite{KW}\cite{BS}\cite{DR09a}\cite{DR09}. See also \cite{Stogin}\cite{AAG}\cite{AAG1}, and \cite{extremal-1}\cite{extremal-2}\cite{AAG2} in the case of extremal Reissner-Nordstr\"om. 
Similarly, the study of the Maxwell equations has been obtained in \cite{BlueMax}\cite{ST}\cite{Federico}.

In the case of gravitational perturbations of Schwarzschild spacetime, the analysis of the separated mode solutions was obtained by the physics community  in the 70s,  see \cite{Regge-Wheeler}\cite{Bardeen-Press}\cite{Chandra}.
The first quantitative result for the linear gravitational perturbations of Schwarzschild through a physical-space analysis of the Teukolsky equation was obtained by Dafermos-Holzegel-Rodnianski in \cite{DHR}, see also \cite{mu-tao}\cite{Johnson}. The physical-space analysis of the linearized gravity exploited, in addition to the conserved energy and the spacetime estimates which degenerate at the photon sphere, a hierarchy of wave-like equations, from the Teukolsky to the Regge-Wheeler equation \cite{DHR}.   For non-linear gravitational perturbations of Schwarzschild, see the work by Klainerman-Szeftel \cite{KS} under the class of axially symmetric polarized perturbations, and the recent work by Dafermos-Holzegel-Rodnianski-Taylor \cite{DHRT} for the full non-linear stability.

In the case of electromagnetic-gravitational perturbations of sub-extremal Reissner-Nordstr\"om spacetime, the analysis of the separated mode solutions was also obtained by the physics community in \cite{Moncrief1}\cite{Moncrief2}\cite{Chandra}. The quantitative results for the linear electromagnetic-gravitational perturbations through the physical-space analysis of the Teukolsky system, and its derived Regge-Wheeler system, were obtained in our series of works \cite{Giorgi4}\cite{Giorgi5}\cite{Giorgi6}\cite{Giorgi7a}.

\medskip

In the axially symmetric Kerr and Kerr-Newman spacetimes, the analysis of the wave equation is complicated by two factors: the presence of an ergoregion, and therefore superradiance of the solution, and the dependence of the trapping region on the frequency of the solution. The Killing vectorfield $\partial_t$ becomes spacelike in a region outside the event horizon known as the \textit{ergoregion}, hence its conserved energy is not positive definite everywhere. Moreover, the trapped null geodesics are not confined to a hypersurface in physical-space, but rather cover an open region of the spacetime which depends on the energy and angular momentum of the geodesics, and therefore the trapping degeneracy for the wave equation depends on the frequency of the solution. For this reason the high frequency obstruction to decay given by the trapping region cannot be described by the classical vectorfield method \cite{Alinhac}. 

These problems have been first overcome in the case of small angular momentum, $|a| \ll M$, in which case both the superradiance and the trapping simplify. In \cite{DR11b}, Dafermos-Rodnianski showed that the superradiance can be controlled by analyzing the solution in its separated form \eqref{mode-solution} and decomposing it into its superradiant and non-superradiant parts. Crucially the superradiant part is not trapped, and it satisfies a local energy decay identity obtained by perturbing the one in Schwarzschild. In \cite{DR10}\cite{DR13}, Dafermos-Rodnianski also overcame the problem of capturing the trapping region using frequency-localized generalizations of the Morawetz multipliers obtained in Schwarzschild. Even though the null geodesics are not localized on a physical-space hypersurface, they are localized in frequency-space, as the potential of the radial ODE has a unique simple maximum in the trapped frequency range, whose value $r_{trap}(a\omega, m, \lambda_{m\ell})$ depends on the frequency parameters. This allows for the construction of a frequency-space analogue of the current $\FF(r) \partial_r$ which vanishes at $r_{trap}$ for each triple of trapped frequencies. See also the independent approach by Tataru-Tohaneanu in \cite{Tataru}.

In \cite{And-Mor}, Andersson-Blue obtained the first analysis of solutions exclusively in physical space  in slowly rotating Kerr spacetime. This approach, as the ones in frequency-decomposition \cite{DR10}\cite{DR13}\cite{Tataru}, makes use of the Carter hidden symmetry in Kerr, but not through the separation of the solution as in \eqref{mode-solution}, but rather as a physical-space commutator to the wave equation. In particular, as Killing tensors commute with the D'Alembertian operator in Ricci-flat spacetimes, the second order differential operator associated to the Carter tensor can be used as a symmetry operator in addition to the Killing vectorfields $\partial_t$ and $\partial_\vphi$. This allows to obtain a local energy decay identity at the level of three derivatives of the solution which degenerate near $r=3M$, as trapped null geodesics lie within $O(|a|)$ of the photon sphere $r=3M$.

In passing from the slowly rotating case $|a| \ll M$ to the full sub-extremal range in Kerr or Kerr-Newman, there is an intermediate step which present many simplifications, which is the case of axially symmetric solutions to the wave equation, i.e. $\partial_\vphi \psi=0$, for $a^2+Q^2 <M^2$. For those solutions, superradiance is effectively absent and the trapping region collapses to a physical-space hypersurface. Although $\partial_t$ still fails to be everywhere timelike, its associated energy through the horizon is non-negative as the Hawking vectorfield $\That:=\partial_t +\frac{a}{r^2+a^2}\partial_\vphi$ is the null generator of the horizon and timelike everywhere outside it. In particular, for axially symmetric solutions, the Hawking vectorfield $\That$ behaves like the Killing vectorfield $\partial_t$. As the dependence on the azimuthal frequency $m$ becomes trivial, the axially symmetric trapped null geodesics all asymptote towards a hypersurface $\{ r=r_{trap}\}$ in physical-space, where $r_{trap}$ is defined as the largest root of the polynomial (see Section \ref{section:trapped-null-geodesics})
\bea\label{intro:definition-TT}
 \TT:= r^3-3Mr^2 + ( a^2+2Q^2)r+Ma^2,
\eea
 and therefore the construction of the current $\FF(r) \partial_r$ simplifies, see \cite{DR10} for the frequency-space construction, and see \cite{Stogin} for a construction entirely in physical space. The analysis of the axially symmetric solutions to the wave equation in frequency space has also been extended to the extremal Kerr $|a|=M$ by Aretakis in \cite{extremal-kerr}.

The only result at this day which extends the local energy decay estimates to the full sub-extremal range $|a| < M$ for the wave equation in Kerr is the work by Dafermos-Rodnianski-Shlapentokh-Rothman \cite{DRSR} in frequency-space. See also \cite{Civin} for the sub-extremal $a^2+Q^2<M^2$ Kerr-Newman. In \cite{DRSR}, Dafermos-Rodnianski-Shlapentokh-Rothman perform a careful construction of frequency-dependent multiplier currents for the separated solutions, and crucially make use of the structure of trapping, i.e. the existence of a  simple maximum $r_{trap}(a\omega, m, \lambda_{m\ell})$ for the radial potential, and the fact that superradiant frequencies are not trapped, which they show it holds in the full sub-extremal range $|a|<M$. They then make use of a continuity argument in $a$ to justify the future integrability necessary to perform the Fourier transform in time.

In the case of electromagnetic or gravitational perturbations of Kerr spacetime, the analysis of the mode stability was obtained by the physics community in the 70s, see \cite{Teukolsky}\cite{Whiting}\cite{Chandra}. For a quantitative mode stability in sub-extremal Kerr see \cite{Yakov}, and in extremal Kerr see \cite{Rita}. 

For quantitative results for the linearized electromagnetic and gravitational perturbations of slowly rotating Kerr spacetime, see \cite{AB}\cite{TeukolskyDHR}\cite{ma1}\cite{ma2}, where a hierarchy of wave-like equations from Teukolsky to a generalized Regge-Wheeler equation is exploited. See also \cite{Kerr-lin1}\cite{Kerr-lin2}.
For the analysis of the linearized gravity and electromagnetic perturbations in the sub-extremal range Kerr $|a|<M$ see the work by Shlapentokh-Rothman-Teixeira da Costa \cite{Y-R}. For coupled linear-gravitational perturbations of Kerr-Newman spacetime, see our \cite{Giorgi7} for the derivation of the Teukolsky and Regge-Wheeler system of equations governing the perturbations.

For non-linear gravitational perturbations of Kerr, see the formalism developed in \cite{GKS}, and the recent work by Klainerman-Szeftel \cite{KS-Kerr}.
In the presence of positive cosmological constant, the non-linear stability of slowly rotating Kerr-de Sitter and Kerr-Newman-de Sitter spacetimes have been obtained in \cite{Hintz-Vasy}\cite{Hintz-M}.

Finally, electromagnetic-gravitational perturbations of Kerr-Newman spacetime have been considered in \cite{Chandra} and asymptotic solutions were obtained in \cite{Lee}. In \cite{Mark}, the Teukolsky equations were derived in the phantom gauge. In \cite{Giorgi7} we derived the coupled system of equations for gauge-invariant perturbations of Kerr-Newman. See also the recent \cite{MG} for the construction of an energy functional for axisymmetric perturbations of Kerr-Newman.

\subsubsection*{The physical-space analysis of the wave equation in Kerr-Newman}

As mentioned above, most of the results for scalar, electromagnetic and gravitational perturbations of Kerr or Kerr-Newman spacetimes rely on the separability in modes and the frequency-decomposition of the solution. Even though these methods are very effective, and they are at the present moment the only ones which allow for the analysis in the sub-extremal range for general solutions  \cite{DRSR}\cite{Y-R}, they are nevertheless not well suited for the analysis of coupled electromagnetic-gravitational perturbations of Kerr-Newman spacetime, as separability in modes cannot be obtained  in that case (see the introduction of \cite{Giorgi7} for more details). The notable exception among the above-mentioned methods is the physical-space analysis for the wave equation in slowly rotating Kerr by  Andersson-Blue \cite{And-Mor}, which makes crucial use of the Carter tensor in Kerr and the fact that the differential operator associated to the Carter tensor \textbf{commutes with the D'Alembertian operator in Ricci-flat spacetimes} \cite{Carter2}. 

The Carter tensor \cite{Carter} is a symmetric 2-tensor $K$ on Kerr and Kerr-Newman spacetimes which satisfies the Killing equation, i.e.
\bea\label{eq:Killing-eq-intro}
\D_{(\mu}K_{\nu\gamma)}=0
\eea
where $\D$ is the covariant derivative of the metric. In virtue of the Killing equation \eqref{eq:Killing-eq-intro}, the associated differential operator $\KK(\psi):= \D_\mu (K^{\mu\nu} \D_\nu \psi)$ commutes with the D'Alembertian operator $\square_{\g}=\D_\mu \D^\mu$ in the case of Ricci-flat metric. We say that $\KK$ is a \textit{symmetry operator} for the wave equation.
In \cite{And-Mor}, Andersson-Blue develop a \textit{generalized vectorfield method} which allows for commutations with second order differential operators, and then apply it to the Carter differential operator $\KK$ and its elliptic counterpart, together with the Killing vectorfields of the Kerr metric, to derive energy and Morawetz estimates for the solution.

The main obstruction to the application of Andersson-Blue's method \cite{And-Mor} to the case of Kerr-Newman spacetime is that, even though the metric admits a Killing tensor, its associated differential operator does not in general commute with the wave equation. 
The first result of this paper is to show that, even though the Kerr-Newman spacetime is not Ricci-flat, the Carter differential operator $\KK$ associated to the Carter tensor still \textbf{commutes with the D'Alembertian operator of Kerr-Newman}.  Interestingly enough, the commutation property is not a direct consequence of the Einstein-Maxwell equations, but rather of peculiar properties of
the curvature and electromagnetic components in Kerr-Newman. We prove the following, see Theorem \ref{theo:Carter-operator-commutes-KN}:

\begin{theorem*} Even though the Kerr-Newman metric is not Ricci-flat, the Carter differential operator $\KK$ is still a symmetry operator for the wave equation.
\end{theorem*}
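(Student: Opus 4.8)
The plan is to compute the commutator $[\KK,\square_\g]$ acting on an arbitrary scalar $\psi$ and show it vanishes. The natural starting point is the general commutation formula valid for any symmetric Killing 2-tensor $K$ on any pseudo-Riemannian manifold: writing $\KK(\psi)=\D_\mu(K^{\mu\nu}\D_\nu\psi)$, one has
\bea\label{eq:commutator-general}
[\KK,\square_\g]\psi = -\frac{2}{3}\D_\mu\Big( \big(\D_\nu (K^{\mu\nu}{}_{;\gamma})\big)\D^\gamma\psi \Big) + (\text{curvature terms}),
\eea
where the curvature terms involve contractions of $K$ with the Ricci tensor $\mathbf{R}_{\mu\nu}$ and its derivatives. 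In the Ricci-flat case these terms are absent, and moreover the Killing equation \eqref{eq:Killing-eq-intro} together with the Ricci-flat Bianchi identities forces the first term to vanish as well — this is Carter's classical observation \cite{Carter2}. For Kerr-Newman the metric is not Ricci-flat: by the Einstein-Maxwell equations, $\mathbf{R}_{\mu\nu}$ equals (a multiple of) the Maxwell stress-energy tensor $T_{\mu\nu}[\F]$, which is quadratic in the Maxwell field $\F$. So the curvature terms are genuinely present and one must show they cancel.

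The key is to use the special algebraic structure of Kerr-Newman. First, I would recall (or record as a lemma) that the Carter tensor on Kerr-Newman is reducible in the sense that it is built from the principal null directions: $K_{\mu\nu} = |q|^2 (e_3)_{(\mu}(e_4)_{\nu)} + r^2 \g_{\mu\nu}$ (up to normalization and conventions matching the $(e_3,e_4)$ null frame adapted to the principal null geodesics), equivalently $K = \Re(\bar q^{\,2})$-type expression coming from the Killing–Yano tensor $Y$ with $K_{\mu\nu} = Y_{\mu\alpha}Y_\nu{}^\alpha$. Second — and this is the heart of the matter — the Maxwell field $\F$ of Kerr-Newman is itself aligned with the same principal null directions (it is type D, non-null, with $\F$ proportional to $Q\,|q|^{-4}$ times the real/imaginary parts of $(e_3\wedge e_4 + \text{dual})$), so that $T_{\mu\nu}[\F]$, the Ricci tensor, and the Carter tensor $K$ are all simultaneously "diagonalized" in the principal null frame. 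Concretely I expect to show that $\mathbf{R}_{\mu\nu}$, $K_{\mu\nu}$, and $\g_{\mu\nu}$ span a commuting algebra of endomorphisms at each point, and that $\mathbf{R}_\mu{}^\alpha K_{\alpha\nu}$ is symmetric and proportional to something whose relevant divergence cancels. The cleanest route is probably to verify the identity at the level of the two scalar potentials: since on Kerr-Newman $K_{\mu\nu}\D^\mu\psi\,\D^\nu\psi$ together with $\g_{\mu\nu}\D^\mu\psi\,\D^\nu\psi$ reproduces the Carter constant of the geodesic/eikonal flow, one checks that the obstruction terms organize into a total divergence that vanishes identically using $\D_\mu(|q|^2)$, $\D_\mu(r^2)$, and the Maxwell equations $\D^\mu \F_{\mu\nu}=0$, $\D_{[\mu}\F_{\nu\gamma]}=0$.

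The main obstacle will be the bookkeeping in \eqref{eq:commutator-general}: isolating precisely which curvature contractions appear (this requires commuting three covariant derivatives past $K$ and carefully symmetrizing, using $\D_{(\mu}K_{\nu\gamma)}=0$ to replace $\D_\mu K_{\nu\gamma}$ by an antisymmetrized "field-strength"-type tensor $\D_{[\mu}K_{\nu]\gamma}$ plus corrections), and then showing the resulting expression, which a priori involves $\mathbf{R}_{\mu\nu}$, $\D_\gamma \mathbf{R}_{\mu\nu}$, and $K$, collapses to zero. I would handle this by: (i) deriving the general formula $[\KK,\square_\g]\psi$ in terms of the tensor $L_\mu{}^\gamma := \D_\nu \D^{(\nu} K^{\gamma)\alpha}{}_{;\alpha}$-type objects and the Ricci contraction $\mathbf{R}_{\alpha[\mu}K_{\nu]}{}^\alpha$; (ii) specializing to the Kerr-Newman frame, where $\mathbf{R}_{\mu\nu} = 2Q^2|q|^{-4}\big((e_3)_{(\mu}(e_4)_{\nu)} - \text{trace part}\big)$-type expression, and checking by direct substitution that $\mathbf{R}_{\alpha[\mu}K_{\nu]}{}^\alpha = 0$ (the Ricci and Carter tensors commute as endomorphisms) and that the derivative-of-Ricci contribution vanishes because $\D_\gamma(Q^2|q|^{-4})$ is, up to the already-cancelled commuting piece, proportional to $\D_\gamma \log|q|^2$ which pairs against $K$ in a way controlled by the Killing equation. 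An alternative, possibly shorter, argument: use the known fact (to be cited or re-derived) that on any Einstein–Maxwell type-D background the operator $\KK$ commutes with $\square_\g$ modulo terms built purely from $[\KK, \text{(Maxwell operator)}]$, and then invoke the alignment of $\F$ with the Killing–Yano tensor $Y$ (i.e. $\F$ is a closed-and-coclosed eigen-2-form of the natural operator built from $Y$) to kill those terms. In either case the punchline is the same: the non-vacuum contributions do not cancel for a generic Einstein–Maxwell solution, but they do for Kerr-Newman precisely because its curvature and electromagnetic tensors are aligned with the Carter/Killing–Yano structure.
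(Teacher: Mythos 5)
Your proposal follows essentially the same route as the paper: start from the general Killing-tensor commutator formula, which in the electrovacuum case leaves Ricci obstruction terms, and then kill them by direct substitution in the principal null frame using the explicit, principally-aligned (type D) forms of the Ricci, electromagnetic and Carter tensors of Kerr-Newman. One caveat to keep in mind when you carry out the substitution: the cancellation is a pointwise algebraic one among the three terms $\R_{\mu}{}^{\ep}\D^{\mu}K_{\ep\nu}$, $\R_{\nu}{}^{\ep}\D^{\mu}K_{\mu\ep}$ and $\D^{\a}\R^{\ep}{}_{\nu}K_{\a\ep}$, each of which is individually nonzero (in the paper they come out proportional to one another, $I_2=-I_1$, $I_3=2I_1$), so it is not explained by $\R$ and $K$ commuting as endomorphisms nor by organizing the terms into a total divergence; the direct frame computation you propose is exactly what delivers it.
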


The above Theorem then allows to extend the physical-space analysis of Andersson-Blue \cite{And-Mor} to Kerr-Newman spacetime. More precisely, we show that the Carter differential operator $\KK$ is given by, see Proposition \ref{prop:KK-OO}, 
\bea\label{intro-operator-KK}
\KK &=& -a^2 \cos^2\th \ \square_{\g_{M, a, Q}} + \OO,
\eea
where $\OO$ is a \textit{modified Laplacian} for the Kerr-Newman metric, given in Boyer-Lindquist coordinates by
\beaa
\OO &=& \frac{1}{\sin\th} \pr_\th(\sin\th\pr_\th)+\frac{1}{\sin^2\th} \pr^2_\vphi +2a \partial_t\partial_\vphi+a^2\sin^2\th\pr^2_t.
\eeaa
In order to obtain a physical-space analysis of the wave equation in Kerr-Newman spacetime  necessary to tackle the electromagnetic-gravitational perturbations of Kerr-Newman, we prove and make crucial use of the fact that the modified Laplacian $\OO$ obtained from the Carter differential operator  is a conformal symmetry operator for the wave equation, see Proposition \ref{prop:KK-OO}:
\bea\label{intro:commutation-OO}
[\OO, (r^2+a^2\cos^2\th) \square_{\g_{M, a, Q}}]=0.
\eea
This allows to apply Andersson-Blue's generalized vectorfield method \cite{And-Mor} to the case of Kerr-Newman. We now briefly recall how the physical-space analysis is obtained in \cite{And-Mor}. 

The \textit{vectorfield method} is a robust geometrical approach to obtain energy estimates for solutions to the wave equation. 
  The energy-momentum tensor associated to the wave equation is given by
\beaa
\QQ[\psi]_{\mu\nu}&=& \pr_\mu\psi \pr_\nu \psi -\frac 1 2 \g_{\mu\nu} \pr_\lambda \psi \pr^\lambda \psi,
\eeaa
and the wave equation is satisfied if and only if the divergence of the energy-momentum tensor $\QQ[\psi]$ vanishes. For a vectorfield $X$, called multiplier, the current associated to $X$  is defined by
 \beaa
 \PP_\mu^{(X)}[\psi]&=&\QQ[\psi]_{\mu\nu} X^\nu,
  \eeaa
and its divergence is then given by
  \beaa
  \D^\mu \PP_\mu^{(X)}[\psi]= \frac 1 2 \QQ[\psi]  \c\piX,
 \eeaa
 where $\piX_{\mu\nu}=\D_{(\mu} X_{\nu)}$ is the deformation tensor of the vectorfield $X$. Recall that if $X$ is a Killing vectorfield, then $\piX=0$. 
 
 The main idea to derive estimates through the vectorfield method is to use the divergence identity for the current $ \PP_\mu^{(X)}[\psi]$ for appropriate vectorfields $X$, and obtain
\beaa
\int_{\MM}   \D^\mu \PP_\mu^{(X)}[\psi]= \int_{\partial \MM} \PP_\mu^{(X)}[\psi] \cdot n_{\partial \MM}
\eeaa
for some causal domain $\MM$.

To obtain local integrated energy decay, one wants to use a radial vectorfield  $X=\FF(r) \pr_r$, for a well chosen function $\FF$, such that the divergence of the current $  \D^\mu \PP_\mu^{(X)}[\psi]$ above is positive definite. Nevertheless, for general solutions to the wave equation, because of the complicated structure of trapping described above, this cannot be obtained. In fact, for $X=\FF(r) \pr_r$ the above current gives 
\beaa
  \D^\mu \PP_\mu^{(\FF \partial_r)}[\psi]&=&\AA |\pr_r\psi|^2 + \UU^{\a\b}(\pr_\a \psi )(\pr_\b \psi ),
\eeaa
for some positive coefficient $\AA$, and
where $\UU^{\a\b}(\pr_\a \psi )(\pr_\b \psi )$ contains only derivatives in $t, \theta, \vphi$ which are degenerate at trapping. More precisely, one obtains, see \eqref{eq:expressions-AA-VV},
\beaa
  \UU^{\a\b}(\pr_\a \psi )(\pr_\b \psi )&=& \frac{u \TT}{(r^2+a^2)^3}  |\nab \psi|^2-  u \frac{2ar}{(r^2+a^2)^2} \That( \psi) \partial_\vphi \psi
  \eeaa
  where $u$ is an increasing function of $r$ which changes sign at $r_{trap}$, the root of the polynomial $\TT$ defined in \eqref{intro:definition-TT}. Therefore the coefficient of the angular derivatives $|\nab \psi|^2$, given by $\frac{u \TT}{(r^2+a^2)^3}$, is strictly positive and presents a degeneracy of multiplicity $2$ at $r=r_{trap}$. On the other hand, the term $ u \frac{2ar}{(r^2+a^2)^2} \That( \psi) \partial_\vphi \psi$ does not have a definite sign, and for this reason one fails to obtain a positive definite current by using the vectorfield $X=\FF(r) \partial_r$.

Observe that in the case of axially symmetric solutions, the term without definite sign vanishes, as $\partial_\vphi \psi=0$, and Morawetz and energy estimates can be obtained in physical space, see Section \ref{section:axysimmetry}, without recurring to the commutation with the Carter operator.
In this case, the difficulty is in defining the function $u$ for which all the terms of the divergence, together with a zero-th order term, are positive in the sub-extremal range $a^2+Q^2<M^2$, as obtained in Kerr by Stogin \cite{Stogin}.

In the case of general solutions, Andersson-Blue \cite{And-Mor} introduced a \textit{generalized vectorfield method} which allows for higher-order symmetry operators as multipliers. In virtue of the commutation property \eqref{intro:commutation-OO}, 
the set of the second order operators $\SS_\aund$, for $\aund=1,2,3,4$, given by
\beaa
\SS_1=\partial_t^2, \qquad \qquad \SS_2=\partial_{t} \partial_{\vphi}, \qquad \qquad \SS_3=\partial_\vphi^2, \qquad \qquad \SS_4=\OO
\eeaa
are conformal symmetry operators for the wave equation in Kerr-Newman spacetime, and the commuted-solutions
\beaa
\psia:= \SS_\aund(\psi), \qquad \aund=1,2,3,4,
\eeaa
are also solutions to the wave equation.
 The generalized energy-momentum tensor  is then defined as
\beaa
\QQ[\psi]_{\aund \bund \mu\nu}&=& \pr_\mu\psia \pr_\nu \psib -\frac 1 2 \g_{\mu\nu} \pr_\lambda \psia \pr^\lambda \psib, \qquad \text{ for $\aund, \bund=1,2,3,4$. }
\eeaa
  Let $\mathbf{X}$ be a double-indexed collection of vector fields $\mathbf{X}=\{ X^{\underline{a} \underline{b}}\}$, the generalized current associated to $\mathbf{X}$ is defined by 
 \beaa
 \PP_\mu^{(\mathbf{X})}[\psi]&=&\QQ[\psi]_{\aund \bund \mu\nu} X^{\aund\bund\nu},  
  \eeaa
and its divergence is given by
  \beaa
  \D^\mu \PP_\mu^{(\mathbf{X})}[\psi]&= \frac 1 2 \QQ[\psi]_{\aund\bund}  \c \D_{(\mu} X^{\aund\bund}_{\nu)}.
 \eeaa

As for the  standard vectorfield method, the goal is to apply the divergence identity to the above generalized currents for appropriate double-indexed collection of vector fields $\mathbf{X}$, where one sums over the underlined indices $\aund$. 
Just like in the standard vectorfield method,  when applied to $\mathbf{X}=\FF^{\aund\bund}(r) \partial_r$,  the generalized current gives
\beaa
  \D^\mu \PP_\mu^{(\FF^{\aund\bund} \partial_r)}[\psi]  &=\AA^{\aund\bund} \, \pr_r\psia \pr_r\psib + \UU^{\a\b\aund\bund} \,( \pr_\a \psia) \, (\pr_\b \psib  )
 \eeaa
for some positive coefficients $\AA^{\aund\bund}$, and
where $\UU^{\a\b\aund\bund} \,( \pr_\a \psia) \, (\pr_\b \psib  )$ contains only derivatives in $t, \theta, \vphi$ of the $\psia$ which are degenerate at trapping.

The main advantage in going to the higher-order multipliers in the generalized vectorfield method is the fact that now one has the flexibility of interchanging the derivatives applied to $\psi$ through an integration by parts, and the trapped term $\UU^{\a\b\aund\bund} \,( \pr_\a \psia) \, (\pr_\b \psib  )$ can in fact be rewritten as a positive definite term. More precisely, one can write (see Lemma \ref{lemma:integration-parts})
                \beaa
 \UU^{\a\b\aund\bund}\pr_\a \psia \, \pr_\b \psib        &=& \frac 1 2  h   \big( |\partial_t \Psi|^2 + |\nab \Psi|^2 \big) + \text{boundary terms},
      \eeaa
      for a positive function $h$ and where $\Psi$ is a trapped linear combinations of second order derivatives of $\psi$, schematically given by, see \eqref{eq:Psi-choice-z},
\beaa
 \Psi&=& - \frac{2\TT}{ (r^2+a^2)^3}\pr_t^2 \psi  -\frac{2\TT}{(r^2+a^2)^3}  \OO(\psi)+  \frac{4ar}{(r^2+a^2)^2}   \That(\psi) \pr_\vphi\psi.
\eeaa
One can then use the above positivity to express the generalized current as a positive definite current for the original $\psi$ for small angular momentum $|a| \ll M$ as in \cite{And-Mor}.

\subsubsection*{Application to the Einstein-Maxwell equations}
 
In order to apply the previous techniques for the wave equation to the more interesting case of coupled electromagnetic-gravitational perturbations of Kerr-Newman, we need to extend the above physical-space analysis to the case of coupled generalized Regge-Wheeler equations describing the perturbations.

As a consequence of the Einstein-Maxwell equations, in \cite{Giorgi7} we showed that there exists a pair of gauge-invariant tensorial quantities, denoted $\pf$ and $\qf^\F$, which satisfy the following schematic system of equations, see Theorem \ref{main-theorem-RW} for the precise formulation,
\bea\label{eq:intro-gRW}
\begin{split}
 \square_\g\pf-i  \frac{2a\cos\th}{|q|^2}\nab_T \pf  -V_1  \pf &=4Q^2 \frac{\ov{q}^3 }{|q|^5} \left(  \ov{\DD} \c  \qf^\F  \right) + \lot \\
\square_\g\qf^\F-i  \frac{4a\cos\th}{|q|^2}\nab_T \qf^\F -V_2  \qf^\F &=-   \frac 1 2\frac{q^3}{|q|^5} \left(  \DD \hot  \pf  \right) + \lot
\end{split}
 \eea
 where $q=r+i a \cos\th$, $\ov{q}=r-i a \cos\th$, $|q|^2=r^2+a^2\cos^2\th$, and $\ov{\DD} \c$ and $\DD \hot$ are angular operators responsible for the coupling between the two quantities. In Section \ref{section:spheroidal}, we explain why those angular operators on the right hand side of the above equations prevent the separability in modes in Kerr-Newman.
 
 Nevertheless, the precise structure of the right hand side has good properties when interpreted in \textit{physical-space}, in terms of the energy-momentum tensor of the equations. Indeed, the combined energy-momentum tensor defined as
 \beaa
 \QQ[\pf, \qf^\F]_{\mu\nu}&:=& \QQ[\pf]_{\mu\nu}+8Q^2 \QQ[\qf^\F]_{\mu\nu},
 \eeaa
has good divergence properties. More precisely, the highest-order coupling terms at the level of divergence of the energy-momentum tensor cancel out in physical-space, see Section \ref{section:cancellation}.

There is still one problem in extending the Andersson-Blue method described above to the case of the generalized Regge-Wheeler equations as in \eqref{eq:intro-gRW}. The Carter differential operator and the modified Laplacian $\OO$, which is a conformal symmetry for the D'Alembertian $\square_{\g}$ is not a conformal symmetry for the system of equations, again because of the presence of the angular operators on the right hand side. Those angular operators do not commute with the modified Laplacian $\OO$, but rather their commutator involves a modified Gauss curvature term, denoted $\Kh$. The scalar $\Kh$ is defined as a curvature component of the (non-integrable) horizontal structure associated to the principal null frame in Kerr or Kerr-Newman, see \cite{GKS}, and it reduces to the Gauss curvature of the spheres in the case of spherically symmetric background.

Even though the modified Laplacian $\OO$ does not commute with the right hand side of the generalized Regge-Wheeler equations, their symmetric structure allows to define modified Laplacian operators involving Gauss curvature which do commute with the system \eqref{eq:intro-gRW}. They are given by 
\beaa
\widehat{\pf}:=\big(  \OO +(c+3)|q|^2\Kh \big) \pf , \qquad \widehat{\qf^\F}:=\big(  \OO +c \ |q|^2\Kh \big) \qf^\F
\eeaa
for any real number $c$, and they can be combined to the other symmetry operators to apply Andersson-Blue's method to the system of perturbations of Kerr-Newman.

We finally show that with the above definition of modified symmetry operators the integration by parts which allows to create positive definite terms in $\UU^{\a\b\aund\bund}\pr_\a \psia \, \pr_\b \psib$ can still be performed in this more general setting.

\subsubsection*{Structure of the paper}

This paper is organized as follows.

 In Section \ref{section:Kerr-Newman},  we define Kiling tensors and their associated differential operators in a general manifold, and we compute the commutator with the D'Alembertian operator in terms of the Ricci curvature of the metric. We then show that the Carter operator is a symmetry operator in Kerr-Newman.

 In Section \ref{subsection-preliminaries-wave} we collect the main properties of the wave equation in Kerr-Newman and the vectorfield method. We also derive the equations of trapped null geodesics in Kerr-Newman, and obtain energy and local decay estimates in physical space for axially symmetric solutions to the wave equation in the sub-extremal range $a^2+Q^2< M^2$ and for general solutions for slowly rotating $|a| \ll M$ in Kerr-Newman spacetime.

In Section \ref{Einstein-Maxwell-equations}, we show how to apply the above physical-space analysis to the generalized Regge-Wheeler system describing the coupled electromagnetic-gravitational perturbations of Kerr-Newman black hole.

\bigskip

\noindent\textbf{Acknowledgements.} The author is grateful to Sergiu Klainerman for useful discussions. The author acknowledges the support of NSF grant DMS 2006741.

\section{The Carter tensor in Kerr-Newman spacetime}\label{section:Kerr-Newman}

In this section, we recall the definitions of Killing tensors and associated Killing differential operators for a general Lorentzian manifold $(\MM, \g)$, without any assumption on the curvature of $\g$. 
We then collect the commutator of the Killing differential operator with the D'Alembertian $\square_\g$ of the metric $\g$.

We then recall the main properties of the Kerr-Newman spacetime, and explicitly define its Killing tensor, known as Carter tensor. We will show that, because of the special structure of the Carter tensor and the electromagnetic and curvature components, the associated Carter operator is a symmetry of the wave equation.
Using a convenient expression for the Carter operator, we then define a conformal symmetry operator which can be interpreted as a modified Laplacian for the Kerr-Newman metric.

\subsection{Killing tensors and differential operators}\label{section-Killing}

Let $(\MM, \g)$ be a Lorentzian\footnote{The computations in this section are valid in any pseudo-Riemannian manifold $(\MM, \g)$.} manifold and let $\D$ denote the covariant derivative of $\g$. Recall that a vectorfield $X$ on $\MM$ is called \textit{Killing} if the Lie derivative of the metric with respect to $X$ vanishes, i.e. $\piX:= \mathcal{L}_X \g_{\mu\nu}=\D_{(\mu} X_{\nu)}=0$, where $\piX_{\mu\nu}$ is called the \textit{deformation tensor} of the vectorfield $X$. 

 The notion of Killing vectorfield can be extended to 2-tensors.

\begin{definition}
A Killing tensor for $(\MM, \g)$ is a symmetric $2$-tensor $K$ which satisfies the Killing equation:
\bea\label{Killing-equation}
\D_{(\mu} K_{\nu \rho)}=0.
\eea
\end{definition}

In the case of a Killing vectorfield $X$, the flow of $X$ represents a local isometry of $(\MM, \g)$, and similarly a hidden symmetry is associated to a Killing tensor $K$.

\medskip

Let $(\MM, \g)$ be a Lorentzian manifold possessing a Killing tensor $K$. 
We define a second order differential operator associated to $K$.
\begin{definition}\label{definition-Killing-operator} Given a Killing tensor $K$,  its associated second order differential operator $\KK$ applied to any tensor $\Psi$ in $\MM$ is defined by
\bea\label{definition-KK}
\mathcal{K}(\Psi)= \D_{\mu}( K^{\mu\nu} \D_\nu(\Psi)).
\eea
\end{definition}

As a consequence of $K$ being Killing, the above operator $\KK$ enjoys favorable properties of commutation with the D'Alembertian operator $\square_\g=\D^\mu \D_\mu$, which are also related to the Ricci curvature of the metric $\g$.

\begin{proposition}\label{commutation-KK-square} Let $(\MM, \g)$ be a Lorentzian manifold with a Killing tensor $K$. Then the commutator between the differential operator $\KK$ and the D'Alembertian operator $\square_\g$ for a scalar function $\phi$ is given by
\beaa
[\KK, \square_\g] \phi&=& \Big(\big( \D^\a \R-\frac 4 3  \D^\mu {{\R}_{\mu}}^\a \big) K_{\a\nu}+ \frac 2 3 \big({{\R}_{\mu}}^\ep  \D^\mu  K_{\ep \nu}-   {{\R}_{\nu}}^\ep  \D^\mu K_{\mu \ep}-  \D^\a {\R^\ep}_{\nu}   K_{\a \ep}\big) \Big) \D^\nu \phi
\eeaa
where $\R$ denotes the Ricci curvature or the scalar curvature depending if it appears a 2-tensor or a scalar respectively. 
\end{proposition}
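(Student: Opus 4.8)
The plan is to compute the commutator $[\KK,\square_\g]\phi$ directly by expanding both compositions of covariant derivatives, and then to systematically eliminate the terms that would be present for a general symmetric $2$-tensor but vanish once the Killing equation \eqref{Killing-equation} is imposed. The Killing equation $\D_{(\mu}K_{\nu\rho)}=0$ is equivalent to the three-term cyclic identity $\D_\mu K_{\nu\rho}+\D_\nu K_{\rho\mu}+\D_\rho K_{\mu\nu}=0$; contracting this with $\g^{\nu\rho}$ shows in particular that $\D_\mu(\tr K)=-2\,\D^\nu K_{\mu\nu}$, a relation I expect to use repeatedly to trade divergences of $K$ against gradients of its trace. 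The only curvature inputs allowed are the standard commutation formula $[\D_\mu,\D_\nu]$ acting on tensors of rank $\le 2$ (producing Riemann contractions), the second Bianchi identity, and the contracted Bianchi identity $\D^\mu\R_{\mu\nu}=\tfrac12\D_\nu\R$. The Riemann tensor is expected to drop out entirely in favour of Ricci/scalar curvature because every Riemann term should arise paired with $K$ in a way that can be symmetrized against the Killing equation; this is the essential mechanism behind the classical fact that $\KK$ commutes with $\square_\g$ in the Ricci-flat case.

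Concretely, I would proceed as follows. First, write $\KK(\square_\g\phi)=\D_\mu\big(K^{\mu\nu}\D_\nu\D^\a\D_\a\phi\big)$ and $\square_\g(\KK\phi)=\D^\a\D_\a\big(\D_\mu(K^{\mu\nu}\D_\nu\phi)\big)$, and expand both by Leibniz, keeping careful track of which derivative hits $K$ and which hits $\phi$. Second, commute derivatives so that in both expressions the third-order derivatives of $\phi$ appear in a common canonical order (say $\D_\a\D_\nu\D_\mu\phi$ fully symmetrized plus curvature corrections); the purely third-order-in-$\phi$ terms then cancel between the two sides precisely because $K$ is symmetric, leaving only terms with at most two derivatives on $\phi$. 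Third, among the remaining terms, collect: (a) those with two derivatives of $K$ and one of $\phi$ — these must cancel using $\D_{(\mu}K_{\nu\rho)}=0$ and its once-differentiated consequences (including the identity obtained by commuting derivatives in $\D\D K$, which reintroduces Riemann$\times K$ terms that then combine with the genuine curvature terms); (b) those with one derivative of $K$, one curvature factor, and one derivative of $\phi$; and (c) those with $K$ itself, one derivative of curvature, and one derivative of $\phi$. Fourth, in the group (a)/(b), use the Killing equation to replace $\D_\nu K_{\mu\rho}$ by $-\D_\mu K_{\nu\rho}-\D_\rho K_{\mu\nu}$ wherever a symmetrization is available, collapsing all Riemann contractions with $K$ into Ricci contractions; and use $\D^\mu K_{\mu\nu}=-\tfrac12\D_\nu(\tr K)$ to handle divergence terms. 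Finally, apply the contracted second Bianchi identity to bring any $\D\R_{\mu\nu}$ into the stated combination $\D^\a\R-\tfrac43\D^\mu{\R_\mu}^\a$ and $\D^\a{\R^\ep}_\nu$, and read off the coefficient of $\D^\nu\phi$, which should match the asserted formula.

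The main obstacle is bookkeeping: the expansion generates on the order of a dozen distinct index contractions, and the cancellations in step (a) require using the Killing equation in a differentiated form together with the Ricci identity for $\D\D K$, so one must be careful not to double-count Riemann terms when the same commutator appears on both sides. The delicate point is verifying that, after all substitutions, no uncontracted Riemann tensor survives — equivalently, that every $\mathrm{Riem}\cdot K$ term can indeed be symmetrized against $\D_{(\mu}K_{\nu\rho)}=0$; this is exactly where the hypothesis that $K$ is Killing (as opposed to merely symmetric) is used, and it is the step I would check most carefully. Once that is established, extracting the precise numerical coefficients ($\tfrac43$, $\tfrac23$) is a routine, if tedious, matter of collecting terms and applying the contracted Bianchi identity; the structure of the answer — a first-order operator in $\phi$ whose coefficient is linear in $K$ and in (one derivative of) Ricci/scalar curvature — is forced by homogeneity and the fact that the second- and third-order-in-$\phi$ pieces must cancel.
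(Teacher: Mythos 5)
Your route is essentially the paper's: expand the two compositions, reduce via the Ricci commutation identities for $[\D_\mu,\square_\g]$ acting on scalars, $1$-tensors and $2$-tensors, use the Killing equation together with its contracted consequence $\D^\mu K_{\mu\nu}=-\frac12\D_\nu(\tr K)$ and its once-differentiated versions, and finish with the (contracted) second Bianchi identity; the paper merely organizes the expansion as $[\KK,\square_\g]\phi=\D_\mu[K^{\mu\nu}\D_\nu,\square_\g]\phi+[\D_\mu,\square_\g](K^{\mu\nu}\D_\nu\phi)$, which is the same bookkeeping.

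One step is wrong as you state it, and it is precisely the top-order step. The third-order-in-$\phi$ term $2\,(\D^\a K^{\mu\nu})\,\D_\a\D_\mu\D_\nu\phi$ produced by $\square_\g(K^{\mu\nu}\D_\mu\D_\nu\phi)$ has no counterpart in $\KK(\square_\g\phi)$, so it cannot ``cancel between the two sides because $K$ is symmetric''. Since $\D_\a\D_\mu\D_\nu\phi$ is totally symmetric only up to Riemann corrections, this term is removed by the full Killing equation $\D^{(\a}K^{\mu\nu)}=0$ (symmetry of $K$ alone gives nothing here), and the Riemann residue it leaves behind is essential: in the paper this is exactly the manipulation yielding $(\D^\mu{K^\nu}_\a)\D_\mu\D^\a\D_\nu\phi=-\frac23\,\D^\a{K^\nu}_\mu\,{{\R^\mu}_{\a\nu}}^\de\,\D_\de\phi$, and it is this contribution that makes all Riemann terms cancel in the final step, leaving the stated Ricci/scalar-curvature expression with the coefficients $\frac43$ and $\frac23$. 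So the Killing hypothesis already enters at this stage, not only in the later $\mathrm{Riem}\cdot K$ symmetrization you single out as the delicate point; with that correction, your plan carries through and coincides with the paper's proof.
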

\begin{proof} See Appendix \ref{proof-prop-commutators}.
\end{proof}

\subsection{The case of vacuum and electrovacuum spacetimes}

We specialize the above commutator to the case of metrics satisfying the Einstein vacuum equation or the Einstein-Maxwell equation.

\begin{definition}\label{definition:eve-maxwell} A Lorentzian manifold $(\MM, \g)$ is a solution to the \textup{Einstein vacuum equation} if its Ricci curvature vanishes identically, i.e.
\bea
\R_{\mu\nu}=0.
\eea
A Lorentzian manifold $(\MM, \g)$ is a solution to the \textup{Einstein-Maxwell equation} if 
\bea\label{Einstein-Maxwell}
\R_{\mu\nu}=2 \F_{\mu \lambda} {\F^{\lambda}}_\nu - \frac 1 2 \g_{\mu\nu} \F^{\alpha\beta} \F_{\alpha\beta}, 
\eea
where $\F$ is a 2-form on $\MM$, denoted electromagnetic tensor, satisfying the Maxwell equations:
\bea\label{Maxwell}
\D_{[\mu} \F_{\nu \lambda]}=0, \qquad \D^\mu \F_{\mu\nu}=0.
\eea

We will refer to the above as \textup{vacuum} and \textup{electrovacuum} spacetime respectively.
\end{definition}

 As a consequence of Proposition \ref{commutation-KK-square}, if a vacuum spacetime possesses a Killing tensor $K$, then 
 the differential operator $\KK$ commutes with the D'Alembertian of $\g$:
 \beaa
 [\KK, \square_\g]\phi=0.
 \eeaa
For this reason, $\KK$ is referred to as a \textit{symmetry operator}, as it sends solution to the wave equation to solutions: if $\square_\g \phi=0$ then $\square_\g (\KK(\phi))=0$.

As a consequence of the Einstein-Maxwell equations \eqref{Einstein-Maxwell} and \eqref{Maxwell}, the curvature of an electrovacuum spacetime satisfies
\beaa
\D^\mu \R_{\mu\nu}=0, \qquad \R=0.
\eeaa
Then if an electrovacuum spacetime possesses a Killing tensor $K$, the commutator between $\KK$ and $\square_\g$ according to Proposition \ref{commutation-KK-square} becomes
\bea\label{commutator-electrovacuum}
[\KK, \square_\g] \phi&=& \frac 2 3\big({{\R}_{\mu}}^\ep  \D^\mu  K_{\ep \nu}-   {{\R}_{\nu}}^\ep  \D^\mu K_{\mu \ep}-  \D^\a {\R^\ep}_{\nu}   K_{\a \ep}\big) \D^\nu \phi.
\eea
Moreover, the Maxwell equations \eqref{Maxwell} do not imply the vanishing of the right hand side of \eqref{commutator-electrovacuum}.
Therefore in this case $\KK$ cannot be interpreted as a symmetry operator, as the commutator depends on the form of the Killing tensor $K$.

Famous examples of vacuum and an electrovacuum spacetimes which possess a Killing tensor are the Kerr and Kerr-Newman solutions respectively. The Killing tensor in Kerr was discovered by Carter \cite{Carter}, and it is then referred to as \textit{Carter tensor}.

\subsection{The Kerr-Newman spacetime}

We review here the Kerr-Newman metric and associated properties, see also \cite{Civin}. The Kerr-Newman metric depends on three physical parameters: the mass $M$, the angular momentum $a$ and the charge $Q$. We consider here the subextremal family of Kerr-Newman spacetimes which represent a charged rotating black hole.

\subsubsection{The manifold and the metric}

For $a^2+Q^2 <M^2 $, the Kerr-Newman metric in Boyer-Lindquist coordinates $(t, r,  \th, \vphi)$ takes the form
\bea\label{metric-KN}
\g_{M, a, Q}=-\frac{\Delta}{|q|^2}\left( dt- a \sin^2\th d\vphi\right)^2+\frac{|q|^2}{\Delta}dr^2+|q|^2 d\th^2+\frac{\sin^2\th}{|q|^2}\left(a dt-(r^2+a^2) d\vphi \right)^2,
\eea
where\footnote{Observe that what we denote by $|q|^2=r^2+a^2\cos^2\th$ is normally denoted in the literature as $\rho^2$. We avoid using the letter $\rho$ for this metric component as it is also used as a curvature component later.}
\bea
q&=&r+ i a \cos\th \label{definition-q}, \qquad |q|^2=r^2+a^2\cos^2\th,
\eea
and
\beaa
\Delta &=& r^2-2Mr+a^2+Q^2=(r-r_{+}) (r-r_{-}), \qquad r_{\pm}=M\pm \sqrt{M^2-a^2-Q^2}.
\eeaa

We recall the ambient manifold with boundary $\MM$, diffeomorphic to $\mathbb{R}^{+} \times \mathbb{R} \times \mathbb{S}^2$, and the Kerr star coordinates $(t^{*}, r, \th, \vphi^*)$ with relations $t(t^*, r)=t^*-\overline{t}(r)$, $\vphi(\vphi^*, r)=\vphi^*-\overline{\vphi}(r)$ modulo $2\pi$. For the explicit form see \cite{Civin}. When expressed in Kerr star coordinates, the metric \eqref{metric-KN} extends smoothly to the event horizon $\mathcal{H}^+$ defined as the boundary $\partial \MM=\{ r=r_{+}\}$.

The metric $\g_{M, a, Q}$, together with an appropriate 2-form $\F$, satisfies the Einstein-Maxwell equations \eqref{Einstein-Maxwell} and \eqref{Maxwell}.
Observe that the Kerr-Newman family reduces to the Kerr metric when $Q=0$, to the Reissner-Nordstr\"om metric when $a=0$, and  to the Schwarzschild metric for $a=Q=0$.

\subsubsection{The Killing vectorfields}

The coordinate vectorfields $T=\partial_{t^*}$ and $Z=\partial_{\vphi^*}$ coincide with the coordinate vectorfields $\partial_t$ and $\partial_\vphi$ in Boyer-Lindquist coordinates, which are Killing for the metric \eqref{metric-KN}. The stationary Killing vectorfield $T=\partial_t$ is asymptotically timelike as $r \to \infty$, and spacelike close to the horizon, in the ergoregion $\{ \Delta - a^2\sin^2\th <0\}$.

Since $T=\partial_t$ is not timelike in the ergoregion, we recall instead the definition of  \textit{Hawking vectorfield}, which is timelike in the whole exterior and null on the horizon.
 \begin{proposition} 
 \lab{Lemma:Hawking-vf}
 The Hawking vectorfield $\That$
 \bea
\lab{define:That}
\That:&=&\pr_t+\frac{a}{r^2+a^2} \pr_\vphi
\eea
 is  timelike        for $\{r>r_+\}$ and null   on the horizon $\{r=r_+\}$.  More precisely
   \bea\label{eq:g-That-That}
   \g_{M, a, Q}(\That, \That)&=& -\De\frac{|q|^2}{ (r^2+a^2)^2}.
   \eea   
   \end{proposition}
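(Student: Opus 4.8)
The plan is to compute $\g_{M, a, Q}(\That, \That)$ directly from the Boyer-Lindquist form \eqref{metric-KN}. First I would observe that $\That = \pr_t + \frac{a}{r^2+a^2}\pr_\vphi$ has vanishing $\pr_r$ and $\pr_\th$ components, so only the first and fourth terms of \eqref{metric-KN} contribute. Evaluating the one-form $dt - a\sin^2\th\, d\vphi$ on $\That$ gives
\beaa
1 - \frac{a^2\sin^2\th}{r^2+a^2} = \frac{r^2+a^2-a^2\sin^2\th}{r^2+a^2} = \frac{r^2+a^2\cos^2\th}{r^2+a^2} = \frac{|q|^2}{r^2+a^2},
\eeaa
while evaluating $a\, dt - (r^2+a^2)\, d\vphi$ on $\That$ gives $a - (r^2+a^2)\cdot\frac{a}{r^2+a^2} = 0$. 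This cancellation in the fourth term is precisely what dictates the choice of the coefficient $\frac{a}{r^2+a^2}$, and it is the only substantive content of the computation. Substituting back yields
\beaa
\g_{M, a, Q}(\That, \That) = -\frac{\Delta}{|q|^2}\left(\frac{|q|^2}{r^2+a^2}\right)^2 + \frac{\sin^2\th}{|q|^2}\cdot 0 = -\frac{\Delta\, |q|^2}{(r^2+a^2)^2},
\eeaa
which is \eqref{eq:g-That-That}.

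The causal character is then read off from this identity. Since $|q|^2 = r^2 + a^2\cos^2\th > 0$ throughout $\{r \ge r_+\}$ (in the subextremal range $r_+ = M + \sqrt{M^2-a^2-Q^2} > 0$), the sign of $\g_{M, a, Q}(\That, \That)$ is opposite to that of $\Delta = (r-r_+)(r-r_-)$. For $r > r_+$ one has $\Delta > 0$, hence $\g_{M, a, Q}(\That, \That) < 0$ and $\That$ is timelike; at $r = r_+$ one has $\Delta = 0$, hence $\g_{M, a, Q}(\That, \That) = 0$ and $\That$ is null.

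The one point requiring a word of care is that \eqref{metric-KN} is expressed in Boyer-Lindquist coordinates, which degenerate on $\mathcal{H}^+$, so strictly speaking the computation above is valid a priori only in the open region $\{r > r_+\}$. However, $\That$ extends smoothly across the horizon (it coincides with $\pr_{t^*} + \frac{a}{r^2+a^2}\pr_{\vphi^*}$ up to $\pr_r$-terms in Kerr star coordinates, where the metric extends smoothly), and the right-hand side of \eqref{eq:g-That-That} is a smooth function of $(r,\th)$ on all of $\MM$; hence the identity extends by continuity to $\{r = r_+\}$. Alternatively, one simply repeats the one-line computation in Kerr star coordinates. There is no genuine obstacle here — the proposition is a short direct verification whose only nontrivial ingredient is the algebraic cancellation identified above.
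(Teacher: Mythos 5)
Your proof is correct, and the result matches \eqref{eq:g-That-That} exactly. It is in essence the same argument as the paper's --- a direct evaluation of $\g_{M,a,Q}(\That,\That)$ in Boyer--Lindquist coordinates --- but your organization is slightly different and cleaner: the paper assembles the quadratic form from the components $\g_{tt}$, $\g_{t\vphi}$, $\g_{\vphi\vphi}$ and then simplifies the resulting polynomial in $\sin^2\th$, whereas you exploit the sum-of-squares structure of \eqref{metric-KN}, noting that $a\,dt-(r^2+a^2)\,d\vphi$ annihilates $\That$ and that $dt-a\sin^2\th\,d\vphi$ evaluates to $|q|^2/(r^2+a^2)$, so the answer drops out with no algebra. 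This also makes transparent why the coefficient $\frac{a}{r^2+a^2}$ is the right one, which the component computation obscures. Your closing remark about the degeneracy of Boyer--Lindquist coordinates at $\mathcal{H}^+$ is a point of care the paper does not spell out; it is harmless and correctly resolved by continuity (or by redoing the computation in Kerr star coordinates), though for the identity \eqref{eq:g-That-That} as an algebraic statement in $(r,\th)$ it is not strictly needed.
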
 
 \begin{proof}
Denoting $\g=\g_{M, a, Q}$ and using that
 \beaa
 \g_{tt}=   -\frac{\Delta-a^2\sin^2\theta}{|q|^2}, \qquad \g_{t\vphi}=
\frac{ \Delta-(r^2+a^2) }{|q|^2} a\sin^2\theta    , \qquad \g_{\vphi\vphi}=\frac{(r^2+a^2)^2-a^2\sin^2\th \De}{|q|^2}\sin^2\theta
 \eeaa
 we obtain 
  \beaa
   \g(\That, \That) &=&\g_{tt} + \frac{2a}{r^2+a^2} \g_{t\phi}+\frac{a^2}{(r^2+a^2)^2} \g_{\vphi\vphi}\\
   &=&  -\frac{\Delta-a^2\sin^2\theta}{|q|^2} + \frac{2a^2}{r^2+a^2} \frac{ \Delta-(r^2+a^2) }{|q|^2} \sin^2\theta+\frac{a^2}{(r^2+a^2)^2}  \frac{(r^2+a^2)^2-a^2\sin^2\th \De}{|q|^2}\sin^2\theta\\
      &=&  -\frac{\Delta}{|q|^2}   + \frac{a^2}{r^2+a^2} \frac{ 2\Delta}{|q|^2} \sin^2\theta -\frac{a^2}{(r^2+a^2)^2}  \frac{a^2\sin^2\th \De}{|q|^2}\sin^2\theta\\
            &=&  -\frac{\Delta}{|q|^2 (r^2+a^2)^2} ((r^2+a^2)^2  - 2a^2 (r^2+a^2) \sin^2\theta +a^4\sin^4\th)= -\frac{\Delta |q|^4}{|q|^2 (r^2+a^2)^2} \\
            &=& -\frac{\Delta |q|^2}{ (r^2+a^2)^2},
   \eeaa
 as stated.
   \end{proof}
As  a consequence  of the above, we also   deduce that  the Killing vectorfield 
\bea
\That_\HH:=T+\om_\HH Z, \qquad \text{with} \quad \om_\HH=\frac{a}{r_{+}^2+a^2}=\frac{a}{2Mr_{+}-Q^2},
\eea 
where $\om_\HH$ is the angular velocity  of the horizon,     is null on the horizon and timelike in a small neighborhood of it in the exterior. Note that along $\mathcal{H}^+$, we have
\beaa
\D_{\That_\HH} \That_\HH=\kappa \That_{\HH}, \qquad \kappa=\frac{r_{+}-r_{-}}{2(r_{+}^2+a^2)},
\eeaa
where $\kappa$ is the surface gravity, which is positive in the sub-extremal range and vanishes in the extremal case.

\subsubsection{The principal null frames and horizontal structures}

The Kerr-Newman metric is a spacetime of Petrov Type D, i.e. its Weyl curvature can be diagonalized with two linearly independent eigenvectors, the so-called principal null directions.

The vectorfields 
\bea
\lab{eq:Out.PGdirections-Kerr}
L=e_4=\frac{r^2+a^2}{\Delta}\pr_t+\pr_r+\frac{a}{\Delta}\pr_\vphi, \qquad \underline{L}=e_3=\frac{r^2+a^2}{|q|^2}\pr_t-\frac{\Delta}{|q|^2}\pr_r+\frac{a}{|q|^2}\pr_\vphi
\eea
define principal null directions, with the normalization $\g(e_3, e_4)=-2$. The vectorfield $\frac{\De}{r^2+a^2} e_4$ extends smoothly to $\HH^{+}$ to be parallel to the null generator, while $\frac{r^2+a^2}{\De}e_3$ extends smoothly to $\HH^{+}$ to be transversal to it.

We complete the above null frame with the following vectorfields
\bea
\lab{eq:Out.PGdirections-Kerr-12}
  e_1=\frac{1}{|q|}\pr_\th,\quad e_2=\frac{a\sin\th}{|q|}\pr_t+\frac{1}{|q|\sin\th}\pr_\vphi,
\eea
which represent an orthonormal frame on the orthogonal space spanned by $e_3$ and $e_4$.

The frame $\{e_a, e_3, e_4\}$ for $a=1,2$ as defined in \eqref{eq:Out.PGdirections-Kerr} and \eqref{eq:Out.PGdirections-Kerr-12} describes a null frame which satisfies
\beaa
&&\g(e_3, e_3)=\g(e_4, e_4)=0, \qquad \g(e_3, e_4)=-2, \\
&& \g(e_a,e_3)=\g(e_a, e_4)=0, \qquad \g(e_a, e_b)=\de_{ab}, \qquad a=1,2.
\eeaa

We say that a vectorfield $X$ is horizontal if $\g(X, e_4)=\g(X, e_3)=0$.  Observe that the commutator of two horizontal vectorfields may fail to be horizontal. We say that $(L, \underline{L})$ is integrable if the commutator of two horizontal vectorfield is horizontal.

In the case of Kerr-Newman spacetime, the principal null frame \eqref{eq:Out.PGdirections-Kerr} is not integrable, and its horizontal vector space does not span a sphere, but rather a 2-plane distribution. We refer to it as a horizontal structure. On it, we can define the horizontal covariant derivative by the projection of the covariant derivative to the horizontal structure: for  $X$ in the tangent space of $M$ and Y horizontal 
 \bea\label{eq:def-horizontal-cov-der}
 \Ddot_X Y&:=&^{(h)}(\D_X Y)=\D_X Y+ \frac 1 2 \g(\D_X Y,\Lb)L+ \frac 1 2   \g(\D_X Y,L)\Lb,
 \eea
where $^{(h)}(\D_X Y)$ is the projection to the horizontal structure. We denote $\nab_4 Y=$$^{(h)}(\D_L Y)$,  $\nab_3 Y=$$^{(h)}(\D_\Lb Y)$, $\nab_a Y=$$^{(h)}(\D_{e_a} Y)$.  

Similarly, we can extend the above definition to horizontal covariant $k$-tensors.
We denote the set of horizontal tensors on $\MM$ by $\sk_k$. We define the duals of $f \in \sk_1$ and $u \in \sk_2$ by
\beaa
\dual f_{a}&=&\in_{ab}f_b,\qquad (\dual u)_{ab}=\in_{ac} u_{cb},
\eeaa
where $\in_{ab}=\frac 1 2 \in_{ab34}$ is the induced volume form on the horizontal structure.  For $f \in \sk_1$ and $u \in \sk_2$
we  define the frame dependent   operators,   
\beaa
\div f &=&\de^{ab}\nab_b f_a,\qquad 
\curl f=\in^{ab}\nab_af_b,\\
(\nab\hot f)_{ba}&=&\frac 1 2 \big(\nab_b f_a+\nab_a  f_b-\de_{ab}( \div f)\big)\\
(\div u)_a&=& \de^{bc} \nab_b u_{ca}.
\eeaa

\begin{definition}   Given  an orthonormal basis of horizontal vectors $e_1, e_2$ we define 
   the Hodge type operators,  see \cite{Ch-Kl}:
\begin{itemize}
\item  $\DDd_1 $ takes $\sk_1$ into\footnote{Recall that $\sk_0$ refers to pairs of scalar functions $(a,b)$.}  $\sk_0$:
\beaa
\DDd_1 \xi =(\div\xi, \curl \xi),
\eeaa
\item  $\DDd_2 $ takes $\sk_2$ into $\sk_1$:
\beaa
(\DDd_2 \xi)_a =\nab^b \xi_{ab},
\eeaa
\item $\DDs_1$ takes  $\sk_0 $ into  $\sk_1$:
\beaa
\DDs_1( f, f_*) &=& -\nab_a f+\in_{ab}  \nab_b  f_*,
\eeaa
\item 
$\DDs_2 $ takes  $\sk_1$ into $\sk_2$:
\beaa
\DDs_2 \xi&=& -\nab\hot \xi.
\eeaa
\end{itemize}   
\end{definition}

See \cite{GKS} or \cite{Giorgi7} for more details.

\subsubsection{The Ricci, electromagnetic and curvature components}

As in \cite{Ch-Kl}, we use standard notations to define the Ricci coefficients of the null pair frame as
  \beaa
\chib_{ab}&=&\g(\D_a\Lb, e_b),\qquad \chi_{ab}=\g(\D_aL, e_b),\\
\xib_a&=&\frac 1 2\g(\D_\Lb\Lb, e_a),\qquad \xi_a=\frac 1 2 \g(\D_L L, e_a),\\
\omb&=&\frac 1 4 \g(\D_\Lb\Lb, L),\qquad\quad  \om=\frac 1 4\g(\D_L L, \Lb),\qquad \\
\etab_a&=&\frac 1 2\g(\D_L\Lb, e_a),\qquad \quad \eta_a=\frac 1 2 \g(\D_\Lb L, e_a),\qquad\\
 \ze_a&=&\frac 1 2 \g(\D_{a}L,\Lb).
\eeaa
using the short hand notation $\D_a=\D_{e_a}, a=1,2$. 

Observe that in the case of Kerr and Kerr-Newman spacetime, the 2-tensors $\chi_{ab}$ and $\chib_{ab}$ associated to the principal null frame are not symmetric, as a consequence of the fact that the space which is orthogonal to the principal null frame is not integrable, see 
\cite{GKS}.
Following \cite{GKS}, we  introduce the  notations
\beaa
\trch:=\de^{ab}\chi_{ab}, \qquad  \trchb:=\de^{ab} \chib_{ab},\\
\atr\chi:=\in^{ab} \chi_{ab}, \qquad \atr\chib:=\in^{ab} \chi_{ab}.
\eeaa
In particular we can write
\beaa
\chi_{ab}&=&\chih_{ab}+\frac 1 2 \de_{ab}\, \trch+\frac 12 \in_{ab}\atrch,\\
\chib_{ab}&=&\chibh_{ab}+\frac 1 2 \de_{ab}\, \trchb+\frac 12 \in_{ab}\atrchb,
\eeaa
where $\chih$ and $\chib$ is the symmetric traceless part of $\chi$ and $\chib$ respectively.

We define the electromagnetic components relative to the null frame as
\beaa
\bF_a&=&\F(e_a, e_4), \qquad \bbF_a=\F(e_a, e_3) \\
\rhoF&=&\frac 1 2 \F(e_3, e_4),\qquad \dual\rhoF=
\frac 1 2   \dual \F (e_3, e_4)
\eeaa
where $\F$ is the electromagnetic tensor in the Einstein-Maxwell equations  \eqref{Einstein-Maxwell} and \eqref{Maxwell} and $\dual \F$ denotes the Hodge dual on $(\MM, \g)$ of $\F$, defined by $\dual \F_{\a\b}=\frac 1 2 \in_{\mu\nu\a\b} \F^{\mu\nu}$. 

We define the curvature components relative to the null frame as 
\beaa
\a_{ab}&=&\W(e_4,e_a,e_4,e_b), \qquad \aa_{ab}=\W(e_3,e_a,e_3,e_b),\\
\b_a&=&\frac 1 2 \W(e_a,e_4,e_3,e_4), \qquad \bb_a=\W(e_a, e_3, e_3, e_4)\\
 \rho&=&\frac 1 4 \W(e_3,e_4,e_3,e_4),\qquad \dual\rho=
\frac 1 4  \dual \W(e_3,e_4,e_3,e_4)
\eeaa
where $\W$ is the Weyl curvature, and $\dual \W$ denotes the Hodge dual on $(\MM, \g)$ of $\F$, defined by ${}^{\ast}\W_{\al\be\mu\nu}=\frac{1}{2}{\in_{\mu\nu}}^{\rho\si}\W_{\al\be\rho\si}$.

\begin{lemma}\label{lemma:Kerr-Newman}  The Kerr-Newman metric has the following values of  the Ricci, electromagnetic and curvature components.
\begin{itemize}
\item The following quantities vanish:\footnote{The vanishing of the quantities corresponds to the fact that the Kerr-Newman spacetime is  of Petrov Type D.}
\bea
&&\chih=\chibh=\xi=\xib=0,  \qquad \a=\b=\bb=\aa=0, \qquad  \bF= \bbF=0.
\eea
The non-vanishing electromagnetic and curvature components take the following values:
\bea
&&\rhoF^2+\dual\rhoF^2=\frac{Q^2}{|q|^4}\\
&&\rho= \frac{1}{|q|^6} (-2Mr^3+2Q^2r^2+6M  a^2 \cos^2\th r-2Q^2a^2\cos^2\th),\\
 &&\dual \rho= \frac{a\cos\th }{|q|^6} (6Mr^2-4Q^2 r- 2Ma^2 \cos^2\th).
\eea
\item The Ricci coefficients defined with respect to the  principal null frame \eqref{eq:Out.PGdirections-Kerr} take the following values:
\bea
&& \trch=\frac{2r}{|q|^2},\quad \atrch=\frac{2a\cos\th}{|q|^2}
, \qquad  \trchb=-\frac{2r\Delta}{|q|^4}, \quad \atrchb=\frac{2a\Delta\cos\th}{|q|^4},\\
&& \omb = \frac{a^2\cos^2\th(r-M)+Mr^2-a^2r-Q^2r}{|q|^4}, \qquad \om=0, \qquad \etab=-\ze.
\eea

\item In the orthonormal frame $e_a$ for $a=1,2$ defined in \eqref{eq:Out.PGdirections-Kerr-12}, the Ricci coefficients have the following components:
\beaa
\eta_1&=& -\frac{a^2\sin\th \cos\th}{|q|^3}, \qquad \eta_2=\frac{a\sin\th r}{|q|^3}, \\
\dual \eta_1&=& \frac{a\sin\th r}{|q|^3}, \qquad \dual \eta_2= \frac{a^2\sin\th \cos\th}{|q|^3},\\
\etab_1&=& -\frac{a^2\sin\th \cos\th }{|q|^3}, \qquad \etab_2 =-\frac{a\sin\th r}{|q|^3},\\
\dual \etab_1&=& -\frac{a\sin\th r}{|q|^3}, \qquad \dual \etab_2 =\frac{a^2\sin\th \cos\th}{|q|^3}.
\eeaa

\item The derivatives of the coordinates $r$ and $\theta$ with respect to the frame defined in \eqref{eq:Out.PGdirections-Kerr} and \eqref{eq:Out.PGdirections-Kerr-12} satisfy the following relations:
\bea\label{eq:derivatives-r-th}
 e_{3} (r )=\frac{|q|^2}{2r} \trchb  , \qquad e_{4} (r )=\frac{|q|^2}{2r} \trch, \qquad e_a(r)=0, \\
e_a  (a^2\cos^2\th)=|q|^2 (\eta+\etab)_a, \qquad e_3 (a^2\cos^2\th)=e_4 (a^2\cos^2\th)=0.
\eea
\item Finally, the orthonormal frame $e_a$ for $a=1,2$ defined in \eqref{eq:Out.PGdirections-Kerr-12}
satisfy the following:
\bea\label{eq:expressions-nab11}
\nab_{e_1} e_1&= \nab_{e_1} e_2=0, \qquad \nab_{e_2} e_1 =\La e_2, \qquad \nab_{e_2} e_2=-\La e_1,
\eea
where
\bea
\La := \frac{r^2+a^2}{|q|^3}\cot\th
\eea
\end{itemize}
\end{lemma}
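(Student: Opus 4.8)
The plan is to verify Lemma \ref{lemma:Kerr-Newman} by direct computation in the Boyer–Lindquist frame \eqref{eq:Out.PGdirections-Kerr}–\eqref{eq:Out.PGdirections-Kerr-12}, treating it as a bookkeeping exercise rather than a deep structural result. First I would record the exact expansion of the metric components $\g_{tt},\g_{t\vphi},\g_{\vphi\vphi}$ (already displayed in the proof of Proposition \ref{Lemma:Hawking-vf}) together with $\g_{rr}=|q|^2/\Delta$, $\g_{\th\th}=|q|^2$, and the inverse metric components, and check that the frame $\{e_3,e_4,e_1,e_2\}$ as defined satisfies the normalization $\g(e_3,e_4)=-2$, $\g(e_a,e_b)=\de_{ab}$, $\g(e_a,e_3)=\g(e_a,e_4)=0$ — this is a short linear-algebra check that pins down the frame unambiguously. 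The derivative relations \eqref{eq:derivatives-r-th} are then immediate: $r$ depends only on the coordinate $r$, so $e_a(r)=0$ and $e_3(r),e_4(r)$ are read off from the $\pr_r$ coefficients in \eqref{eq:Out.PGdirections-Kerr}; similarly $a^2\cos^2\th$ depends only on $\th$, so only $e_1=|q|^{-1}\pr_\th$ acts nontrivially, and one compares with $\eta+\etab$ computed below.

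Next I would compute the Ricci coefficients. The cleanest route is to use the known fact (from \cite{GKS}, \cite{Ch-Kl}, or by the formula $\chi_{ab}=\g(\D_{e_a}e_4,e_b)$ etc.) that for the principal null congruence of a Petrov type D metric the shear and the $\xi,\xib,\om$ vanish; alternatively, compute the Christoffel symbols $\Gamma^\lambda_{\mu\nu}$ of \eqref{metric-KN} directly and contract. I would present $\trch=\de^{ab}\chi_{ab}=\g(\D_{e_a}e_4,e^a)$ and its anti-symmetric counterpart $\atrch=\in^{ab}\chi_{ab}$, checking against the values $2r/|q|^2$ and $2a\cos\th/|q|^2$; note that $\atrch\neq 0$ is precisely the signature of the non-integrability of the horizontal structure, consistent with \eqref{eq:expressions-nab11}. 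The formulas $e_3(r)=\frac{|q|^2}{2r}\trchb$, $e_4(r)=\frac{|q|^2}{2r}\trch$ then serve as a consistency check linking the two bullet points. The values of $\eta,\etab,\omb$ follow from the corresponding Christoffel contractions; $\etab=-\ze$ is a standard relation for this frame (equivalent to $\eta_a=\zeta_a$, $\etab_a=-\zeta_a$), and $\om=0$ reflects the affine parametrization of $e_4$ along the outgoing congruence up to the chosen normalization.

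For the electromagnetic components I would take the Kerr–Newman potential $A=-\frac{Qr}{|q|^2}(dt-a\sin^2\th\,d\vphi)$ (or directly the field $\F=dA$), contract with the frame to get $\bF=\bbF=0$, and compute $\rhoF=\frac12\F(e_3,e_4)$, $\dual\rhoF=\frac12\dual\F(e_3,e_4)$; squaring and adding gives $Q^2/|q|^4$, which is the frame-invariant Maxwell scalar $\frac14\F_{\mu\nu}\F^{\mu\nu}+\frac{i}{4}\F_{\mu\nu}\dual\F^{\mu\nu}$ in modulus. For the curvature components $\rho,\dual\rho$, either compute the Weyl tensor of \eqref{metric-KN} directly, or — more efficiently — use the Einstein–Maxwell equation \eqref{Einstein-Maxwell} to write the full Riemann tensor in terms of Weyl plus the Maxwell stress, then contract $\frac14\W(e_3,e_4,e_3,e_4)$; the known result that the complex scalar $\rho+i\dual\rho$ equals $-\frac{M}{q^3}+\frac{Q^2}{q^2|q|^2}$ (or the equivalent closed form) yields the two displayed expressions after separating real and imaginary parts using $q^3=(r+ia\cos\th)^3$. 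Finally \eqref{eq:expressions-nab11} and the value $\La=\frac{r^2+a^2}{|q|^3}\cot\th$ come from computing $\nab_{e_b}e_a={}^{(h)}(\D_{e_b}e_a)$ via Christoffel symbols and projecting onto the horizontal plane; since $e_1,e_2$ involve only $\pr_\th,\pr_t,\pr_\vphi$, only a handful of Christoffel symbols enter, and the $\cot\th$ arises from $\pr_\th(1/\sin\th)$ just as on the round sphere. I expect the main obstacle to be purely organizational: keeping the numerous $|q|$, $\Delta$, $\sin\th$, $\cos\th$ factors consistent across the many identities and correctly handling the non-symmetry of $\chi,\chib$ (so that $\trch$ and $\atrch$ are both nonzero and the projection in \eqref{eq:def-horizontal-cov-der} is applied at every step). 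I would relegate the bulk of the Christoffel-symbol arithmetic to an appendix or cite \cite{GKS}\cite{Civin}\cite{Ch-Kl} for the parts already in the literature, presenting only the key contractions explicitly.
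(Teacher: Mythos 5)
Your plan is sound, and it is in fact the only way to ``prove'' this lemma, since the paper's own proof is literally the citation ``See \cite{Giorgi7} and \cite{GKS}'': a direct verification in the frame \eqref{eq:Out.PGdirections-Kerr}--\eqref{eq:Out.PGdirections-Kerr-12} (frame normalization, derivative relations, Ricci coefficients from Christoffel contractions or the type D/Goldberg--Sachs structure, Maxwell components from the potential, Weyl scalar in closed complex form) is exactly what the cited references carry out, and your consistency checks (e.g.\ $e_4(r)=1=\frac{|q|^2}{2r}\trch$, $e_3(r)=-\frac{\De}{|q|^2}=\frac{|q|^2}{2r}\trchb$) are the right way to organize it.

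Two concrete slips in your shortcuts would, however, produce values inconsistent with the lemma if used as stated. First, the closed form you quote for the curvature scalar is off by a factor of $2$ in the paper's conventions: with $\rho=\frac14\W(e_3,e_4,e_3,e_4)$ and $P=\rho+i\dual\rho$, the Kerr--Newman value is $P=-\frac{2M}{q^3}+\frac{2Q^2}{q^2|q|^2}$, not $-\frac{M}{q^3}+\frac{Q^2}{q^2|q|^2}$; indeed $\Re\big(-\tfrac{2M}{q^3}\big)=\frac{-2Mr^3+6Mra^2\cos^2\th}{|q|^6}$ and $\Re\big(\tfrac{2Q^2}{q^2|q|^2}\big)=\frac{2Q^2(r^2-a^2\cos^2\th)}{|q|^6}$ reproduce the displayed $\rho$, while your formula gives half of it. The Newman--Penrose $\Psi_2$ and the paper's $\rho+i\dual\rho$ differ precisely by this normalization, so the convention must be fixed before importing a ``known'' formula. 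Second, your parenthetical claim ``$\eta_a=\ze_a$'' is false for this frame: from the listed components and $\etab=-\ze$ one gets $\ze_1=+\frac{a^2\sin\th\cos\th}{|q|^3}$ whereas $\eta_1=-\frac{a^2\sin\th\cos\th}{|q|^3}$, so only the second components agree; the correct statement is just $\etab=-\ze$ (a consequence of $e_4$ being affinely parametrized and geodesic, i.e.\ $\om=\xi=0$), with $\eta$ computed independently from $\g(\D_3 e_4,e_a)$. Neither slip undermines the core strategy, since your primary route is the direct computation of the frame contractions, but as written those two ``known facts'' would fail the verification against the lemma's displayed values.
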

\begin{proof} See \cite{Giorgi7} and \cite{GKS}.
\end{proof}

\subsubsection{Complex notations}

 We denote by $\sk_k(\CCC)$ the set of complex anti-self dual $k$-tensors on $\MM$. More precisely, $a+i b\in \sk_0(\CCC)$ is a complex scalar function on $\MM$ with $(a, b) \in \sk_0$, $F=f+ i \dual f \in \sk_1(\CCC)$ is a complex anti-self dual 1-tensor on $\MM$ with $f \in \sk_1$, and $U=u+ i \dual u \in \sk_2(\CCC)$ is a complex anti-self dual symmetric traceless 2-tensor on $\MM$ with $u \in \sk_2$.

 We extend the definitions for the Ricci, electromagnetic and curvature components to the complex case by using the anti-self dual tensors, by defining 
\beaa
 X=\chi+i\dual\chi, \quad \Xb=\chib+i\dual\chib, \quad H=\eta+i\dual \eta, \quad \Hb=\etab+i\dual \etab, \quad \Xi=\xi+i\dual\xi, \quad \Xib=\xib+i\dual\xib\\
 \BF= \bF + i \dual \bF,  \quad  \BBF= \bbF + i \dual \bbF, \qquad \PF=\rhoF + i \dual \rhoF \\
A=\a+i\dual\a,  \quad \Ab=\aa+i\dual\aa, \qquad  B=\b+i\dual\b, \quad \Bb=\bb+i\dual\bb,\qquad  P=\rho+i\dual\rho. 
\eeaa

We also define the complexified version of the $\nab$ horizontal derivative as
\beaa
\DD= \nab + i \dual \nab, \qquad \DDov= \nab- i \dual \nab
\eeaa
More precisely,
\begin{itemize}
\item For $a+ib \in \sk_0(\CCC)$ 
\beaa
\DD(a+ib) &:=& (\nabla+i\dual\nabla)(a+ib), \qquad \DDov(a+ib) := (\nabla-i\dual\nabla)(a+ib)\
\eeaa

\item For $F= f+ i \dual f \in\sk_1 (\mathbb{C})$, 
\beaa
\DDov\c(f+i\dual f) &:=& (\nabla-i\dual\nabla)\c(f+i\dual f)= 2 \big(\div f + i \curl f \big), \\
\DD\hot(f+i\dual f) &:=& (\nabla+i\dual\nabla)\hot(f+i\dual f)=2\big(\nab \hot f + i \dual (\nab \hot f)\big).
\eeaa

\item For  $U= u + i \dual u \in \sk_2 (\mathbb{C})$, 
\beaa
\DDov (u+i\dual u) &:=& (\nabla- i\dual\nabla) (u+i\dual u)= 2\big( \div u + i \dual(\div u) \big)
\eeaa
\end{itemize}

 For $F \in \sk_1(\CCC)$, the operator $- \DD\hot$ is formally adjoint to the operator $\DDov \c U$ applied to $U \in \sk_2(\CCC)$, as shown in \cite{Giorgi7}.
 \begin{lemma}[Lemma 2.11 in \cite{Giorgi7}]\label{lemma:adjoint-operators}
 For $F=f+i\dual f \in \sk_1(\CCC)$ and $U=u+i\dual u \in \sk_2(\CCC)$, we have
   \bea
 ( \DD \hot   F) \c   \ov{U}  &=&  -F \c (\DD \c \ov{U}) -( (H+\Hb ) \hot F )\c \ov{U} +\D_\a (F \c \ov{U})^\a.
 \eea
 \end{lemma}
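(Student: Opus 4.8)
The plan is to prove the identity by reducing the complex statement to its real and imaginary parts, and then to invoke the standard integration-by-parts formulas for the real Hodge-type operators $\nab\hot$ and $\div$ on the horizontal structure. First I would write $F=f+i\dual f$ and $U=u+i\dual u$ with $f\in\sk_1$, $u\in\sk_2$, and expand $\DD\hot F = 2\big(\nab\hot f + i\dual(\nab\hot f)\big)$ and $\DD\c\ov U = 2\big(\div u - i\dual(\div u)\big)$ using the complex-operator dictionary recorded just above the statement. The complex pairing $(\DD\hot F)\c\ov U$ then becomes a combination of the four real pairings $(\nab\hot f)\c u$, $(\nab\hot f)\c\dual u$, $\dual(\nab\hot f)\c u$, $\dual(\nab\hot f)\c\dual u$, and using $\dual u\c\dual v = u\c v$ and $\dual u\c v = -u\c\dual v$ for symmetric traceless $2$-tensors, this collapses to the single real pairing $(\nab\hot f)\c u$ (plus an imaginary copy with $u\to\dual u$), so it suffices to establish the real prototype
\[
(\nab\hot f)^{ab} u_{ab} = -f^a(\div u)_a - \tfrac12\big((\eta+\etab)\hot f\big)^{ab} u_{ab} + \D_\alpha\big(f\c u\big)^\alpha,
\]
and then re-complexify. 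This is exactly the kind of identity that should follow by expanding $\nab\hot f$ via its definition $(\nab\hot f)_{ba}=\tfrac12(\nab_b f_a + \nab_a f_b - \de_{ab}\div f)$, contracting against the symmetric traceless $u$, and moving the horizontal derivative off $f$ onto $u$.

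The key computational step is handling the difference between the horizontal divergence $\D_a(\cdot)$ acting on a horizontal tensor and the spacetime divergence $\D_\alpha(\cdot)^\alpha$ acting on the corresponding spacetime tensor. When one integrates by parts $\nab_b(f_a u^{ab})$, the horizontal covariant derivative $\nab$ is only the projection of $\D$, so $\nab_b(f_a u^{ab})$ differs from $\D_\alpha(f\c u)^\alpha$ by the normal-frame contributions: terms of the form $\tfrac12\g(\D_{e_b}(\cdot),\Lb)L + \tfrac12\g(\D_{e_b}(\cdot),L)\Lb$ from \eqref{eq:def-horizontal-cov-der}, together with the trace of the connection coefficients in the $e_3,e_4$ directions. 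These produce precisely the Ricci-coefficient correction $-\tfrac12\big((\eta+\etab)\hot f\big)\c u$ — indeed, recalling $\eta_a=\tfrac12\g(\D_\Lb L,e_a)$ and $\etab_a=\tfrac12\g(\D_L\Lb,e_a)$ and the fact that $f\c u$ is built from horizontal tensors, the only surviving normal corrections involve $\eta+\etab$ hitting $f$ — and one has to check that the traceless symmetric projection $\hot$ is the correct contraction, which it is because $u$ is traceless symmetric so only the $\hot$-part of $(\eta+\etab)\otimes f$ survives the pairing. There will also be terms involving $\trch$, $\trchb$, $\xi$, $\xib$; on a general horizontal structure these appear in the full divergence formula, but here the claim is the clean identity so I would verify these either cancel against each other or vanish by symmetry/tracelessness of $u$ (this is the identity as stated in \cite{Giorgi7}, so the reference does the bookkeeping).

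I expect the main obstacle to be exactly this careful accounting of the non-integrability of the horizontal structure: in contrast to the classical Hodge theory on a $2$-sphere foliation (where $\nab$ is a genuine Levi-Civita connection on spheres and integration by parts is textbook), here the $2$-plane distribution is not integrable, so $\nab$ carries torsion-like correction terms and the "divergence theorem" on the horizontal structure must be imported as a spacetime divergence with explicit connection corrections. Keeping track of which of $\eta, \etab, \ze, \trch, \trchb$ enter, and confirming that everything except the $(\eta+\etab)\hot f$ term either organizes into the spacetime total divergence $\D_\alpha(F\c\ov U)^\alpha$ or drops by tracelessness of $U$, is the delicate part. Since the statement is quoted verbatim as Lemma 2.11 of \cite{Giorgi7}, I would present the reduction to the real prototype in detail and then cite \cite{Giorgi7} for the real identity, noting that it is the horizontal analogue of the familiar formula $\int (\nab\hot f)\c u = -\int f\c(\div u)$ with the non-integrability corrections made explicit.
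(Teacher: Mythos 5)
The paper itself gives no proof of this lemma---it is quoted verbatim from \cite{Giorgi7}---so the only comparison is with the reference; your proposed derivation is the standard one and is correct in outline. Two remarks let you remove your hedges and make the argument self-contained. First, the identity is pointwise, so no divergence theorem is actually invoked: the real prototype is the one-line Leibniz computation $(\nab\hot f)\c u=\nab_a f_b\, u^{ab}=\nab_a(f_b u^{ab})-f\c(\div u)$, valid precisely because $u$ is symmetric and traceless. Second, the feared $\xi,\xib,\ze,\trch,\trchb$ corrections provably never appear, so nothing needs to ``cancel'': for any horizontal $1$-tensor $V$ (here $V=F\c\ov U$) one has $\D_\a V^\a=\de^{ab}(\D_a V)(e_b)-\tfrac12(\D_3 V)(e_4)-\tfrac12(\D_4 V)(e_3)=\div V+(\eta+\etab)\c V$, because the $\chi,\chib$ parts of $\D_a e_b$ and the $\om,\omb$ parts of $\D_3 e_4,\D_4 e_3$ only hit the vanishing normal components of $V$, while $\D_3 e_4=2\omb\, e_4+2\eta_b e_b$ and $\D_4 e_3=2\om\, e_3+2\etab_b e_b$ contribute exactly the $(\eta+\etab)\c V$ term. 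Combining the two gives $(\nab\hot f)\c u=\D_\a(f\c u)^\a-((\eta+\etab)\hot f)\c u-f\c(\div u)$, and the complexification via the paper's dictionary, together with the relations $\dual v\c\dual u=v\c u$ and $\dual v\c u=-v\c\dual u$ for symmetric traceless $2$-tensors (which is how the $\hot$-projection survives the pairing, as you observed), yields the stated identity; so your reduction to the real prototype is exactly the right route, and the residual bookkeeping you deferred to \cite{Giorgi7} is a two-line check rather than a genuine gap.
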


\subsection{The Carter tensor and differential operator in Kerr-Newman}

We define the Carter tensor of Kerr-Newman spacetime and show that it is a Killing tensor for its metric. We then also define the Carter differential operator associated to it and show that, even though Kerr-Newman is not Ricci-flat, the Carter operator is a symmetry operator in Kerr-Newman.

\begin{definition} The Carter tensor associated to $(\MM, \g_{M, a, Q})$ is the following symmetric 2-tensor $K$ defined by
\bea\label{definition-carter}
K&=& -(a^2\cos^2\th)  \g_{M, a, Q} +|q|^2 \left( e_1 \otimes  e_1 + e_2 \otimes e_2 \right) 
\eea
where $e_1$ and $e_2$ are defined in \eqref{eq:Out.PGdirections-Kerr-12}.
\end{definition}

By defining the symmetric tensor 
\bea\label{eq:first-definition-O}
O^{\a\b}&:=& |q|^2 ( e_1^\a e_1^\b + e_2^\a e_2^\b)
\eea
we can write from \eqref{definition-carter},
\bea\label{eq:expression-K-O}
K&=&  -(a^2\cos^2\th)  \g_{M, a, Q} + O.
\eea

Since $\g_{ab}=\de_{ab}$, $\g_{a3}=\g_{a4}=0$ for $a=1,2$, and $\g_{34}=-2$, we obtain from \eqref{definition-carter} that the tensor $K$ has the following  components:
\beaa
K_{ab}= r^2 \de_{ab}, \qquad  K_{34}= 2(a^2\cos^2\th) , \qquad K_{a3}=K_{a4}=K_{33}=K_{44}=0.
\eeaa
In particular 
\beaa
\tr K=2(r^2 -a^2\cos^2\th).
\eeaa

\begin{proposition}\label{prop:Carter-tensor-Killing} The Carter tensor $K$ defined by \eqref{definition-carter} satisfies
\beaa
\D_{(\mu} K_{\nu \rho)}=0
\eeaa
and therefore it is a Killing tensor for $(\MM, \g_{M, a, Q})$. 
\end{proposition}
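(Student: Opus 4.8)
The plan is to verify the Killing equation $\D_{(\mu}K_{\nu\rho)}=0$ by working directly in the principal null frame $\{e_1,e_2,e_3,e_4\}$ of \eqref{eq:Out.PGdirections-Kerr}--\eqref{eq:Out.PGdirections-Kerr-12}, using the decomposition \eqref{eq:expression-K-O}, namely $K = -(a^2\cos^2\th)\g + O$ with $O^{\a\b}=|q|^2(e_1^\a e_1^\b+e_2^\a e_2^\b)$. Since $\g$ is parallel, $\D_\mu K_{\nu\rho}=-\big(\D_\mu(a^2\cos^2\th)\big)\g_{\nu\rho}+\D_\mu O_{\nu\rho}$, so the symmetrized identity reduces to showing that the totally symmetric part of $\D_\mu O_{\nu\rho}$ equals $\big(\D_{(\mu}(a^2\cos^2\th)\big)\g_{\nu\rho)}$. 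Because $O$ and $\g$ are both block-diagonal with respect to the splitting into the horizontal $2$-plane and the $\langle e_3,e_4\rangle$ plane — indeed $O_{ab}=r^2\de_{ab}$ while $K_{34}=2a^2\cos^2\th$ — the computation naturally splits into checking the symmetrization over the index triples of type $(abc)$, $(ab4)$, $(ab3)$, $(a34)$, $(a44)$, $(a33)$, $(334)$, $(443)$, etc., and I would organize it exactly that way.

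The key computational input is Lemma \ref{lemma:Kerr-Newman}: the vanishing of $\chih,\chibh,\xi,\xib$ together with the explicit values of $\trch,\atrch,\trchb,\atrchb,\omb$, the formulas $e_4(r)=\tfrac{|q|^2}{2r}\trch$, $e_3(r)=\tfrac{|q|^2}{2r}\trchb$, $e_a(r)=0$, and $e_a(a^2\cos^2\th)=|q|^2(\eta+\etab)_a$, $e_3(a^2\cos^2\th)=e_4(a^2\cos^2\th)=0$, plus the connection coefficients \eqref{eq:expressions-nab11} and the relation $\etab=-\ze$, $\om=0$. First I would record $\D_\mu O_{\nu\rho}$ in terms of these: writing $O_{\nu\rho}=|q|^2\sum_{a=1,2}(e_a)_\nu(e_a)_\rho$ and applying $\D_\mu$, one gets a term from $e_\mu(|q|^2)$ (equivalently $e_\mu(r^2+a^2\cos^2\th)$), handled by the derivative formulas above, and terms from $\D_\mu e_a$, handled by expanding $\D_\mu e_a$ in the frame using the Ricci coefficients. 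Then I would symmetrize over $(\mu\nu\rho)$ and watch the pieces cancel: the $e_\mu(r^2)$ contribution should vanish against the frame-derivative terms, leaving precisely the $e_\mu(a^2\cos^2\th)=|q|^2(\eta+\etab)_a$ piece, which must reassemble into $\D_{(\mu}(a^2\cos^2\th)\g_{\nu\rho)}$ — and here the identity $e_a(a^2\cos^2\th)=|q|^2(\eta+\etab)_a$ together with $\etab=-\ze$ and $\eta_a=-\zeta_a + (\text{something})$ is what makes the off-block terms $K_{a3},K_{a4}$ consistent.

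I expect the main obstacle to be the bookkeeping in the mixed-index triples, specifically $(a34)$, $(a33)$, $(a44)$: here the derivative of the $e_3,e_4$ block of $K$ (which carries the $2a^2\cos^2\th$ factor) must conspire with the $\D_\mu O$ terms coming from $\ze_a,\eta_a,\etab_a$ and from $\omb$ appearing in $\D_3 e_3,\D_4 e_4$, and it is precisely the non-integrability of the horizontal structure (so that $\chi,\chib$ are not symmetric, $\atrch,\atrchb\neq 0$) that could in principle spoil the symmetrization. The resolution is that the \emph{totally symmetric} part only sees the symmetric parts $\trch,\trchb$ (and $\chih=\chibh=0$), never $\atrch,\atrchb$, so the antisymmetric obstructions drop out of \eqref{Killing-equation} even though they are present in $\nabla e_a$. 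A clean alternative, if the frame computation gets unwieldy, is to invoke the standard fact that $K=\DDc + |q|^2 \g$-type tensors built from a conformal Killing--Yano $2$-form: Kerr--Newman admits the Killing--Yano tensor $\ast(dr\wedge(\cdots))$ (equivalently $q\,e_3\wedge e_4 + \bar q\, e_1\wedge e_2$ up to normalization), and $K_{\mu\nu}=Y_{\mu\la}Y_\nu{}^\la$ is automatically Killing whenever $Y$ is Killing--Yano — but since the paper works frame-theoretically I would present the direct verification and only remark on the Killing--Yano origin.
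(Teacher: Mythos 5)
Your proposal is correct and takes essentially the same route as the paper: a direct case-by-case verification of $\D_{(\mu}K_{\nu\rho)}=0$ in the principal null frame, organized by index type, using the Ricci formulae together with the inputs from Lemma \ref{lemma:Kerr-Newman} ($e_a(r)=0$, $e_3(a^2\cos^2\th)=e_4(a^2\cos^2\th)=0$, $e_a(a^2\cos^2\th)=|q|^2(\eta+\etab)_a$, and $\chih=\chibh=\xi=\xib=0$, with the symmetrization discarding $\atrch,\atrchb$ exactly as you anticipate). The only cosmetic difference is that you split $K=-(a^2\cos^2\th)\g+O$ and compare symmetrized derivatives, whereas the paper inserts the components $K_{ab}=r^2\de_{ab}$, $K_{34}=2a^2\cos^2\th$ directly into the same computation.
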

\begin{proof} We decompose the symmetric 3-tensor $\Pi_{\mu\nu\rho}=\D_{(\mu} K_{\nu \rho)}=\frac 1 3(\D_{\mu} K_{\nu\rho}+\D_{\nu} K_{\rho\mu}+\D_{\rho} K_{\mu\nu} ) $ in the null frame by making use of the  Ricci formulae \cite{Ch-Kl},
\bea\label{conn-coeff}
\D_a e_b&=&\nab_a e_b+\frac 1 2 \chi_{ab} e_3+\frac 1 2  \chib_{ab}e_4,\nn\\
\D_a e_4&=&\chi_{ab}e_b -\ze_a e_4,\nn\\
\D_a e_3&=&\chib_{ab} e_b +\ze_ae_3,\nn\\
\D_3 e_a&=&\nab_3 e_a +\eta_a e_3+\xib_a e_4,\nn\\
\D_3 e_3&=& -2\omb e_3+ 2 \xib_b e_b,\label{ricci}\\
\D_3 e_4&=&2\omb e_4+2\eta_b e_b,\nn\\
\D_4 e_a&=&\nab_4 e_a +\etab_a e_4 +\xi_a e_3,\nn\\
\D_4 e_4&=&-2 \om e_4 +2\xi_b e_b,\nn\\
\D_4 e_3&=&2 \om e_3+2\etab_b e_b.\nn
\eea 
We compute
\beaa
3 \Pi_{abc}&=&\D_{a} K_{bc}+\D_{b} K_{c a}+\D_{c} K_{ab} \\
&=&\nab_{a} K_{bc}- K_{\D_a b c}-K_{b \D_ac}+\nab_{b} K_{c a}-K_{\D_b c a}- K_{c \D_b a}+\nab_{c} K_{ab} - K_{\D_c a b} - K_{a \D_c b}\\
&=&\nab_{a} K_{bc}+\nab_{b} K_{c a}+\nab_{c} K_{ab} =\nab_{a}(r^2 \de_{bc})+\nab_{b} (r^2 \de_{c a})+\nab_{c}(r^2 \de_{ab} )\\
&=&2 r \left( \nab_{a}r \de_{bc}+\nab_{b} r \de_{c a}+ \nab_{c}r \de_{ab} \right)=0,
\eeaa
since $e_1(r)=e_2(r)=0$. 
We compute
\beaa
3 \Pi_{ab3}&=&\D_{a} K_{b3}+\D_{b} K_{3 a}+\D_{3} K_{ab} \\
&=&\nab_{a} K_{b3}-K_{\D_a b 3} - K_{b \D_a 3}+\nab_{b} K_{3 a}- K_{\D_b 3 a}-K_{3 \D_b a}+\nab_{3} K_{ab} -K_{\D_3 a b } - K_{a \D_3 b}\\
&=&- \frac 1 2 \chib_{ab} K_{4 3} - \chib_{ac} K_{b c}- \chib_{bc} K_{c a}-\frac 1 2 \chib_{ba} K_{3 4}+\nab_{3} K_{ab} \\
&=&- ( \chib_{ab}+\chib_{ba}) (a^2\cos^2\th)  - r^2 \chib_{ac} \de_{b c}- r^2\chib_{bc} \de_{c a}+\nab_{3} (r^2 \de_{ab} )\\
&=&- ( \chib_{ab}+\chib_{ba})(r^2+ a^2\cos^2\th)  +\nab_{3} (r^2  )\de_{ab}\\
&=& (2r e_{3} (r )-|q|^2 \trchb  ) \de_{ab} -2 \chibh_{ab}|q|^2 =0,
\eeaa
where we used \eqref{eq:derivatives-r-th}, and symmetrically for $\Pi_{ab4}=0$. 
We compute
\beaa
3 \Pi_{a 34}&=&\D_{a} K_{34}+\D_{3} K_{4 a}+\D_{4} K_{a3} \\
&=&\nab_{3} K_{4a}-K_{\D_3 4 a}-K_{4 \D_3 a}+\nab_{4} K_{a 3}-K_{\D_4 a 3}- K_{a \D_4 3}+\nab_{a} K_{34} -K_{\D_a 3 4}- K_{3 \D_a 4}\\
&=&-2\eta_b K_{ ab }-\eta_a  K_{34}-\etab_a K_{34}-2\etab_b K_{a b}+\nab_{a} K_{34} \\
&=&\nab_{a} K_{34} -2( \eta_b+\etab_b) K_{ ab }-(\eta_a +\etab_a) K_{34}\\
&=&\nab_{a} 2(a^2\cos^2\th) -2( \eta_b+\etab_b) r^2 \de_{ab}-(\eta_a +\etab_a)  2(a^2\cos^2\th)\\
&=&2e_{a} (a^2\cos^2\th) -2( \eta_a+\etab_a) |q|^2 =0.
\eeaa
where we used \eqref{eq:derivatives-r-th}.
We compute
\beaa
3 \Pi_{a 33}&=&\D_{a} K_{33}+2\D_{3} K_{3 a}=\nab_{a} K_{33}- 2 K_{\D_a 3 3}+2\nab_{3} K_{3 a}-2 K_{\D_3 3 a}- 2 K_{3 \D_3 a}\\
&=&-4\xib_b K_{ b a}- 2\xib_a K_{3  4}=-4\xib_b r^2 \de_{ b a}- 4\xib_a (a^2 \cos^2\th)= - 4 |q|^2 \xib_a=0,
\eeaa
and similarly for $\Pi_{a44}=0$.
We compute
\beaa
3 \Pi_{34 3}&=&2\D_{3} K_{43}+\D_{4} K_{33}=2\nab_{3} K_{43}-2 K_{\D_3 43} -2 K_{4 \D_3 3}+\nab_{4} K_{33}-2K_{\D_4 3 3}\\
&=&2\nab_{3} K_{43}-4\omb  K_{ 43} +4\omb K_{43}=4e_{3}(a^2\cos^2\th) =0,
\eeaa
and similarly for $\Pi_{434}=0$. Finally we compute
\beaa
\Pi_{333}&=& \D_3 K_{33}=\nab_3 K_{33} - 2 K_{\D_3 3 3}=0
\eeaa
and similarly for $\Pi_{444}=0$. 
\end{proof}

Following Definition \ref{definition-Killing-operator}, we define the Carter differential operator in Kerr-Newman as the second order differential operator $\KK$ associated to $K$ in \eqref{definition-carter} for a scalar function $\psi$ as
\beaa
\mathcal{K}(\psi)= \D_{\mu}( K^{\mu\nu} \D_\nu(\psi)).
\eeaa
 We now prove that because of special considerations of the metric and curvature of Kerr-Newman, the commutator  between $\KK$ and $\square_{\g_{M, a, Q}}$ vanishes.

\begin{theorem}\label{theo:Carter-operator-commutes-KN} In Kerr-Newman spacetime, the Carter differential operator commutes with $\square_{\g_{M, a, Q}}$, i.e. for a scalar function $\psi$ we have
\bea
[\KK, \square_{\g_{M, a, Q}}]\psi=0.
\eea
In particular, the Carter differential operator $\KK$ is a symmetry operator for Kerr-Newman spacetime.
\end{theorem}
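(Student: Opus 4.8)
The plan is to feed the explicit Kerr-Newman data into the abstract electrovacuum commutator \eqref{commutator-electrovacuum}, which applies here because the metric solves \eqref{Einstein-Maxwell}--\eqref{Maxwell} and carries the Killing tensor $K$ of Proposition \ref{prop:Carter-tensor-Killing}, and to show the resulting right-hand side vanishes identically. So I would first record the Ricci tensor of Kerr-Newman. Since $\bF=\bbF=0$ by Lemma \ref{lemma:Kerr-Newman}, the electromagnetic $2$-form $\F$ has in the principal null frame only the components $\rhoF,\dual\rhoF$, and substituting this into \eqref{Einstein-Maxwell} yields $\R_{\mu\nu}=\big(\rhoF^2+(\dual\rhoF)^2\big)\big(2(e_1\otimes e_1+e_2\otimes e_2)_{\mu\nu}-\g_{\mu\nu}\big)$. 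Using $\rhoF^2+(\dual\rhoF)^2=Q^2/|q|^4$, the tensor $O$ of \eqref{eq:first-definition-O}, and $O=K+(a^2\cos^2\th)\g$ from \eqref{eq:expression-K-O}, this becomes
\[
\R_{\mu\nu}=f\,K_{\mu\nu}+h\,\g_{\mu\nu},\qquad f=\frac{2Q^2}{|q|^6},\qquad h=-\frac{Q^2(r^2-a^2\cos^2\th)}{|q|^6}=-\tfrac12\,f\,(r^2-a^2\cos^2\th).
\]
The structural fact I want to exploit --- the ``peculiar property'' of Kerr-Newman mentioned in the introduction --- is that $\R$ is a \emph{pointwise} linear combination of $K$ and $\g$, with the sharp relation $h=-\tfrac12 f\cdot\tfrac12\tr K$ between the two coefficients (recall $\tr K=2(r^2-a^2\cos^2\th)$).

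Next I would substitute $\R=fK+h\g$ into \eqref{commutator-electrovacuum} and simplify. Since $\g$ acts as the identity, the $h\g$-contributions become terms of the form $(\D\text{ of a scalar})\times(K\text{ or }\tr K)$; using the Killing-tensor divergence identity $\D^\mu K_{\mu\nu}=-\tfrac12\D_\nu\tr K$ (obtained by contracting \eqref{Killing-equation}), the two $h$-terms carrying $\D\tr K$ cancel. Meanwhile the two $fK$-contributions $f\,{K_\mu}^\ep\D^\mu K_{\ep\nu}$ and $f\,(\D^\a {K^\ep}_\nu)K_{\a\ep}$ coincide after relabeling indices and, appearing with opposite signs in \eqref{commutator-electrovacuum}, cancel each other. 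What remains is
\[
[\KK,\square_{\g_{M,a,Q}}]\psi=\frac23\Big(\tfrac12 f\,K_\nu{}^\ep\D_\ep\tr K-(\D^\a f)(K^2)_{\a\nu}-(\D^\a h)K_{\a\nu}\Big)\D^\nu\psi,
\]
so the theorem then reduces to showing that the covector $V_\nu$ in parentheses is identically zero.

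This last verification is the main obstacle: it is where the special numerology of Kerr-Newman genuinely enters rather than any formal/tensorial principle --- indeed, as the remark after \eqref{commutator-electrovacuum} stresses, the abstract electrovacuum commutator does not vanish in general. I would compute $V_\nu$ from three inputs: (i) the algebraic identity $(K^2)_{\mu\nu}=(a^2\cos^2\th)^2\g_{\mu\nu}+(r^2-a^2\cos^2\th)O_{\mu\nu}$, which follows from $O^2=|q|^2O$ ($O$ being $|q|^2$ times the projector onto $\langle e_1,e_2\rangle$) together with $K=O-(a^2\cos^2\th)\g$; (ii) the relation $h=-\tfrac12 f(r^2-a^2\cos^2\th)$; and (iii) the frame facts $e_a(r)=0$, $e_2(\cos\th)=0$, $e_1(\cos\th)=-\sin\th/|q|$ from \eqref{eq:derivatives-r-th} and \eqref{eq:Out.PGdirections-Kerr-12}, which imply that for any $g=g(r,\th)$ one has ${O^\mu}_\nu\D_\mu g=|q|(\partial_\th g)\,(e_1)_\nu$ and $\D_\nu\th=|q|^{-1}(e_1)_\nu$. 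Substituting, $V_\nu$ splits into a contribution from the $O$-contractions proportional to $(e_1)_\nu$ and a pure-gradient contribution proportional to $\D_\nu\th$; using $f\propto|q|^{-6}$ (so $\partial_\th f=\tfrac{6a^2\cos\th\sin\th}{|q|^2}f$) together with $a^2\cos^2\th+\tfrac12(r^2-a^2\cos^2\th)=\tfrac12|q|^2$ and $|q|^2-(r^2-a^2\cos^2\th)=2a^2\cos^2\th$, both contributions equal $\frac{6a^4\cos^3\th\sin\th}{|q|}\,f\,(e_1)_\nu$ up to sign and cancel. Hence $V_\nu=0$, $[\KK,\square_{\g_{M,a,Q}}]\psi=0$, and $\KK$ is a symmetry operator. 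The earlier reductions are routine but require care with index positions and with the divergence identity.
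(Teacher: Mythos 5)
Your proposal is correct, and while it starts from the same point as the paper --- the electrovacuum commutator formula \eqref{commutator-electrovacuum} from Proposition \ref{commutation-KK-square} --- it executes the verification quite differently. The paper never isolates a structural identity for the Ricci tensor: it computes the three terms $I_1, I_2, I_3$ of \eqref{commutator-electrovacuum} directly in the principal null frame, using the frame values of Lemma \ref{lemma:Kerr-Newman}, the Ricci formulae, and the derivatives of $K$ from Proposition \ref{prop:Carter-tensor-Killing}, and finds $I_2=-I_1$, $I_3=2I_1$, hence cancellation. You instead extract the invariant fact $\R_{\mu\nu}=f\,K_{\mu\nu}+h\,\g_{\mu\nu}$ with $f=2Q^2/|q|^6$ and $h=-\tfrac14 f\,\tr K$ (which I checked against the paper's frame components of $\R$), kill most of \eqref{commutator-electrovacuum} by purely formal means (the $h\g$-pieces cancel outright, the two $fK\,\D K$-pieces cancel after relabeling, and the divergence identity $\D^\mu K_{\mu\nu}=-\tfrac12\D_\nu\tr K$ handles the rest), and reduce the theorem to the vanishing of the single covector $V_\nu=\tfrac12 f\,{K_\nu}^\ep\D_\ep\tr K-(\D^\a f)(K^2)_{\a\nu}-(\D^\a h)K_{\a\nu}$, verified using $K^2=(a^2\cos^2\th)^2\g+(r^2-a^2\cos^2\th)O$ and the explicit $r,\th$-dependence of $f,h$; I confirmed your final numerology, including the equal-and-opposite contributions $\pm\frac{6a^4\cos^3\th\sin\th}{|q|}f\,(e_1)_\nu$. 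One small point to make explicit: the ``pure-gradient'' piece is proportional to $\D_\nu\th$ only after its radial-gradient components cancel among themselves, which they do by the same identity $a^2\cos^2\th+\tfrac12(r^2-a^2\cos^2\th)=\tfrac12|q|^2$ you quote, so the argument closes; spelling out that null-direction check would make the write-up airtight. Your route buys a cleaner identification of the ``peculiar property'' responsible for the commutation (Ricci lies pointwise in the span of $K$ and $\g$ with the trace-tied coefficient), at the cost of a residual coordinate computation, whereas the paper's frame-by-frame calculation reuses machinery it needs anyway for the rest of the analysis.
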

\begin{proof}  Since the Kerr-Newman metric is an electrovacuum spacetime, the commutator between $\KK$ and $\square_{\g_{M, a, Q}}$  satisfies \eqref{commutator-electrovacuum}, i.e.
\beaa
[\KK, \square_\g] \psi&=& \frac 2 3\big({{\R}_{\mu}}^\ep  \D^\mu  K_{\ep \nu}-   {{\R}_{\nu}}^\ep  \D^\mu K_{\mu \ep}-  \D^\a {\R^\ep}_{\nu}   K_{\a \ep}\big) \D^\nu \psi=:\frac 2 3 \left(I_1-I_2-I_3 \right).
\eeaa
From the Einstein-Maxwell equation \eqref{Einstein-Maxwell}, we compute the Ricci curvature of the Kerr-Newman metric, which is given by
\beaa
\R_{a3}&=& 2\dual \rhoF \dual \bbF_a - 2\rhoF\bbF_a=0, \qquad \R_{a4}=2\dual \rhoF \dual \bF_a + 2\rhoF\bF_a=0, \\
\R_{33}&=& 2 \bbF \c \bbF =0, \qquad \R_{44}= 2 \bF\cdot  \bF=0
\eeaa
and
\beaa
\R_{34}&=&2\left(\rhoF^2 +\dual \rhoF^2\right)=\frac{2Q^2}{|q|^4} , \qquad \R_{ab}=\left( \rhoF^2+\dual \rhoF^2 \right) \de_{ab} =\frac{Q^2}{|q|^4} \de_{ab} ,
\eeaa
where we used the values in Lemma \ref{lemma:Kerr-Newman}. Using the Ricci formulae \eqref{conn-coeff}, we deduce
\beaa
\D_a {\R}_{b c}&=& \nab_a {\R}_{b c}-\frac 1 2 \chi_{ab} \R_{3 c}-\frac 1 2 \chib_{ab}\R_{4 c}-\frac 1 2 \chi_{ac} \R_{b3} - \frac 1 2 \chib_{ac}\R_{b 4}\\
&=& \nab_a\left(\frac{Q^2}{|q|^4} \de_{bc} \right)=-2 (|q|^2)^{-3} \nab_a\left(| q|^2 \right) Q^2 \de_{bc}=-2 (\eta_a+\etab_a) \frac{Q^2}{|q|^4} \de_{bc},\\
\D_4\R_{3c}&=& \nab_4 \R_{3c}- 2\om \R_{3c}- 2\etab_b \R_{bc}-\etab_c\R_{34}=- 2\etab_b \frac{Q^2}{|q|^4} \de_{bc}-\etab_c \frac{2Q^2}{|q|^4}= - \frac{4Q^2}{|q|^4}\etab_c,\\
\D_a \R_{b3}&=& \nab_a \R_{b3}-\frac 1 2 \chi_{ab} \R_{33 }-\frac 1 2 \chib_{ab}\R_{34}-\chib_{ac}  \R_{bc}-\ze_a\R_{b3} =- \trchb  \frac{Q^2}{|q|^4} \de_{ab},\\
\D_3 \R_{43}&=& \nab_3 \R_{43}-2\eta_b \R_{b 3}= \nab_3 \left(\frac{2Q^2}{|q|^4}\right)= - \trchb \frac{4Q^2}{|q|^4},\\
\D_4 {\R}_{33}&=& \nab_4 {\R}_{33}-4\om \R_{33}-4\etab_b \R_{b 3}=0.
\eeaa
Similarly, as computed in Proposition \ref{prop:Carter-tensor-Killing}, we recall
\beaa
\D_a K_{b c}=\D_4 K_{33}=\D_3 K_{43}=0, \qquad \D_a K_{b3}=-\frac 1 2 \trchb |q|^2 \de_{ab}, \qquad \D_4K_{3c}= -2\etab_c |q|^2.
\eeaa
Also, from \eqref{eq:divergence-K} and using \eqref{eq:derivatives-r-th}, we have
\begin{equation}\label{eq:divergence}
\begin{split}
\D^\mu K_{\mu c}&= -\frac 1 2 \D_c (\tr K)= -\frac 1 2 e_c (2(r^2 -a^2\cos^2\th))= (\eta_c + \etab_c) |q|^2,\\
\D^\mu K_{\mu 3}&= -\frac 1 2 \D_3 (\tr K)= -\frac 1 2 e_3 (2(r^2 -a^2\cos^2\th))= - \trchb |q|^2, \\
\D^\mu K_{\mu4}&= -\frac 1 2 \D_4 (\tr K)= -\frac 1 2 e_4 (2(r^2 -a^2\cos^2\th))= - \trch |q|^2.
\end{split}
\end{equation}
We compute $I_1={{\R}_{\mu}}^\ep  \D^\mu  K_{\ep \nu}\D^\nu \phi$.
\beaa
I_1&=&{{\R}_{\mu}}^\ep  \D^\mu  K_{\ep \nu}  \D^\nu \psi= {{\R}_{\mu}}^a  \D^\mu  K_{a \nu}  \D^\nu \psi+{{\R}_{\mu}}^3  \D^\mu  K_{3 \nu}  \D^\nu \psi+{{\R}_{\mu}}^4  \D^\mu  K_{4 \nu}  \D^\nu \psi\\
&=& {{\R}_{b}}^a  \D^b  K_{a \nu}  \D^\nu \psi-\frac 1 2 {\R}_{34}  \D^3  K_{3 \nu}  \D^\nu \psi-\frac 1 2 {\R}_{43}  \D^4  K_{4 \nu}  \D^\nu \psi\\
&=& \R^{ab}  \D_b  K_{a c}  \D^c \psi+\R^{ab}  \D_b  K_{a 3}  \D^3 \psi+\R^{ab}  \D_b  K_{a 4}  \D^4 \psi\\
&&+\frac 1 4 {\R}_{34}  \D_4  K_{3 a}  \D^a \psi+\frac 1 4 {\R}_{34}  \D_4  K_{3 3}  \D^3 \psi+\frac 1 4 {\R}_{34}  \D_4  K_{3 4}  \D^4 \psi\\
&&+\frac 1 4 {\R}_{43}  \D_3  K_{4 a}  \D^a \psi+\frac 1 4 {\R}_{43}  \D_3  K_{4 3}  \D^3 \psi+\frac 1 4 {\R}_{43}  \D_3  K_{4 4}  \D^4 \psi
\eeaa
which gives, using the above values, 
\beaa
I_1&=&-\R^{ab} \frac 1 2 \trchb |q|^2 \de_{ab}  \D^3 \psi-\R^{ab} \frac 1 2 \trch |q|^2 \de_{ab}  \D^4 \psi-\frac 1 2 {\R}_{34} \etab_a |q|^2  \D^a \psi-\frac 1 2 {\R}_{43}  \eta_a |q|^2  \D^a \psi\\
&=&\frac 1 2 \frac{Q^2}{|q|^2}  ( \trchb \D_4 \psi+\trch\D_3 \psi)- \frac{Q^2}{|q|^2} (\eta_a+\etab_a  ) \D^a \psi.
\eeaa
We compute $I_2= {{\R}_{\nu}}^\ep  \D^\mu K_{\mu \ep} \D^\nu \psi$.
\beaa
 I_2&=& {{\R}_{\nu}}^\ep  \D^\mu K_{\mu \ep} \D^\nu \psi=   {{\R}_{\nu}}^c  \D^\mu K_{\mu c} \D^\nu \psi+  {{\R}_{\nu}}^3  \D^\mu K_{\mu 3} \D^\nu \psi+  {{\R}_{\nu}}^4  \D^\mu K_{\mu 4} \D^\nu \psi\\
  &=&   \R^{ac}  (\eta_c + \etab_c) |q|^2 \D_a \psi+\frac 1 4   {\R}_{34} ( - \trchb |q|^2) \D_4 \psi+\frac 1 4  {\R}_{43} ( - \trch |q|^2) \D_3 \psi\\
    &=&-\frac 1 2  \frac{Q^2}{|q|^2}    (  \trchb \D_4 \psi+ \trch \D_3 \psi)+  \frac{Q^2}{|q|^2}  (\eta^a + \etab^a)  \D_a \psi=-I_1.
\eeaa
Finally, we compute $I_3$.
\beaa
I_3&=& \D^\a {\R^\ep}_{\nu}   K_{\a \ep} \D^\nu \psi =  \D^\a {\R^c}_{\nu}   K_{\a c} \D^\nu \psi+ \D^\a {\R^3}_{\nu}   K_{\a 3} \D^\nu \psi+ \D^\a {\R^4}_{\nu}   K_{\a 4} \D^\nu \psi\\
 &=&  \D_a \R_{c\nu}   K^{a c} \D^\nu \psi+\frac 1 4  \D_3 \R_{4\nu}   K_{4 3} \D^\nu \psi+ \frac 1 4 \D_4 \R_{3\nu}   K_{3 4} \D^\nu \psi\\
 &=&  \D_a \R_{cb}   K^{a c} \D^b \psi+ \D_a \R_{c3}   K^{a c} \D^3 \psi+ \D_a \R_{c4}   K^{a c} \D^4 \psi\\
 &&+\frac 1 4  \D_3 \R_{4a}   K_{4 3} \D^a \psi+\frac 1 4  \D_3 \R_{43}   K_{4 3} \D^3 \psi+\frac 1 4  \D_3 \R_{44}   K_{4 3} \D^4 \psi\\
 &&+ \frac 1 4 \D_4 \R_{3a}   K_{3 4} \D^a \psi+ \frac 1 4 \D_4 \R_{33}   K_{3 4} \D^3 \psi+ \frac 1 4 \D_4 \R_{34}   K_{3 4} \D^4 \psi
\eeaa
which gives, using the above values, 
\beaa
 I_3 &=&-2 \frac{Q^2}{|q|^4}  r^2  (\eta_a+\etab_a) \D^a \psi- \trchb  \frac{Q^2}{|q|^4} 2r^2 \D^3 \psi - \trch  \frac{Q^2}{|q|^4} 2r^2 \D^4 \psi\\
 &&-\frac 1 4  \frac{4Q^2}{|q|^4}\eta_a  2(a^2\cos^2\th)  \D^a \psi-\frac 1 4  \trchb \frac{4Q^2}{|q|^4}   2(a^2\cos^2\th)  \D^3 \psi\\
 &&- \frac 1 4 \frac{4Q^2}{|q|^4}\etab_a   2(a^2\cos^2\th)  \D^a \psi- \frac 1 4 \trch \frac{4Q^2}{|q|^4}    2(a^2\cos^2\th)  \D^4 \psi\\
 &=&\frac{Q^2}{|q|^2} ( \trchb \D_4 \psi +\trch \D_3 \psi)-2 \frac{Q^2}{|q|^2}   (\eta^a+\etab^a) \D_a \psi=2I_1.
\eeaa
We therefore obtain
\beaa
[\KK, \square_\g] \psi&=&\frac 2 3 \left(I_1-I_2-I_3 \right)=0,
\eeaa
as stated.
\end{proof}

\subsection{The modified Laplacian in Kerr-Newman}\label{section:modified-laplacian}

Even though the Carter differential operator $\KK$ is a symmetry operator for the Kerr-Newman metric, it is convenient to extract from it an elliptic operator which we identify as a \textit{modified Laplacian} in Kerr-Newman, see also \cite{GKS}. Such modified Laplacian is then proved to be a \textit{conformal symmetry operator}, as it is a symmetry operator for the conformal rescaling of the Kerr-Newman metric $|q|^2 \g_{M, a, Q}$.

\begin{proposition}\label{prop:KK-OO} The Carter differential operator $\KK$ in Kerr-Newman is given by
\bea\label{operator-KK}
\KK &=& -(a^2 \cos^2\th) \ \square_{\g_{M, a, Q}} + \OO,
\eea
where $\OO$ denotes a second order differential operator given by
\bea\label{definition-OO-lap}
\OO(\psi) &=& |q|^2 \left(\lap \psi + (\eta+\etab) \c \nab \psi   \right),
\eea
where $\lap\psi=\de^{ab} \nab_a \nab_b \psi$, for $a, b=1,2$. 
We call $\OO$ the \textup{modified Laplacian} of the Kerr-Newman metric.

Moreover, the modified Laplacian $\OO$ is a \textup{conformal symmetry operator}, i.e. for a scalar function $\psi$ we have
\bea\label{eq:OO-commutes-}
[\OO, |q|^2\square_{\g_{M, a, Q}}]\psi=0.
\eea
\end{proposition}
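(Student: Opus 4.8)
The plan is to prove the two assertions in sequence, with the second one following almost mechanically once the first is in hand. First I would establish the decomposition \eqref{operator-KK}. Starting from Definition \ref{definition-Killing-operator} applied to the Carter tensor in the form \eqref{eq:expression-K-O}, we have $\KK(\psi) = \D_\mu(K^{\mu\nu}\D_\nu\psi) = \D_\mu\big(-(a^2\cos^2\th)\,\g^{\mu\nu}\D_\nu\psi\big) + \D_\mu(O^{\mu\nu}\D_\nu\psi)$. The first term is $-(a^2\cos^2\th)\,\square_{\g_{M,a,Q}}\psi - \D^\mu(a^2\cos^2\th)\,\D_\mu\psi$, so I would define $\OO(\psi)$ to be whatever remains, namely $\OO(\psi) := \D_\mu(O^{\mu\nu}\D_\nu\psi) - \D^\mu(a^2\cos^2\th)\,\D_\mu\psi$, and then the content of the first claim is the identification of this with $|q|^2(\lap\psi + (\eta+\etab)\c\nab\psi)$. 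To see this I would expand $\D_\mu(O^{\mu\nu}\D_\nu\psi)$ in the null frame using \eqref{eq:first-definition-O}, i.e. $O^{\mu\nu} = |q|^2\sum_{a=1,2} e_a^\mu e_a^\nu$; the divergence produces $\sum_a \D_{e_a}\big(|q|^2 e_a(\psi)\big)$ plus connection terms coming from $\D_\mu e_a^\mu$ and $\D_\mu e_a^\nu$. Using the Ricci formulae \eqref{conn-coeff} together with the values in Lemma \ref{lemma:Kerr-Newman} — in particular $e_a(|q|^2) = |q|^2(\eta+\etab)_a$ from \eqref{eq:derivatives-r-th}, the fact that $\chih = \chibh = \xi = \xib = 0$, and the divergence identities \eqref{eq:divergence-K}/\eqref{eq:divergence} already recorded — the frame-projected terms reorganize into $|q|^2\lap\psi$, the curvature-free $(\eta+\etab)$ correction, and an extra piece $-\D^\mu(a^2\cos^2\th)\D_\mu\psi$ that is then cancelled by the contribution we carved out from the metric term. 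The bookkeeping mirrors exactly the frame computations already done in the proof of Proposition \ref{prop:Carter-tensor-Killing}, so I expect this to be routine though slightly tedious.

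For the conformal symmetry claim \eqref{eq:OO-commutes-}, the idea is to reduce it to the already-proven Theorem \ref{theo:Carter-operator-commutes-KN}. From \eqref{operator-KK} we have $\OO = \KK + (a^2\cos^2\th)\,\square_{\g}$, so
\[
[\OO, |q|^2\square_{\g}] \;=\; [\KK, |q|^2\square_{\g}] \;+\; [(a^2\cos^2\th)\,\square_{\g},\,|q|^2\square_{\g}].
\]
I would handle the two brackets separately. For the second, both factors are built from $\square_\g$ multiplied by functions, so $[(a^2\cos^2\th)\square_\g,\,|q|^2\square_\g]\psi = (a^2\cos^2\th)\square_\g(|q|^2\square_\g\psi) - |q|^2\square_\g((a^2\cos^2\th)\square_\g\psi)$; writing $|q|^2 = r^2 + a^2\cos^2\th$ and expanding using the Leibniz rule for $\square_\g$ acting on a product, the purely $a^2\cos^2\th$ cross-terms cancel by symmetry and one is left with $[(a^2\cos^2\th)\square_\g,\, r^2\square_\g]\psi$, which is a first-order operator in $\square_\g\psi$ whose coefficients involve $\square_\g(r^2)$, $\square_\g(a^2\cos^2\th)$, and $\D r \c \D(a^2\cos^2\th)$. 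For the first bracket, I would expand $[\KK, |q|^2\square_\g]\psi = \KK(|q|^2\square_\g\psi) - |q|^2\square_\g(\KK\psi)$ and insert $\KK\psi = -(a^2\cos^2\th)\square_\g\psi + \OO\psi$ together with Theorem \ref{theo:Carter-operator-commutes-KN} in the form $\KK(\square_\g\psi) = \square_\g(\KK\psi)$, so that the $\square_\g\KK\psi$ pieces organize themselves and what survives is again a commutator of the schematic type $[\text{function}\cdot\square_\g,\ \text{function}\cdot\square_\g]$ acting on $\psi$, plus terms where $\KK$ sees the function $|q|^2$. Collecting everything, the claim \eqref{eq:OO-commutes-} becomes an identity among the scalar functions $\square_\g(r^2)$, $\square_\g(a^2\cos^2\th)$, $\D(r^2)\c\D(a^2\cos^2\th)$ and the contractions of $K$ with $\D r$ and $\D(a^2\cos^2\th)$, all of which are computable from Lemma \ref{lemma:Kerr-Newman}.

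The main obstacle I anticipate is not conceptual but organizational: keeping the bookkeeping of the various first- and second-order remainder terms straight, since several of them arise in more than one of the brackets above and must be shown to cancel in combination rather than individually. A cleaner route that I would try first — and which I expect to be the one adopted — is to avoid the brute-force commutator expansion altogether by exploiting a geometric interpretation: the operator $|q|^{-2}\KK$, or equivalently $\square_\g$ twisted by the conformal factor $|q|^2$, should coincide with the Laplace–Beltrami operator (or a natural sum of separated operators) of the conformally rescaled metric $|q|^2\g_{M,a,Q}$, under which the Carter structure becomes a direct product and $\OO$ becomes, up to lower order, the Laplacian on the angular factor. In that picture the commutation \eqref{eq:OO-commutes-} is a consequence of $\OO$ and $|q|^2\square_\g$ acting on "different variables" after the conformal change, and the hard part reduces to verifying the conformal transformation law for $\square_\g$ — namely that $|q|^2\square_{\g}\psi = \square_{|q|^2\g}\psi$ holds in $3+1$ dimensions up to the zeroth-order conformal anomaly, which vanishes here because of the specific form of $|q|^2$. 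If that shortcut does not cleanly close, I would fall back on the direct computation sketched above, which is guaranteed to work given the explicit values already tabulated.
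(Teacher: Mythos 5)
Your main (fallback) argument is correct and is essentially the paper's proof: part one is the same frame computation of $\D_\mu(K^{\mu\nu}\D_\nu\psi)$ using the Ricci formulae \eqref{conn-coeff}, the values in Lemma \ref{lemma:Kerr-Newman} and the divergence identities \eqref{eq:divergence} (the paper keeps $K$ whole and lets the trace-$\chi$, trace-$\chib$ terms cancel, while you split off the metric piece first, but the bookkeeping is identical), and part two reduces \eqref{eq:OO-commutes-} to Theorem \ref{theo:Carter-operator-commutes-KN} plus product rules for $\square_\g$ and the facts $e_3(a^2\cos^2\th)=e_4(a^2\cos^2\th)=0$, $\nab(|q|^2)=\nab(a^2\cos^2\th)$, $\OO(|q|^2)=\OO(a^2\cos^2\th)=|q|^2\square_\g(a^2\cos^2\th)$, exactly as in the paper (you organize the bracket as $[\KK,|q|^2\square_\g]+[(a^2\cos^2\th)\square_\g,|q|^2\square_\g]$, the paper as $\OO(|q|^2\square_\g\psi)-|q|^2\square_\g(\OO\psi)$ after first computing $[\OO,\square_\g]$; both close). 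The one point to correct is the ``cleaner route'' you say you would try first: in four dimensions $|q|^2\square_{\g}\psi$ is \emph{not} $\square_{|q|^2\g}\psi$ up to a zeroth-order anomaly — the conformal change of the scalar wave operator produces a first-order term $2|q|^{-1}\,\D^\mu|q|\,\D_\mu\psi$ (conformal covariance requires both the $-\frac16 R$ correction and a conformal weight on $\psi$), and this term does not vanish for $\Omega=|q|$, so that shortcut as stated does not close. The genuinely quick route in this spirit is the Boyer–Lindquist one: once one has \eqref{square-GKS} and \eqref{operator-O-explicit}, $\OO$ has $\th$-only coefficients and contains no $\pr_r$, so it manifestly commutes with $\pr_r(\De\pr_r)+\frac1\De(\cdots)$; the paper derives \eqref{operator-O-explicit} only after the proposition, which is why it proves \eqref{eq:OO-commutes-} the covariant way, as you do in your fallback.
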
 
\begin{proof} 
We compute 
\beaa
\mathcal{K}(\psi)= \D_{\mu}( K^{\mu\nu} \D_\nu\psi)= K^{\mu\nu} \D_{\mu}\D_\nu\psi+\D_{\mu} K^{\mu\nu} \D_\nu\psi
\eeaa
Using \eqref{eq:divergence} and the definition of $K$ \eqref{definition-carter}, we obtain
\beaa
\KK(\psi)&=&(-(a^2\cos^2\th)  \g^{\mu\nu} +|q|^2(e_1 \otimes e_1+e_2 \otimes e_2)^{\mu\nu}) \D_{\mu}\D_\nu\psi\\
&&+( \eta^a+\etab^a) |q|^2  \nab_a\psi+\frac 1 2|q|^2 \trch   \nab_3\psi+\frac 1 2 |q|^2 \trchb   \nab_4\psi\\
&=&-(a^2\cos^2\th) \square_{\g_{M, a, Q}} \psi + |q|^2 \de^{ab} \D_{a}\D_b\psi\\
&&+( \eta^a+\etab^a) |q|^2  \nab_a\psi+\frac 1 2|q|^2 \trch   \nab_3\psi+\frac 1 2 |q|^2 \trchb   \nab_4\psi\\
&=&-(a^2\cos^2\th) \square_{\g_{M, a, Q}} \psi + |q|^2\de^{ab}( \nab_b \nab_a\psi -\frac 1 2 \chi_{ba} \nab_{3} \psi -\frac 1 2 \chib_{ba} \nab_{4}\psi )\\
&&+( \eta^a+\etab^a) |q|^2  \nab_a\psi+\frac 1 2|q|^2 \trch   \nab_3\psi+\frac 1 2 |q|^2 \trchb   \nab_4\psi,
\eeaa
which gives
\beaa
\KK(\psi)&=&-(a^2\cos^2\th) \square_{\g_{M, a, Q}} \psi + |q|^2 \left(\de^{ab} \nab_b \nab_a\psi +( \eta^a+\etab^a)  \nab_a\psi\right),
\eeaa
where the last two terms define the operator $\OO(\psi)$. This proves the first part of the Proposition.

Using \eqref{operator-KK} to write $\OO=\KK+(a^2\cos^2\th) \square_{\g}$, we deduce using Theorem \ref{theo:Carter-operator-commutes-KN}, 
 \beaa
 \, [\OO,  \square_{\g}]\psi&=&  [ \KK,  \square_{\g}]\psi+  [ (a^2 \cos^2\th) \ \square_{\g} ,  \square_{\g}]\psi=  [ (a^2 \cos^2\th) \ \square_\g ,  \square_\g]\psi .
 \eeaa
 Recall, see for example \cite{GKS}, for a scalar function $f$ we have
 \bea\label{eq:wave-eq-34}
 \square_\g f&=& -\frac 1 2 (\nab_3 \nab_4 +\nab_4 \nab_3)f+\left( \omb-\frac 1 2 \trchb \right) \nab_4 f +\left(\om-\frac 1 2 \trch \right)\nab_3 f +\lap f + (\eta+\etab)\c \nab f.
 \eea
 Consequently, we have
  \bea\label{square-f-Q-1}
\square_\g( f h)&=& \square_\g(f) h+f \square_\g(h)- \nab_3 f \nab_4h- \nab_4f \nab_3 h +2\nab f \c \nab h.
\eea
We then obtain, 
 \begin{equation}\label{eq:comm-OO-square}
 \begin{split}
 \, [\OO,  \square_\g]\psi &= (a^2 \cos^2\th) \ \square_\g (  \square_\g\psi) - \square_\g ((a^2\cos^2\th) \square_\g \psi)\\
 &= -\square_\g(a^2\cos^2\th )   \square_\g\psi-2\nab (a^2\cos^2\th ) \c \nab (  \square_\g\psi) 
 \end{split}
 \end{equation}
 where we used that $e_3(a^2\cos^2\th)=e_4(a^2\cos^2\th)=0$. Also,
using \eqref{eq:wave-eq-34} we compute
 \beaa
\square_\g (a^2\cos^2\th )&=&\lap (a^2\cos^2\th )+ (\eta+\etab) \c\nab(a^2\cos^2\th )=|q|^{-2} \OO(a^2\cos^2\th).
\eeaa
This gives
 \beaa
 \, [\OO,  \square_\g]\psi  &=& -|q|^{-2} \OO(a^2\cos^2\th)  \square_\g\psi-2\nab (a^2\cos^2\th ) \c \nab (  \square_\g\psi).
 \eeaa
 We can finally deduce from \eqref{eq:comm-OO-square}:
 \beaa
 \, [\OO, |q|^2\square_\g]\psi&=& \OO(|q|^2 \square_\g \psi) - |q|^2 \square_\g( \OO(\psi))\\
 &=& \OO(|q|^2) \square_\g \psi + |q|^2[\OO,  \square_\g]\psi +2|q|^2 \nab (|q|^2) \c \nab (\square_\g \psi)\\
  &=& \OO(|q|^2) \square_\g \psi - \OO(a^2\cos^2\th)  \square_\g\psi-2|q|^2\nab (a^2\cos^2\th ) \c \nab (  \square_\g\psi)  +2|q|^2 \nab (|q|^2) \c \nab (\square_\g \psi)\\
  &=& 0
 \eeaa
 where we used that $\nab (|q|^2)=\nab(r^2+a^2\cos^2\th)=\nab(a^2\cos^2\th)$ and $\OO(|q|^2)=\OO(a^2\cos^2\th)$.
 This proves the Proposition. 
\end{proof}

We finally express the modified Laplacian $\OO$ explicitly in Boyer-Lindquist coordinates in Kerr-Newman.

\begin{lemma} The modified Laplacian $\OO$  in Boyer-Lindquist coordinates reads
\bea\label{operator-O-explicit}
\OO &=& \frac{1}{\sin\th} \pr_\th(\sin\th\pr_\th)+\frac{1}{\sin^2\th} \pr^2_\vphi +2a \partial_t\partial_\vphi+a^2\sin^2\th\pr^2_t\\
&=& \lap_{\SSS^2} +2a \partial_t\partial_\vphi+a^2\sin^2\th\pr^2_t
\eea
where $ \lap_{\SSS^2} =\frac{1}{\sin\th} \pr_\th(\sin\th\pr_\th)+\frac{1}{\sin^2\th} \pr^2_\vphi $ is the (unit) spherical Laplacian on $\SSS^2$. 
\end{lemma}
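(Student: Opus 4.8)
The plan is to compute $\OO$ directly from its invariant expression \eqref{definition-OO-lap}, namely $\OO(\psi)=|q|^2\big(\lap\psi+(\eta+\etab)\c\nab\psi\big)$, using the explicit horizontal orthonormal frame $e_1,e_2$ of \eqref{eq:Out.PGdirections-Kerr-12} together with the connection coefficients $\nab_{e_a}e_b$ of \eqref{eq:expressions-nab11} and the values of $\eta,\etab$ recorded in Lemma \ref{lemma:Kerr-Newman}. Since $\psi$ is a scalar, $\nab_{e_a}\psi=e_a(\psi)$ and $\nab_{e_a}\nab_{e_b}\psi=e_a(e_b(\psi))-(\nab_{e_a}e_b)(\psi)$, so \eqref{eq:expressions-nab11} gives
\[
\lap\psi=\nab_{e_1}\nab_{e_1}\psi+\nab_{e_2}\nab_{e_2}\psi=e_1(e_1(\psi))+e_2(e_2(\psi))+\La\, e_1(\psi).
\]
For the zeroth-order drift term, Lemma \ref{lemma:Kerr-Newman} yields $\eta_2+\etab_2=0$ and $\eta_1+\etab_1=-\frac{2a^2\sin\th\cos\th}{|q|^3}$, hence $(\eta+\etab)\c\nab\psi=-\frac{2a^2\sin\th\cos\th}{|q|^3}\,e_1(\psi)$.

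Next I would evaluate the two second-derivative pieces. Using $e_1=\frac{1}{|q|}\pr_\th$ and $\pr_\th|q|=-\frac{a^2\sin\th\cos\th}{|q|}$, one finds $e_1(e_1(\psi))=\frac{1}{|q|^2}\pr_\th^2\psi+\frac{a^2\sin\th\cos\th}{|q|^4}\pr_\th\psi$. For $e_2=\frac{a\sin\th}{|q|}\pr_t+\frac{1}{|q|\sin\th}\pr_\vphi$ the key observation is that all coefficients of $e_2$ are functions of $(r,\th)$ only, and therefore commute with the differentiations $\pr_t,\pr_\vphi$ occurring in $e_2$; this immediately gives
\[
e_2(e_2(\psi))=\frac{a^2\sin^2\th}{|q|^2}\pr_t^2\psi+\frac{2a}{|q|^2}\pr_t\pr_\vphi\psi+\frac{1}{|q|^2\sin^2\th}\pr_\vphi^2\psi.
\]
Collecting the three contributions to the coefficient of $\pr_\th\psi$ (from $e_1(e_1(\psi))$, from $\La\,e_1(\psi)$, and from the drift term), the bracket becomes $(r^2+a^2)\cot\th-a^2\sin\th\cos\th$, which simplifies through $r^2+a^2-a^2\sin^2\th=|q|^2$ to $|q|^2\cot\th$; thus the $\pr_\th\psi$ coefficient equals $\frac{\cot\th}{|q|^2}$.

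Finally, multiplying $\lap\psi+(\eta+\etab)\c\nab\psi$ by $|q|^2$ and using $\pr_\th^2+\cot\th\,\pr_\th=\frac{1}{\sin\th}\pr_\th(\sin\th\pr_\th)$ reproduces exactly \eqref{operator-O-explicit}, with the first two summands forming $\lap_{\SSS^2}$. The computation is entirely routine; the only point that requires a moment of care is the algebraic cancellation in the coefficient of $\pr_\th\psi$ and keeping proper track of the non-integrability connection term $\La$ (which is precisely what restores the flat spherical-Laplacian structure in $\th$).
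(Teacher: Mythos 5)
Your computation is correct and follows essentially the same route as the paper's proof: a direct evaluation of $\OO(\psi)=|q|^2(\lap\psi+(\eta+\etab)\c\nab\psi)$ in the frame \eqref{eq:Out.PGdirections-Kerr-12}, using the connection term $\La e_1(\psi)$ from \eqref{eq:expressions-nab11} and the values of $\eta,\etab$ from Lemma \ref{lemma:Kerr-Newman}, with the same cancellation $(r^2+a^2)-a^2\sin^2\th=|q|^2$ in the coefficient of $\pr_\th\psi$. The only difference is cosmetic bookkeeping in how the three $\pr_\th\psi$ contributions are grouped.
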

\begin{proof}
We recall that the Laplacian $\lap$ of a scalar function is given by, see \cite{GKS}
\beaa
\lap\psi&=& \de^{ab} \nab_a\nab_b\psi= \nab_1\nab_1\psi+\nab_2\nab_2\psi= e_1e_1(\psi )-\nab_{\nab_1 e_1} \psi+e_2e_2(\psi)-\nab_{\nab_2 e_2} \psi\\
&=& e_1e_1(\psi )+e_2e_2(\psi )+\La e_1(\psi).
\eeaa
where we used \eqref{eq:expressions-nab11}.
Using the values \eqref{eq:Out.PGdirections-Kerr-12} of $e_1$ and $e_2$, we compute
\beaa
e_1e_1(\psi)&=&\frac{1}{|q|}\pr_\th(\frac{1}{|q|}\pr_\th \psi)=\frac{1}{|q|^2}\pr^2_\th \psi+\frac{1}{|q|}\pr_\th(\frac{1}{|q|})\pr_\th \psi=\frac{1}{|q|^2}\pr^2_\th \psi+\frac{a^2\cos\th\sin\th}{|q|^4}\pr_\th \psi\\
e_2e_2(\psi)&=&(\frac{a\sin\th}{|q|}\pr_t+\frac{1}{|q|\sin\th}\pr_\vphi)(\frac{a\sin\th}{|q|}\pr_t\psi+\frac{1}{|q|\sin\th}\pr_\vphi \psi)\\
&=&\frac{a^2\sin^2\th}{|q|^2}\pr^2_t\psi+\frac{2a}{|q|^2}\pr_t \pr_\vphi\psi +\frac{1}{|q|^2\sin^2\th}\pr^2_\vphi \psi
\eeaa
which gives
\beaa
\lap\psi&=&\frac{1}{|q|^2}\pr^2_\th \psi+\frac{a^2\cos\th\sin\th}{|q|^4}\pr_\th \psi+\frac{a^2\sin^2\th}{|q|^2}\pr^2_t\psi+\frac{2a}{|q|^2}\pr_t \pr_\vphi\psi +\frac{1}{|q|^2\sin^2\th}\pr^2_\vphi \psi+\frac{r^2+a^2}{|q|^4}\cot\th \pr_\th\psi\\
&=&\frac{1}{|q|^2}\pr^2_\th \psi+\frac{r^2+a^2+a^2\sin^2\th}{|q|^4} \cot\th\pr_\th \psi+\frac{a^2\sin^2\th}{|q|^2}\pr^2_t\psi+\frac{2a}{|q|^2}\pr_t \pr_\vphi\psi +\frac{1}{|q|^2\sin^2\th}\pr^2_\vphi \psi.
\eeaa
We also compute
\beaa
(\eta+\etab) \c \nab \psi&=&(\eta_1+\etab_1)e_1(\psi)+(\eta_2+\etab_2)e_2(\psi)=-2\frac{a^2\cos\th\sin\th}{|q|^3}\frac{1}{|q|}\pr_\th(\psi)=-2\frac{a^2\sin^2\th}{|q|^4}\cot\th\pr_\th(\psi)
\eeaa
which gives
\beaa
\OO(\psi)&=& |q|^2( \lap \psi+ (\eta+\etab) \c \nab \psi)\\
&=&\pr^2_\th \psi+\frac{r^2+a^2+a^2\sin^2\th}{|q|^2} \cot\th\pr_\th \psi+a^2\sin^2\th\pr^2_t\psi+2a\pr_t \pr_\vphi\psi +\frac{1}{\sin^2\th}\pr^2_\vphi \psi-2\frac{a^2\sin^2\th}{|q|^4}\cot\th\pr_\th\psi\\
&=&\pr^2_\th \psi+ \cot\th\pr_\th \psi+\frac{1}{\sin^2\th}\pr^2_\vphi \psi+a^2\sin^2\th\pr^2_t\psi+2a\pr_t \pr_\vphi\psi\\
&=& \frac{1}{\sin\th} \pr_\th(\sin\th\pr_\th)+\frac{1}{\sin^2\th}\pr^2_\vphi \psi+a^2\sin^2\th\pr^2_t\psi+2a\pr_t \pr_\vphi\psi,
\eeaa
as stated.

\end{proof}

\section{The physical-space analysis of the wave equation}\label{subsection-preliminaries-wave}\label{section:slowly}

In this section we prove energy and local decay estimates for solutions to the scalar wave equation
\bea\label{wave-eq-gen}\label{wave-equation-general}
\square_{\g_{M, a, Q}} \psi=0
\eea
in Kerr-Newman spacetime for $|a| \ll M$ \textbf{entirely in physical space}, i.e. without recurring into frequency space or decomposition in modes. In order to do that, we will make use the approach first developed in \cite{And-Mor} in Kerr to commute the wave equation with the Carter tensor and the modified Laplacian $\OO$. 
From the fundamental commutation property obtained in Theorem \ref{theo:Carter-operator-commutes-KN}, we can extend the procedure of \cite{And-Mor} to the case of Kerr-Newman, where a resolution in physical space is crucial to tackle the problem of stability of the solution to electromagnetic-gravitational perturbations for the Einstein-Maxwell equation, as we will see in Section \ref{Einstein-Maxwell-equations}.

As in \cite{And-Mor}, define
\bea
|\psi|^2_{\SS}:= |\psi|^2+ |\pr_t \psi|^2+ |\pr_\vphi \psi|^2+ \sum_{\aund=1}^4 |\SS_\aund \psi|^2,
\eea
where $\SS_\aund$, for $\aund=1,2,3,4$, denote the set of the second order operators  given by, see Definition \ref{definition-SS},
\bea
\SS_1=\partial_t^2, \qquad \qquad \SS_2=\partial_{t} \partial_{\vphi}, \qquad \qquad \SS_3=\partial_\vphi^2, \qquad \qquad \SS_4=\OO.
\eea

We obtain the following.

\begin{theorem}\label{theorem:general} Let $\psi$ be a sufficiently regular solution to the wave equation in the slowly rotating Kerr-Newman spacetime $\g_{M, a, Q}$ with $|a| \ll M$, with initial data on $\Sigma_0$ which decays sufficiently fast.

Then the following energy-Morawetz estimates, for $\tau \geq 0$, can be obtained through \textbf{a physical-space analysis}:
\bea
E_{\tau, \SS}[\psi] +\mbox{Mor}_{(0, \tau), \SS}[\psi] \les E_{0, \SS}[\psi]
\eea
where 
\bea
E_{\tau, \SS}[\psi]&:=&\int_{\Sigma_\tau} |\partial_t \psi|_{\SS}^2+ |\partial_r \psi|_{\SS}^2+  |\nab \psi|_{\SS}^2 \\
\mbox{Mor}_{(\tau_1, \tau_2)}[\psi]&:=& \int_{\MM(\tau_1, \tau_2)} \frac{M^2}{r^3} |\pr_r\psi|_{\SS}^2 +\frac{M}{r^4}  |\psi|_{\SS}^2+ \widetilde{\mathbbm{1}}_{\{r \neq r^{RN}_{trap}\}} \big( r^{-1} |\nab \psi|_{\SS}^2+\frac{M}{r^2} |\pr_t \psi|_{\SS}^2\big) 
\eea
where $r^{RN}_{trap}=\frac{3M+\sqrt{9M^2-8Q^2}}{2}$ is the photon sphere of Reissner-Nordstr\"om $\g_{M, Q}$ and $\widetilde{\mathbbm{1}}_{\{r \neq r^{RN}_{trap}\}}$ is a function that is identically 1 for $\{ |r-r^{RN}_{trap}| > \delta\}$ for some $\delta >0$ and zero otherwise,  and $|\nab \psi|^2=|\nab_1 \psi|^2+|\nab_2\psi|^2$ with respect to the orthonormal basis in \eqref{eq:Out.PGdirections-Kerr-12}. 
\end{theorem}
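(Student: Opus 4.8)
The plan is to follow the generalized vectorfield method of Andersson--Blue \cite{And-Mor}, which is now available in Kerr-Newman thanks to the commutation property of Theorem \ref{theo:Carter-operator-commutes-KN} and its conformal counterpart \eqref{eq:OO-commutes-}. First I would record that since $\SS_1 = \partial_t^2$, $\SS_2 = \partial_t\partial_\vphi$, $\SS_3 = \partial_\vphi^2$ are built from the Killing vectorfields $T = \partial_t$, $Z = \partial_\vphi$, and since $\OO$ is related to the conformal symmetry operator $\KK$ via \eqref{operator-KK}, all four $\SS_{\underline a}$ are conformal symmetry operators for $\square_{\g_{M,a,Q}}$; hence the commuted functions $\psi_{\underline a} := \SS_{\underline a}(\psi)$ (after the appropriate conformal weight $|q|^2$ is taken into account) again solve the wave equation up to lower-order errors controlled by $|a|$. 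I would then set up the generalized energy-momentum tensor $\QQ[\psi]_{\underline a\,\underline b\,\mu\nu}$ and the associated current $\PP^{(\mathbf X)}_\mu[\psi] = \QQ[\psi]_{\underline a\,\underline b\,\mu\nu} X^{\underline a\,\underline b\,\nu}$, summing over the underlined indices.

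The core of the argument is the construction of the double-indexed vectorfield collection. For the Morawetz estimate I would take $\mathbf X = \FF^{\underline a\,\underline b}(r)\,\partial_r$ with $\FF$ modeled on the Reissner--Nordstr\"om Morawetz current (vanishing to first order at $r = r^{RN}_{trap}$, with the correct sign structure so that the zeroth-order term and the $|\partial_r\psi|^2$ term are positive), exactly as in the spherically symmetric case. The divergence then produces the schematic form $\AA^{\underline a\,\underline b}\,\partial_r\psi_{\underline a}\,\partial_r\psi_{\underline b} + \UU^{\alpha\beta\,\underline a\,\underline b}\,\partial_\alpha\psi_{\underline a}\,\partial_\beta\psi_{\underline b}$, where the second term carries the trapping degeneracy and — crucially — the indefinite cross term $u\,\tfrac{2ar}{(r^2+a^2)^2}\,\That(\psi)\,\partial_\vphi\psi$. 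The point of the higher-order multipliers is that integration by parts (Lemma \ref{lemma:integration-parts}) converts $\UU^{\alpha\beta\,\underline a\,\underline b}\,\partial_\alpha\psi_{\underline a}\,\partial_\beta\psi_{\underline b}$ into $\tfrac12 h\,(|\partial_t\Psi|^2 + |\nab\Psi|^2)$ plus boundary terms, where $\Psi$ is the trapped combination displayed in \eqref{eq:Psi-choice-z}; this is positive definite, and for $|a| \ll M$ one absorbs it back into control of genuine derivatives of $\psi$ by a perturbative argument off the Reissner--Nordstr\"om ($a = 0$) case. For the energy boundedness I would use a multiplier built from $\That_\HH = T + \om_\HH Z$ near the horizon (timelike there, with non-negative horizon flux by Proposition \ref{Lemma:Hawking-vf}) glued to $\partial_t$ away from it, applied to each $\psi_{\underline a}$, and handle the ergoregion/superradiance error — which is $O(|a|)$ and supported away from trapping — by absorbing it into the Morawetz spacetime integral, as in \cite{DR11b} but here entirely in physical space.

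The final step is bookkeeping: combine the Morawetz and energy currents with suitable large/small constants to obtain a single positive spacetime estimate plus boundary terms, control the boundary terms on $\Sigma_0$ by the initial $|\psi|_\SS$-energy, discard the (favorably signed) horizon and null-infinity fluxes, and collect everything into the stated inequality $E_{\tau,\SS}[\psi] + \mathrm{Mor}_{(0,\tau),\SS}[\psi] \lesssim E_{0,\SS}[\psi]$, with the degeneracy $\widetilde{\mathbbm 1}_{\{r \neq r^{RN}_{trap}\}}$ reflecting the first-order vanishing of $\FF$ at the photon sphere. The main obstacle I anticipate is controlling the lower-order error terms generated when commuting $\OO$ (equivalently $\KK$) through $\square_{\g_{M,a,Q}}$ and through the Morawetz current: although Theorem \ref{theo:Carter-operator-commutes-KN} gives \emph{exact} commutation with $\square_{\g}$, passing to the conformally weighted wave equation and to the radial multiplier introduces curvature- and $a$-dependent error terms (involving $\Kh$, $\eta+\etab$, and the charge $Q$), and one must check that every such term is either $O(|a|)$ — hence absorbable for slowly rotating black holes — or has a favorable sign; verifying this uniformly in $r$ down to the horizon, using the structural identities of Lemma \ref{lemma:Kerr-Newman}, is the delicate part, and it is precisely where the restriction $|a| \ll M$ enters for non-axisymmetric data.
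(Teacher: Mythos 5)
Your proposal follows essentially the same route as the paper: commutation with the four conformal symmetry operators $\SS_{\aund}$, a double-indexed radial multiplier whose trapped term is converted by the integration by parts of Lemma \ref{lemma:integration-parts} into a positive quantity in the combination $\Psi$ of \eqref{eq:Psi-choice-z}, positivity of the remaining coefficients obtained perturbatively from the axisymmetric (Stogin-type) construction, and energy estimates with $\That_\chi$ whose $O(|a|)$ superradiant error, supported away from trapping, is absorbed by the Morawetz bulk. One correction: for the scalar wave equation the commutation is exact, $[\SS_\aund, |q|^2\square_{\g}]\psi=0$, so the commuted $\psia$ solve the wave equation with no $O(|a|)$ or curvature ($\Kh$) error terms --- those arise only for the tensorial gRW system of Section \ref{Einstein-Maxwell-equations} --- and the genuinely delicate points are instead the rank-one choice $u^{\aund\bund}=-h\,\RRtp'^{\aund}\LL^{\bund}$ with $h$ built from the axisymmetric functions, and the absorption of the off-diagonal $\psia\,\psib$ cross terms via the hierarchy $|a|\ll\ep\ll 1$ encoded in $z_1=z_0-\ep z_0^2$ and $L^{\a\b}=\ep\,\pr_t^\a\pr_t^\b+O^{\a\b}$.
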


Observe that the above estimates is not optimal in terms of decay in $r$, and those weights can be improved  by applying the $r^p$ hierarchy of estimates introduced by Dafermos-Rodnianski in \cite{DR09a} in a standard fashion.

\subsection{Preliminaries}

From now on, we denote $\g=\g_{M, a, Q}$ for $a^2+Q^2 < M^2$.
From the form of the Kerr-Newman metric in Boyer-Lindquist coordinates $(t, r,  \th, \vphi)$ as given by \eqref{metric-KN}, one can deduce, see \cite{And-Mor}, that its conformal inverse $|q|^2 \g^{-1}$ can be written as 
\bea\label{inverse-metric-Kerr}
|q|^2 \g^{\a\b}&=& \Delta \partial_r^\a \partial_r^\b+\frac{1}{\Delta} \RR^{\a\b}
\eea
where
\bea
\RR^{\a\b}&=&  -(r^2+a^2)^2 \partial_t^\a \partial_t^\b-2a(r^2+a^2)\partial_t^{(\a} \partial_\vphi^{\b)}-a^2  \partial_\vphi^\a \partial_\vphi^\b+ \Delta O^{\a\b}, \label{definition-RR-tensor}\\
 O^{\a\b}&=& \partial_\th^\a  \partial_\th^\b  +\frac{1}{\sin^2\th} \partial_{\vphi}^\a \partial_{\vphi}^\b+2a\partial_t^{(\a} \partial_\vphi^{\b)}+a^2 \sin^2\th \partial_t^\a \partial_t^\b.
\eea
Observe that $O^{\a\b}$ is the same tensor as defined in \eqref{eq:first-definition-O}, but written here in Boyer-Lindquist coordinates using the orthonormal frame in \eqref{eq:Out.PGdirections-Kerr-12}.

The scalar wave equation on a Lorentzian manifold can be written in coordinates as
\beaa
\square_\g \psi=\frac{1}{\sqrt{-\det \g}}\partial_\a ((\sqrt{-\det\g}) \g^{\a\b} \partial_\b \psi)=0,
\eeaa
and one can deduce from \eqref{inverse-metric-Kerr} that the wave operator for the Kerr-Newman metric in Boyer-Lindquist coordinates is given by
\begin{equation}\label{square-GKS}
\begin{split}
|q|^2\square_\g&=\pr_r(\Delta \pr_r) +\frac{1}{\Delta} \Big(-(r^2+a^2)^2 \pr^2_t-2a(r^2+a^2)\pr_t \pr_\vphi-a^2\pr_\vphi^2\Big)\\
&+\frac{1}{\sin\th} \pr_\th(\sin\th\pr_\th)+\frac{1}{\sin^2\th} \pr^2_\vphi +2a \partial_t\partial_\vphi+a^2\sin^2\th\pr^2_t\\
&=\pr_r(\Delta \pr_r) +\frac{1}{\Delta} \Big(-(r^2+a^2)^2 \pr^2_t-2a(r^2+a^2)\pr_t \pr_\vphi-a^2\pr_\vphi^2\Big)+\OO,
\end{split}
\end{equation}
where $\OO$ is the modified Laplacian  defined in \eqref{operator-O-explicit}.

We now briefly formulate the initial value problem for the wave equation. 
We prescribe initial data on the (axisymmetric) hypersurface $\Sigma_0=\{ t^*=0\}$ where $t^*$ is the Kerr star coordinate.  We are interested in the behavior of the solution in the future Cauchy development of $\Sigma_0$ which is given by $\{ t^{*} \geq 0\}$. Denote $\phi_\tau$ the 1-parameter family of diffeomorphisms generated by the vector field $T$, and define the spacelike hypersurfaces $\Sigma_\tau=\phi_{\tau}(\Sigma_0)=\{ t=t^{*}\}$. Each leaf of this foliation terminates at the horizon and at spatial infinity $i^0$ (for more details see \cite{DR11}, \cite{DRSR}). For $\tau_2 >\tau_1$, the leaf $\Sigma_{\tau_2}$ lies in the future of $\Sigma_{\tau_1}$, and we denote the region bounded by $\Sigma_{\tau_1}$, $\Sigma_{\tau_2}$ and $\mathcal{H}^+$ by $\MM(\tau_1, \tau_2)=\cup_{\tau_1 \leq \tau \leq \tau_2} \Sigma_\tau$. We also denote $\mathcal{H}^+(\tau_1, \tau_2)=\mathcal{H}^+\cap \MM(\tau_1, \tau_2)$.

\begin{center}
\begin{tikzpicture}
 \filldraw[lightgray] (4,0) .. controls (1,0) .. (0.4,0.4) -- (2,2) -- (4,0);
 \draw[dashed,black, thick] (0,0) -- (2,-2) -- (4,0) node[anchor=west]{$i^0$} -- node[sloped, above]{$\mathscr{I}^+$} (2,2) -- node[sloped, above]{$\mathcal{H}^+$} (0,0);
 \draw[black] (4,0) .. controls (1,0) .. (0.4,0.4) node[near start, below]{$\Sigma_0$} node[midway, above]{$\MM(0, \tau)$};
  \draw[black] (4,0) .. controls (1.6,0.6) .. (1,1) node[near start, sloped, above]{$\Sigma_{\tau}$};
  \draw[black, thick] (0,0) --(2,2);
\filldraw[fill=white, draw=black] (0,0) circle (2pt);
\filldraw[fill=white, draw=black] (2,-2) circle (2pt);
\filldraw[fill=white, draw=black] (4,0) circle (2pt);
\filldraw[fill=white, draw=black] (2,2) circle (2pt);
\end{tikzpicture}
\end{center}

The wave equation is well posed in $\MM(0, \tau)$ with initial data $(\psi_0, \psi_1)$ defined on $\Sigma_0$ in $H^j_{loc}(\Sigma_0) \times H^{j-1}_{loc}(\Sigma_0)$, $j \geq 1$, see for example \cite{DR10}. Furthermore, the solutions depends smoothly on the parameters $a$ and $Q$, see \cite{DRSR}\cite{Civin}. By symmetry we can always assume the positivity of $a$.

\subsection{The vectorfield method}

We recall the main definitions in applying the vectorfield method to derive energy estimates for the wave equation. The vectorfield method is based on applying the divergence theorem in a causal domain, like $\MM(\tau_1, \tau_2)$, to certain energy currents, which are constructed from the energy momentum tensor.

\begin{itemize}
\item  The energy-momentum tensor associated to the wave equation \eqref{wave-equation-general} is given by
\bea\label{definition-energy-momentum-tensor}
\QQ[\psi]_{\mu\nu}&=& \pr_\mu\psi \pr_\nu \psi -\frac 1 2 \g_{\mu\nu} \pr_\lambda \psi \pr^\lambda \psi.
\eea
The wave equation \eqref{wave-equation-general} is satisfied if and only if the divergence of the energy-momentum tensor $\QQ[\psi]$ vanishes.

\item Let $X$ be a vectorfield and $w$ be a function. The current associated to $(X, w)$ is defined as 
 \bea\label{definition-of-P}
 \PP_\mu^{(X, w)}[\psi]&=&\QQ[\psi]_{\mu\nu} X^\nu +\frac 1 2  w  \psi \pr_\mu \psi   -\frac 1 4(\pr_\mu w )\psi^2.
  \eea
  
  \item The energy associated to  $(X, w)$ on the hypersurface $\Sigma_\tau$ is
  \beaa
E^{(X, w)}[\psi](\tau)&=& \int_{\Sigma_\tau} \PP^{(X, w)}_\mu[\psi] n_{\Sigma_\tau}^\mu 
\eeaa
where $n_{\Sigma_\tau}$ denotes the future directed timelike unit normal to $\Sigma_\tau$.
\end{itemize}

A standard computation, see for example \cite{KS}, then implies for the divergence of $\PP$:
  \bea
  \label{le:divergPP-gen}
  \D^\mu \PP_\mu^{(X, w)}[\psi]= \frac 1 2 \QQ[\psi]  \c\piX-\frac 1 4 \square_\g w |\psi|^2+\frac 12  w (\pr_\lambda \psi \pr^\lambda \psi),
 \eea
 where $\piX_{\mu\nu}=\D_{(\mu} X_{\nu)}$ is the deformation tensor of the vectorfield $X$. Recall that if $X$ is a Killing vectorfield, then $\piX=0$.

For convenience    we  introduce the notation,
 \bea
 \label{eq:modified-div}
 \EE^{(X, w)}[\psi]&:=&   \D^\mu  \PP_\mu^{(X, w)}[\psi].
  \eea

By applying the divergence theorem to $\PP_\mu^{(X, w)}$ within a region such as $\MM(\tau_1, \tau_2)$ for carefully chosen $(X, w)$ one obtains the associated energy identity:
\bea
E^{(X, w)}[\psi](\tau_2) + \int_{\mathcal{H}^+(\tau_1, \tau_2)} \PP^{(X, w)}_\mu[\psi] n_{\mathcal{H}^+}^\mu+\int_{\MM(\tau_1, \tau_2)}\EE^{(X, w)}[\psi] = E^{(X, w)}[\psi](\tau_1),
\eea
where the induced volume forms are to be understood. By convention, along the event horizon $\mathcal{H}^+$ we choose $n_{\mathcal{H}^+}=\That_\HH$.

 For two positive quantities $F$ and $G$, in what follows we write $F \les G$ to signify that there exists a universal constant $C$, depending only on $M, a, Q$, such that $F \leq C G$.

\subsection{Symmetry operators and the generalized vectorfield method}\label{section:generalized-vectorfield-method}

Since the Kerr-Newman metric possesses only two Killing vectorfields, $T=\partial_t$ and $Z=\partial_{\vphi}$, which commute with the D'Alembertian operator associated to the metric, the control on those first derivatives is not sufficient to control all the first derivatives of a solution to the wave equation.

In order to control all the derivatives, one needs to make use of the Carter tensor, through the form of the modified Laplacian $\OO$ as defined in Section \ref{section:modified-laplacian}. Since such operator is second order, we will make use of second order differential operators as obtained from the two Killing vectorfields and the modified Laplacian.

Following \cite{And-Mor} and making crucial use of our extension of the commutation in the case of the non-Ricci flat Kerr-Newman spacetime in Theorem \ref{theo:Carter-operator-commutes-KN} and \eqref{eq:OO-commutes-}, we define the following second order symmetry operators. 

\begin{definition}\label{definition-SS} The set of the second order operators $\SS_\aund$, for $\aund=1,2,3,4$, given by
\bea
\SS_1=\partial_t^2, \qquad \qquad \SS_2=\partial_{t} \partial_{\vphi}, \qquad \qquad \SS_3=\partial_\vphi^2, \qquad \qquad \SS_4=\OO
\eea
are denoted \textup{conformal symmetry operator}\footnote{Observe that $\SS_\aund$ for $\aund=1,2,3$ are symmetry operators as they commute with $\square_\g$, while $\SS_4=\OO$ is only a conformal symmetry operator.}, as for a scalar function $\psi$ we have
\bea\label{eq-conf-sym}
[\SS_\aund, |q|^2\square_{\g_{M, a, Q}}]\psi=0, \qquad \aund=1,2,3,4.
\eea

\end{definition}
Observe that the conformal symmetry operators commute with each other, i.e.
\bea\label{eq:commutator-SS-aund-bund}
[\SS_\aund, \SS_\bund]=0, \qquad \aund, \bund=1,2,3,4.
\eea

The modified Laplacian $\OO$ in \eqref{operator-O-explicit} differs from the spherical Laplacian by terms depending on $\partial_t^2$ and $\pr_t \pr_\vphi$, so the second order operators $\SS_\aund$ for $\aund=1,2,3,4$ together provide the spherical Laplacian, which has the elliptic properties necessary to control the angular derivatives. More precisely, the following estimates allow us to obtain pointwise boundedness and decay bounds for $\psi$:
\bea
|\psi|^2 \leq C \int_{\mathbb{S}^2} |\psi|^2+| \lap_{\SSS^2}\psi|^2 \leq C \int_{\mathbb{S}^2} |\psi|^2+\sum_{\aund=1}^4| \SS_\aund \psi|^2,
\eea
which follows from the spherical Sobolev inequality and \eqref{operator-O-explicit}.

In addition to the conformal symmetry operators, we also define their tensorial versions.

\begin{definition}\label{definition-S} We define the following symmetric tensors
\bea
S_1^{\a\b}=T^\a T^\b, \qquad S_2^{\a\b} =T^{(\a}Z^{\b)}, \qquad S_3^{\a\b}=Z^\a Z^\b, \qquad S_4^{\a\b}=O^{\a\b}
\eea
\end{definition}
With the above definition, from \eqref{definition-RR-tensor}, one can write as in \cite{And-Mor}
\bea\label{eq:RR-ab-RR-aund}
\RR^{\a\b}=  -(r^2+a^2)^2 S_1^{\a\b}-2a(r^2+a^2)S_2^{\a\b}-a^2  S_3^{\a\b}+ \Delta O^{\a\b}=:  \RR^\aund S_\aund^{\a\b}, 
\eea
with 
\bea
\RR^1= -(r^2+a^2)^2, \qquad \RR^2=-2a(r^2+a^2), \qquad \RR^3=-a^2, \qquad \RR^4=\Delta.
\eea

Observe that the symmetric tensors $S_\aund$ are easily related to the conformal symmetry operators $\SS_\aund$. 

\begin{lemma}\label{lemma:D-S-SS} The symmetric tensors defined in Definition \ref{definition-S} and the conformal symmetry operators defined in Definition \ref{definition-SS} are related by the following:
\bea
\SS_\aund=|q|^2\D_\a(|q|^{-2}S_\aund^{\a\b}  \D_\b) , \qquad \aund=1,2,3,4.
\eea
\end{lemma}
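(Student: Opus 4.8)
\medskip

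The plan is to verify the identity $\SS_\aund=|q|^2\D_\a(|q|^{-2}S_\aund^{\a\b}\D_\b)$ for each of the four values $\aund=1,2,3,4$ by a direct computation, exploiting the special structure of the tensors $S_\aund$ relative to the Killing fields and the horizontal frame. First I would rewrite the right-hand side as
\[
|q|^2\D_\a(|q|^{-2}S_\aund^{\a\b}\D_\b\psi)=S_\aund^{\a\b}\D_\a\D_\b\psi+\big(\D_\a S_\aund^{\a\b}\big)\D_\b\psi-|q|^{-2}\big(\D_\a|q|^2\big)S_\aund^{\a\b}\D_\b\psi,
\]
so that the task reduces to (i) expanding the principal symbol $S_\aund^{\a\b}\D_\a\D_\b\psi$ and (ii) collecting the first-order terms. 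For $\aund=1,2,3$ the tensors are $T^\a T^\b$, $T^{(\a}Z^{\b)}$, $Z^\a Z^\b$ with $T,Z$ Killing; since $\D_\a T^\a=\D_\a Z^\a=0$ and $\D_\a|q|^2=\D_\a(a^2\cos^2\th)$ is horizontal (it annihilates $T$ and $Z$, by \eqref{eq:derivatives-r-th}, as $e_3,e_4$ both contain $\pr_t,\pr_\vphi$ but $T=\pr_t$, $Z=\pr_\vphi$ are Killing and $\g(T,\cdot),\g(Z,\cdot)$ are handled via the explicit metric), the correction terms vanish and one is left with $T^\a T^\b\D_\a\D_\b\psi=\pr_t^2\psi=\SS_1(\psi)$, and similarly for $\SS_2,\SS_3$. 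The only subtlety here is to check that $T^{(\a}Z^{\b)}\D_\a\D_\b\psi$ reproduces $\pr_t\pr_\vphi\psi$ exactly (the symmetrized product introduces a factor that must be reconciled with the normalization of $S_2$), which is a short index bookkeeping.

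\medskip

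The substantive case is $\aund=4$, $S_4^{\a\b}=O^{\a\b}=|q|^2(e_1^\a e_1^\b+e_2^\a e_2^\b)$. Here I would compute $|q|^2\D_\a(|q|^{-2}O^{\a\b}\D_\b\psi)$ directly in the horizontal frame. Writing $O^{\a\b}\D_\b\psi=|q|^2\big(e_1(\psi)e_1^\a+e_2(\psi)e_2^\a\big)$ and taking the divergence, the factor $|q|^2$ combines with the outer $|q|^{-2}$ so that the net $|q|^2$-weight terms conspire; using the connection coefficients of Lemma \ref{lemma:Kerr-Newman} — specifically $\nab_{e_1}e_1=\nab_{e_1}e_2=0$, $\nab_{e_2}e_1=\La e_2$, $\nab_{e_2}e_2=-\La e_1$, together with $e_a(r)=0$ and $e_a(a^2\cos^2\th)=|q|^2(\eta+\etab)_a$ — one should arrive at $O^{\a\b}\D_\a\D_\b\psi$ plus lower order terms equal to $|q|^2\big(\lap\psi+(\eta+\etab)\c\nab\psi\big)=\OO(\psi)$, matching Proposition \ref{prop:KK-OO}. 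Concretely, the terms $\D_a e_4,\D_a e_3$ in the Ricci formulae \eqref{conn-coeff} contribute $\chi,\chib$ pieces which must cancel against the $|q|^{-2}\D_\a|q|^2$ correction, exactly as in the derivation of $\OO$ from $\KK$ in the proof of Proposition \ref{prop:KK-OO}.

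\medskip

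The cleanest route, which I would actually adopt to avoid redoing the frame computation, is to combine the already-established identities: from Proposition \ref{prop:KK-OO} we have $\OO=\KK+(a^2\cos^2\th)\square_\g$, and from Definition \ref{definition-Killing-operator} together with \eqref{eq:expression-K-O}, $\KK(\psi)=\D_\mu(K^{\mu\nu}\D_\nu\psi)=\D_\mu(O^{\mu\nu}\D_\nu\psi)-\D_\mu\big((a^2\cos^2\th)\g^{\mu\nu}\D_\nu\psi\big)$. Since $\square_\g\psi=\D_\mu(\g^{\mu\nu}\D_\nu\psi)$ and $\D_\mu\big((a^2\cos^2\th)\g^{\mu\nu}\D_\nu\psi\big)=(a^2\cos^2\th)\square_\g\psi+\g^{\mu\nu}\D_\mu(a^2\cos^2\th)\D_\nu\psi$, a few cancellations using $\nab(|q|^2)=\nab(a^2\cos^2\th)$ reduce the claim for $\aund=4$ to the identity $\D_\mu(O^{\mu\nu}\D_\nu\psi)=|q|^2\D_\mu(|q|^{-2}O^{\mu\nu}\D_\nu\psi)+|q|^{-2}\D_\mu(|q|^2)O^{\mu\nu}\D_\nu\psi$, which is just the Leibniz rule; comparing with $\OO(\psi)=|q|^2(\lap\psi+(\eta+\etab)\c\nab\psi)$ and using $O^{\mu\nu}\D_\nu\psi$ horizontal finishes it. The main obstacle is therefore purely organizational: tracking the $|q|^{-2}$ conformal weight correctly through the divergence and confirming that the horizontal projection implicit in $O^{\a\b}\D_\a\D_\b$ (which drops the $\nab_3,\nab_4$ second derivatives) matches the definition \eqref{definition-OO-lap} of $\OO$ via $\lap\psi=\de^{ab}\nab_a\nab_b\psi$; there is no analytic difficulty.
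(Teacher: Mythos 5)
Your proposal is correct and, in the route you say you would actually adopt, it coincides with the paper's proof: for $\aund=1,2,3$ one uses that $T,Z$ are Killing and that $T(|q|^2)=Z(|q|^2)=0$, and for $\aund=4$ one combines Proposition \ref{prop:KK-OO} with $K=-(a^2\cos^2\th)\,\g_{M,a,Q}+O$, the Leibniz rule for the $|q|^{-2}$ weight, and the fact that the horizontal gradient satisfies $\nab(|q|^2)=\nab(a^2\cos^2\th)$, exactly as in the paper. One small bookkeeping caveat in your first paragraph: for $\aund=1,2,3$ the two pieces do not reduce separately, since $\D_\a(T^\a T^\b)=(\D_T T)^\b\neq 0$ and $T^\a T^\b\D_\a\D_\b\psi=\pr_t^2\psi-(\D_T T)^\b\D_\b\psi$, but their sum is $T(T\psi)=\pr_t^2\psi$, which is what the paper's grouping $(T^\a\D_\a)(T^\b\D_\b)$ makes manifest.
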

\begin{proof} Observe that for $\aund=1,2,3$, we have $|q|^2\D_\a(|q|^{-2}S_\aund^{\a\b}  \D_\b)=\D_\a(S_\aund^{\a\b}  \D_\b)$, and for example for $\aund=1$ we have
\beaa
\D_\a(S_1^{\a\b}  \D_\b) =\D_\a(T^\a T^{\b}  \D_\b)=(\D_\a T^\a) T^\b \D_\b + (T^\a \D_\a) (T^\b \D_\b)=\partial^\a_t \partial^\a_t
\eeaa
since $\D_\a T^\a=\tr ^{(T)}\pi=0$. Similarly for $\aund= 2,3$. 
Using \eqref{eq:expression-K-O} and \eqref{operator-KK}, we obtain
\beaa
|q|^2\D_\a(|q|^{-2}O^{\a\b}  \D_\b)&=& \D_\a( O^{\a\b} \D_\b)+|q|^2O^{\a\b}\D_\a(|q|^{-2})\D_\b\\
&=&\D_\a( K^{\a\b} \D_\b)+\D_\a( (a^2\cos^2\th) \g^{\a\b} \D_\b)+|q|^4(e_1^\a e_1^\b +e_2^\a e_2^\b)\D_\a(|q|^{-2})\D_\b\\
&=&\KK+(a^2 \cos^2\th) \ \square_{\g}+\D_\a (a^2\cos^2\th) \D^\a-\D_\a(|q|^{2})\D_\a\\
&=&\KK+(a^2 \cos^2\th) \ \square_{\g}=\OO
\eeaa
as stated.
\end{proof}

Define
\bea\label{eq:def-psia}
\psia:= \SS_\aund(\psi), \qquad \aund=1,2,3,4.
\eea
Then if $\psi$ is a solution to the wave equation, by \eqref{eq-conf-sym}  $\psia$ is also a solution,
\bea\label{eq:psia-solution}
\square_{\g_{M, a, Q}} \psia=0, \qquad \aund=1,2,3,4.
\eea

Following \cite{And-Mor}, we recall here a generalized vectorfield method which incorporates the commutation of the wave equation by the conformal symmetry operators.

\begin{itemize}
\item The generalized energy-momentum tensor for $\psi$ solutions to \eqref{wave-equation-general} is defined as
\beaa
\QQ[\psi]_{\aund \bund \mu\nu}&=& \pr_\mu\psia \pr_\nu \psib -\frac 1 2 \g_{\mu\nu} \pr_\lambda \psia \pr^\lambda \psib
\eeaa
  for $\aund=1,2,3,4$. 
  \begin{remark} Since from \eqref{eq:psia-solution} each $\psia$ is a solution to the wave equation, the above energy-momentum tensor can be interpreted as a symmetrization with respect to $\psia$ and $\psib$ of their respective energy-momentum tensors. The underlined indices of $\QQ$ denote the conformal symmetry operators applied to $\psi$ while the Greek indices denote the spacetime indices as in the standard energy-momentum tensor.
  \end{remark}

\item Let $\mathbf{X}$ be a symmetric double-indexed collection of vector fields $\mathbf{X}=\{ X^{\underline{a} \underline{b}}\}$ and $\mathbf{w}$ be a symmetric double-indexed collection of functions $\mathbf{w}=\{ w^{\underline{ab}} \}$.  The generalized current associated to $(\mathbf{X}, \mathbf{w})$ is defined as 
 \bea\label{definition-of-gener-P}
 \PP_\mu^{(\mathbf{X}, \mathbf{w})}[\psi]&=&\QQ[\psi]_{\aund \bund \mu\nu} X^{\aund\bund\nu} +\frac 1 2  w^{\aund\bund}  \psia \c \pr_\mu \psib   -\frac 1 4(\pr_\mu w^{\aund \bund} )\psia  \c \psib
  \eea
  for $\aund=1,2,3,4$.

  \item The energy associated to  $(\mathbf{X}, \mathbf{w})$ on the hypersurface $\Sigma_\tau$ is
  \beaa
E^{(\mathbf{X}, \mathbf{w})}[\psi](\tau)&=&\int_{\Sigma_\tau} \PP^{(\mathbf{X}, \mathbf{w})}_\mu[\psi] n_{\Sigma_\tau}^\mu.
\eeaa
\end{itemize}

As in \eqref{le:divergPP-gen}, we obtain  for the divergence of the generalized $\PP$:
  \bea
  \label{le:divergPP-gen-gen}
  \D^\mu \PP_\mu^{(\mathbf{X}, \mathbf{w})}[\psi]&=& \frac 1 2 \QQ[\psi]_{\aund\bund}  \c \D_{(\mu} X^{\aund\bund}_{\nu)}-\frac 1 4 \square_\g w^{\aund\bund} \psia \c \psib+\frac 12  w^{\aund\bund} (\pr_\lambda \psia \pr^\lambda \psib),
 \eea
and we denote 
 \bea
 \label{eq:modified-div-gen}
 \EE^{(\mathbf{X}, \mathbf{w})}[\psi]&:=&   \D^\mu  \PP_\mu^{(\mathbf{X}, \mathbf{w})}[\psi].
  \eea

By applying the divergence theorem to $\PP_\mu^{(\mathbf{X}, \mathbf{w})}$ within a region such as $\MM(\tau_1, \tau_2)$ for carefully chosen $(\mathbf{X}, \mathbf{w})$ one obtains the associated energy identity:
\bea
E^{(\mathbf{X}, \mathbf{w})}[\psi](\tau_2) + \int_{\mathcal{H}^+(\tau_1, \tau_2)} \PP^{(\mathbf{X}, \mathbf{w})}_\mu[\psi] n_{\mathcal{H}^+}^\mu +\int_{\MM(\tau_1, \tau_2)}\EE^{(\mathbf{X}, \mathbf{w})}[\psi] = E^{(\mathbf{X}, \mathbf{w})}[\psi](\tau_1),
\eea
where the induced volume forms are to be understood.

\subsection{The relevant vectorfields and the spacetime current identities}\label{section:relevant-vectorfields}

In applying the vectorfield method to derive non-degenerate energy-Morawetz estimates for the wave equation, we make use of  the following vectorfields:
\begin{itemize}
\item  A radial vectorfield  $X=\FF(r) \pr_r$, for a well chosen function $\FF$, to obtain \textbf{Morawetz estimates},
\item A timelike vectorfield $\That_{\chi}=T + \chi \om_{\mathcal{H}}Z$, where $\chi=1$ for $\{r < r_{1}\}$ for some $r_1>r_{+}$, and $\chi=0$ for $\{r> r_2\}$, with a smooth decrease in $[r_1, r_2]$. In particular, $\That_{\chi}=\That_{\mathcal{H}}$ close to the horizon, and $\That_{\chi}=T$ for $r>r_2$. Also, $\That_\chi$ is not Killing only for $r \in [r_1, r_2]$. This is used to obtain \textbf{energy estimates}.
\end{itemize}

We collect here some relevant computations involving the deformation tensors of the above vectors which will be used in the next sections.

\begin{lemma} Let $\psi$ be a solution to the wave equation \eqref{wave-equation-general}, then the following relations hold true.
\begin{itemize}
\item For $X=\FF(r) \pr_r$, we have
\bea
\piX^{\a\b}=|q|^{-2} \Big( 2\De^{3/2}\pr_r \big(\frac{\FF}{\De^{1/2}} \big)\pr_r^\a\pr_r^\b- \FF\pr_r\big(\frac 1 \De\RR^{\a\b}\big) \Big) +|q|^{-2} X\big(|q|^2\big) \g^{\a\b},
\eea
and therefore
         \begin{equation}\lab{eq:PPpi(X)}
         \begin{split}
  |q|^2   \QQ[\psi]  \c\piX     &= 2\De^{3/2}\pr_r \big(\frac{\FF}{\De^{1/2}} \big)|\pr_r \psi|^2- \FF\pr_r\big(\frac 1 \De\RR^{\a\b}\big) \pr_\a\psi \pr_\b \psi  \\
  &+\Big( X\big(|q|^2\big) - |q|^2(\div X) \Big) \pr_\lambda \psi \pr^\lambda \psi.
  \end{split}
    \end{equation}
   \item For $\That_{\chi}=T + \chi \om_{\mathcal{H}}Z$, we have
   \bea
\, ^{(\That_\chi)} \pi^{\a\b}&=&2|q|^{-2}\De  \om_{\mathcal{H}}(\pr_r\chi) \pr_\vphi^{(\a} \pr_r^{\b)}  , 
\eea 
and therefore
       \bea
    \lab{eq:QQpi(That_de)}
   |q|^2  \QQ[\psi]  \c \,^{(\That_\chi)} \pi&=&2\De  \om_{\mathcal{H}}(\pr_r\chi)  \pr_\vphi\psi \pr_r \psi.
  \eea

\end{itemize}
\end{lemma}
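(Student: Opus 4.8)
\emph{Proof proposal.} The whole statement is a direct computation of deformation tensors followed by a contraction with the energy-momentum tensor \eqref{definition-energy-momentum-tensor}, the only essential input being the explicit form of the conformal inverse metric \eqref{inverse-metric-Kerr}. For $X=\FF(r)\pr_r$ I would start from the defining relation $\piX_{\mu\nu}=(\mathcal{L}_X\g)_{\mu\nu}$, so that raising indices gives $\piX^{\a\b}=-(\mathcal{L}_X\g^{-1})^{\a\b}$. Writing $\g^{\a\b}=|q|^{-2}\tilde{\g}^{\a\b}$ with $\tilde{\g}^{\a\b}:=|q|^2\g^{\a\b}=\De\,\pr_r^\a\pr_r^\b+\frac1\De\RR^{\a\b}$ as in \eqref{inverse-metric-Kerr}, I would use that $\FF$ depends only on $r$: then $[\FF\pr_r,\pr_r]=-\FF'\pr_r$ while $[\FF\pr_r,\pr_t]=[\FF\pr_r,\pr_\vphi]=[\FF\pr_r,\pr_\th]=0$, and the coefficients of $\RR^{\a\b}$ in \eqref{definition-RR-tensor} depend only on $(r,\th)$. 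Hence $\mathcal{L}_X$ applied to $\tilde{\g}^{-1}$ merely differentiates coefficients and produces one $-2\FF'\,\pr_r\otimes\pr_r$ term, giving
\[
\mathcal{L}_X\tilde{\g}^{\a\b}=(\FF\De'-2\De\FF')\,\pr_r^\a\pr_r^\b+\FF\,\pr_r\!\big(\tfrac1\De\RR^{\a\b}\big)=-2\De^{3/2}\pr_r\!\big(\tfrac{\FF}{\De^{1/2}}\big)\pr_r^\a\pr_r^\b+\FF\,\pr_r\!\big(\tfrac1\De\RR^{\a\b}\big).
\]
Since $X(|q|^{-2})=-|q|^{-4}X(|q|^2)$ with $|q|^2=r^2+a^2\cos^2\th$, this yields $\mathcal{L}_X\g^{\a\b}=X(|q|^{-2})\tilde{\g}^{\a\b}+|q|^{-2}\mathcal{L}_X\tilde{\g}^{\a\b}=-|q|^{-2}X(|q|^2)\g^{\a\b}+|q|^{-2}\mathcal{L}_X\tilde{\g}^{\a\b}$, and negating gives the stated formula for $\piX^{\a\b}$.

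To obtain \eqref{eq:PPpi(X)} I would contract: $\QQ[\psi]_{\a\b}\piX^{\a\b}=\pr_\a\psi\,\pr_\b\psi\,\piX^{\a\b}-\tfrac12(\pr_\la\psi\,\pr^\la\psi)\,\g_{\a\b}\piX^{\a\b}$, using the general identity $\g_{\a\b}\piX^{\a\b}=\tr\,\piX=2\,\div X$. Substituting $\piX^{\a\b}$: the $\pr_r\otimes\pr_r$ piece contracts to $|\pr_r\psi|^2$, the $\g^{\a\b}$ piece contracts with $\pr_\a\psi\,\pr_\b\psi$ to $|q|^{-2}X(|q|^2)\,\pr_\la\psi\,\pr^\la\psi$, and combining the latter with the trace term $-|q|^{-2}|q|^2(\div X)\,\pr_\la\psi\,\pr^\la\psi$ after multiplying through by $|q|^2$ reproduces exactly \eqref{eq:PPpi(X)}.

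For $\That_\chi=T+\chi\,\om_\HH Z$, since $T=\pr_t$ and $Z=\pr_\vphi$ are Killing ($\,^{(T)}\pi=\,^{(Z)}\pi=0$) and $\om_\HH$ is constant, $\,^{(\That_\chi)}\pi=\,^{(\chi\om_\HH Z)}\pi$. For a function $f$ and a Killing field $Y$ one has $(\mathcal{L}_{fY}\g)_{\mu\nu}=(\pr_\mu f)Y_\nu+(\pr_\nu f)Y_\mu$; taking $f=\chi(r)$ and $Y=\om_\HH Z$, and using $\pr_\mu\chi=(\pr_r\chi)(dr)_\mu$ together with $(dr)^\a=\g^{\a r}=|q|^{-2}\De\,\pr_r^\a$ (read off from \eqref{inverse-metric-Kerr}, as $\RR^{\a\b}$ carries no $\pr_r$ component), I get $\,^{(\That_\chi)}\pi^{\a\b}=2|q|^{-2}\De\,\om_\HH(\pr_r\chi)\,\pr_\vphi^{(\a}\pr_r^{\b)}$. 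Finally, since $\g_{\a\b}\pr_\vphi^{(\a}\pr_r^{\b)}=\g_{\vphi r}=0$ in Boyer--Lindquist coordinates, the trace term of $\QQ[\psi]$ drops out of the contraction, leaving $|q|^2\QQ[\psi]\c\,^{(\That_\chi)}\pi=2\De\,\om_\HH(\pr_r\chi)\,\pr_\vphi\psi\,\pr_r\psi$, which is \eqref{eq:QQpi(That_de)}.

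No step is genuinely hard; the lemma is entirely bookkeeping. The points that require care are: tracking the conformal weight $|q|^2$ correctly (hence the convenience of first working with $|q|^2\g^{-1}$, whose dependence on the coordinates is transparent), noting that $\FF(r)\pr_r$ and $\chi(r)\pr_\vphi$ Lie-commute with the entire coordinate frame appearing in $\RR^{\a\b}$ so that the Lie derivatives collapse to coefficient differentiation plus a single $\pr_r\otimes\pr_r$ correction, and invoking the two algebraic facts $\tr\,\piX=2\,\div X$ and $\g_{\vphi r}=0$, which are exactly what funnel the trace contributions into the $\pr_\la\psi\,\pr^\la\psi$ term in the first identity and eliminate them in the $\That_\chi$ case.
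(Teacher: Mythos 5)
Your proposal is correct and follows essentially the same route as the paper: for $X=\FF(r)\pr_r$ you compute $\piX^{\a\b}=-\LL_X\big(|q|^{-2}\,|q|^2\g^{\a\b}\big)$ using the conformal inverse metric \eqref{inverse-metric-Kerr}, with the Lie derivative collapsing to coefficient differentiation plus the single $\pr_r\otimes\pr_r$ correction, and then contract with $\QQ[\psi]$ via $\g_{\mu\nu}\piX^{\mu\nu}=2\,\div X$, exactly as in the paper. The only cosmetic difference is in the $\That_\chi$ case, where you use $(\LL_{\chi Y}\g)_{\mu\nu}=(\pr_\mu\chi)Y_\nu+(\pr_\nu\chi)Y_\mu$ for Killing $Y=\om_\HH Z$ and raise indices, while the paper Lie-differentiates $|q|^2\g^{-1}$ directly through the commutator $[\That_\chi,\pr_r]$; both give the same trace-free deformation tensor and hence \eqref{eq:QQpi(That_de)}.
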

\begin{proof} Using the expression for the inverse metric \eqref{inverse-metric-Kerr}, we compute
\beaa
 \LL_X(|q|^2 \g^{\a\b})&=&\LL_X\big(\De \pr_r^\a \pr_r^\b \big)+\LL_X\big(\frac 1 \De\RR^{\a\b} \big)= X(\De) \pr_r^\a\pr_r^\b+ \De [X, \pr_r]^\a \pr_r^\b +\De \pr_r^\a [X, \pr_r]^\b  +\LL_X\big(\frac 1 \De\RR^{\a\b} \big).
\eeaa
Consequently, for $X=\FF \partial_r$, we obtain
\beaa
 \LL_X(|q|^2 \g^{\a\b})&=& \FF (\pr_r\De) \pr_r^\a\pr_r^\b+ \De [\FF \pr_r, \pr_r]^\a \pr_r^\b+ \De \pr_r^\a [\FF \pr_r, \pr_r]^\b  +\FF \LL_{\pr_r} \big(\frac 1 \De\RR^{\a\b} \big)\\
 &=& \FF (\pr_r\De) \pr_r^\a\pr_r^\b-2 \De (\pr_r \FF)\pr_r^\a \pr_r^\b +\FF \partial_r \big(\frac 1 \De\RR^{\a\b} \big)\\
  &=&-2\De^{3/2}\pr_r \big(\frac{\FF}{\De^{1/2}} \big) \pr_r^\a\pr_r^\b +\FF \partial_r \big(\frac 1 \De\RR^{\a\b} \big).
\eeaa
For $\That_{\chi}=T + \chi \om_{\mathcal{H}}Z$ we obtain
\beaa
 \LL_{\That_{\chi}}(|q|^2 \g^{\a\b})&=& \De [\That_{\chi}, \pr_r]^\a \pr_r^\b +\De \pr_r^\a [\That_{\chi}, \pr_r]^\b=-2\De  \om_{\mathcal{H}}(\pr_r\chi) \pr_\vphi^{(\a} \pr_r^{\b)}  .
\eeaa
In particular observe that $ ^{(\That_\chi)} \pi^{\mu\nu}\g_{\mu\nu}=0$. By writing
\beaa
 \piX^{\a\b}&=&-\LL_X\big( |q|^{-2}  |q|^2 \g^{\a\b}\big)=-|q|^{-2} \LL_X\big(  |q|^2 \g^{\a\b}\big)- |q|^2\LL_X\big(|q|^{-2}\big)\g^{\a\b}
 \eeaa
 we obtain the stated expressions for the deformation tensors.

Finally, for any vectorfield $X$, we write
  \beaa
     \QQ[\psi]  \c\piX&=& \QQ[\psi]_{\a\b} \piX^{\a\b}= \big( \pr_\a\psi \pr_\b \psi -\frac 1 2 \g_{\a\b} \pr_\lambda \psi \pr^\lambda \psi\big) \piX^{\a\b} \\
     &=&\piX^{\a\b} \pr_\a\psi \pr_\b \psi - (\div X) \pr_\lambda \psi \pr^\lambda \psi
    \eeaa
    since $\g_{\mu\nu} \piX^{\mu\nu}=\g_{\mu\nu} \D^{(\mu}X^{\nu)}=2\div X$. Using the above expressions for $\piX$ we obtain the stated identities.

\end{proof}

We make use of the above computations to derive the Morawetz current for $\EE^{(X, w)}[\psi]$ with $X=\FF \partial_r$ and its generalized version $X^{\aund\bund} =\FF^{\aund\bund} \pr_r$. We mostly follow notations in \cite{And-Mor}. See also \cite{GKS2}.

  \begin{proposition}   
        \lab{proposition:Morawetz1}
       Let $z$ be a given function of $r$. The following identities hold true.
        \begin{enumerate}
        \item
Let $u$ be a given function of $r$. Then for
\bea
\lab{def-w-red-in-fun-FF-00-wave}\label{definition-w-}
X= \FF \partial_r, \qquad \quad \FF=z u,  \qquad \quad  w = z \pr_r u ,  \lab{Equation:w0}
\eea
      the current $\EE^{(X, w)}[\psi]$ satisfies
  \bea
  \lab{identity:prop.Morawetz1}
   |q|^2\EE^{(X, w)}[\psi] &=\AA |\pr_r\psi|^2 + \UU^{\a\b}(\pr_\a \psi )(\pr_\b \psi )+\VV |\psi|^2,
   \eea
   where
\bea
 \AA&=&z^{1/2}\Delta^{3/2} \partial_r\left( \frac{ z^{1/2}  u }{\Delta^{1/2}}  \right), \lab{eq:coeeficientsUUAAVV}  \\
  \UU^{\a\b}&=& -  \frac{ 1}{2}  u \pr_r\left( \frac z \De\RR^{\a\b}\right), \lab{eq:coeeficientsUUAAVV2}\\
\VV&=&-\frac 1 4 \pr_r \big(\De \pr_r w \big)= -\frac 1 4 \pr_r\big(\De \pr_r \big(
 z \pr_ru  \big)  \big) .\lab{eq:coeeficientsUUAAVV3}
\eea
\item  Let  $u^{\aund\bund}$  be a given double-indexed function of $r$. Then for
\bea\lab{choice-FF-w-operator}
X^{\aund\bund} =\FF^{\aund\bund} \pr_r, \qquad \quad \FF^{\aund\bund}= z u^{\aund\bund}, \qquad \quad w^{\aund\bund}=z \partial_r u^{\aund\bund} ,
\eea
the generalized current $\EE^{(\mathbf{X}, \mathbf{w})}[\psi] $ satisfies
\bea\label{generalized-current-operator}
 |q|^2 \EE^{(\mathbf{X}, \mathbf{w})}[\psi]   &=&\AA^{\aund\bund} \, \pr_r\psia \pr_r\psib + \UU^{\a\b\aund\bund} \, \pr_\a \psia \, \pr_\b \psib  +\VV^{\aund\bund} \, \psia\psib 
   \eea
   where
\bea
  \AA^{\aund\bund}&=&z^{1/2}\Delta^{3/2} \partial_r\left( \frac{ z^{1/2}  u^{\aund\bund} }{\Delta^{1/2}}  \right)  \label{eq:coefficients-commuted-eq-AA} ,
 \\
  \UU^{\a\b\aund\bund}&=& -  \frac{ 1}{2}  u^{\aund\bund} \pr_r\left( \frac z \De\RR^{\a\b}\right), \label{eq:coefficients-commuted-eq-UU}\\
   \VV^{\aund\bund}&=&-\frac 1 4 \pr_r \Big(\De \pr_r w^{\aund\bund} \Big)= -\frac 1 4 \pr_r\big(\De \pr_r \big(
 z \pr_ru^{\aund\bund}  \big)  \big) \label{eq:coefficients-commuted-eq-VV}.
\eea

    \end{enumerate}
        \end{proposition}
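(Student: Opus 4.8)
The plan is to derive both identities from the basic divergence formula \eqref{le:divergPP-gen} by taking the multiplier to be radial and choosing $w$ so that the bulk expression organises into the stated three terms. For part (1), I would start from \eqref{le:divergPP-gen}, which since $\square_\g\psi=0$ reads $\D^\mu\PP_\mu^{(X,w)}[\psi]=\tfrac12\QQ[\psi]\cdot\piX-\tfrac14\square_\g w\,|\psi|^2+\tfrac12 w\,\pr_\lambda\psi\pr^\lambda\psi$, multiply through by $|q|^2$, and insert \eqref{eq:PPpi(X)} for $|q|^2\QQ[\psi]\cdot\piX$. This produces a $|\pr_r\psi|^2$ term with coefficient $\De^{3/2}\pr_r(\FF\De^{-1/2})$, the quadratic form $-\tfrac12\FF\pr_r(\tfrac1\De\RR^{\a\b})\pr_\a\psi\pr_\b\psi$, the zeroth-order term $-\tfrac14|q|^2\square_\g w\,|\psi|^2$, and a multiple of $\pr_\lambda\psi\pr^\lambda\psi$ with coefficient $\tfrac12\big(X(|q|^2)-|q|^2\div X\big)+\tfrac12|q|^2 w$. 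A short computation with the volume element $\sqrt{-\det\g}=|q|^2\sin\th$ gives $\div(\FF\pr_r)=\pr_r\FF+|q|^{-2}(\pr_r|q|^2)\FF$ and $X(|q|^2)=2r\FF$, so $X(|q|^2)-|q|^2\div X=-|q|^2\pr_r\FF$; substituting $\FF=zu$ and $w=z\pr_ru$, this coefficient collapses to $\tfrac12|q|^2(w-\pr_r\FF)=-\tfrac12|q|^2 u\,(\pr_r z)$.

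The one subtle point, and the step I expect to be the main obstacle, is that this leftover term does \emph{not} vanish: it must instead be redistributed using \eqref{inverse-metric-Kerr} to write $\pr_\lambda\psi\pr^\lambda\psi=\g^{\a\b}\pr_\a\psi\pr_\b\psi=|q|^{-2}\big(\De|\pr_r\psi|^2+\tfrac1\De\RR^{\a\b}\pr_\a\psi\pr_\b\psi\big)$. The piece sent into the $|\pr_r\psi|^2$ channel combines with $\De^{3/2}\pr_r(zu\De^{-1/2})$ to give $\De^{3/2}\pr_r(zu\De^{-1/2})-\tfrac12\De u\,\pr_r z$, which by the product rule is exactly $z^{1/2}\De^{3/2}\pr_r(z^{1/2}u\De^{-1/2})=\AA$ — the asymmetric weights $z^{1/2}$ versus $z$ being produced precisely by this extra term. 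The piece sent into the $\RR^{\a\b}$ channel combines with $-\tfrac12\FF\pr_r(\tfrac1\De\RR^{\a\b})$ to give $-\tfrac12 u\big[z\,\pr_r(\tfrac1\De\RR^{\a\b})+(\pr_r z)\tfrac1\De\RR^{\a\b}\big]=-\tfrac12 u\,\pr_r(\tfrac z\De\RR^{\a\b})=\UU^{\a\b}$, again by the product rule. Finally, since $w$ depends only on $r$, formula \eqref{square-GKS} together with the fact that $\OO$ and the $\pr_t,\pr_\vphi$-block both annihilate functions of $r$ gives $|q|^2\square_\g w=\pr_r(\De\pr_r w)$, so the zeroth-order term equals $\VV|\psi|^2$. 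This establishes \eqref{identity:prop.Morawetz1}.

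For part (2) the computation is identical term by term once one works with the commuted fields: by \eqref{eq:psia-solution} each $\psia$ solves the wave equation, so the generalised divergence identity \eqref{le:divergPP-gen-gen} again reduces to the three-term form; the multipliers $X^{\aund\bund}=\FF^{\aund\bund}\pr_r$ are still radial, so \eqref{eq:PPpi(X)} applies verbatim with $\FF$ replaced by $\FF^{\aund\bund}$ and $\psi$ replaced by the pair $(\psia,\psib)$; and the same product-rule manipulations, carried out now with $\FF^{\aund\bund}=zu^{\aund\bund}$, $w^{\aund\bund}=z\pr_r u^{\aund\bund}$ and the common scalar weight $z(r)$, yield \eqref{eq:coefficients-commuted-eq-AA}--\eqref{eq:coefficients-commuted-eq-VV}. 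No analytic input beyond part (1) is required; only the observation that the weight $z$ is the same function of $r$ for all index pairs, which is what keeps the computation unchanged.
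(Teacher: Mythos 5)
Your proposal is correct and follows essentially the same route as the paper's proof: the paper absorbs the leftover Lagrangian coefficient $\tfrac12\big(X(|q|^2)-|q|^2\div X+|q|^2w\big)$ into an intermediate function $w_{int}=u\,\pr_r z$ and redistributes it through \eqref{inverse-metric-Kerr} into the $|\pr_r\psi|^2$ and $\RR^{\a\b}$ channels, which is precisely your computation read in the opposite logical order (you verify $w=z\pr_r u$ forces the leftover $-\tfrac12|q|^2u\,\pr_r z$, the paper chooses $w_{int}$ to cancel the $\RR^{\a\b}/\De$ coefficient and then derives $w=z\pr_r u$). Your product-rule identifications of $\AA$, $\UU^{\a\b}$, $\VV$ and your reduction of part (2) to the same term-by-term computation match the paper, which likewise proves only part (1) and notes the generalized case follows identically.
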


\begin{proof} We prove the first part of the Proposition, as the second part in the case of generalized current follows in the same way.
    Using \eqref{eq:modified-div}, \eqref{le:divergPP-gen} and \eqref{eq:PPpi(X)} we compute for $(X=\FF\partial_r, w)$,
\beaa
|q|^2\EE^{(\FF \pr_r, w)}[\psi]&=& \frac 1 2 |q|^2\QQ[\psi]  \c\piX-\frac 1 4 |q|^2\square_\g w |\psi|^2+\frac 12  |q|^2w (\pr_\lambda \psi \pr^\lambda \psi)\\
&=& \De^{3/2}\pr_r \big(\frac{\FF}{\De^{1/2}} \big)|\pr_r \psi|^2- \frac 1 2 \FF\pr_r\big(\frac 1 \De\RR^{\a\b}\big) \pr_\a\psi \pr_\b \psi  -\frac 1 4|q|^2 \square_\g w |\psi|^2\\
  &&+\frac 1 2 \Big( X\big(|q|^2\big) - |q|^2(\div X)+ |q|^2 w \Big) \pr_\lambda \psi \pr^\lambda \psi.
\eeaa
By defining an intermediate function $w_{int}$ as
     \beaa
   \frac 1 2 \Big(    X\big( |q|^2\big)- |q|^2 \div X+ |q|^2  w\Big)=\frac 1 2 |q|^2 \Big( |q|^{-2} X\big( |q|^2\big)-\div X +w\Big)=:-\frac 1 2 |q|^2w_{int},
       \eeaa
       and using \eqref{inverse-metric-Kerr} to write
       \beaa
      |q|^2 \pr_\lambda \psi \pr^\lambda \psi&=& |q|^2\g^{\lambda \mu} \pr_\lambda \psi \pr_\nu \psi=\Delta |\pr_r \psi|^2+\frac{1}{\De} \RR^{\a\b}\pr_\a \psi \pr_\b \psi,
       \eeaa
       we simplify the above to
       \beaa
       |q|^2\EE^{(\FF \pr_r, w)}[\psi]&=& \De^{3/2}\pr_r \big(\frac{\FF}{\De^{1/2}} \big)|\pr_r \psi|^2- \frac 1 2 \FF\pr_r\big(\frac 1 \De\RR^{\a\b}\big) \pr_\a\psi \pr_\b \psi  -\frac 1 4|q|^2 \square_\g w |\psi|^2\\
  &&-\frac 1 2w_{red}\Big(\Delta |\pr_r \psi|^2+\frac{1}{\De} \RR^{\a\b}\pr_\a \psi \pr_\b \psi \Big)\\
  &=&\Big( \De^{3/2}\pr_r \big(\frac{\FF}{\De^{1/2}} \big)-\frac 1 2w_{int}\Delta \Big) |\pr_r \psi|^2- \frac 1 2\Big( \FF\pr_r\big(\frac 1 \De\RR^{\a\b}\big) + w_{int}\frac{1}{\De} \RR^{\a\b}\Big) \pr_\a \psi \pr_\b \psi\\
  && -\frac 1 4|q|^2 \square_\g w |\psi|^2.
       \eeaa
To summarize, we write the above as
  \beaa
     |q|^2\EE^{(\FF \pr_r, w)}[\psi]&=\AA |\pr_r\psi|^2 + \UU^{\a\b}(\pr_\a \psi )(\pr_\b \psi )+\VV |\psi|^2,
   \eeaa
   where
   \beaa
   \AA&=& \De^{3/2}\pr_r \big(\frac{\FF}{\De^{1/2}} \big)-\frac 1 2w_{int}\Delta\\
   \UU^{\a\b}&=&  -\frac 1 2  \FF\pr_r \left(\frac 1 \De\RR^{\a\b}\right)-\frac 1 2   w_{int}\frac 1 \De \RR^{\a\b}\\
   \VV&=&-\frac 1 4|q|^2 \square_\g  w .
   \eeaa
   with $w=  |q|^2 \div \big( |q|^{-2}  X \big)-w_{int}$.
       Observe that for a function $z$ we can write
 \beaa
\UU^{\a\b} &=&  -\frac 1 2  \FF\pr_r \left(\frac 1 \De\RR^{\a\b}\right)-\frac 1 2   w_{int}\frac 1 \De \RR^{\a\b}\\
&=&  - \frac{ 1}{2}\FF z^{-1}  \partial_r\left( \frac z \De\RR^{\a\b}\right)+ \frac{ 1}{2}\left(\FF z^{-1}\partial_r z  -w_{int}\right) \frac{ \RR^{\a\b}}{\Delta}
 \eeaa
Setting $\FF=zu$ for a function $u$, and choosing  $w_{int}=  \FF z^{-1}\partial_r z=u \pr_r z $,  the coefficient of $\frac{ \RR^{\a\b}}{\Delta}$ cancels out, and we deduce the stated expression for $\UU^{\a\b}$ in \eqref{eq:coeeficientsUUAAVV2}. 

With these choices for $\FF$ and $w_{int}$, we compute the function $w$:
\beaa
w &=&  |q|^2 \D_\a\big( |q|^{-2}   \FF\pr_r ^\a \big)-w _{int}=   |q|^2\pr_r\big(|q|^{-2}  \FF\big)+\FF( \D_\a\pr_r^\a) -u \pr_r z
\eeaa
Observe that
\beaa
\D_\a \pr_r^\a=\frac{1}{\sqrt{|\g|}} \pr_\a\big( \sqrt{|\g|} \pr_r^\a \big)= \frac{1}{\sqrt{|\g|}} \pr_r \big( \sqrt{|\g|} \big)=\frac{1}{|q|^2} \pr_r \big(|q|^2\big),
\eeaa
and therefore
\beaa
w &=&   |q|^2\pr_r\big(|q|^{-2}  zu \big)+zu |q|^{-2} \pr_r \big(|q|^2\big) -u \pr_r z= \pr_r\big(  zu \big) -u \pr_r z= z \pr_r u.
\eeaa
We also compute
\beaa
\AA&=&   \partial_r\left(\frac{\FF}{\Delta^{1/2}}  \right) \Delta^{3/2}-\frac 12\Delta   w_{int} =\partial_r\left(\frac{zu}{\Delta^{1/2}}  \right) \Delta^{3/2}-\frac 12\Delta   ( \partial_r z) u\\
&=& \frac 1 2 \partial_rz  \frac{  u}{\Delta^{1/2}}   \Delta^{3/2}+z^{1/2} \partial_r\left(\frac{ z^{1/2}  u}{\Delta^{1/2}}  \right) \Delta^{3/2}-\frac 12\Delta   ( \partial_r z)u =z^{1/2}\Delta^{3/2} \partial_r\left( \frac{ z^{1/2}  u}{\Delta^{1/2}}  \right).
\eeaa
Finally, as for a function $H=H(r)$, 
\beaa
 \square_\g H  =\frac{1}{\sqrt{|\g|} } \pr_\a \big(\sqrt{|\g|} \g^{\a\b} \pr_\b\big)  H= \frac{1}{\sqrt{|g|} } \pr_r  \big(\sqrt{|\g|} \g^{rr} \pr_r \big)  H=\frac{1}{|q|^2} \pr_r \big( \De\pr_r H\big)
\eeaa
we compute
\beaa
 |q|^2 \square_\g  w=\pr_r \big(\De \pr_r w \big)= \pr_r\big(\De \pr_r \big(
 z \pr_ru  \big)  \big),
\eeaa
as stated.

\end{proof}

 \subsection{Trapped null geodesics}\label{section:trapped-null-geodesics}
 
 We now dedicate this section to the derivation of the equation satisfied by the trapped null geodesics in Kerr-Newman spacetime, see also \cite{GKS2}. Trapped null geodesics are relevant for the analysis of the wave equation, as in the high frequency limit waves behave like null geodesics and the integrated local energy estimates necessarily have to degenerate at the trapping region of the black hole.

Let $\ga(\la)$ be a null geodesic in Kerr-Newman spacetime. Using the expression for the inverse of the metric given by \eqref{inverse-metric-Kerr}, along $\ga(\la)$, since $\g(\gadot, \gadot)=0$ we have, with $\gadot_r=\pr_r^\a \gadot_\a$, $\gadot_t=\pr_t^\a \gadot_\a$, $\gadot_\phi=\pr_\vphi^\a \gadot_\a$
\beaa
0= |q|^2 \g^{\a\b} \gadot_\a \gadot_\b=\big( \De \pr_r^\a \pr_r^\b +\frac{1}{\De}\RR^{\a\b}\big)\gadot_\a \gadot_\b=
\De\gadot_r \gadot_r+ \frac{1}{\De} \RR^{\a\b}\gadot_\a \gadot_\b
\eeaa
with
\bea\label{eq:RR-ab-geodesics}
\RR^{\a\b}\gadot_\a \gadot_\b&=& -(r^2+a^2) ^2\gadot_t \gadot_t- 2a(r^2+a^2) \gadot_t \gadot_\phi- a^2  \gadot_\phi \gadot_\phi  +\De O^{\a\b} \gadot_\a \gadot_\b
\eea

Since $\pr_t=T$ and $\pr_\phi=Z$ are Killing vectorfields we deduce that  $\gadot_t=\g(\gadot, T)$ and $\gadot_\phi= \g(\gadot, Z)$ are constants of the motion i.e. constants along  $\ga$, and respectively called the energy and the azimuthal angular momentum. We write,
\beaa
\e&:=&-\g(\gadot, T) , \qquad \lz=-\g(\gadot, Z).
\eeaa
We also define\footnote{Observe that $\k^2$ is a positive constant of motion by definition of $K$. }
\beaa
\k^2&:=&K^{\a\b} \gadot_\a\gadot_\b
\eeaa
for the Carter tensor $K$ in Kerr-Newman. Since $K$ is Killing, $\k^2$ is also a constant of motion.

With these constants from \eqref{eq:RR-ab-geodesics}  we have
\beaa
\RR(r;   \e,\lz, \k^2):=\RR^{\a\b}\gadot_\a \gadot_\b &=& -(r^2+a^2) ^2 \e^2- 2a(r^2+a^2) \e \c \lz - a^2  \lz^2  +\De \k^2\\
&=&-\big(( r^2+a^2)\e + a \lz\big)^2+\De\k^2
\eeaa
which is only a function of $r$ along  any fixed  null geodesic. 
Going back to the equation for null geodesics  we  infer that
\beaa
\De\Big(\frac{dr}{d\la} \Big)^2  =-\RR(r;  \e,\lz, \k^2),
\eeaa
which is the equation for a null geodesic with constant of motions $\e$, $\lz$ $\k^2$.

There exist null geodesics along which  $\RR(r;  \e,\lz, \k^2)=0$  i.e. $r$ remains constant. These are called 
orbital null geodesics, or trapped null geodesics.  The $r$ values for which such solutions are possible  must then verify the equations
\beaa
\RR(r;  \e,\lz, \k^2)=\pr_r\RR(r;  \e,\lz, \k^2)=0.
\eeaa

\begin{lemma}\label{lemma:trapped-geodesics} All orbital null geodesics in Kerr-Newman spacetime are given by the equation
\bea\label{eq:trapped-region-KN}
\TT_{\e, \lz}:= \big( r^3-3Mr^2 + (a^2+2Q^2)r+Ma^2\big)\e-  (r-M) a\lz=0.
\eea
\end{lemma}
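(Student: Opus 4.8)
The plan is to compute $\RR(r; \e, \lz, \k^2)$ explicitly as a polynomial in $r$, impose the two conditions $\RR = \pr_r \RR = 0$, and eliminate the parameter $\k^2$ between them to obtain a single relation, which will turn out to be \eqref{eq:trapped-region-KN}. First I would write, from the computation above,
\beaa
\RR(r; \e, \lz, \k^2) &=& -\big((r^2+a^2)\e + a\lz\big)^2 + \De \k^2,
\eeaa
with $\De = r^2 - 2Mr + a^2 + Q^2$. Differentiating in $r$ gives
\beaa
\pr_r \RR &=& -2\big((r^2+a^2)\e + a\lz\big) \cdot 2r\e + (\pr_r \De)\k^2 = -4r\e\big((r^2+a^2)\e + a\lz\big) + 2(r-M)\k^2.
\eeaa
From the first equation $\De \k^2 = \big((r^2+a^2)\e + a\lz\big)^2$, so $\k^2 = \big((r^2+a^2)\e + a\lz\big)^2/\De$ (the case $\e = 0$ forcing $\lz = 0$ and hence $\k^2 = 0$ can be treated separately, or absorbed since then the trapping relation \eqref{eq:trapped-region-KN} reads $a\lz(r-M) = 0$ which is consistent). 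Substituting this into $\pr_r \RR = 0$ and clearing $\De$ yields
\beaa
-4r\e\, \De\, \big((r^2+a^2)\e + a\lz\big) + 2(r-M)\big((r^2+a^2)\e + a\lz\big)^2 = 0,
\eeaa
and factoring out $2\big((r^2+a^2)\e + a\lz\big)$ (which cannot vanish identically when $\k^2 \neq 0$, since then $\RR = \De\k^2 \neq 0$ off the horizon — so on the trapped orbit this factor is nonzero) leaves
\beaa
-2r\e\, \De + (r-M)\big((r^2+a^2)\e + a\lz\big) = 0.
\eeaa

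The remaining step is a direct algebraic simplification: expand $-2r\e \De = -2r\e(r^2 - 2Mr + a^2 + Q^2)$ and $(r-M)(r^2+a^2)\e = (r^3 + a^2 r - Mr^2 - Ma^2)\e$, collect the coefficient of $\e$, and check it equals $-(r^3 - 3Mr^2 + (a^2+2Q^2)r + Ma^2)$. Indeed the $\e$-coefficient is $-2r^3 + 4Mr^2 - 2a^2 r - 2Q^2 r + r^3 - Mr^2 + a^2 r - Ma^2 = -r^3 + 3Mr^2 - (a^2 + 2Q^2)r - Ma^2$, so the identity becomes $-\big(r^3 - 3Mr^2 + (a^2+2Q^2)r + Ma^2\big)\e + (r-M)a\lz = 0$, which is exactly $\TT_{\e,\lz} = 0$ up to an overall sign. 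One should also note that the value of $\k^2$ on such an orbit is then recovered from $\De\k^2 = ((r^2+a^2)\e + a\lz)^2$, so the orbital null geodesics are genuinely parametrized by $(\e, \lz)$ via the single polynomial constraint, consistently with the claim that trapping in the axisymmetric sector (where $\lz$ drops out up to the $\That$-shift) collapses to the root of $\TT$ in \eqref{intro:definition-TT}.

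The main obstacle here is not conceptual but bookkeeping: one must be careful about the degenerate cases (the horizon $\De = 0$, and the vanishing of the factor $(r^2+a^2)\e + a\lz$) to justify the factorization cleanly, and one must verify that the elimination of $\k^2$ does not introduce or lose solutions. Since on a trapped orbit $r$ is constant and $\De > 0$ in the exterior, the division by $\De$ is legitimate there, and the factor $(r^2+a^2)\e + a\lz$ vanishing would force $\k^2 = 0$ (a null geodesic with $K^{\a\b}\gadot_\a\gadot_\b = 0$), a borderline case that can be dealt with by continuity or noted explicitly. Modulo these remarks the lemma follows from the two displayed equations by elementary algebra.
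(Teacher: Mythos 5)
Your argument is correct and follows essentially the same route as the paper: impose $\RR=\pr_r\RR=0$, eliminate $\k^2$ between the two conditions, factor out $\big((r^2+a^2)\e+a\lz\big)$ (discarding the degenerate case where it vanishes, which forces $\k^2=0$ and trivial constants of motion), and expand the remaining factor to recover $\TT_{\e,\lz}=0$. The only difference is that you solve for $\k^2$ from $\RR=0$ while the paper solves from $\pr_r\RR=0$, which is immaterial.
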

\begin{proof} 
We solve for 
\beaa
\RR(r; \e,\lz, \k^2)=-\big(( r^2+a^2)\e + a \lz\big)^2+\De\k^2=0\\
\pr_r\RR(r;  \e,\lz, \k^2)=-4r \e \c \big(( r^2+a^2)\e + a \lz\big)+2\big(r-M\big)\k^2=0
\eeaa
Writing from the second equation $\k^2=\frac{2r}{(r-M)} \e \c \big(( r^2+a^2)\e + a \lz\big)$, and substituting in the first equation, we obtain
\beaa
0&=&\big(( r^2+a^2)\e + a \lz\big) \c \Big( -\big(( r^2+a^2)\e + a \lz\big)+\De\frac{2r}{(r-M)} \e \Big)\\
&=&\frac{\big(( r^2+a^2)\e + a \lz\big)}{(r-M)} \c \Big( \big(-( r^2+a^2)(r-M)+2r\De \big) \e -(r-M) a \lz \Big)\\
&=&\frac{\big(( r^2+a^2)\e + a \lz\big)}{(r-M)} \c \Big( \big(-( r^3-Mr^2+a^2r-Ma^2)+2r(r^2-2Mr+a^2+Q^2) \big) \e -(r-M) a \lz \Big)\\
&=&\frac{\big(( r^2+a^2)\e + a \lz\big)}{(r-M)} \c \Big( \big( r^3-3Mr^2+(a^2+2Q^2)r+Ma^2 \big) \e -(r-M) a \lz \Big).
\eeaa
Observe that the case of $( r^2+a^2)\e + a \lz=0$ implies $\k^2=0$, and it yields the vanishing of all the constants of motion. For non-trivial trapped null geodesics, we then obtain the vanishing of the factor on the right, denoted by $\TT_{\e, \lz}$. 
\end{proof}

For $a=0$, the above orbital null geodesic equation gives
\beaa
\TT_{\e}:=r \big( r^2-3Mr +2Q^2\big)\e=0,
\eeaa
and therefore the trapped null geodesics lie on the hypersurface, called \textit{photon sphere}, defined by the polynomial $r^2-3Mr  +2Q^2=0$, which in Reissner-Nordstr\"om spacetime is at $\{r^{RN}_{trap}=\frac{3M+\sqrt{9M^2-8Q^2}}{2}\}$ and in Schwarzschild for $Q=0$ is at $\{r^S_{trap}=3M\}$.

For $a \neq 0$, as a consequence of Lemma \ref{lemma:trapped-geodesics} the values of $r$ for which trapped null geodesics exist  depends on  the ratio $\lz/ \e$. More precisely, at the trapped null geodesics we have
\beaa
\frac{r^3-3Mr^2 + ( a^2+2Q^2)r+Ma^2}{r-M}=\frac{ a\lz}{\e}.
\eeaa
In particular, unlike Schwarzschild and Reissner-Nordstr\"om, which have a single radius where all trapped geodesics occur, Kerr or Kerr-Newman spacetimes have an entire radial interval where trapped geodesics can occur. However, for a fixed geodesic angular momentum $\lz$, there is again only a single trapping radius for null geodesics with that angular momentum. 
If $\lz=0$ the trapped region defined by \eqref{eq:trapped-region-KN} reduces to a unique hypersurface defined by
\bea\label{definition-TT}
 \TT:= r^3-3Mr^2 + ( a^2+2Q^2)r+Ma^2 =0.
\eea
Geometrically, this reflects the fact that trapped null geodesics orthogonal to the axial Killing vectorfield $Z$ must necessarily approach the root of $\TT$.
Observe that the polynomial $\TT$ has a unique root in the exterior of the black hole region, and we denote that root by $r_{trap}$. Moreover $2M < r_{trap} < 3M$, where the lower bound is reached in the extremal Reissner-Nordstr\"om case and the upper bound in the Schwarzschild case. 
At the extremal Kerr-Newman, for $a^2+Q^2=M^2$, the trapping hypersurface becomes
\beaa
\TT=r^3-3Mr^2 +  (2M^2-a^2)r+Ma^2=(r-M)(r^2-2Mr-a^2)
\eeaa
which vanishes at $r=M+\sqrt{M^2+a^2}$. 

\begin{remark}\label{remark-derivative-TT}
From the computations in Lemma \ref{lemma:trapped-geodesics}, one can see that the polynomial $\TT$ can be obtained as
\bea
\pr_r\left(\frac{\De}{(r^2+a^2)^2} \right)&=& \frac{2(r-M)(r^2+a^2)-4r\De}{(r^2+a^2)^3}=-\frac{2\TT}{(r^2+a^2)^3}.
\eea
In particular, the trapping radius $r_{trap}$ in axial symmetry maximizes the geodesic potential $\frac{\De}{(r^2+a^2)^2} $, which coincides with the familiar potential $r^{-2} \big( 1-\frac{2M}{r}\big) $ in Schwarzschild.
\end{remark}

A crucial property of the trapped null geodesics in Kerr-Newman spacetime is that they are unstable, i.e. one can show \cite{GKS2} that  at $\TT_{\e, \lz}=0$, we have $\pr_r^2 \RR(r;  \e,\lz, \k^2) \leq 0$ in the subextremal range $a^2+Q^2 \leq M^2$.

\subsection{The choice of the function $z$ in the Morawetz estimates}

Here we motivate the choice of the function $z$ done in \cite{And-Mor}, which we use here. Recall the expression for the Morawetz current in \eqref{identity:prop.Morawetz1}
\beaa
   |q|^2\EE^{(X, w)}[\psi] &=\AA |\pr_r\psi|^2 + \UU^{\a\b}(\pr_\a \psi )(\pr_\b \psi )+\VV |\psi|^2,
   \eeaa
   where the principal term $ \UU^{\a\b}(\pr_\a \psi )(\pr_\b \psi )$ is given by
\beaa
  \UU^{\a\b}&=& -  \frac{ 1}{2}  u \pr_r\left( \frac z \De\RR^{\a\b}\right).
\eeaa
We denote 
\beaa
\RRtp'^{\a\b}:= \pr_r\left( \frac z \De\RR^{\a\b}\right), \qquad \RRtp'^{\aund}:=  \pr_r\left( \frac z \De\RR^{\aund}\right) 
\eeaa
and thus write from \eqref{eq:RR-ab-RR-aund},  i.e. $\RR^{\a\b}= \RR^\aund S_\aund^{\a\b}$,
\beaa
\RRtp'^{\a\b}= \pr_r\left( \frac z \De\RR^{\a\b}\right)= \pr_r\left( \frac z \De \RR^\aund  S_{\aund}^{\a\b}\right)=  \pr_r \Big(\frac{z}{\De} \RR^\aund \Big) S_{\aund}^{\a\b} =\RRtp'^{\aund} S_{\aund}^{\a\b}.
\eeaa
Explicitly, from  \eqref{definition-RR-tensor}, we deduce
\beaa
\RRtp'^{\a\b}&=& -\pr_r\left(\frac{z}{\De} ( r^2+a^2)^2 \right) \partial_t^\a \partial_t^\b - 2a \pr_r\left(\frac{z}{\De} (r^2+a^2)\right) \partial_t^{(\a} \partial_\vphi^{\b)} - a^2\pr_r\left(\frac{z}{\De} \right)\partial_\vphi^\a \partial_\vphi^\b+(\pr_rz)  O^{\a\b},
\eeaa
and
\beaa
\RRtp'^1=   -\pr_r\left(\frac{z}{\De} ( r^2+a^2)^2 \right), \qquad \RRtp'^2= - 2a \pr_r\left(\frac{z}{\De} (r^2+a^2)\right), \qquad \RRtp'^3= - a^2\pr_r\left(\frac{z}{\De} \right), \qquad \RRtp'^4 =\pr_rz.
\eeaa

The principal term $ \UU^{\a\b}= -  \frac{ 1}{2}  u\RRtp'^{\a\b}$   contains the angular and time derivatives of the solution $\psi$, and therefore we expect it to be degenerate at the trapping region. The choice of $z$ has to reflect this property.
From Remark \ref{remark-derivative-TT}, following \cite{And-Mor} we notice that if 
\bea\label{choice-z}
z_0&=& \frac{\De}{(r^2+a^2)^2}
\eea 
we simultaneously obtain a degenerate coefficient at trapping for the term $O^{\a\b}$ and the vanishing of the term in $\pr_t^\a \pr_t^\b$, i.e.
\beaa
\RRtp'^{\a\b}[z_0]=\pr_r\left( \frac {z_0}{ \De}\RR^{\a\b}\right)&=&- 2a \pr_r\left( \frac{1}{(r^2+a^2)} \right) \partial_t^{(\a} \partial_\vphi^{\b)} - a^2\pr_r\left(\frac{1}{(r^2+a^2)^2} \right)\partial_\vphi^\a \partial_\vphi^\b-\frac{2\TT}{(r^2+a^2)^3}  O^{\a\b}\\
&=&     \frac{4ar}{(r^2+a^2)^2}  \partial_t^{(\a} \partial_\vphi^{\b)} +\frac{4a^2r}{(r^2+a^2)^3}  \partial_\vphi^\a \partial_\vphi^\b-\frac{2\TT}{(r^2+a^2)^3}  O^{\a\b}, 
\eeaa
or also
\bea
\RRtp'^1[z_0]=   0, \qquad \RRtp'^2[z_0]= \frac{4ar}{(r^2+a^2)^2} , \qquad \RRtp'^3[z_0]= - \frac{4a^2r}{(r^2+a^2)^3}  , \qquad \RRtp'^4[z_0] =-\frac{2\TT}{(r^2+a^2)^3} .
\eea

Recalling the definition \eqref{define:That} of
$\That=\pr_t+\frac{a}{r^2+a^2} \pr_\vphi$, for the choice of $z_0= \frac{\De}{(r^2+a^2)^2}$ we obtain
\bea\label{eq:term-UU-z0}
\RRtp'^{\a\b}[z_0]&=&   -\frac{2\TT}{(r^2+a^2)^3}  O^{\a\b}+  \frac{4ar}{(r^2+a^2)^2}  \That^{(\a} Z^{\b)}.
\eea

Observe that if $z=\big( \frac{\De}{(r^2+a^2)^2}\big)^2$, then the coefficient of $\partial_t^\a \partial_t^\b$ in $\pr_r\left( \frac z \De\RR^{\a\b}\right)$ will reduce to
\beaa
 -\pr_r\left(\frac{z}{\De} ( r^2+a^2)^2 \right)&=&  -\pr_r\left( \frac{\De}{(r^2+a^2)^2} \right)=-\pr_r z_0=\frac{2\TT}{(r^2+a^2)^3},
\eeaa
which is also trapped. To combine the above choice of $z_0$ with one which allows for a trapped coefficient in the time derivative as well,  following \cite{And-Mor} we define for a sufficiently small $\ep >0$,
\bea\label{eq:definition-z1}
z_1=z_0-\ep \, z_0^2.
\eea
With this choice we obtain
\begin{equation}\label{eq:expression:z1RR}
\begin{split}
\RRtp'^{\a\b}[z_1]=\pr_r\left(  \frac {z_1}{ \De}\RR^{\a\b}\right)&=  -\ep \frac{2\TT}{ (r^2+a^2)^3}\partial_t^\a \partial_t^\b  -\frac{2\TT}{(r^2+a^2)^3}  \big(1+O(\ep r^{-2}) \big)  O^{\a\b}\\
&+  \frac{4ar}{(r^2+a^2)^2} \big(1+O(\ep r^{-2}) \big)   \That^{(\a} Z^{\b)},
\end{split}
\end{equation}
or also
\begin{equation}\label{eq:values-RRtp'z1}
\begin{split}
\RRtp'^1[z_1]&=  -\ep \frac{2\TT}{ (r^2+a^2)^3}, \qquad \RRtp'^4[z_1] =   - \frac{2\TT}{ (r^2+a^2)^3}\big( 1+O(\ep r^{-2})\big) \\
\RRtp'^2[z_1]& =\frac{4ar}{(r^2+a^2)^2} \big( 1+O(\ep r^{-2})\big), \qquad \RRtp'^3[z_1]  =\frac{4a^2r}{(r^2+a^2)^3} \big( 1+O(\ep r^{-2})\big).
\end{split}
\end{equation}

\subsection{The case of axial symmetry}\label{section:axysimmetry}

From Proposition   \ref{proposition:Morawetz1}, for $z=z_0= \frac{\De}{(r^2+a^2)^2}$ as in \eqref{choice-z} and $u$ a function of $r$, with
\bea
X= \FF \partial_r, \qquad \quad \FF=z_0 u,  \qquad \quad  w = z_0 \pr_r u ,  \lab{Equation:w}
\eea
      the current $\EE^{(X, w)}[\psi]$ satisfies
  \bea
   |q|^2\EE^{(X, w)}[\psi] &=\AA |\pr_r\psi|^2 + \UU^{\a\b}(\pr_\a \psi )(\pr_\b \psi )+\VV |\psi|^2,
   \eea
   where, recall \eqref{eq:term-UU-z0}, 
\begin{equation}\label{eq:expressions-AA-VV}
\begin{split}
 \AA&=z_0^{1/2}\Delta^{3/2} \partial_r\left( \frac{ z_0^{1/2}  u }{\Delta^{1/2}}  \right)=\frac{\Delta^{2}}{r^2+a^2} \partial_r\left( \frac{ u }{r^2+a^2}  \right),  \\
  \UU^{\a\b}&= -  \frac{ 1}{2}  u \pr_r\left( \frac {z_0}{ \De}\RR^{\a\b}\right)=u \Big( \frac{\TT}{(r^2+a^2)^3}  O^{\a\b}-  \frac{2ar}{(r^2+a^2)^2}  \That^{(\a} Z^{\b)}\Big), \\
\VV&=-\frac 1 4 \pr_r \big(\De \pr_r w \big) .
\end{split}
\end{equation}
For axially symmetric solutions, the term involving $Z$ in $\UU^{\a\b}$ vanishes, and we can write
 \bea\label{current-mor}
   |q|^2\EE^{(X, w)}[\psi] &=&\AA |\pr_r\psi|^2 +  \frac{u \TT}{(r^2+a^2)^3}  |q|^2 |\nab \psi|^2+\VV |\psi|^2,
   \eea
   where we used \eqref{eq:first-definition-O} to write 
   \beaa
   O^{\a\b}(\pr_\a \psi )(\pr_\b \psi )= |q|^2 ( e_1^\a e_1^\b + e_2^\a e_2^\b)(\pr_\a \psi )(\pr_\b \psi )=|q|^2 ((\nab_1\psi)^2+(\nab_2\psi)^2)=:|q|^2 |\nab \psi|^2.
   \eeaa
   Here $|\nab \psi|^2=\frac{1}{|q|^2} |\nabb_{\mathbb{S}^2} \psi|^2+\frac{a^2\sin^2\th}{|q|^2} |\partial_t \psi|^2$, where $ |\nabb_{\mathbb{S}^2} \psi|^2$ is the norm of the gradient of $\psi$ on the unit round sphere.

In order to obtain a positive definite Morawetz current in \eqref{current-mor}, we make use of the following construction first due to Stogin, see Lemma 5.2.6 in \cite{Stogin}.

\begin{lemma}\label{lemma:construction-functions} It is possible to choose functions $u$ and $w$ such that in Kerr-Newman for the full sub-extremal $a^2+Q^2< M^2$, we have
\bea\label{eq:conditions-positivity}
\AA \geq 0, \qquad u\TT \geq 0, \qquad \VV \geq 0,
\eea
and therefore the current $\EE^{(X, w)}[\psi] $ is positive definite and satisfies for a uniform constant $c_0 >0$,
\bea\label{eq:first-bound-EE-X-w}
\EE^{(X, w)}[\psi] & \geq &  c_0 \Big(  \frac{\Delta^2}{r^7} |\pr_r\psi|^2 +  r^{-1} \big( 1-\frac{r_{trap}}{r}\big)^2 |\nab \psi|^2+\frac{M}{r^4} \mathbbm{1}_{\{r \geq r_{*}\}} |\psi|^2\Big)
\eea
where $r_{trap}$ denotes the root of $\TT$ in the exterior region and $r_{*}> r_{trap}$ is defined in the construction.

\end{lemma}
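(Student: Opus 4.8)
The plan is to exhibit explicit functions $u=u(r)$ and $w=w(r)$ satisfying the three pointwise positivity conditions in \eqref{eq:conditions-positivity}, and then read off the quantitative lower bound \eqref{eq:first-bound-EE-X-w} from the resulting expressions for $\AA$, $\frac{u\TT}{(r^2+a^2)^3}|q|^2$ and $\VV$. The key structural observation, inherited from Stogin's construction (Lemma 5.2.6 in \cite{Stogin}), is that the three conditions decouple nicely once one works with the right change of variables. First I would require $u\TT\ge0$ by taking $u$ to vanish precisely at $r=r_{trap}$ (the unique exterior root of $\TT$) with the same sign as $\TT$ on either side; concretely one sets $u = \big(1-\frac{r_{trap}}{r}\big)\cdot\tilde u(r)$ for a strictly positive bounded function $\tilde u$, which immediately gives $u\TT\ge0$ and the factor $\big(1-\frac{r_{trap}}{r}\big)^2$ in the angular term after noting $\TT$ itself has a simple zero at $r_{trap}$, so $u\TT \sim (1-r_{trap}/r)^2$ times something positive, and $\frac{|q|^2}{(r^2+a^2)^3}\sim r^{-1}$ for large $r$ while staying bounded below by $c\, r^{-1}$ throughout the exterior.

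Next I would arrange $\AA\ge0$. From \eqref{eq:expressions-AA-VV}, $\AA = \frac{\Delta^2}{r^2+a^2}\,\partial_r\!\big(\frac{u}{r^2+a^2}\big)$, so positivity of $\AA$ is equivalent to $\frac{u}{r^2+a^2}$ being non-decreasing in $r$. This is compatible with the sign requirement on $u$ provided $\tilde u$ is chosen to grow slowly enough (e.g. $\tilde u$ constant near $r_{trap}$ and increasing outward, or a function of the tortoise-type coordinate chosen so that $\frac{u}{r^2+a^2}$ is manifestly monotone); near the horizon $\Delta^2$ degenerates quadratically so the $|q|^2\EE$ identity shows $\AA$ contributes the weight $\Delta^2/r^7$ after dividing by $|q|^2\sim r^2$. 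The remaining and genuinely delicate condition is $\VV = -\frac14\partial_r(\Delta\,\partial_r w)\ge0$ with $w = z_0\,\partial_r u$: here one must choose $u$ so that its derivative $\partial_r u$ makes $w$ concave after multiplication by $\Delta$. Following Stogin, the resolution is to add to $u$ a slowly varying large-$r$ piece supported in $\{r\ge r_*\}$ for a suitable $r_* > r_{trap}$, whose second derivative produces a strictly positive zeroth-order term $\VV\gtrsim \frac{M}{r^4}\mathbbm{1}_{\{r\ge r_*\}}$ while not spoiling $\AA\ge0$ or $u\TT\ge0$; away from this region one checks $\VV\ge0$ directly from the explicit polynomial expressions. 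The sub-extremal hypothesis $a^2+Q^2<M^2$ enters exactly here, guaranteeing $\Delta>0$ on $\{r>r_+\}$ and that $r_{trap}$ is a \emph{simple} exterior root of $\TT$ with $2M<r_{trap}<3M$ (Lemma \ref{lemma:trapped-geodesics} and the discussion after \eqref{definition-TT}), so that the degeneracy of the angular term is exactly of order two and no higher.

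Finally, with $u,w$ fixed, I would assemble \eqref{eq:first-bound-EE-X-w}: from \eqref{current-mor} divide by $|q|^2 \asymp r^2$ (valid since $|q|^2 = r^2 + a^2\cos^2\th$ is comparable to $r^2$ in the exterior for fixed $a$), use $\AA/|q|^2 \gtrsim \Delta^2/r^7$ near the horizon and $\AA/|q|^2 \gtrsim r^{-3}$ for large $r$ (both absorbed into $\frac{\Delta^2}{r^7}|\partial_r\psi|^2$ up to constants, or stated with the $M^2/r^3$ weight as in Theorem \ref{theorem:general}), $\frac{u\TT}{(r^2+a^2)^3}|q|^2 \gtrsim r^{-1}\big(1-\frac{r_{trap}}{r}\big)^2$ for the angular term, and $\VV/|q|^2 \gtrsim \frac{M}{r^4}\mathbbm{1}_{\{r\ge r_*\}}$ for the zeroth-order term, with the universal constant $c_0>0$ depending only on $M,a,Q$. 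The main obstacle is the simultaneous satisfaction of all three inequalities — in particular reconciling the monotonicity demanded by $\AA\ge0$ with the concavity (after multiplying by $\Delta$) demanded by $\VV\ge0$, which is precisely why a careful interpolation between a near-trapping profile and a large-$r$ profile, as in Stogin's lemma, is needed; I would simply cite \cite{Stogin} for the existence of such a profile in the Kerr case and verify that the only place $Q$ enters, namely through $\Delta = r^2-2Mr+a^2+Q^2$ and through the coefficients of $\TT$, does not disrupt any of the sign computations in the range $a^2+Q^2<M^2$.
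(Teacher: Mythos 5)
There is a genuine gap, and it sits exactly where the real work of the lemma lies. Your ansatz $u=\big(1-\frac{r_{trap}}{r}\big)\tilde u$ with $\tilde u$ \emph{bounded} and positive is incompatible with $\AA\geq 0$: since $\AA=\frac{\Delta^2}{r^2+a^2}\pr_r\big(\frac{u}{r^2+a^2}\big)$, positivity forces $\frac{u}{r^2+a^2}$ to be non-decreasing, but with bounded $u$ this quotient is positive for $r>r_{trap}$ and tends to $0$ as $r\to\infty$, a contradiction; it also produces an angular coefficient $\frac{u\TT}{(r^2+a^2)^3}\sim r^{-2}$ rather than the $r^{-1}$ weight claimed in \eqref{eq:first-bound-EE-X-w}. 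In the actual construction $u$ must grow like $r^2$ (indeed $\pr_r u=w/z_0\sim r$ at infinity). More importantly, the conditions do \emph{not} decouple: $w$ is tied to $u$ through $w=z_0\pr_r u$, so $\VV=-\frac14\pr_r(\Delta\pr_r w)$ is a third-order constraint on $u$, and "adding to $u$ a slowly varying piece supported in $\{r\geq r_*\}$ whose second derivative makes $\VV>0$" is not a construction — you never exhibit a $w$ for which $\VV\geq 0$ can be checked, nor do you explain why $\VV\geq 0$ holds below $r_*$ (in the paper's choice it holds there only because $w$ is taken \emph{constant} on $[r_+,r_*]$, so $\VV\equiv 0$, which is also why the indicator $\mathbbm{1}_{\{r\geq r_*\}}$ appears in the zeroth-order term).

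The paper (following Stogin) runs the logic in the opposite direction: one first chooses $w$ — equal to $r z_0$ for $r\geq r_*$, where $r_*$ is the exterior critical point of $rz_0$, and frozen at the constant value $w(r_*)$ on $[r_+,r_*]$ — and then defines $u$ by integrating $\pr_r u=w/z_0$ from $r_{trap}$, which gives $u\TT\geq 0$ and $\VV\geq 0$ essentially for free. The genuinely delicate step is then verifying $\AA\geq 0$ on $[r_{trap},r_*]$, which after monotonicity considerations reduces to the single quantitative inequality $r_*(2r_{trap}-r_*)+a^2>0$, checked by comparing the ranges $2M<r_{trap}<3M$ and $3M<r_*<4M$ in the full sub-extremal Kerr--Newman family. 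Your proposal omits this verification entirely and instead defers to citing \cite{Stogin} for Kerr "and checking where $Q$ enters" — but that check (how $Q$ shifts $r_{trap}$, $r_*$ and whether the inequality above survives) is precisely the content of the lemma in Kerr--Newman, so deferring it leaves the proof incomplete.
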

\begin{proof}
In order to obtain the positivity conditions in \eqref{eq:conditions-positivity}, from the expressions of $\AA$ and $\VV$ in \eqref{eq:expressions-AA-VV} we need to have
   \bea
       \partial_r\left( \frac{ u }{r^2+a^2}  \right) \geq 0, \label{condition-partial-u-AA}\\
    u  \TT \geq 0,  \label{condition-u-trapped}\\
     \pr_r\big(\De \pr_r w  \big) \leq 0, \label{condition-VV-positive} 
     \eea
     with the compatibility condition
     \bea
     w=z_0 \pr_r u \label{condition-w-pr-u}.
   \eea

We start by imposing condition \eqref{condition-u-trapped}, i.e. $u \TT\geq 0$. Recall that $r_{trap}$ is the root of the polynomial $\TT$ in the exterior region, with $2M < r_{trap} < 3M$. 
  We define $u$ by the following:
   \bea
   u(r_{trap})=0, \qquad \pr_r u=\frac{1}{z_0} w=\frac{(r^2+a^2)^2}{\De} w, \qquad w \geq 0.
   \eea    
   This automatically implies that $u$ vanishes at $r_{trap}$ and is increasing, therefore satisfying \eqref{condition-u-trapped}. Also, by definition, the compatibility condition \eqref{condition-w-pr-u} is satisfied.
      
We now rewrite condition \eqref{condition-partial-u-AA} in terms of $w$, i.e.
\bea\label{relation-AA-}
 \partial_r\left( \frac{ u }{r^2+a^2}  \right) &=& \frac{1}{r^2+a^2} \pr_r u -\frac{2r}{(r^2+a^2)^2}u=   \frac{r^2+a^2}{\De}w -\frac{2r}{(r^2+a^2)^2}u
\eea
To eliminate the dependence on $u$, we multiply \eqref{relation-AA-} by $\frac{(r^2+a^2)^2}{r}$, and take another derivative in $r$.
By defining $\KK:= \frac{(r^2+a^2)^2}{r} \partial_r\left( \frac{ u }{r^2+a^2}  \right)$, we obtain
\beaa
\KK&=&\frac{(r^2+a^2)^2}{r} \partial_r\left( \frac{ u }{r^2+a^2}  \right)=\frac{(r^2+a^2)^3}{r \De} w  -2u\\
\pr_r \KK&=&\pr_r\big(\frac{(r^2+a^2)^3}{r \De} w\big)  -2\frac{(r^2+a^2)^2}{\De}w \\
&=&2r \big(\frac{(r^2+a^2)^2}{r \De} w\big) +(r^2+a^2)\pr_r\big(\frac{(r^2+a^2)^2}{r \De} w\big)  -2\frac{(r^2+a^2)^2}{\De}w\\
&=& (r^2+a^2)\pr_r\big(\frac{(r^2+a^2)^2}{r \De} w\big) =(r^2+a^2)\pr_r\big(\frac{1}{r z_0} w\big) .
\eeaa
Since $\KK$ has the same sign as $ \partial_r\left( \frac{ u }{r^2+a^2}  \right)$, condition \eqref{condition-partial-u-AA} is satisfied if and only if $\KK\geq 0$. 

We now impose that $\partial_r \KK=0$ for sufficiently large $r$, which, together with the condition that $\KK \geq 0$ up to that large $r$, implies positivity for $\KK$. From the above we have
\bea
 w=rz_0=\frac{r \De}{(r^2+a^2)^2} \qquad \text{for sufficiently large $r$}
\eea
We then compute for such choice:
\beaa
\pr_r w&=&\pr_r\big( \frac{r \De}{(r^2+a^2)^2} \big)=\frac{ \De}{(r^2+a^2)^2}  -\frac{2r\TT}{ (r^2+a^2)^3} \\
&=& \frac{ (r^2-2Mr+a^2+Q^2) (r^2+a^2)-2r( r^3-3Mr^2 + ( a^2+2Q^2)r+Ma^2 )}{(r^2+a^2)^3}\\
&=&- \frac{ r^4-4Mr^3+3Q^2r^2+4Ma^2r-a^4-a^2Q^2}{(r^2+a^2)^3}.
\eeaa
The function $w$ then has a maximum at the root of the above polynomial. Denote $r_{*}$ the unique root in the exterior region. Observe that in Schwarzschild $r_*=4M$, in Reissner-Nordstr\"om $r_{*}=2M+\sqrt{4M^2-3Q^2}$, and in Kerr $r_{*}=2M+\sqrt{4M^2-a^2}$. In the subextremal Kerr-Newman we have $3M < r_{*} < 4M$, and therefore in the full sub-extremal range we have $r_{*} >r_{trap}$. 

 We then impose 
\bea
 w=rz=\frac{r \De}{(r^2+a^2)^2} \qquad \text{for $r \geq r_{*}$}.
\eea
Then for $r \geq r_{*}$ condition \eqref{condition-VV-positive} reduces to
\beaa
- \pr_r\big(\De  \pr_r w  \big)&=& \pr_r\big(\frac{\De}{r^2+a^2}  \frac{ r^4-4Mr^3+3Q^2r^2+4Ma^2r-a^4-a^2Q^2}{(r^2+a^2)^2} \big) \geq 0.
 \eeaa
 The above is a derivative of a product of two functions that are positive for $r > r_{*}$ and increasing to $1$, and therefore necessarily positive. 

 Consider the function $\De \pr_r w$. This function vanishes at $r=r_{+}$ and at $r=r_{*}$ with the above choice of $w$ for $r \geq r_{*}$. By the mean value theorem, in order to have condition  \eqref{condition-VV-positive}, satisfied for all $r\geq r_{+}$, we need to have the vanishing of the function in the interval $[r_{+}, r_{*}]$, and therefore $\partial_r w=0$ there. Define
\bea
w(r)=w(r_{*})>0 \qquad \text{for $r_{+} \leq r \leq r_{*}$}
\eea
At this stage, a non-negative $w$ has been chosen for all $r$, and condition \eqref{condition-VV-positive}  has been proved to be satisfied everywhere.

 We are now left with proving that with the above choice $\KK \geq 0$ for $r \leq r_{*}$. In this region, we have
\beaa
\pr_r \KK&=& (r^2+a^2)\pr_r\big(\frac{1}{r z_0} w\big) =w(r_{*}) (r^2+a^2)\pr_r\big(\frac{1}{r z_0} \big) =-w(r_{*}) (r^2+a^2)\frac{1}{r^2z_0^2}\pr_r\big(rz_0\big).
\eeaa
Recall that the function $rz_0$ had a maximum at $r_{*}$ and therefore $\KK$ decreases to a minimum at $r=r_{*}$. It is therefore enough to check that $\KK(r=r_{*})> 0$, or equivalently that $\partial_r\left( \frac{ u }{r^2+a^2}  \right)\big|_{r=r_{*}}>0$. We have
\beaa
\partial_r\left( \frac{ u }{r^2+a^2}  \right)&=& \frac{1}{r^2+a^2}\partial_r u-\frac{2r}{(r^2+a^2)^2}u\\
&=& \frac{1}{r^2+a^2}\frac{w}{z_0}-\frac{2r}{(r^2+a^2)^2}\int_{r_{trap}}^{\cdot} \partial_ru\\
&=& \frac{1}{r^2+a^2}\frac{w}{z_0}-\frac{2r}{(r^2+a^2)^2}\int_{r_{trap}}^{\cdot} \frac{w}{z_0}
\eeaa
We evaluate the above for $r$ with $r_{trap} \leq r \leq r_{*}$, where $w(r)=w(r_{*})$ is constant, and $\partial_r(\frac{1}{z_0})=-\frac{1}{z_0^2} \partial_r z_0=\frac{1}{z_0^2} \frac{2\TT}{(r^2+a^2)^3} \geq 0$. Since the  function $\frac{1}{z_0}$ is increasing, we can bound its integral  from above by the product of the interval times the value of the function at the right end of the interval. We therefore obtain
\beaa
\partial_r\left( \frac{ u }{r^2+a^2}  \right)\big|_{r=r_{*}}&=& \frac{1}{r_{*}^2+a^2}\frac{w(r_{*})}{z_0(r_{*})}-\frac{2r_{*} w(r_{*})}{(r_{*}^2+a^2)^2}\int_{r_{trap}}^{r_{*}} \frac{1}{z_0}\\
&\geq & \frac{1}{r_{*}^2+a^2}\frac{w(r_{*})}{z_0(r_{*})}-\frac{2r_{*} w(r_{*})}{(r_{*}^2+a^2)^2} \frac{1}{z_0(r_{*})}(r_{*}-r_{trap})\\
&=& \frac{1}{(r_{*}^2+a^2)^2}\frac{w(r_{*})}{z_0(r_{*})} \Big(r_{*}^2+a^2- 2r_{*}(r_{*}-r_{trap})\Big)
\eeaa
Observe that
\beaa
r_{*}^2+a^2- 2r_{*}(r_{*}-r_{trap})=r_{*}(2r_{trap}-r_{*})+a^2>0
\eeaa
as can be seen by comparing the range of $r_{trap}$ and $r_{*}$. 

To summarize, the choice for the functions  $u$ and $w$ given by
\[
w= \left\{
     \begin{array}{@{}l@{\thinspace}l}
       r_{*}z_0(r_{*}) &\qquad  r \leq r_{*}\\
       rz_0(r) &\qquad r > r_{*} \\
     \end{array}
   \right.
\]
and 
\beaa
 u(r_{trap})=0, \qquad \pr_r u=\frac{1}{z_0} w
\eeaa
satisfy in the whole exterior region of the full subextremal Kerr-Newman spacetime the conditions \eqref{eq:conditions-positivity}. From the analysis of the asymptotics of the functions $u$ and $w$ we can easily deduce the bound \eqref{eq:first-bound-EE-X-w}. 

\end{proof}

As in \cite{Stogin}\cite{KS}, the above construction has to be corrected to fix some remaining issue in the bound \eqref{eq:first-bound-EE-X-w}. We briefly describe them here, and we remind to \cite{Stogin}\cite{KS} for more details, as they do not depend on the Kerr-Newman spacetime specifically.

The function $u$ as defined above, and consequently the vectorfield $X$, presents a logarithmic divergence near the horizon. This can be fixed by tempering it in that region through a small deviation from the original ones. More precisely one can define $(X_{\de}, w_{\de})$ which agree with $(X, w)$ outside a neighborhood of the horizon, and that are regular up to the event horizon. This will create a negative contribution in the lower order term $|\psi|^2$. 

The coefficient of $(\partial_r \psi)^2$ in \eqref{eq:first-bound-EE-X-w} vanishes at the horizon. This can be fixed by the standard procedure of making use of the redshift vectorfield, which gives a positive contribution for any spacetime in the sub-extremal range. 

The coefficient of $|\psi|^2$ vanishes in the interval $r_{+} \leq r \leq r_{*}$. In addition, because of the tempering of the vectorfield close to the horizon, we have created a negative term of size $\de$ in the lower order term. This can be fixed by borrowing extra-positivity from the coefficient of $(\partial_r \psi)^2$ through a local Hardy inequality.

In  \eqref{eq:first-bound-EE-X-w}, there is no control on the $(\partial_t \psi)^2$ derivative. This can be fixed through a standard procedure of making use of the Lagrangian of the wave equation, i.e. applying the vectorfield method with $X=0$ and $w=\epsilon w_{\epsilon}$. (Equivalently, one can use the function $z_1$ in \eqref{eq:definition-z1} to insert a degenerate term at trapping for $(\partial_t \psi)^2$, as done in Section \ref{section:slowly}.)

After applying these standard arguments, one arrives to an improved bound on the current $\EE^{(X, w)}[\psi] $ given by
\bea\label{eq:second-bound-EE-X-w}
\EE^{(X, w)}[\psi] & \geq &  c_0 \Bigg(  \frac{M^2}{r^3} |\pr_r\psi|^2 +   \big( 1-\frac{r_{trap}}{r}\big)^2 \Big( r^{-1} |\nab \psi|^2+\frac{M}{r^2} (\partial_t \psi)^2 \Big)+\frac{M}{r^4}  |\psi|^2\Bigg).
\eea
Observe that the boundary term of the Morawetz current $\PP^{(X, w)}[\psi]$ will be combined in the next subsection with the boundary terms from the energy estimates multiplied by a large constant, so that the overall boundary terms are positive definite.

\subsection{The Morawetz estimate}\label{section:Morawetz-estimates}

We derive here the Morawetz estimates for general solutions to the wave equation \eqref{wave-eq-gen} for $|a| \ll M$.  Observe that the main issue is that the principal term in \eqref{eq:term-UU-z0},  i.e.
\beaa
\UU^{\a\b}= - \frac 1 2 u \pr_r\left(  \frac {z_0}{ \De}\RR^{\a\b}\right)&=&   \frac{u\TT}{(r^2+a^2)^3}  O^{\a\b}-  u\frac{2a  r}{(r^2+a^2)^2}  \That^{(\a} Z^{\b)},
\eeaa
cannot be made to be positive definite, as the function $u$ has to change sign in the trapped region. 
On the other hand, if one considers the generalized vectorfield method applied to $\psia$, as obtained in \cite{And-Mor}, it is possible to extract a positive definite contribution which is degenerate at the trapped region.

In the following, we will show the main steps in the proof of the physical-space analysis in \cite{And-Mor}, while referring to \cite{And-Mor} for more details.

\subsubsection{Choice of double-indexed function $u^{\aund\bund}$}

Following \cite{And-Mor}, see also \cite{GKS2}, we perform an integration by parts  in the principal term $\UU^{\a\b\aund\bund} \, \pr_\a \psia \, \pr_\b \psib$, which allows to create a positive definite term for a trapped combination of $\psia$, which we denote $\Psi$ here.

\begin{lemma}\label{lemma:integration-parts}
Let $u^{\aund\bund}$ the double-indexed function of $r$ as described in the second part of Proposition \ref{proposition:Morawetz1}  be given by\footnote{No assumption on $z$ is needed in this Lemma.} 
      \bea
      u^{\aund\bund}=-h \RRtp'^{\aund} \LL^{\bund}, \qquad  \RRtp'^{\aund}=  \pr_r\left( \frac z \De\RR^{\aund}\right),
      \eea
where $h$ is a positive function, and  $\LL^{\aund}$ are the coefficient of a constant symmetric tensor $L^{\a\b}:=\LL^\aund S_\aund^{\a\b}$. Then, defining 
   \bea\label{eq:definition-Psi}
      \Psi:= \RRtp'^\aund \psia,
      \eea
 the Morawetz identity \eqref{generalized-current-operator} is given by
\bea\label{eq:obunds-divergence-L}
 |q|^2 \D^\a \Big( \PP_\a^{(\mathbf{X}, \mathbf{w})}[\psi]+  \BB_\a[\psi]\Big)&=&\AA^\aund  \LL^{\bund}  \pr_r\psia \pr_r\psib + \frac 1 2 h   L^{\a\b} \pr_\a     \Psi \,  \pr_\b      \Psi  +\VV^\aund \LL^{\bund} \psia\psib,
 \eea
      where
\bea
\AA^\aund:=-z^{1/2}\Delta^{3/2} \partial_r\left( \frac{ z^{1/2}  h \RRtp'^{\aund} }{\Delta^{1/2}}  \right), \qquad  \VV^\aund:=\frac 1 4 \pr_r\big(\De \pr_r \big(
 z \pr_r(h \RRtp'^{\aund} ) \big)  \big),
\eea
      and
 \bea\label{eq:def-BB-a}
 \BB_\a[\psi]&:=& \frac{ 1}{2}|q|^{-2} h  \LL^{\bund} \RRtp'^{\cund}      \Psi \, \big( S_\bund^{\a\b}  \pr_\b \psic - S_{\cund}^{\a\b} \, \pr_\b \psib\big)
 \eea
 denotes a boundary term. 
\end{lemma}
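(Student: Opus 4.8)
The plan is to start from the generalized Morawetz identity \eqref{generalized-current-operator} of Proposition \ref{proposition:Morawetz1}, substitute the prescribed form of the double-indexed function $u^{\aund\bund}=-h\,\RRtp'^{\aund}\LL^{\bund}$, and then perform an integration by parts in the principal term to convert it into a manifestly good quadratic form in $\Psi$. Concretely, from \eqref{generalized-current-operator} the principal term is $\UU^{\a\b\aund\bund}\,\pr_\a\psia\,\pr_\b\psib$ with $\UU^{\a\b\aund\bund}=-\tfrac12 u^{\aund\bund}\RRtp'^{\a\b}$, and using $\RRtp'^{\a\b}=\RRtp'^{\cund}S_{\cund}^{\a\b}$ together with the ansatz for $u^{\aund\bund}$ gives
\[
\UU^{\a\b\aund\bund}\,\pr_\a\psia\,\pr_\b\psib=\tfrac12 h\,\RRtp'^{\aund}\RRtp'^{\cund}\LL^{\bund} S_{\cund}^{\a\b}\,\pr_\a\psia\,\pr_\b\psib.
\]
Now I would recognize $\RRtp'^{\aund}\pr_\b\psia$ as almost $\pr_\b\Psi$ with $\Psi=\RRtp'^{\aund}\psia$: since $\RRtp'^{\aund}=\RRtp'^{\aund}(r)$ depends only on $r$, we have $\pr_\b\Psi=\RRtp'^{\aund}\pr_\b\psia+(\pr_b\RRtp'^{\aund})\psia$, and — crucially — the extra term is proportional to $\pr_r^\b$. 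The key point is that when contracted against $S_{\cund}^{\a\b}$ (built from $T$, $Z$, and $O$, none of which has an $\pr_r$ component in the relevant slots) the correction terms can be absorbed or vanish, so one can replace $\RRtp'^{\aund}\pr_\b\psia$ by $\pr_\b\Psi$ up to lower-order contributions. Then $S_{\cund}^{\a\b}\RRtp'^{\cund}=L^{\a\b}$ only after one more manipulation — wait, no: here the surviving contracted index is $\cund$ against $S_{\cund}^{\a\b}$ giving $\RRtp'^{\cund}S_{\cund}^{\a\b}=\RRtp'^{\a\b}$, while the $\LL^{\bund}$ combines to give $L^{\a\b}=\LL^\aund S_\aund^{\a\b}$ only if we correctly track which index feeds $\Psi$ and which feeds $L$. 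I would carefully rename indices so that one copy of $\RRtp'$ pairs with $\psi_{\underline{a}}$ to build $\Psi$ and the remaining structure $h\,\LL^{\bund}S_{\bund}^{\a\b}=hL^{\a\b}$ contracts the two gradients of $\Psi$, producing $\tfrac12 h\,L^{\a\b}\pr_\a\Psi\,\pr_\b\Psi$, which is exactly the target.

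The integration by parts generates boundary (total-divergence) terms, which must be collected into $\BB_\a[\psi]$; I would verify that the correction terms not absorbed into $\Psi$ reorganize precisely into the stated $\BB_\a[\psi]=\tfrac12|q|^{-2}h\,\LL^{\bund}\RRtp'^{\cund}\Psi\big(S_{\bund}^{\a\b}\pr_\b\psic-S_{\cund}^{\a\b}\pr_\b\psib\big)$, the antisymmetrized combination reflecting that we have freedom in which index to differentiate by parts. Simultaneously, the $\AA^{\aund\bund}\,\pr_r\psia\,\pr_r\psib$ and $\VV^{\aund\bund}\,\psia\psib$ terms of \eqref{generalized-current-operator} specialize under the ansatz: plugging $u^{\aund\bund}=-h\RRtp'^{\aund}\LL^{\bund}$ into \eqref{eq:coefficients-commuted-eq-AA} and \eqref{eq:coefficients-commuted-eq-VV} factors out $\LL^{\bund}$ and gives the stated $\AA^{\aund}$ and $\VV^{\aund}$ (the sign flips from the $-h\RRtp'^{\aund}$ in $u$). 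These are routine substitutions once the form of $u^{\aund\bund}$ is fixed.

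I expect the main obstacle to be bookkeeping in the integration by parts: keeping track of the $r$-derivatives of $\RRtp'^{\aund}$ that appear when pulling $\RRtp'$ inside the derivative, checking that those spurious terms either vanish against $S_{\cund}^{\a\b}$ (because $S_{\cund}$ has no $\pr_r$ leg with a $\pr_r\psi$ partner) or combine into the boundary $\BB_\a$, and making sure no cross-term survives that is not of the advertised positive-definite form. A secondary subtlety is the $|q|^{-2}$ weights: the identity in Proposition \ref{proposition:Morawetz1} is stated for $|q|^2\EE^{(\mathbf X,\mathbf w)}$, and $\BB_\a$ carries a $|q|^{-2}$, so one must track the conformal weight through the divergence $\D^\a(|q|^{-2}(\cdots))$ carefully using that $\D_\a\pr_r^\a=|q|^{-2}\pr_r(|q|^2)$ as in the proof of Proposition \ref{proposition:Morawetz1}. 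Once these two points are handled, assembling the three pieces into \eqref{eq:obunds-divergence-L} is mechanical.
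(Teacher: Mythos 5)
Your outline gets the routine parts right (substituting $u^{\aund\bund}=-h\RRtp'^{\aund}\LL^{\bund}$ into \eqref{eq:coefficients-commuted-eq-AA} and \eqref{eq:coefficients-commuted-eq-VV} to read off $\AA^\aund$ and $\VV^\aund$, including the sign flip), but the core step is not established. After substitution the principal term reads $\frac12 h\,\RRtp'^{\aund}\LL^{\bund}\RRtp'^{\cund} S_{\cund}^{\a\b}\,\pr_\a\psia\,\pr_\b\psib$: here $\LL^{\bund}$ is contracted with $\pr_\b\psib$ while $\RRtp'^{\cund}$ is contracted with $S_{\cund}^{\a\b}$, whereas the target $\frac12 h L^{\a\b}\pr_\a\Psi\,\pr_\b\Psi$ requires the opposite pairing, $\LL^{\bund}S_{\bund}^{\a\b}$ and $\RRtp'^{\cund}\psic$. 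You propose to get there by ``carefully renaming indices,'' but these are summed indices attached to different objects, and the two expressions $\LL^{\bund}\RRtp'^{\cund}S_{\cund}^{\a\b}\pr_\b\psib$ and $\LL^{\bund}\RRtp'^{\cund}S_{\bund}^{\a\b}\pr_\b\psic$ are genuinely different; their difference is exactly what produces $\BB_\a[\psi]$. The swap is only legitimate under a \emph{double} integration by parts combined with the commutation of the second-order symmetry operators: one first moves $\pr_\a$ off $\psia$, then uses Lemma \ref{lemma:D-S-SS} and \eqref{eq:commutator-SS-aund-bund} to write $\pr_\a\big(|q|^{-2}S_{\cund}^{\a\b}\pr_\b\psib\big)=|q|^{-2}\SS_\cund\SS_\bund\psi=|q|^{-2}\SS_\bund\SS_\cund\psi=\pr_\a\big(|q|^{-2}S_{\bund}^{\a\b}\pr_\b\psic\big)$, and then integrates by parts back. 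This commutation — in particular for $\SS_4=\OO$, which rests on Theorem \ref{theo:Carter-operator-commutes-KN} — is the substantive input of the lemma; without invoking it, your argument has no mechanism to exchange $\bund$ and $\cund$, and the antisymmetrized structure of $\BB_\a$ in \eqref{eq:def-BB-a} cannot be derived.

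A secondary inaccuracy: you attribute the boundary terms to ``correction terms not absorbed into $\Psi$'' coming from $\pr_\b\RRtp'^{\aund}$. In fact those corrections vanish identically, not just to lower order, because $\pr_\b\RRtp'^{\aund}$ points in the $\pr_r$ direction and $S_{\cund}^{\a\b}$ and $L^{\a\b}$ (built from $T$, $Z$, $O$) have no $\pr_r$ leg; so pulling $\RRtp'$ inside the derivatives to form $\pr\Psi$ is exact. Since \eqref{eq:obunds-divergence-L} is an exact identity, there is no room for ``up to lower-order contributions'': the only source of $\BB_\a$ is the index swap described above, and your proof needs to be restructured around that point.
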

\begin{proof} From Proposition \ref{proposition:Morawetz1}, we write the principal term as
\beaa
  \UU^{\a\b\aund\bund}\pr_\a \psia \, \pr_\b \psib= -  \frac{ 1}{2}  u^{\aund\bund} \RRtp'^{\a\b}\pr_\a \psia \, \pr_\b \psib=-  \frac{ 1}{2}  u^{\aund\bund} \RRtp'^{\cund} S_{\cund}^{\a\b}\pr_\a \psia \, \pr_\b \psib
  \eeaa
  Integrating by parts in $\partial_\a$, we can write the above as
  \beaa
  |q|^{-2}  \UU^{\a\b\aund\bund}\pr_\a \psia \, \pr_\b \psib&=&-  \frac{ 1}{2} |q|^{-2} u^{\aund\bund} \RRtp'^{\cund} S_{\cund}^{\a\b}\pr_\a \psia \, \pr_\b \psib\\
    &=&-  \frac{ 1}{2} \D_\a\big( |q|^{-2} u^{\aund\bund} \RRtp'^{\cund} S_{\cund}^{\a\b} \psia \, \pr_\b \psib\big)+ \frac{ 1}{2}  \pr_\a \big(u^{\aund\bund} \RRtp'^{\cund} \big) |q|^{-2}S_{\cund}^{\a\b}\psia \, \pr_\b \psib\\
    &&+ \frac{ 1}{2}  u^{\aund\bund} \RRtp'^{\cund}  \psia \, \pr_\a\big( |q|^{-2}S_{\cund}^{\a\b} \pr_\b \psib\big).
  \eeaa
  Since $u^{\aund\bund}$ and $\RRtp'^{\cund}$ only depend on $r$, while the derivatives in $S_\cund^{\a\b}$ do not contain derivatives in $r$, the second term on the right hand side above vanishes. 
  Using Lemma \ref{lemma:D-S-SS}  and the commutation property \eqref{eq:commutator-SS-aund-bund}, we write the last term as
  \beaa
   \pr_\a\big( |q|^{-2} S_{\cund}^{\a\b} \pr_\b \psib\big)&=&  \pr_\a\big( |q|^{-2} S_{\cund}^{\a\b} \pr_\b  \SS_\bund \psi\big)= \SS_\cund  \SS_\bund \psi = \SS_\bund  \SS_\cund \psi=\pr_\a( |q|^{-2} S_\bund^{\a\b}  \pr_\b \psic) .
  \eeaa
We then obtain
  \beaa
 |q|^{-2}   \UU^{\a\b\aund\bund}\pr_\a \psia \, \pr_\b \psib    &=& \frac{ 1}{2}  u^{\aund\bund} \RRtp'^{\cund}  \psia \, \pr_\a(|q|^{-2}S_\bund^{\a\b}  \pr_\b \psic)-  \frac{ 1}{2} \D_\a\big( |q|^{-2} u^{\aund\bund} \RRtp'^{\cund} S_{\cund}^{\a\b} \psia \, \pr_\b \psib\big)
      \eeaa
      Repeating the integration by parts procedure to the first term as above, we obtain
    \beaa
  |q|^{-2}  \UU^{\a\b\aund\bund}\pr_\a \psia \, \pr_\b \psib    &=& \frac{ 1}{2}  \D_\a\big( |q|^{-2} u^{\aund\bund} \RRtp'^{\cund}  \psia \, S_\bund^{\a\b}  \pr_\b \psic \big)- \frac{ 1}{2} \pr_\a\big( u^{\aund\bund} \RRtp'^{\cund} \big) \psia \,  |q|^{-2} S_\bund^{\a\b}  \pr_\b \psic\\
  &&- \frac{ 1}{2}   |q|^{-2} u^{\aund\bund} \RRtp'^{\cund}  \pr_\a \psia \, S_\bund^{\a\b}  \pr_\b \psic-  \frac{ 1}{2} \D_\a\big(  |q|^{-2}u^{\aund\bund} \RRtp'^{\cund} S_{\cund}^{\a\b} \psia \, \pr_\b \psib\big)\\
     &=&- \frac{ 1}{2}   |q|^{-2} u^{\aund\bund}S_\bund^{\a\b}  \RRtp'^{\cund}  \pr_\a \psia \,  \pr_\b \psic+ \frac{ 1}{2}  \D_\a\big( |q|^{-2}u^{\aund\bund} \RRtp'^{\cund}  \psia \, \big( S_\bund^{\a\b}  \pr_\b \psic - S_{\cund}^{\a\b} \, \pr_\b \psib\big) \big)
      \eeaa
      Notice that in this way one creates the first term given by $ u^{\aund\bund}S_\bund^{\a\b}$. In order to separate such dependence and create a quadratic expression in the above, we define $u^{\aund\bund}$ in the following way:
      \beaa
      u^{\aund\bund}=-h \RRtp'^{\aund} \LL^{\bund},
      \eeaa
      for some positive function $h$ and some constant symmetric tensor $L^{\a\b}=\LL^\aund S_\aund^{\a\b}$.
      
      Then the first term of the above relation becomes
      \beaa
      - \frac{ 1}{2}  u^{\aund\bund}S_\bund^{\a\b}  \RRtp'^{\cund}  \pr_\a \psia \,  \pr_\b \psic&=& \frac 1 2 h \RRtp'^\aund \LL^\bund S_\bund^{\a\b}  \RRtp'^{\cund}  \pr_\a \psia \,  \pr_\b \psic= \frac 1 2 h \RRtp'^\aund  \RRtp'^{\cund}  L^{\a\b} \pr_\a \psia \,  \pr_\b \psic\\
      &=& \frac 1 2 h   L^{\a\b} \pr_\a\big(  \RRtp'^\aund \psia\big) \,  \pr_\b \big(  \RRtp'^{\cund}\psic\big)
      \eeaa
      where the terms in $\RRtp'^\aund$ and $\RRtp'^\bund$ can be inserted inside the derivatives as they only depend on $r$, where the same indices means summation over those. By denoting $\Psi:= \RRtp'^\aund \psia$ we obtain 
          \beaa
    |q|^{-2} \UU^{\a\b\aund\bund}\pr_\a \psia \, \pr_\b \psib        &=& \frac 1 2 |q|^{-2} h   L^{\a\b} \pr_\a\Psi \,  \pr_\b \Psi- \frac{ 1}{2}  \D_\a\big( |q|^{-2} h  \LL^{\bund} \RRtp'^{\cund} \Psi \, \big( S_\bund^{\a\b}  \pr_\b \psic - S_{\cund}^{\a\b} \, \pr_\b \psib\big) \big).
      \eeaa
which gives
                \beaa
 \UU^{\a\b\aund\bund}\pr_\a \psia \, \pr_\b \psib        &=& \frac 1 2  h   L^{\a\b} \pr_\a\Psi \,  \pr_\b \Psi-   |q|^2\D_\a \BB_\a[\psi],
      \eeaa
      for $\BB_\a[\psi]$ defined in \eqref{eq:def-BB-a}. 
Finally, with the choice of $u^{\aund\bund}=-h \RRtp'^{\aund} \LL^{\bund}$, the expression of $\AA$ and $\UU$ in \eqref{eq:coefficients-commuted-eq-AA} and \eqref{eq:coefficients-commuted-eq-VV} become
 \beaa
   \AA^{\aund\bund}&=&  \AA^\aund  \LL^{\bund}, \qquad \qquad \AA^\aund:=-z^{1/2}\Delta^{3/2} \partial_r\left( \frac{ z^{1/2}  h \RRtp'^{\aund} }{\Delta^{1/2}}  \right)
 \\
   \VV^{\aund\bund}&=&  \VV^\aund \LL^{\bund} , \qquad \qquad  \VV^\aund:=\frac 1 4 \pr_r\big(\De \pr_r \big(
 z \pr_r(h \RRtp'^{\aund} ) \big)  \big).
 \eeaa
Recalling that $|q|^2 \EE^{(\mathbf{X}, \mathbf{w})}[\psi]=  |q|^2 \D^\mu  \PP_\mu^{(\mathbf{X}, \mathbf{w})}[\psi]$, we obtain the relation \eqref{eq:obunds-divergence-L}, and prove the lemma.  

\end{proof}

We now put into effect the choice of function $z_1$ as given in \eqref{eq:definition-z1}, i.e.
\beaa
z_1=z_0-\ep \, z_0^2.
\eeaa
From \eqref{eq:expression:z1RR} and \eqref{eq:values-RRtp'z1}, we have
\beaa
\RRtp'^{\a\b}&=&  -\ep \frac{2\TT}{ (r^2+a^2)^3}\partial_t^\a \partial_t^\b  -\frac{2\TT}{(r^2+a^2)^3}  \big(1+O(\ep r^{-2}) \big)  O^{\a\b}+  \frac{4ar}{(r^2+a^2)^2} \big(1+O(\ep r^{-2}) \big)   \That^{(\a} Z^{\b)},
\eeaa
and therefore for $\Psi=\RRtp'^{\aund}\psia$, 
\bea\label{eq:Psi-choice-z}
 \Psi&=& -\ep \frac{2\TT}{ (r^2+a^2)^3}\pr_t^2 \psi  -\frac{2\TT}{(r^2+a^2)^3}  \big(1+O(\ep r^{-2}) \big)  \OO(\psi)+  \frac{4ar}{(r^2+a^2)^2} \big(1+O(\ep r^{-2}) \big)   \That( \pr_\vphi\psi).
\eea
By considering as in \cite{And-Mor} the constant symmetric tensor given by
\bea\label{eq:choice-L}
L^{\a\b}=\ep \pr_t^\a \pr_t^\a + O^{\a\b}
\eea
then for a positive function $h$, the principal term $ \frac 1 2 h   L^{\a\b} \pr_\a     \Psi \,  \pr_\b      \Psi $ is positive and given by
\beaa
\frac 1 2 h   L^{\a\b} \pr_\a     \Psi \,  \pr_\b      \Psi &=& \frac 1 2 h \big(\ep | \pr_t     \Psi|^2+ O^{\a\b}\pr_\a \Psi \pr_\b \Psi\big)= \frac 1 2 h \big(\ep | \pr_t     \Psi|^2+ |q|^2 |\nab \Psi|^2\big)
\eeaa
where we used \eqref{eq:first-definition-O} to write 
   \beaa
   O^{\a\b}(\pr_\a \psi )(\pr_\b \psi )= |q|^2 ( e_1^\a e_1^\b + e_2^\a e_2^\b)(\pr_\a \psi )(\pr_\b \psi )=|q|^2 ((\nab_1\psi)^2+(\nab_2\psi)^2)=:|q|^2 |\nab \psi|^2.
   \eeaa
We can summarize the above and write $\frac 1 2 h   L^{\a\b} \pr_\a     \Psi \,  \pr_\b      \Psi=\frac 1 2 |D \Psi|^2$
where we denote $D=\ep^{1/2}\pr_t, |q|\nab$.
To express it in terms of the $\psia$, we use \eqref{eq:Psi-choice-z} and we bound the above by (see Lemma 3.13 in \cite{And-Mor})
\beaa
\frac 1 2 h   L^{\a\b} \pr_\a     \Psi \,  \pr_\b      \Psi &\geq&\frac 1 2 h \widetilde{ \mathbbm{1}}_{\{r \neq r^{RN}_{trap}\}}   |D \Psi|^2\\
&\geq&\frac 1 2 h \widetilde{ \mathbbm{1}}_{\{r \neq r^{RN}_{trap}\}}  \Big(\ep^2| D   \psi_1|^2+\sum_{\aund=2}^4|D \psia|^2 -\ep (|a|+ \epsilon)\sum_{\aund\neq\bund=1}^4 \sum_{D=\pr_t, |q|\nab} D \psia D \psib  \Big)\\
&&- a^2\frac 1 2 h \widetilde{ \mathbbm{1}}_{\{r \neq r^{RN}_{trap}\}} \sum_{\aund=1}^4|\psia|^2,
\eeaa
where $\widetilde{\mathbbm{1}}_{\{r \neq r^{RN}_{trap}\}}$ is a function that is identically 1 for $\{ |r-r^{RN}_{trap}| > \delta\}$ for some $\delta >0$ and zero otherwise.  Observe that once the commutation with the conformal symmetries allows to create a positive term with respect to $\Psi$, we can insert the trapped function $\widetilde{ \mathbbm{1}}_{\{r \neq r^{RN}_{trap}\}}$ to express it in terms of the $\psia$.

Choosing $a$ sufficiently small with respect to $\epsilon$ and fixing $\epsilon$ sufficiently small with respect to $1$, we obtain 
\bea\label{bound-h-L-Psi}
\frac 1 2 h   L^{\a\b} \pr_\a     \Psi \,  \pr_\b      \Psi &\geq& \sum_{\aund=1}^4\frac 1 2 h \widetilde{ \mathbbm{1}}_{\{r \neq r^{RN}_{trap}\}}  \big(| \pr_t  \psia |^2+ |q|^2 |\nab \psia|^2\big)- a^2\frac 1 2 h \widetilde{ \mathbbm{1}}_{\{r \neq r^{RN}_{trap}\}} \sum_{\aund=1}^4|\psia|^2.
\eea



\subsubsection{Choice of function $h$}

We now look at the terms $\AA^\aund  \LL^{\bund}  \pr_r\psia \pr_r\psib$ and $ \VV^\aund \LL^{\bund} \psia\psib$. For $z_1=z_0-\ep z_0^2$, we compute
\beaa
\AA^\aund &=&-z_1^{1/2}\Delta^{3/2} \partial_r\left( \frac{ z_1^{1/2}  h \RRtp'^{\aund} }{\Delta^{1/2}}  \right)\\
&=&-z_0^{1/2}\Delta^{3/2} \partial_r\left( \frac{( z_0^{1/2}+O(\ep r^{-3}) ) h \RRtp'^{\aund} }{\Delta^{1/2}}  \right)+O(\ep)\partial_r\left( \frac{ z_0^{1/2}  h \RRtp'^{\aund} }{\Delta^{1/2}}  \right)\\
&=&-\frac{\De^2}{r^2+a^2}  \partial_r\left( \frac{  h \RRtp'^{\aund} }{r^2+a^2}  \right)+O(\ep)\partial_r\left( \frac{ h \RRtp'^{\aund} }{r^2+a^2}  \right)
\eeaa

To obtain positivity of the those terms, we can for example make use of the construction in Lemma \ref{lemma:construction-functions}. In particular, we denote here by $u_{ax}$ and $w_{ax}$ the functions $u$ and $w=z_0 \pr_r u_{ax}$ constructed in Lemma \ref{lemma:construction-functions} in the axially symmetric case. We define\footnote{This choice differs from the one in \cite{And-Mor} or \cite{GKS2}, but the following procedure is identical, as it only makes use of the positivity of $\AA^1$ and $\AA^4$, and the fact that $\AA^2, \AA^3=O(a)$.}
\bea
h&:=&\frac{(r^2+a^2)^3 }{2\TT} u_{ax}
\eea
Observe that this function is positive and smooth everywhere, as both $u_{ax}$ and $\TT$ vanish at $r=r_{trap}$ of order 1. With this choice we have $h \RRtp'^{1}=-\ep u_{ax}$ and $h \RRtp'^4=-u_{ax}$, and therefore obtain
\beaa
\AA^1 &=&-\frac{\De^2}{r^2+a^2}  \partial_r\left( \frac{  h \RRtp'^{1} }{r^2+a^2}  \right)+O(\ep)\partial_r\left( \frac{ h \RRtp'^{1} }{r^2+a^2}  \right)=\ep  \frac{\De^2}{r^2+a^2} \partial_r\left( \frac{  u_{ax}   }{r^2+a^2}  \right)\big( 1+O(\ep r^{-2})\big),\\
\AA^4 &=&  \frac{\De^2}{r^2+a^2} \partial_r\left( \frac{  u_{ax}   }{r^2+a^2}  \right)\big( 1+O(\ep r^{-2})\big),
\eeaa
which are positive definite for $\ep$ sufficiently small by Lemma  \ref{lemma:construction-functions}. We also have
\beaa
\AA^2&=&- a \frac{\De^2}{r^2+a^2}  \partial_r\left( \frac{2r u_{ax}}{\TT} \big( 1+O(\ep r^{-2})\big) \right)+O(\ep r^{-2})=O(a r^{-2}) \frac{\De^2}{r^2+a^2} \big( 1+O(\ep r^{-2})\big)\\
\AA^3&=& -a^2\frac{\De^2}{r^2+a^2}  \partial_r\left( \frac{2r u_{ax}}{\TT(r^2+a^2)}\big( 1+O(\ep r^{-2})\big) \right)+O(\ep r^{-3})=O(a^2 r^{-3}) \frac{\De^2}{r^2+a^2} \big( 1+O(\ep r^{-2})\big).
\eeaa
We can then bound the term $\AA^\aund  \LL^{\bund}  \pr_r\psia \pr_r\psib$ by (see Lemma 3.9 in \cite{And-Mor})
\beaa
\AA^\aund  \LL^{\bund}  \pr_r\psia \pr_r\psib &=&( \AA^\aund    \pr_r\psia) \pr_r(\LL^{\bund}\psib )= (\AA^\aund    \pr_r\psia) \pr_r(\ep\pr_t^2 \psi + \OO(\psi))\\
&=& (\AA^1    \pr_r\pr_t^2 \psi+\AA^4    \pr_r\OO(\psi)+\AA^2    \pr_r\pr_t\pr_\vphi\psi+\AA^3    \pr_r\pr_\vphi^2\psi) \pr_r(\ep \pr_t^2 \psi + \OO(\psi))\\
&\geq&c_0\frac{\De^2}{r^5} \Big(\ep^2| \pr_r  \psi_1|^2+\sum_{\aund=2}^4|\pr_r \psia|^2  -|a| (|a|+ \epsilon)\sum_{\aund\neq\bund=1}^4 \pr_r \psia \pr_r \psib \Big)
\eeaa
where the term $ \pr_r \psi_1 \c \pr_r \OO\psi $ can be shown to be positive, up to boundary terms, by integration by parts, and being comparable to $|\pr_r(\pr_t \nab)\psi|^2$ (see Lemma 2.4 in \cite{And-Mor}).
As above, choosing $a$ sufficiently small with respect to $\epsilon$ and fixing $\epsilon$ sufficiently small with respect to $1$, we obtain positivity for the above term.

We similarly compute 
\beaa
\VV^\aund&=&\frac 1 4 \pr_r\big(\De \pr_r \big( z_1 \pr_r(h \RRtp'^{\aund} ) \big)  \big)=\frac 1 4 \pr_r\big(\De \pr_r \big( z_0 \pr_r(h \RRtp'^{\aund} ) \big)  \big)\big( 1+O(\ep r^{-2})\big).
\eeaa
With the given choice of $h$ we then obtain
\beaa
\VV^1&=&-\ep  \frac 1 4 \pr_r\big(\De \pr_r \big( z_0 \pr_ru_{ax} \big)  \big)\big( 1+O(\ep r^{-2})\big)=-\ep  \frac 1 4 \pr_r\big(\De \pr_r w_{ax}  \big)\big( 1+O(\ep r^{-2})\big)\\
\VV^4&=& -  \frac 1 4 \pr_r\big(\De \pr_r w_{ax}  \big)\big( 1+O(\ep r^{-2})\big),
\eeaa
which are non-negative for $\ep$ sufficiently small by Lemma \ref{lemma:construction-functions}. We also have 
\beaa
\VV^2&=&\frac 1 4 a \pr_r\big(\De \pr_r \big( z_0 \pr_r(\frac{2r u_{ax}}{\TT} ) \big)  \big)\big( 1+O(\ep r^{-2})\big)=O(ar^{-4} )\big( 1+O(\ep r^{-2})\big)\\
\VV^3&=&\frac 1 4 a^2 \pr_r\big(\De \pr_r \big( z_0 \pr_r\big(\frac{2r u_{ax}}{\TT(r^2+a^2)} \big) \big)  \big)\big( 1+O(\ep r^{-2})\big)=O(a^2r^{-5} )\big( 1+O(\ep r^{-2})\big).
\eeaa
We can then bound the term $ \VV^\aund \LL^{\bund} \psia\psib$ by (see Lemma 3.9 in \cite{And-Mor})
\beaa
 \VV^\aund \LL^{\bund} \psia\psib&=& ( \VV^\aund  \psia)( \LL^{\bund}\psib)=(\VV^1 \pr_t^2\psi+ \VV^4 \OO(\psi)+\VV^2\pr_t\pr_\vphi \psi + \VV^3 \pr_\vphi^2 \psi)(\ep \pr_t^2\psi+\OO(\psi))\\
 &\geq&c_0\frac{M}{r^4} \Big(\mathbbm{1}_{\{r \geq r_{*}\}} \big( \ep^2|  \psi_1|^2+\sum_{\aund=2}^4| \psia|^2 \big)  -|a| (|a|+ \epsilon)\sum_{\aund\neq\bund=1}^4  \psia \psib \Big).
\eeaa
Upon applying the Hardy inequality as in the axially symmetric case, we can upgrade the bound of the first terms to be valid everywhere in the exterior region. By combining the above bound with the one obtained in \eqref{bound-h-L-Psi}, and choosing $a$ sufficiently small with respect to $\epsilon$ and fixing $\epsilon$ sufficiently small with respect to $1$, we obtain positivity for the overall zero-th order term in the Morawetz bulk, which finally gives
 \bea\label{eq:final-bound-mor}
 \D^\a \Big( \PP_\a^{(\mathbf{X}, \mathbf{w})}[\psi]+ \widetilde{ \BB}_\a[\psi]\Big)\geq c_0 \Big(\frac{M^2}{r^5} |\pr_r \psi|_{\SS}^2+ \frac{M}{r^4}| \psi|_{\SS}^2+ \widetilde{ \mathbbm{1}}_{\{r \neq r^{RN}_{trap}\}}  \big( r^{-1}|\nab \psi|_{\SS}^2+\frac{M}{r^2} |\pr_t \psi|_{\SS}^2\big) \Big)
 \eea
where $ \widetilde{ \BB}_\a[\psi]$ incorporates the boundary term in \eqref{eq:def-BB-a}, together with the ones obtained in the above integration by parts.

\subsection{The energy estimate}\label{section:energy-estimates}

 The energy estimates are obtained from the current associated to the vectorfield $\That_{\chi}=T + \chi \om_{\mathcal{H}}Z$ as defined in Section \ref{section:relevant-vectorfields}. Observe that for $|a|/M$ sufficiently small, $\That_{\chi}$ is timelike everywhere in the exterior region and Killing outside the region $[r_1, r_2]$. Moreover, for $|a|/M$ sufficiently small, the trapped null geodesics remain close to the hypersurface $\{r=r_{trap}^{RN}\}$. In particular for $|a|/M \ll 1$, the region with $r \in[r_1, r_2]$, for $r_1=\frac 76 r_{+}$ and $r_2=\frac 8 9 r_{trap}^{RN}$, does not contain any trapped null geodesics. In particular, 
 $\That_{\chi}$ is Killing in the entire region where trapped null geodesics appear.
 From \eqref{eq:QQpi(That_de)}, we obtain
 \bea\label{eq:energy-identity-higher}
E^{(\That_{\chi}, 0)}[\psi](\tau)   +\int_{\MM(0, \tau) }|q|^{-2} \De  \om_{\mathcal{H}}(\pr_r\chi)  \pr_\vphi\psi \pr_r \psi\leq  E^{(\That_{\chi}, 0)}[\psi](0),
\eea
where 
\beaa
E^{(\That_{\chi}, 0)}[\psi](\tau) &\sim&  \int_{\Sigma_\tau}\frac{ \Delta}{r^2+a^2} (\pr_r \psi)^2+(\partial_t\psi)^2+|\nab \psi|^2.
\eeaa
The above energy can be made to be non-degenerate at the horizon by making use of the red-shift vectorfield.

By applying the energy estimates to each commuted equation \eqref{eq:psia-solution}, we obtain the higher order version:
\beaa
E^{(\That_{\chi}, 0)}[\psia](\tau)  +\int_{\MM(0, \tau) }|q|^{-2} \De  \om_{\mathcal{H}}(\pr_r\chi)  \pr_\vphi\psia \pr_r \psia\leq E^{(\That_{\chi}, 0)}[\psia](0)
\eeaa
Observe that the spacetime integral in \eqref{eq:energy-identity-higher} is different from zero in the support of $\pr_r \chi$, i.e. in $[r_1, r_2]$ outside the trapping region. Also, as $\om_{\HH}=O(|a|)$, we obtain
\beaa
\int_{\MM(0, \tau) }|q|^{-2} \De  \om_{\mathcal{H}}(\pr_r\chi)  \pr_\vphi\psia \pr_r \psia &\les& O(|a|)  \int_{\MM(0, \tau)} \mathbbm{1}_{[r_1, r_2]}\big( |\pr_r\psia|^2+|\pr_\vphi \psia|^2\big)\\
&\les& O(|a|) \mbox{Mor}_{(0, \tau)}[\psi]
\eeaa
By combining the bound obtained in \eqref{eq:final-bound-mor} with the energy estimates multiplied by a large constant,  one can choose a constant $\Lambda$ large enough in order to have the boundary terms of the Morawetz estimates absorbed by the positive ones from the non-degenerate energy estimates (see Lemma 3.11 in \cite{And-Mor}), and $|a|$ small enough to absorb the above on the right hand side. 
 This concludes the proof of Theorem \ref{theorem:general}.

 \section{Application to the Einstein-Maxwell equations}\label{Einstein-Maxwell-equations}
 
 In this section we show how the physical-space analysis relying on the commutation with the Carter differential operator can be adapted to the system of coupled Regge-Wheeler equations describing the coupled electromagnetic-gravitational perturbations of Kerr-Newman spacetime, as obtained in \cite{Giorgi7}. We first recall the system of generalized Regge-Wheeler equations, and then show how the commutations with adapted symmetry operators maintain the same structure of the system.  
 
 \subsection{The generalized Regge-Wheeler (gRW) system}

We recall the main theorem in \cite{Giorgi7}.

\begin{theorem}\label{main-theorem-RW}[Theorem 7.3. in \cite{Giorgi7}] Consider a linear electromagnetic-gravitational perturbation of Kerr-Newman spacetime $\g_{M, a, Q}$. Then we can define a complex horizontal 1-tensor $\pf \in \sk_1(\mathbb{C})$ and a symmetric traceless 2-tensor $\qf^\F \in \sk_2(\mathbb{C})$ that, as a consequence of the Einstein-Maxwell equations,
 satisfy
 the following coupled system of wave equations:
\bea
 \squared_1\pf-i  \frac{2a\cos\th}{|q|^2}\nab_T \pf  -V_1  \pf &=&4Q^2 \frac{\ov{q}^3 }{|q|^5} \left(  \ov{\DD} \c  \qf^\F  \right) + L_\pf[\Bfr, \Ffr] \label{final-eq-1}\\
\squared_2\qf^\F-i  \frac{4a\cos\th}{|q|^2}\nab_T \qf^\F -V_2  \qf^\F &=&-   \frac 1 2\frac{q^3}{|q|^5} \left(  \DD \hot  \pf  -\frac 3 2 \left( H - \Hb\right)  \hot \pf \right) + L_{\qf^\F}[\Bfr, \Ffr]\label{final-eq-2}
 \eea
where
\begin{itemize}
\item $\squared_1=\g^{\a\b}\Ddot_\a\Ddot_\b$ and $\squared_2=\g^{\a\b}\Ddot_\a\Ddot_\b$ denote the wave operators for horizontal 1-tensors and 2-tensors respectively, 
\item the potentials $V_1$ and $V_2$ are \textbf{real} positive scalar functions, which for $a=0$ coincide with the potentials of the Regge-Wheeler system of equations in Reissner-Nordstr\"om \cite{Giorgi7}, i.e.
\beaa
V_1&=&- \frac {1}{ 4} \trch\trchb +5 \rhoF^2+ O\big(\frac{|a|}{r^4}\big), \qquad V_2=- \trch \trchb  +2\rhoF^2+ O\big(\frac{|a|}{r^4}\big),
\eeaa
\item $L_\pf[\Bfr, \Ffr] $ and $L_{\qf^\F}[\Bfr, \Ffr]$ are linear first order operators in $\Bfr$ and $\Ffr$, which are lower order in terms of differentiability with respect to $\pf$ and $\qf^\F$.
\end{itemize}

We call the system of equations \eqref{final-eq-1}-\eqref{final-eq-2} a system of  \textbf{generalized Regge-Wheeler equations}.
\end{theorem}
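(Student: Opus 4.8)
The statement is the content of \cite{Giorgi7}; the plan I would follow is the now-standard reduction from the linearized Einstein--Maxwell system to a Regge--Wheeler-type system, carried out in the horizontal-structure formalism of \cite{GKS} adapted to Kerr--Newman as recalled in Section~\ref{section:Kerr-Newman}. First I would linearize the null structure equations, the Bianchi identities and the Maxwell equations around $\g_{M, a, Q}$, working with the complex horizontal Ricci coefficients $X,\Xb,H,\Hb$, the curvature components $A,B,P,\Bb,\Ab$ and the electromagnetic components $\BF,\PF,\BBF$, and identify the quantities that are gauge-invariant to first order. The extreme Weyl component $\alpha=A$ (spin $2$) and the extreme electromagnetic component $\BF$ (spin $1$) are gauge-invariant, and combining the Bianchi identity for $A$ with the Maxwell equation for $\BF$, after renormalization by suitable powers of $q=r+ia\cos\th$, yields a \emph{coupled Teukolsky system} for a pair $(\mathfrak{A},\mathfrak{f})$ whose coupling is realized through the first-order angular operators $\DD\hot$ and $\ov\DD\c$, with the electromagnetic source in the $\mathfrak{A}$-equation carrying a factor of $Q$.

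Next I would perform the Chandrasekhar transformation. One defines intermediate quantities $\Bfr,\Ffr$ by applying a $|q|$-weighted $\nab_3$ to $\mathfrak{A},\mathfrak{f}$, and then $\qf^\F$ and $\pf$ by applying a further weighted $\nab_3$ to $\Bfr$ (the spin-$2$ sector requires two such derivatives) and a single weighted $\nab_3$ to $\Ffr$ (the spin-$1$ sector requires one). The core computation is then to evaluate $\squared_k$ on $\pf$ and $\qf^\F$ using the Teukolsky system together with the commutation identities for $[\nab_3,\squared_k]$, $[\nab_4,\squared_k]$ and $[\DDd_k,\squared_k]$ on horizontal $k$-tensors in Kerr--Newman, inserting the explicit values of the connection, curvature and electromagnetic coefficients from Lemma~\ref{lemma:Kerr-Newman}. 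What must be verified is that: (i) the only surviving first-order term is $i\,\frac{2a\cos\th}{|q|^2}\,s\,\nab_T$ with $s=1$ for $\pf$ and $s=2$ for $\qf^\F$, produced by $\atrch=\frac{2a\cos\th}{|q|^2}$; (ii) the zeroth-order coefficients assemble into the \emph{real} positive potentials $V_1,V_2$, matching the Reissner--Nordstr\"om potentials of \cite{Giorgi4, Giorgi5, Giorgi6} at $a=0$ up to $O(|a|/r^4)$; (iii) the Teukolsky coupling descends to the stated terms $4Q^2\frac{\ov q^3}{|q|^5}(\ov\DD\c\qf^\F)$ and $-\frac12\frac{q^3}{|q|^5}(\DD\hot\pf-\frac32(H-\Hb)\hot\pf)$; and (iv) all remaining terms are first order in $\Bfr,\Ffr$ and hence lower order in differentiability relative to $\pf,\qf^\F$, so that they can be collected into $L_\pf[\Bfr,\Ffr]$ and $L_{\qf^\F}[\Bfr,\Ffr]$. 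One also records the transport equations satisfied by $\Bfr,\Ffr$ themselves, which are needed to close the system and later to estimate these lower-order terms, and checks invertibility of the transformation so that no information is lost.

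The main obstacle is item (ii): the reality and precise form of $V_1,V_2$. Because the Kerr--Newman horizontal structure is non-integrable ($\chi,\chib$ are not symmetric, $\atrch,\atrchb\neq 0$) and the electromagnetic field contributes to the Ricci curvature, the commutators above generate a large number of error terms involving $\eta,\etab,\omb,\atrch,\atrchb,\rhoF,\dual\rhoF$, and it is a genuinely delicate algebraic fact that their imaginary parts cancel and the remainder organizes into a real, positive potential. I would handle this by first doing the computation at $a=0$, where it reduces to the Reissner--Nordstr\"om derivation, and then tracking the $O(|a|)$ and $O(a^2)$ corrections using $q$ and $\ov q$ as bookkeeping variables --- the appearance of $q^3/|q|^5$ and $\ov q^3/|q|^5$ in the coupling terms strongly suggests that this is the correct normalization, as in the pure-Kerr computations of \cite{Chandra}. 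A secondary, purely organizational difficulty is keeping track of which error terms are genuinely lower order (and thus belong in $L_\pf,L_{\qf^\F}$) versus those that must be retained, since an incorrect split would later break the energy-Morawetz estimates developed for the system in Section~\ref{Einstein-Maxwell-equations}.
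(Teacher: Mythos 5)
This statement is not proved in the present paper at all: it is imported verbatim as Theorem 7.3 of \cite{Giorgi7}, so there is no in-paper argument to compare against line by line. Your outline does reproduce the architecture of the derivation in \cite{Giorgi7} (and of the Reissner--Nordstr\"om and Kerr precedents): identify gauge-invariant quantities, obtain a coupled Teukolsky-type system whose coupling is through the angular operators $\DD\hot$ and $\ov{\DD}\c$, apply a Chandrasekhar-type transformation by weighted $\nab_3$ derivatives, and compute $\squared_k$ on the derived quantities via the commutation identities for the non-integrable horizontal structure, collecting residuals into $L_\pf[\Bfr,\Ffr]$, $L_{\qf^\F}[\Bfr,\Ffr]$. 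So at the level of strategy you are aligned with the source.

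As a proof, however, what you have written defers exactly the content of the theorem. Items (i)--(iv) of your plan --- that the only surviving first-order terms are $i\,\frac{2a\cos\th}{|q|^2}\nab_T$ and $i\,\frac{4a\cos\th}{|q|^2}\nab_T$, that the zeroth-order terms assemble into \emph{real} potentials with $V_1=-\frac14\trch\trchb+5\rhoF^2+O(|a|/r^4)$ and $V_2=-\trch\trchb+2\rhoF^2+O(|a|/r^4)$, that the coupling descends with the precise coefficients $4Q^2\,\ov{q}^3/|q|^5$ and $-\frac12\,q^3/|q|^5$ together with the $-\frac32(H-\Hb)\hot$ correction, and that everything else is first order in $\Bfr,\Ffr$ --- are not consequences of the general framework; they are delicate cancellations that must be exhibited by the full commutator computation with the explicit Kerr--Newman values of the Ricci, curvature and electromagnetic components. ``Bookkeeping in $q,\ov q$ guided by the Reissner--Nordstr\"om case'' does not by itself establish the reality of $V_1,V_2$ or fix these coefficients, and the exact form of the $(H-\Hb)\hot\pf$ term is precisely what later makes the physical-space cancellation of Section \ref{section:cancellation} work, so it cannot be left schematic. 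A second caution: you take the gauge-invariant spin-1 input to be the bare extreme electromagnetic component, whereas in the coupled problem the correct quantities (and the number and weighting of the $\nab_3$ derivatives applied to each sector, which determine which tensor ends up as the 1-tensor $\pf$ and which as the 2-tensor $\qf^\F$) must be taken exactly as in \cite{Giorgi7}; with a different choice the transformed system does not close with right-hand sides of the stated form.
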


Observe that on the right hand side of the equations, the coupling terms on the right hand side involving $ \ov{\DD} \c  \qf^\F $ and $ \DD \hot  \pf $ are proper of the Einstein-Maxwell case, while the left hand side of
 equation \eqref{final-eq-2} has the same structure as the generalized Regge-Wheeler equation in Kerr as obtained in \cite{GKS}, where the coupling term in $\pf$ does not appear.

 \subsubsection{Decomposition in spheroidal harmonics and non-commutativity with the system}\label{section:spheroidal}

We now recall why the decomposition in modes fails for the gRW system in perturbations of Kerr-Newman spacetime.

Following the physics literature and standard decomposition in modes for scalar functions, we consider the scalar projection of equations \eqref{final-eq-1} and \eqref{final-eq-2} to the first component of the tensors $\pf$ and $\qf^\F$, i.e. for $\psi^{[1]}=\pf(e_1)$ and $\psi^{[2]}=\qf^\F(e_1, e_1)$, where $\psi^{[s]}$ is a complex scalar of spin $s$. Then the projection of the above equations gives, see Appendix E of \cite{GKS}, 
\bea
 \squared_\g\psi^{[1]}+i \frac{2}{|q|^2}\frac{\cos\th}{\sin^2\th}  \pr_\vphi\psi^{[1]}-i  \frac{2a\cos\th}{|q|^2}\pr_t\psi^{[1]} -V_1 \psi^{[1]} &=&-4Q^2 \frac{\ov{q}^3 }{|q|^5}\Edth'_{2}\psi^{[2]}+ \lot \label{eq:1-proj}\\
\squared_\g\psi^{[2]}+i \frac{4}{|q|^2}\frac{\cos\th}{\sin^2\th}  \pr_\vphi\psi^{[2]}-i  \frac{4a\cos\th}{|q|^2}\pr_t \psi^{[2]} -V_2  \psi^{[2]}&=&   \frac 1 2\frac{q^3}{|q|^5} \Edth_{1}\psi^{[1]} +\lot\label{eq:2-proj}
 \eea
 where the operators $\Edth_s$ and $\Edth'_s$ are respectively rising and lowering-spin operators, given by
	\bea
	\Edth_{s} \psi^{[s]}&:=& \big(-\partial_\theta-\frac{i}{\sin\theta}\partial_\phi + s \cot\theta -  ia \sin\theta\partial_t  \big) \psi^{[s]}\label{eq:Edth}\\
	\Edth'_{s} \psi^{[s]}&:=& \big(-\partial_\theta+\frac{i}{\sin\theta}\partial_\phi - s \cot\theta + ia \sin\theta   \partial_t\big) \psi^{[s]}.\label{eq:Edth'}
	\eea
 The mode decomposition of the scalar complex functions 
\beaa
\psi^{[s[}(t, r, \th, \vphi) &=&e^{-i \omega t} e^{i m \vphi} R^{[s]}(r)S^{[s]}_{m\ell} ( a \om, \cos\th) 
\eeaa
involves the \textit{spin $s$-weighted spheroidal harmonics} $ S^{[s]}_{m\ell} ( a \om, \cos\th) $ which are eigenfunctions of the spin $s$-weighted Laplacian
	\beaa
	\begin{split}
	\lap^{[s]} &:= \frac{1}{\sin\th} \partial_\th (\sin\th \partial_\th) +\frac{1}{\sin^2\theta}\partial_\vphi^2+2is \frac{\cos\theta}{\sin^2\theta} \partial_\vphi + (s-s^2\cot^2\theta)+a^2\om^2\cos^2\theta -2a\om s \cos\theta  
	\end{split}
	\eeaa
	with parameter $_{s}\lambda_{\ell m}(a\om)$, i.e. $\lap^{[s]}( S^{[s]}_{m\ell} )=_{s}\lambda_{\ell m}(a\omega) S^{[s]}_{m\ell} $.

\medskip

When $a=0$, the above spin-weighted Laplacian reduce to the spherical spin-weighted Laplacian and  the spheroidal harmonics  reduce to the standard spherical harmonics $ S^{[s]}_{m\ell} ( 0, \cos\th) = Y^{[s]}_{m\ell} (\cos\th) $. Crucially, in spherical symmetry the operators \eqref{eq:Edth} and \eqref{eq:Edth'} precisely relate spherical harmonics of different spin. In fact in this case one can check \cite{Breuer} that the spherical Laplacian can be written as
\beaa
\Edth'_{s+1} \Edth_s=\lap^{[s]} , \qquad \Edth_{s-1} \Edth'_s=\lap^{[s]} -2s.
\eeaa
As a consequence, the operators $\Edth$ and $\Edth'$ simply relate spherical harmonics of different spins. More precisely, \cite{Breuer}
\beaa
\Edth_sY^{[s]}_{m\ell}= \big( (\ell-s) (\ell+s+1)\big)^{1/2} Y^{[s+1]}_{m\ell}, \qquad \Edth'_sY^{[s]}_{m\ell}=- \big( (\ell+s) (\ell-s+1)\big)^{1/2} Y^{[s-1]}_{m\ell}.
\eeaa
Consequently, the  operators $\Edth$ and $\Edth'$
appearing on the right hand side of  \eqref{eq:1-proj} and \eqref{eq:2-proj} commute with the decomposition in spherical harmonics, and in spherical symmetry (i.e. for electromagnetic-gravitational perturbations of Reissner-Nordstr\"om) one is able to decompose the equations in modes.

\medskip
On the other hand, in the general axisymmetric case, as in Kerr or Kerr-Newman, the spin-weighted spheroidal harmonics of different spins are not simply related through the angular operators  $\Edth$ and $\Edth'$. In fact one can show that for separated solutions we have \cite{Breuer}
 	\beaa
\Edth'_{s+1} \Edth_s &=&\lap^{[s]}   +2a\om (2s+1) \cos\theta    -a^2\om^2   +2a\om  m \label{Edth'-Edth-s-nu}\\
\Edth_{s-1} \Edth'_s &=&\lap^{[s]}  + 2a\om(2s-1) \cos\theta  -2s -a^2\om^2  +2a\om m \label{Edth-Edth'-s-1-nu},
\eeaa
and as a consequence given a spheroidal harmonic $ S^{[s]}_{m\ell} $, the scalar functions $\Edth_s S^{[s]}_{m\ell} $ or $ \Edth'_sS^{[s]}_{m\ell} $ do not describe spheroidal harmonics of higher or lower spins. In particular, in Kerr-Newman the  operators $\Edth$ and $\Edth'$
appearing on the right hand side of  \eqref{eq:1-proj} and \eqref{eq:2-proj}  do not commute with the decomposition in spheroidal harmonics: the right hand side of the first equation cannot be written in terms of $S^{[1]}_{m\ell}$, and the right hand side of the second equation cannot be written in terms of $S^{[2]}_{m\ell}$.  In electromagnetic-gravitational perturbations of the axially symmetric Kerr-Newman, the interaction between the spin-2 and spin-1 prevents the separability in modes. For more details see Section 111 of \cite{Chandra} and the introduction of \cite{Giorgi7}

\subsection{The physical-space combined energy-momentum tensor for the system}\label{section:combined-energy-momentum-tensor}

To solve the issue of non-commutativity with the decomposition in modes, we propose instead to perform a physical-space analysis of the system by making use of a combined energy-momentum tensor for the system. A sketch of this procedure also appeared in \cite{Giorgi7}.

As in \eqref{definition-energy-momentum-tensor}, one can define the energy-momentum tensor for a complex horizontal tensor $\psi \in \sk_k(\mathbb{C})$ as
\beaa
\QQ[\psi]_{\mu\nu}:= \Re\big(\Db_\mu  \psi \c \Db _\nu \ov{\psi}\big)
          -\frac 12 \g_{\mu\nu} \LL[\psi],
\eeaa
where $\Re$ denotes the real part and\footnote{Recall that the potentials of the gRW system are real.}
\beaa
\LL[\psi]:= \Db_\la \psi\c\Db^\la \ov{\psi} + V\psi \c \ov{\psi}.
\eeaa
The divergence of $\QQ[\psi]$ is then given by, see \cite{GKS},
 \beaa
 \D^\nu\QQ[\psi]_{\mu\nu}
  &=&\Re\Big(  \Db_\mu  \overline{\psi} \c  \left(\squared_k \psi- V\psi\right)+ \Db^\nu  \psi ^A\R_{ A   B   \nu\mu}\ov{\psi}^B \Big) -\frac 1 2 \D_\mu V\psi \c \ov{\psi}.
 \eeaa
Let $X$ be a vectorfield and $w$ a scalar. As in \eqref{definition-of-P}, one can define the associated current as
 \beaa
 \PP_\mu^{(X, w)}[\psi]&:=&\QQ_{\mu\nu} X^\nu +\frac 1 2  w \Re\big(\psi \c \Db_\mu \overline{\psi} \big)-\frac 1 4 \pr_\mu w |\psi|^2.
  \eeaa
 Then,  its divergence is given by, see \cite{GKS},
  \beaa
  \D^\mu  \PP_\mu^{(X, w)}[\psi] &=& \frac 1 2 \QQ[\psi]  \c\piX - \frac 1 2 X( V ) |\psi|^2+\frac 12  w \LL[\psi] -\frac 1 4 \square_\g  w |\psi|^2  \\
  &+&  X^\mu \Db^\nu  \psi ^a\R_{ ab   \nu\mu}\overline{\psi}^b+ \Re\Big( \big(X( \overline{\psi} )+\frac 1 2   w \overline{\psi}\big)\c \left(\squared_k \psi- V\psi\right)\Big)\\
   &=&\EE^{(X, w)}[\psi] + \Re\Big( \big(X( \overline{\psi} )+\frac 1 2   w \overline{\psi}\big)\c \left(\squared_k \psi- V\psi\right)\Big)+\R[\psi],
 \eeaa
where we used that $\R_{ab 34}=-\dual\rho \in_{ab}$ to write
      \beaa
\EE^{(X, w)}[\psi]  &:=& \frac 1 2 \QQ[\psi]  \c\piX - \frac 1 2 X( V ) |\psi|^2+\frac 12  w \LL[\psi] -\frac 1 4 \square_\g  w |\psi|^2 , \\
\R^{(X)}[\psi]&:=& -   \in_{AB}\rhod\Re\big( \big(X^4  \Db_4   \psi ^A -X^3  \Db_3   \psi ^A \big) \overline{\psi}^B\big).
\eeaa

 \medskip
 
 Applying the above to the complex tensors  $\pf \in \sk_1(\mathbb{C})$ and $\qf^\F\in \sk_2 (\mathbb{C})$, solutions to \eqref{final-eq-1} and \eqref{final-eq-2}, we obtain respectively
 \bea
   \D^\mu  \PP_\mu^{(X, w)}[\pf]    &=&\EE^{(X, w)}[\pf]+\R^{(X)}[\pf]\label{eq:div-PP1}\\
   && +\Re\Big(  \big(X( \overline{\pf} )+\frac 1 2   w \overline{\pf}\big)\c \big(i  \frac{2a\cos\th}{|q|^2}\nab_T \pf+4Q^2 \frac{\ov{q}^3 }{|q|^5} \left(  \ov{\DD} \c  \qf^\F  \right) + L_\pf[\Bfr, \Ffr]\big)\Big),\nonumber\\
      \D^\mu  \PP_\mu^{(X, w)}[\qf^\F]    &=&\EE^{(X, w)}[\qf^\F] +\R^{(X)}[\qf^\F]\label{eq:div-PP2}\\
      &&+\Re\Big(  \big(X( \overline{\qf^\F} )+\frac 1 2   w \overline{\qf^\F}\big)\c \Big(i  \frac{4a\cos\th}{|q|^2}\nab_T \qf^\F-   \frac 1 2\frac{q^3}{|q|^5} \big(  \DD \hot  \pf  -\frac 3 2 ( H - \Hb)  \hot \pf \big) + L_{\qf^\F}[\Bfr, \Ffr]  \Big)\Big).\nonumber
 \eea
As one can observe, the divergence of each equation for one of the two tensors involves the coupling terms with the other tensor of the system. Nevertheless, there exists a combined energy-momentum tensor for the system, which can be obtained by summing the two above, modulo spacetime boundary terms. In particular, we consider
\bea
\QQ[\pf, \qf^\F]_{\mu\nu}&:=& \QQ[\pf]_{\mu\nu}+8Q^2 \QQ[\qf^\F]_{\mu\nu}\\
 \PP_\mu^{(X, w)}[\pf, \qf^\F]&:=& \PP_\mu^{(X, w)}[\pf]+8 Q^2 \PP_\mu^{(X, w)}[ \qf^\F], \nonumber\\
\GG^{(X, w)}[\pf, \qf^\F]& :=& \D^\mu  \PP_\mu^{(X, w)}[\pf]+ 8Q^2\D^\mu  \PP_\mu^{(X, w)}[\qf^\F] ,\label{eq:def-GG}
\eea
 and we show that with the above combination the highest order terms in the coupling satisfy a cancellation.
 
 \subsubsection{The cancellation of the highest-order coupling terms in physical-space}\label{section:cancellation}
 To show how the cancellation of the higher order coupling terms takes place at the level of the energy-momentum tensor, as opposed to non-commutativity of the decomposition in spheroidal harmonics discussed in Section \ref{section:spheroidal}, we concentrate on the coupling terms in the divergence above. More precisely we look at
 \beaa
 E:=\frac{q^3 }{|q|^5} X( \pf ) \c \big(  \DD \c  \overline{\qf^\F}  \big)-X(\overline{\qf^\F} )\c \frac{q^3}{|q|^5} \big(  \DD \hot  \pf  -\frac 3 2 ( H - \Hb)  \hot \pf \big).
 \eeaa
 Using Lemma \ref{lemma:adjoint-operators}, we write
     \bea
 ( \DD \hot   \pf) \c  X( \ov{\qf^\F} ) &=&  -\pf \c (\DD \c X(\ov{\qf^\F})) -( (H+\Hb ) \hot \pf )\c X(\ov{\qf^\F} )+\D_\a (\pf \c X(\ov{\qf^\F}))^\a.
 \eea
Using \eqref{eq:derivatives-r-th} to deduce that 
\beaa
 \DD(\frac{ q^3}{|q|^5})  =\frac 1 2  \frac{ q^3}{|q|^5} (   \Hb -5  H) ,
 \eeaa
  we can write
\beaa
  \frac{ q^3}{|q|^5} ( \DD \hot \pf) \c X( \ov{\qf^\F} )  &=& -\frac{ q^3}{|q|^5}\pf \c (\DD \c X(\ov{\qf^\F})) -\frac{ q^3}{|q|^5}( (H+\Hb ) \hot \pf )\c X(\ov{\qf^\F} )- \frac 1 2  \frac{ q^3}{|q|^5}((   \Hb - 5   H)  \hot \pf) \c  X(\ov{\qf^\F} )\\
  &&+\D_\a \big(\frac{ q^3}{|q|^5}\pf \c X(\ov{\qf^\F} )\big)^\a\\
   &=& -\frac{ q^3}{|q|^5}\pf \c (\DD \c X(\ov{\qf^\F} )) +\frac 3 2\frac{ q^3}{|q|^5}(( H- \Hb ) \hot \pf )\c X(\ov{\qf^\F} ) +\D_\a \big(\frac{ q^3}{|q|^5}\pf \c X(\ov{\qf^\F} )\big)^\a.
   \eeaa
We therefore obtain the cancellation of the terms in $H$ and $\Hb$, as follows:
\beaa
 E&=&\frac{q^3 }{|q|^5} X( \pf ) \c \big(  \DD \c  \overline{\qf^\F}  \big)-X(\overline{\qf^\F} )\c \frac{q^3}{|q|^5} \big(  \DD \hot  \pf  -\frac 3 2 ( H - \Hb)  \hot \pf \big)\\
 &=&\frac{q^3 }{|q|^5} X( \pf ) \c \big(  \DD \c  \overline{\qf^\F}  \big) +\frac{ q^3}{|q|^5}\pf \c (\DD \c X(\ov{\qf^\F} )) -\frac 3 2\frac{ q^3}{|q|^5}(( H- \Hb ) \hot \pf )\c X(\ov{\qf^\F} ) -\D_\a \big(\frac{ q^3}{|q|^5}\pf \c X(\ov{\qf^\F} )\big)^\a\\
 &&+X(\overline{\qf^\F} )\c \frac{q^3}{|q|^5} \big( \frac 3 2 ( H - \Hb)  \hot \pf \big)\\
  &=&\frac{q^3 }{|q|^5} X( \pf ) \c \big(  \DD \c  \overline{\qf^\F}  \big) +\frac{ q^3}{|q|^5}\pf \c ( X(\DD \c\ov{\qf^\F} )) +\frac{ q^3}{|q|^5}\pf \c ([\DD \c, \nab_X]\ov{\qf^\F} ) -\D_\a \big(\frac{ q^3}{|q|^5}\pf \c X(\ov{\qf^\F} )\big)^\a
\eeaa
Integrating by parts in $X$ we finally obtain
\beaa
 E  &=&\frac{q^3 }{|q|^5} X( \pf ) \c \big(  \DD \c  \overline{\qf^\F}  \big)-X\big(\frac{ q^3}{|q|^5}\big)\pf \c (\DD \c\ov{\qf^\F} ) -\frac{ q^3}{|q|^5}X(\pf) \c (\DD \c\ov{\qf^\F} )+\frac{ q^3}{|q|^5}\pf \c ([\DD \c, \nab_X]\ov{\qf^\F} )\\
 && -\D_\a \big(\frac{ q^3}{|q|^5}\pf \c X(\ov{\qf^\F} )\big)^\a +\nab_X\big(\frac{ q^3}{|q|^5}\pf \c (\DD \c\ov{\qf^\F} )\big)\\
  &=&-X\big(\frac{ q^3}{|q|^5}\big)\pf \c (\DD \c\ov{\qf^\F} )+\frac{ q^3}{|q|^5}\pf \c ([\DD \c, \nab_X]\ov{\qf^\F} )+\mbox{Bdr},
\eeaa
where $\mbox{Bdr}$ denotes boundary terms. In particular, the highest order terms (those involving up two derivatives) got cancelled in the sum of the two terms. Similarly, we obtain
\beaa
F&:=& \frac{q^3 }{|q|^5} \frac 1 2 w \pf  \c \big(  \DD \c  \overline{\qf^\F}  \big)-\frac 1 2 w \overline{\qf^\F} \c \frac{q^3}{|q|^5} \big(  \DD \hot  \pf   -\frac 3 2 ( H - \Hb)  \hot \pf \big)\\
&=& \frac{q^3 }{|q|^5}  w \pf  \c \big(  \DD \c  \overline{\qf^\F}  \big)-\D_\a \big(\frac 1 2 w\frac{ q^3}{|q|^5}\pf \c \ov{\qf^\F} \big)^\a.
\eeaa
As $\Re(E)+\Re(F)$ precisely gives the coupling term of the combined current $\GG^{(X, w)}[\pf, \qf^\F] $, putting together \eqref{eq:div-PP1}, \eqref{eq:div-PP2} and \eqref{eq:def-GG} we finally obtain
\bea\label{eq:GG-intermediate}
\begin{split}
\GG^{(X, w)}[\pf, \qf^\F]& =\EE^{(X, w)}[\pf]+ 8Q^2\EE^{(X, w)}[\qf^\F] \\
   & - \frac{2a\cos\th}{|q|^2}\Im\Big(  \big(X( \overline{\pf} )+\frac 1 2   w \overline{\pf}\big)\c \nab_T \pf+ 16Q^2 \big(X( \overline{\qf^\F} )+\frac 1 2   w \overline{\qf^\F}\big)\c \nab_T \qf^\F \Big)\\
   &+\lot+\mbox{Bdr}
   \end{split}
\eea
where $\Im$ denotes the imaginary part, using that $\Re(iz)=-\Im(z)$, and we collected the lower order terms (i.e. those involving up to one derivatives of $\pf$ of $\qf^\F$) in $\lot$, which is given by
\beaa
\lot&:=& 4Q^2\Big(-X\big(\frac{ q^3}{|q|^5}\big)\pf \c (\DD \c\ov{\qf^\F} )+\frac{ q^3}{|q|^5}\pf \c ([\DD \c, \nab_X]\ov{\qf^\F} ) + \frac{q^3 }{|q|^5}  w \pf  \c \big(  \DD \c  \overline{\qf^\F}  \big)\Big) \\
   && +\Re\Big(  \big(X( \overline{\pf} )+\frac 1 2   w \overline{\pf}\big)\c  L_\pf[\Bfr, \Ffr]+ 8Q^2 \big(X( \overline{\qf^\F} )+\frac 1 2   w \overline{\qf^\F}\big)\c L_{\qf^\F}[\Bfr, \Ffr] \Big)+\R^{(X)}[\pf]+8Q^2\R^{(X)}[\qf^\F].
\eeaa
For general vectorfields, we still have terms involving two derivatives of $\pf$ and $\qf^\F$ in the second line of \eqref{eq:GG-intermediate}. Nevertheless, we now show that those get cancelled when $X=T$, used for the energy estimates in the trapping region, and can instead be absorbed by Cauchy-Schwarz, for small $|a|/M$, outside of the trapping region.

\subsubsection{The combined-energy momentum tensor associated to $\That_\chi$}

As obtained in Section \ref{section:energy-estimates}, the timelike vectorfield $\That_{\chi}=T + \chi \om_{\mathcal{H}}Z$ is used in the derivation of the energy estimates. Recall that $\That$ coincides with $T$ in the trapping region, i.e. $r \geq \frac89 r_{trap}^{RN}$.

We apply the combined energy-momentum tensor in \eqref{eq:GG-intermediate} to the case of $X=\That_{\chi}$, $w=0$ in the context of deriving the energy estimates for the gRW system.  In this case, the second line of \eqref{eq:GG-intermediate} becomes
\beaa
\Im\Big( \nab_{\That_\chi} \overline{\pf} \c \nab_T \pf+ 16Q^2 \nab_{\That_\chi} \overline{\qf^\F} \c \nab_T \qf^\F \Big)&=& \Im\Big( \nab_{T + \chi \om_{\mathcal{H}}Z} \overline{\pf} \c \nab_T \pf+ 16Q^2 \nab_{T + \chi \om_{\mathcal{H}}Z} \overline{\qf^\F} \c \nab_T \qf^\F \Big)\\
&=& \Im\Big( \nab_{T } \overline{\pf} \c \nab_T \pf+ 16Q^2 \nab_{T } \overline{\qf^\F} \c \nab_T \qf^\F \Big)\\
&&+\chi \om_{\mathcal{H}} \Im\Big( \nab_{ Z} \overline{\pf} \c \nab_T \pf+ 16Q^2 \nab_{Z} \overline{\qf^\F} \c \nab_T \qf^\F \Big)\\
&=&\chi \om_{\mathcal{H}} \Im\Big( \nab_{ Z} \overline{\pf} \c \nab_T \pf+ 16Q^2 \nab_{Z} \overline{\qf^\F} \c \nab_T \qf^\F \Big),
\eeaa
as the product involving $\nab_T$ gets cancelled. In particular, the above is only supported away from trapping, where $\chi=0$. Also, recall \eqref{eq:QQpi(That_de)}, which gives
\beaa
\EE^{(\That_\chi, 0)}[\pf]+ 8Q^2\EE^{(\That_\chi, 0)}[\qf^\F] &=&  \frac 1 2 \QQ[\pf]  \c ^{(\That_\chi)} \pi+8Q^2  \frac 1 2 \QQ[\qf^\F]  \c ^{(\That_\chi)} \pi\\
&=& \frac{\De  \om_{\mathcal{H}}(\pr_r\chi)}{|q|^2} \Re\big( \nab_Z\overline{\pf}\c \nab_r \pf+8Q^2 \nab_Z\overline{\qf^\F} \c\nab_r \qf^\F\big),
\eeaa
again only supported away from trapping, where $\pr_r \chi=0$.

From \eqref{eq:GG-intermediate}, we then obtain
\beaa
\GG^{(\That_\chi, 0)}[\pf, \qf^\F]& =& \frac{\De  \om_{\mathcal{H}}(\pr_r\chi)}{|q|^2} \Re\big( \nab_Z\overline{\pf}\c \nab_r \pf+8Q^2 \nab_Z\overline{\qf^\F} \c\nab_r \qf^\F\big)\\
&& - \frac{2a\cos\th}{|q|^2}\chi \om_{\mathcal{H}} \Im\Big( \nab_{ Z} \overline{\pf} \c \nab_T \pf+ 16Q^2 \nab_{Z} \overline{\qf^\F} \c \nab_T \qf^\F \Big)\\
   &&+ 4Q^2\frac{ q^3}{|q|^5}\pf \c ([\DD \c, \nab_{\That_\chi}]\ov{\qf^\F} )+\R^{(\That_{\chi})}[\pf]+8Q^2\R^{(\That_{\chi})}[\qf^\F]\\
   && +\Re\Big(  \nab_{\That_\chi} \overline{\pf} \c  L_\pf[\Bfr, \Ffr]+ 8Q^2 \nab_{\That_\chi} \overline{\qf^\F}\c L_{\qf^\F}[\Bfr, \Ffr] \Big)+\mbox{Bdr}
\eeaa
The first two lines of the above are supported away from the trapping region, and therefore can be bounded by Cauchy-Schwarz, and eventually absorbed for small $|a|$ by a Morawetz bulk where $\nab_r$, $\nab_Z$ and $\nab_T$ do not degenerate outside trapping.

Using that $[\nab_T, \nab]\psi=O(\frac{a}{r^4})\psi$ and, see    \cite{GKS2},
\beaa
\R^{(\That_\chi)}[\pf]&=& \Re\Big( - \rhod  \in_{AB} \Big(\That_\chi^4  \Db_4  \pf ^A \ov{\pf}^B -\That_\chi^3  \Db_3   \pf ^A \ov{\pf}^B\Big) \Big)=O(\frac{a}{r^4}) \Re( \nab_r \pf \c \ov{\pf}),
\eeaa
we see that the third line of the above can also be bounded by Cauchy-Schwarz in terms of $\nab_r$ derivatives or zero-th order terms of $\pf$ and $\qf^\F$, which are non-degenerate in the Morawetz norm.
 In particular, for very small $|a|/M$, those terms could be absorbed once combined with a Morawetz spacetime estimates. 
 
 For the analysis of the last line involving the terms $ L_\pf[\Bfr, \Ffr]$ and $L_{\qf^\F}[\Bfr, \Ffr] $ see \cite{Giorgi7}.

  The energy estimates would then have to be combined with the Morawetz estimates to obtain boundedness of the energy. In the case of Kerr-Newman, the non-separability in modes makes this procedure even more relevant, and in order to apply the Andersson-Blue's method \cite{And-Mor} described above for the scalar wave equation, we need to allow for a commutation with symmetry operators for the system.

 \subsection{Symmetry operators for the gRW system}
 
  In Section \ref{section:modified-laplacian}, we defined the modified Laplacian $\OO$ for scalar functions on Kerr-Newman spacetime. Nevertheless, the definition of $\OO$ as given in \eqref{definition-OO-lap}, i.e.
 \beaa
\OO(\psi) &=& |q|^2 \left(\lap \psi + (\eta+\etab) \c \nab \psi   \right),
\eeaa
can be applied to any tensor $\psi \in \sk_k$ where $\lap_k:= \de^{ab} \nab_a \nab_b$. When proving the corresponding of Proposition \ref{prop:KK-OO} for tensors, i.e. the commutator with the D'Alembertian operator $\squared_k$, one obtains additional lower order terms involving Riemann curvature. One can show, see \cite{GKS}, that for $\psi \in \sk_k$
\bea\label{eq-commutator-OO-q^2square}
\,[\OO, |q|^2 \squared ]\psi &=&O( a r^{-1}) \  \dk^{\leq 1} \psi.
\eea
 
 \subsubsection{The modified laplacian for the gRW system}

The operator $\OO$, even though is a conformal symmetry (up to the lower order terms above) for $\squared$, it is not a symmetry for gRW system of equations \eqref{final-eq-1} and \eqref{final-eq-2}, because of the presence of the coupling terms on the right hand side of the equations. We instead have to define a corrected pair of symmetry operators from it, and show how the commutation with such modified Carter differential operators allows to maintain the same structure of the equations. 
 
Following \cite{GKS}, we recall the following Gauss equation for horizontal structures:

\begin{proposition}[Proposition 2.52 in \cite{GKS}]\label{Gauss-equation-2-tensors} The following identity holds true for $\psi \in \sk_k$ for $k=0, 1,2$:
\bea
\lab{Gauss-eq-real-2-tensors}\lab{gauss-real-Ka}\label{Gauss-eq-real-tensors}
[ \nab_a, \nab_b] \psi &=&\Big( \frac 1 2 (\atrch\nab_3+\atrchb \nab_4) \psi +k  \, \Kh \dual \psi\Big)  \in_{ab}
 \eea
 where
 \bea\label{eq:definition-K-in}
\Kh&:=&- \frac 14  \trch \trchb-\frac 1 4 \atrch \atrchb+\frac 1 2 \chih \c \chibh-  \frac 1 4 \R_{3434}. 
\eea
      \end{proposition}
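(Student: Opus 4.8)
The plan is to establish \eqref{Gauss-eq-real-tensors} by a direct computation, realizing $[\nab_a,\nab_b]$ as the curvature of the horizontal connection $\nab$ acting on horizontal tensors and carefully tracking the additional contributions forced by the non-integrability of the horizontal structure. Throughout I regard $\nab_X\psi$ for $\psi\in\sk_k$ as the horizontal projection ${}^{(h)}(\D_X\psi)$ of the ambient Levi-Civita derivative, as in \eqref{eq:def-horizontal-cov-der}, and $[\nab_a,\nab_b]$ as the frame commutator $\nab_{e_a}\nab_{e_b}-\nab_{e_b}\nab_{e_a}$ corrected by $-\nab$ along the \emph{horizontal} part of $[e_a,e_b]$, so that it is a genuine tensorial operator. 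For $k=0$ the claim reduces to $[e_a,e_b]\psi$ minus its horizontal part, which is already the stated right-hand side.

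First I would expand $\nab_a\nab_b\psi = {}^{(h)}\D_{e_a}\big({}^{(h)}\D_{e_b}\psi\big)$ using the Ricci formulae \eqref{conn-coeff}. The crucial point is that $\D_a e_b = \nab_a e_b + \tfrac12\chi_{ab}e_3 + \tfrac12\chib_{ab}e_4$, so the first horizontal derivative of $\psi$ leaks into the $e_3,e_4$ directions through the factors $\chi_{ab}$, $\chib_{ab}$; applying a second derivative and projecting back reintroduces the transversal derivatives $\nab_3\psi$, $\nab_4\psi$ (via $\D_a e_3 = \chib_{ab}e_b + \ze_a e_3$ and $\D_a e_4 = \chi_{ab}e_b - \ze_a e_4$) together with quadratic products of $\chi$ and $\chib$. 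Antisymmetrizing in $a,b$, the symmetric parts $\chih_{ab}$, $\tfrac12\de_{ab}\trch$ and $\tfrac12\de_{ab}\trchb$ drop out, leaving only the antisymmetric parts $\tfrac12\in_{ab}\atrch$ and $\tfrac12\in_{ab}\atrchb$ — which are precisely the non-horizontal component of $[e_a,e_b]$ — and these produce the term $\tfrac12(\atrch\nab_3 + \atrchb\nab_4)\psi\,\in_{ab}$.

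The rest of the antisymmetrized expression is the genuine curvature term. Using the ambient Ricci identity for $(\D_a\D_b - \D_b\D_a)\psi$ and collecting the $\chi\cdot\chib$ products with the transversal component of the ambient Riemann tensor, one finds that the curvature of $\nab$ on the two-dimensional horizontal distribution, as an operator on $\psi\in\sk_k$, has the form $k\,\Kh\,\dual\psi\,\in_{ab}$: on a two-dimensional oriented bundle the curvature two-form is a scalar multiple of $\in_{ab}$, this scalar acts on a rank-$k$ horizontal tensor as $k$ copies of the infinitesimal rotation $\dual$, and the scalar is read off to be $\Kh = -\tfrac14\trch\trchb - \tfrac14\atrch\atrchb + \tfrac12\chih\c\chibh - \tfrac14\R_{3434}$, with the first three terms coming from the $\chi\cdot\chib$ products above and $-\tfrac14\R_{3434}$ from the component of the ambient Riemann tensor transverse to the horizontal structure that enters the four-horizontal-index curvature via the standard decomposition of Riemann in four dimensions. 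The cases $k=0,1,2$ then follow uniformly, the $k=0$ case having no curvature term, consistent with scalars being insensitive to the connection curvature.

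The step I expect to be the main obstacle is the bookkeeping of the transversal leakage through two covariant derivatives together with pinning down every sign and numerical coefficient in the identity for $\Kh$: one must verify that all terms involving $\xi$, $\xib$, $\om$, $\etab$ and $\ze$ from the Ricci formulae cancel or recombine correctly, that the $k$-dependence is exactly linear, and that the transversal Riemann components assemble into $-\tfrac14\R_{3434}$ with the stated coefficient. This is Proposition 2.52 of \cite{GKS}, to which I would refer for the complete verification.
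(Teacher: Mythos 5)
The paper itself offers no proof of this statement: it is recalled verbatim from \cite{GKS} (Proposition 2.52), so the only "proof" in the paper is the citation, and your proposal, which sketches the computation and then defers the complete verification to the same reference, sits at essentially the same level. The structural points of your sketch are correct: with your tensorial interpretation of $[\nab_a,\nab_b]$, the only first-order contribution comes from $\nab$ along the vertical part of $[e_a,e_b]$, which equals $\frac 12 \in_{ab}(\atrch\, e_3+\atrchb\, e_4)$ and produces the term $\frac 12(\atrch\nab_3+\atrchb\nab_4)\psi\,\in_{ab}$ (your $k=0$ case is exactly this observation); the residual curvature of the oriented rank-two horizontal bundle is a scalar multiple of $\in_{ab}$ acting through the rotation generator, which is $\dual$ on $\sk_1$ and $2\dual$ on symmetric traceless elements of $\sk_2$, giving the linear dependence on $k$; and the quadratic terms of the Gauss-type computation do recombine into $-\frac 14\trch\trchb-\frac 14\atrch\atrchb+\frac 12 \chih\c\chibh$ with the stated coefficients.

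The genuine gap is in the step you dismiss as bookkeeping, namely the identification of the ambient curvature contribution with $-\frac 14\R_{3434}$. The computation you outline produces the Riemann tensor with \emph{four horizontal indices} (essentially $\frac 14\in^{ab}\in^{cd}\R_{abcd}$), not a component along $e_3\wedge e_4$; converting one into the other is a duality identity valid for the Weyl tensor, hence automatic in the Ricci-flat setting of \cite{GKS}, but it is \emph{not} a consequence of "the standard decomposition of Riemann in four dimensions" in general: off Ricci-flatness the two expressions differ by Ricci terms, which in Kerr--Newman are of size $Q^2/r^4$ and are exactly the terms this paper cares about. A concrete test that pins down which component must appear is Reissner--Nordstr\"om: there the horizontal structure is integrable, $\atrch=\atrchb=0$, and the commutator on a sphere-tangent $1$-tensor is $K\dual\psi\,\in_{ab}$ with $K=1/r^2$ the Gauss curvature of the sphere of radius $r$; checking whether your assembled curvature term reproduces $1/r^2$ (using $\trch=2/r$, $\trchb=-2\De/r^3$, $\rho=-2M/r^3+2Q^2/r^4$ and $\R_{34}=2Q^2/r^4$) is precisely the verification that distinguishes the four-horizontal-index component from the full-Riemann $\R_{3434}$ and fixes the interpretation of \eqref{eq:definition-K-in}. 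So the crux of the proof is not merely signs and factors: you must carry out (or carefully import from \cite{GKS}, tracking its vacuum conventions) the identification of the curvature scalar in the non-Ricci-flat electrovacuum setting, and as written your proposal neither does this nor flags that the "standard decomposition" argument is insufficient for it.
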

The scalar quantity $\Kh$ is denoted the \textit{modified Gauss curvature}\footnote{      In the integrable case, $\Kh=\frac{1}{r^2}$ is the Gauss curvature of the sphere orthogonal to the principal null frame. As in the non-integrable picture there are no spheres in the horizontal space to the null principal direction, this is not a real Gauss curvature.} of the horizontal structure. As a consequence of the above, we obtain, see \cite{GKS}, for $\xi \in \sk_1$, $u\in \sk_2$
\bea\label{eq:relation-lap1}
\DDd_2\DDs_2 \xi &=&-\frac12\triangle_1\xi +\frac 1 2 [\nab_1, \nab_2]\dual \xi=-\frac12\triangle_1\xi - \frac 1 2 \, \Kh  \xi+ \frac 1 4 (\atrch\nab_3+\atrchb \nab_4) \dual\xi 
\eea
and 
\bea\label{eq:relation-lap2}
\DDs_2 \DDd_2 u &=&-\frac12\triangle_2u -\frac 1 2 [ \nab_1, \nab_2] \dual u=-\frac12\triangle_2u+\Kh u  -\frac 1 4(\atrch\nab_3+\atrchb \nab_4) \dual u.
\eea

These relations are used to compute the commutator between $\OO$ and the angular operators $\ov{\DD} \c$ and $\DD \hot$.

\begin{lemma}\label{lemma:commutators-OO-DD} The following commutators hold true for $\phi \in \sk_1$ and $\psi \in \sk_2$:
\bea
 \,[  \OO , \DD\hot] \phi  &=&3 |q|^2 \Kh ( \DD\hot \phi) +|q|^2\DD(|q|^{-2})\hot \OO(\phi)+i  |q|^2(\atrch\nab_3+\atrchb \nab_4)( \DD\hot\phi )\nonumber \\
  &&+O(ar^{-1}) \dk^{\leq 1} \phi, \label{eq:commutator-OO-DDhot}\\
\,[\OO ,\ov{\DD}\c ]\psi&=&-3|q|^2\Kh (\ov{\DD}\c\psi )+|q|^2\DD (|q|^{-2}) \c  \OO(\psi) -i |q|^2 (\atrch\nab_3+\atrchb \nab_4) ( \ov{\DD}\c\psi)\nonumber\\
&&+ O(ar^{-1}) \dk^{\leq1}\psi. \label{eq:commutator-OO-DDc}
\eea
\end{lemma}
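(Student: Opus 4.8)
\emph{The approach.} The plan is to reduce the statement to a Bochner-type (Weitzenböck) computation on the horizontal structure, exploiting that $\OO$ acting on $\sk_k$ is, by Proposition \ref{prop:KK-OO} extended to tensors, the purely horizontal second-order operator $\OO(\psi)=|q|^2\big(\lap_k\psi+(\eta+\etab)\c\nab\psi\big)$, and that the complexified angular operators $\DD\hot$ and $\ov{\DD}\c$ are, on anti-self-dual tensors, the complexifications of the Hodge operators $\DDs_2$ and $\DDd_2$. First I would rewrite $\lap_1$ and $\lap_2$ through \eqref{eq:relation-lap1}--\eqref{eq:relation-lap2}, namely $\lap_1\xi=-2\DDd_2\DDs_2\xi-\Kh\xi+\tfrac{1}{2}(\atrch\nab_3+\atrchb\nab_4)\dual\xi$ and $\lap_2 u=-2\DDs_2\DDd_2 u+2\Kh u-\tfrac{1}{2}(\atrch\nab_3+\atrchb\nab_4)\dual u$, so that $\OO$ on $\sk_1$ (resp. $\sk_2$) is expressed through the composition $\DDd_2\DDs_2$ (resp. $\DDs_2\DDd_2$), a $\Kh$-term, the transport terms $\atrch\nab_3+\atrchb\nab_4$, and the genuinely lower-order term $|q|^2(\eta+\etab)\c\nab$, which, since every component of $\eta,\etab$ is $O(a)$ by Lemma \ref{lemma:Kerr-Newman}, contributes only to $O(ar^{-1})\dk^{\leq1}$.

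\emph{Cancellation of the top order.} Substituting these expressions into $\OO(\DD\hot\phi)-\DD\hot(\OO\phi)$, the principal third-order operator on the $2$-tensor side is $\DDs_2\DDd_2\DDs_2\phi$, and on the $1$-tensor side one also obtains $\DDs_2\DDd_2\DDs_2\phi$ after pulling $\DDs_2$ through $\lap_1$; once the $|q|^2$ prefactors are aligned these match and cancel. The same holds for $[\OO,\ov{\DD}\c]\psi$ with cancelling principal term $\DDd_2\DDs_2\DDd_2\psi$. This is the step where no restriction on $a$ is used and the structure of the identity is fixed.

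\emph{Collecting the remainders.} Three structures then survive at second order. (i) The $\Kh$-contributions: $+2\Kh$ from $\lap_2$ applied to $\DD\hot\phi$, a further $\Kh$ from invoking Proposition \ref{Gauss-equation-2-tensors} when commuting $\DDs_2$ past $\DDd_2\DDs_2$, and a $\Kh$ from the $\lap_1$-side, combine with the correct signs into $3|q|^2\Kh(\DD\hot\phi)$ (and $-3|q|^2\Kh(\ov{\DD}\c\psi)$ in the $2$-tensor case, the sign flip being exactly the one already visible between \eqref{eq:relation-lap1} and \eqref{eq:relation-lap2}). (ii) The transport terms: the $\atrch\nab_3+\atrchb\nab_4$ pieces produced by the two Weitzenböck identities, together with the commutators of $\nab_3,\nab_4$ with the horizontal angular operator computed via \eqref{conn-coeff} and the vanishing of $\xi,\xib,\chih,\chibh$ (Lemma \ref{lemma:Kerr-Newman}), assemble into $i|q|^2(\atrch\nab_3+\atrchb\nab_4)(\DD\hot\phi)$, the explicit factor $i$ being the value of $\dual$ on anti-self-dual tensors; all Ricci coefficients that do not organize into this term are $O(a)$-small and feed the error. (iii) The weight terms: $\OO$ carries a prefactor $|q|^2$ while $\DD$ hits powers of $|q|^{-2}$, and by \eqref{eq:derivatives-r-th}, $e_a(|q|^2)=e_a(a^2\cos^2\th)=|q|^2(\eta+\etab)_a$, so the Leibniz discrepancy between $\OO(\DD\hot\phi)$ and $\DD\hot(\OO\phi)$ yields precisely $|q|^2\DD(|q|^{-2})\hot\OO(\phi)$ together with further $(\eta+\etab)$-terms acting on at most one derivative of $\phi$. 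Every term not matching one of these three structures is a product of an $O(a)$ Ricci coefficient (or of the derivative of $a^2\cos^2\th$) with at most one derivative of $\phi$, hence absorbed into $O(ar^{-1})\dk^{\leq1}\phi$; the $2$-tensor case is identical with the roles of $\DDs_2$ and $\DDd_2$ interchanged.

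\emph{Main obstacle.} The principal difficulty is the precise bookkeeping of the curvature coefficient: getting the coefficient $3$ in $3|q|^2\Kh$ requires tracking three separate $\Kh$-contributions with their relative signs and $|q|^2$-weights, and likewise verifying that the $\nab_3,\nab_4$ transport terms leave no uncontrolled second-order remainder but collapse to the single displayed expression with the correct $i$. A secondary nuisance is confirming that the $|q|^2$ and $|q|^{-2}$ weights redistribute so that only $|q|^2\DD(|q|^{-2})\hot\OO(\phi)$ survives and no extra top-order weight term is generated; here the identity $e_a(|q|^2)=|q|^2(\eta+\etab)_a$ and the fact that $\nab_3,\nab_4$ annihilate $a^2\cos^2\th$ are what make the cancellation close.
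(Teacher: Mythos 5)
Your proposal is correct and follows essentially the same route as the paper's proof: reduce to the real Hodge operators $\DDs_2$, $\DDd_2$ by complexification, substitute the Weitzenböck relations \eqref{eq:relation-lap1}--\eqref{eq:relation-lap2} so that the triple composition $\DDs_2\DDd_2\DDs_2$ (resp. $\DDd_2\DDs_2\DDd_2$) cancels, and then collect the $\Kh$-terms, the $(\atrch\nab_3+\atrchb\nab_4)$ transport terms (complexifying $\dual$ into $\pm i$), the weight term $|q|^2\DD(|q|^{-2})$ from $\lap=|q|^{-2}\OO-(\eta+\etab)\c\nab$, and absorb the $O(a)$ Ricci-coefficient remainders into $O(ar^{-1})\dk^{\leq 1}$. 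The only slip is in your $\Kh$-bookkeeping: the coefficient $3$ arises from exactly two sources, $2\Kh$ from \eqref{eq:relation-lap2} and $1\Kh$ from \eqref{eq:relation-lap1}, and no additional Gauss-curvature term is produced by ``commuting $\DDs_2$ past $\DDd_2\DDs_2$'', since that third-order piece cancels identically between the two sides.
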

\begin{proof} 
Recall that $\DD\hot =\nab \hot +i\dual \nab \hot=-\DDs_2- i \dual \DDs_2$ and $\ov{\DD} \c =\DDd_2+ i \dual \DDd_2$, and therefore we can compute 
 the real part of the commutators, i.e. $[\OO, \DDs_2]$ and $[\OO, \DDd_2]$.
 We have using \eqref{eq:relation-lap1} and \eqref{eq:relation-lap2},
 \beaa
 \OO(\DDs_2\phi)&=&  |q|^2 (\lap_2 + (\eta+\etab) \c \nab) (\DDs_2 \phi)\\
 &=&  |q|^2 (-2\DDs_2 \DDd_2\DDs_2 \phi  +2\Kh \DDs_2 \phi  -\frac 1 2(\atrch\nab_3+\atrchb \nab_4) \dual\DDs_2 \phi + (\eta+\etab) \c \nab\DDs_2 \phi) \\
 &=&  |q|^2 \Big[-2\DDs_2 \big(-\frac12\lap_1\phi - \frac 1 2 \, \Kh  \phi+ \frac 1 4 (\atrch\nab_3+\atrchb \nab_4) \dual\phi  \big)  \\
 &&+2\Kh \DDs_2 \phi  -\frac 1 2(\atrch\nab_3+\atrchb \nab_4) \dual\DDs_2 \phi + (\eta+\etab) \c \nab\DDs_2 \phi \Big]\\
  &=&  |q|^2 \big(\DDs_2 \lap_1\phi +3 \, \Kh \DDs_2 \phi-  (\atrch\nab_3+\atrchb \nab_4) \dual\DDs_2\phi  + (\eta+\etab) \c \nab\DDs_2 \phi \big)+O(ar^{-1}) \dk^{\leq 1} \phi.
 \eeaa
 By writing $ \lap_1\phi=|q|^{-2}\OO(\phi)-(\eta+\etab) \c \nab \phi $, we obtain
 \beaa
  \OO(\DDs_2\phi)  &=&  |q|^2 \Big[\DDs_2\big(|q|^{-2}\OO(\phi)-(\eta+\etab) \c \nab \phi\big)+3 \, \Kh \DDs_2 \phi\\
  &&-  (\atrch\nab_3+\atrchb \nab_4) \dual\DDs_2\phi  + (\eta+\etab) \c \nab\DDs_2 \phi \Big]+O(ar^{-1}) \dk^{\leq 1} \phi\\
  &=& \DDs_2\OO(\phi)+3 |q|^2 \Kh \DDs_2 \phi-|q|^2\nab(|q|^{-2})\hot (\OO\phi)-  |q|^2(\atrch\nab_3+\atrchb \nab_4) \dual\DDs_2\phi  \\
  &&+O(ar^{-1}) \dk^{\leq 1} \phi.
 \eeaa
 By complexifying the above we obtain \eqref{eq:commutator-OO-DDhot}. Similarly, we compute
\beaa
\OO (\DDd_2 \psi)&=& |q|^2 (\lap_1 + (\eta+\etab) \c \nab) (\DDd_2 \psi)\\
&=& |q|^2 \big(-2\DDd_2\DDs_2 \DDd_2 \psi-\Kh (\DDd_2 \psi)+\frac 1 2 (\atrch\nab_3+\atrchb \nab_4)\dual \DDd_2 \psi+ (\eta+\etab) \c \nab \DDd_2 \psi\big)\\
&=& |q|^2 \Big[-2\DDd_2\big(-\frac12\lap_2\psi+\Kh \psi  -\frac 1 4(\atrch\nab_3+\atrchb \nab_4) \dual \psi  \big)\\
&&-\Kh (\DDd_2 \psi)+\frac 1 2 (\atrch\nab_3+\atrchb \nab_4)\dual \DDd_2 \psi+ (\eta+\etab) \c \nab \DDd_2 \psi\Big]\\
&=& |q|^2 \big(\DDd_2\lap_2\psi-3\Kh \DDd_2\psi  +(\atrch\nab_3+\atrchb \nab_4) \dual \DDd_2\psi  + (\eta+\etab) \c \nab \DDd_2 \psi\big)+ O(ar^{-1}) \dk^{\leq1}\psi.
\eeaa
By writing $ \lap_2\psi=|q|^{-2}\OO(\psi)-(\eta+\etab) \c \nab \psi $, we obtain
\beaa
\OO (\DDd_2 \psi)&=& |q|^2 \Big[\DDd_2\big(|q|^{-2}\OO(\psi)-(\eta+\etab) \c \nab \psi \big)-3\Kh \DDd_2\psi  \\
&&+(\atrch\nab_3+\atrchb \nab_4) \dual \DDd_2\psi  + (\eta+\etab) \c \nab \DDd_2 \psi\Big]+ O(ar^{-1}) \dk^{\leq1}\psi\\
&=&  \DDd_2 \OO(\psi)-3|q|^2\Kh \DDd_2\psi +|q|^2\nab (|q|^{-2}) \c  \OO(\psi) +|q|^2(\atrch\nab_3+\atrchb \nab_4) \dual \DDd_2\psi\\
&&+ O(ar^{-1}) \dk^{\leq1}\psi
\eeaa
By complexifing the above we obtain \eqref{eq:commutator-OO-DDc}.

\end{proof}

Observe the presence of the terms $3 |q|^2 \Kh  \DD\hot$ and $-3|q|^2\Kh \ov{\DD}\c$ in the above commutators. As they involve the modified Gauss curvature, these terms are not small even in Schwarzschild. 
 In order to obtain symmetry operators for the gRW system, we combine the modified Laplacian $\OO$ with lower order terms involving the modified Gauss curvature $\Kh$.
 
 \begin{proposition}\label{proposition:modified-OO-system} Let $\pf$ and $\qf^\F$ be solutions to the generalized Regge-Wheeler system in Theorem \ref{main-theorem-RW}. Then for any real number $c$, the complex tensors 
 \bea
\widehat{\pf}:=\big(  \OO +(c+3)|q|^2\Kh \big) \pf , \qquad \widehat{\qf^\F}:=\big(  \OO +c \ |q|^2\Kh \big) \qf^\F
 \eea
 satisfy the following system of wave equations:
 \bea
 \squared_1\widehat{\pf}-i  \frac{2a\cos\th}{|q|^2}\nab_T \widehat{\pf}  -V_1  \widehat{\pf}&=&4Q^2 \frac{\ov{q}^3 }{|q|^5} \left(  \ov{\DD} \c  \widehat{\qf^\F}  \right)+ N_1[\pf, \qf^\F]
\label{final-eq-1-commuted}\\
\squared_2\widehat{\qf^\F}-i  \frac{4a\cos\th}{|q|^2}\nab_T \widehat{\qf^\F} -V_2  \widehat{\qf^\F} &=&-   \frac 1 2\frac{q^3}{|q|^5} \left(  \DD \hot  \widehat{\pf} -\frac 3 2 \left( H - \Hb\right)  \hot \widehat{\pf} \right)+ N_2[\pf, \qf^\F]\label{final-eq-2-commuted}
 \eea
where the terms $N_1[\pf, \qf^\F]$ and $N_2[\pf, \qf^\F]$ are $O(|a|)$ lower order in differentiability with respect to $\pf$ and $\qf^\F$, explicitly given by 
\beaa
N_1[\pf, \qf^\F]&=& -4Q^2\Big[\frac{\ov{q}^3 }{|q|^3}i (\atrch\nab_3+\atrchb \nab_4) ( \ov{\DD}\c \qf^\F)-2 \nab(\frac{\ov{q}^3 }{|q|^3})\c \nab\left(  \ov{\DD} \c  \qf^\F  \right)- \frac{\ov{q}^3 }{|q|^3}\DD (|q|^{-2}) \c  \OO( \qf^\F)  \Big]\\
&& +i 4a    \nab(\cos\th) \c\nab(\nab_T \pf ) +|q|^{-2}\OO( |q|^2L_\pf[\Bfr, \Ffr])+(c+3)|q|^2\Kh L_\pf[\Bfr, \Ffr]\\
&&+O(ar^{-3}) \ \big( \dk^{\leq 1}\pf ,  \dk^{\leq 1} \qf^\F\big)\\
N_2[\pf, \qf^\F]&=&-   \frac 1 2  \Big[ \frac{q^3}{|q|^3}i(\atrch\nab_3+\atrchb \nab_4) (\DD\hot\pf)+2 \nab \Big(\frac{q^3}{|q|^3}\Big) \c \nab \left(  \DD \hot  \pf  \right)+\frac{q^3}{|q|^3}\DD(|q|^{-2})\hot (\OO\pf)  \Big] \\
&&+  i8a \nab(\cos\th) \c\nab(\nab_T  \qf^\F)+|q|^{-2} \OO(|q|^2L_{\qf^\F}[\Bfr, \Ffr])+ c|q|^2\Kh L_{\qf^\F}[\Bfr, \Ffr]\\
&&+O(ar^{-3}) \ \big( \dk^{\leq 1}\pf ,  \dk^{\leq 1} \qf^\F\big).
\eeaa
where $\dk=\nab_3, r\nab_4, r\nab$ denotes first order derivatives.

In particular, the higher order structure of equations \eqref{final-eq-1-commuted} and \eqref{final-eq-2-commuted} is the same as the gRW system of equations \eqref{final-eq-1} and \eqref{final-eq-2} for the un-commuted $\pf$ and $\qf^\F$.
 \end{proposition}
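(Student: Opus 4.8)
The plan is to verify that the corrected operators $\widehat{\pf}=\big(\OO+(c+3)|q|^2\Kh\big)\pf$ and $\widehat{\qf^\F}=\big(\OO+c\,|q|^2\Kh\big)\qf^\F$ solve a system with the \emph{same} principal structure by applying the operator $\OO$ (and then the correction $|q|^2\Kh$) to each equation of the gRW system \eqref{final-eq-1}--\eqref{final-eq-2} and tracking the resulting commutators. First I would apply $\OO$ to both sides of \eqref{final-eq-1}. On the left-hand side, I use the tensorial commutator property \eqref{eq-commutator-OO-q^2square} to move $\OO$ through $\squared_1$ up to $O(ar^{-1})\dk^{\le1}$ errors; the term $-i\frac{2a\cos\th}{|q|^2}\nab_T\pf$ produces a commutator $[\OO,\nab_T]$, and since $\nab_T$ differentiates only along $T$ (a Killing direction) while $\OO$ involves $\nab(\cos\th)\cdot\nab$ from the spatial derivatives, this yields the $i4a\nab(\cos\th)\cdot\nab(\nab_T\pf)$ term; the potential $V_1$ being a function of $r$ only commutes with the angular part of $\OO$ up to $O(ar^{-3})$ lower-order terms. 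On the right-hand side I apply $\OO$ to $4Q^2\frac{\ov q^3}{|q|^5}(\ov\DD\cdot\qf^\F)$ and invoke Lemma \ref{lemma:commutators-OO-DD}, specifically \eqref{eq:commutator-OO-DDc}, to commute $\OO$ past $\ov\DD\cdot$; this produces the key term $-3|q|^2\Kh(\ov\DD\cdot\qf^\F)$ together with the $\atrch,\atrchb$-terms and the $\DD(|q|^{-2})$-terms that I collect into $N_1$.

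The crux is that the $-3|q|^2\Kh(\ov\DD\cdot\qf^\F)$ term, which is not small, must be cancelled; this is exactly what forces the $\Kh$-corrections. Writing $\OO(\frac{\ov q^3}{|q|^5}\ov\DD\cdot\qf^\F)=\frac{\ov q^3}{|q|^5}\OO(\ov\DD\cdot\qf^\F)+\text{(product/boundary terms)}$ and then $\OO(\ov\DD\cdot\qf^\F)=\ov\DD\cdot\OO(\qf^\F)-3|q|^2\Kh(\ov\DD\cdot\qf^\F)+(\text{l.o.t.})$, I need the right-hand side of the commuted equation \eqref{final-eq-1-commuted} to read $4Q^2\frac{\ov q^3}{|q|^5}(\ov\DD\cdot\widehat{\qf^\F})$ with $\widehat{\qf^\F}=(\OO+c|q|^2\Kh)\qf^\F$, i.e. I want $\ov\DD\cdot\widehat{\qf^\F}=\ov\DD\cdot\OO(\qf^\F)+\ov\DD\cdot(c|q|^2\Kh\qf^\F)$. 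Here one must know how $\ov\DD\cdot$ interacts with multiplication by $|q|^2\Kh$; the point is that in Kerr-Newman $|q|^2\Kh$ has a specific form (it is essentially $-\frac14|q|^2\trch\trchb+\dots$, a function of $r$ modulo $O(a)$) so that $\ov\DD\cdot(|q|^2\Kh\qf^\F)=|q|^2\Kh(\ov\DD\cdot\qf^\F)+O(ar^{-1})\dk^{\le1}$. Matching the coefficients of $\Kh(\ov\DD\cdot\qf^\F)$ on the two sides then pins down the relation between the correction constant for $\pf$ (namely $c+3$) and the one for $\qf^\F$ (namely $c$): the shift by $3$ precisely absorbs the $-3|q|^2\Kh$ generated by \eqref{eq:commutator-OO-DDc}.

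Next I repeat the computation for \eqref{final-eq-2}, applying $\OO$ and using \eqref{eq:commutator-OO-DDhot} to commute past $\DD\hot$; this generates $+3|q|^2\Kh(\DD\hot\pf)$, and symmetrically the requirement that the right-hand side of \eqref{final-eq-2-commuted} be $-\frac12\frac{q^3}{|q|^5}(\DD\hot\widehat{\pf}-\frac32(H-\Hb)\hot\widehat{\pf})$ with $\widehat{\pf}=(\OO+(c+3)|q|^2\Kh)\pf$ forces the same bookkeeping, now consistently (this is why a single free parameter $c$ survives). I also need to check that the $(H-\Hb)\hot$ term survives the commutation in the right form; since $H-\Hb=O(a)$ and the correction terms are $O(a)$ relative to leading order, any discrepancy lands in $N_2$. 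Then I apply the correction: adding $(c+3)|q|^2\Kh$ times \eqref{final-eq-1} to the $\OO$-commuted equation, using that $\squared_1(|q|^2\Kh\pf)=|q|^2\Kh\squared_1\pf+O(ar^{-3})\dk^{\le1}\pf$ (from the good structure of $|q|^2\Kh$ in Kerr-Newman and the general commutator of $\squared_1$ with multiplication by a function), and similarly for $V_1$ and the $\nab_T$ term, I assemble \eqref{final-eq-1-commuted} with $N_1$ as stated; likewise \eqref{final-eq-2-commuted} with $N_2$. Finally I read off the explicit form of $N_1,N_2$ by collecting all the commutator remainders: the $i(\atrch\nab_3+\atrchb\nab_4)$ terms and $\nab(|q|^{-2})$ terms from Lemma \ref{lemma:commutators-OO-DD}, the $\nab(\cos\th)\cdot\nab(\nab_T\cdot)$ terms from $[\OO,\nab_T]$, the $\OO$ and $|q|^2\Kh$ applied to the lower-order source terms $L_\pf,L_{\qf^\F}$, and the $O(ar^{-3})\dk^{\le1}$ errors, verifying that all of these are indeed lower order in differentiability and $O(|a|)$ where claimed.

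\medskip

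\noindent\textbf{Main obstacle.} I expect the main difficulty to be the careful treatment of the terms where $\OO$ (or $|q|^2\Kh$) meets the scalar weights $\frac{\ov q^3}{|q|^5}$, $\frac{q^3}{|q|^5}$ and the curvature factor $|q|^2\Kh$ itself: one must commute a second-order operator past multiplication by complex $r,\theta$-dependent functions and simultaneously past the first-order horizontal operators $\ov\DD\cdot$, $\DD\hot$, keeping track of which product terms are genuinely highest-order (and must cancel) versus which are absorbable into $N_1,N_2$. The delicate point is establishing that $|q|^2\Kh$ (computed from \eqref{eq:definition-K-in} with the Kerr-Newman values of Lemma \ref{lemma:Kerr-Newman}) commutes with the leading angular operators modulo $O(ar^{-1})\dk^{\le1}$ — this is what makes the single-parameter family $\{(\widehat\pf,\widehat{\qf^\F})\}_c$ closed under the system, and it relies on the same \emph{peculiar structure} of Kerr-Newman curvature that underlies Theorem \ref{theo:Carter-operator-commutes-KN}. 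Once that structural fact is in hand, the rest is bookkeeping of lower-order terms.
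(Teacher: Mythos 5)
Your proposal is correct and follows essentially the same route as the paper: multiply/commute the gRW system with $\OO$, use Lemma \ref{lemma:commutators-OO-DD} to push $\OO$ past $\ov{\DD}\c$ and $\DD\hot$ (generating the $\mp 3|q|^2\Kh$ terms), and then add $(c+3)$, respectively $c$, copies of the equations multiplied by $|q|^2\Kh$ — exploiting that $|q|^2\Kh=1+O(a^2r^{-2})$ commutes with the wave and angular operators to lower order — so that the $3$-shift between the two correction constants exactly absorbs the curvature terms, with all remainders collected into $N_1,N_2$. One small imprecision that does not affect the argument: the term $i4a\,\nab(\cos\th)\c\nab(\nab_T\pf)$ arises from the Leibniz rule when $\OO$ hits the coefficient $2a\cos\th$ in the first-order term (the cross term $2|q|^2\nab(2a\cos\th)\c\nab(\nab_T\pf)$), not from the commutator $[\OO,\nab_T]$, which is only $O(ar^{-3})\dk^{\leq 1}$.
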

 \begin{proof} 
 We start by commuting the gRW system with the operator $\OO=|q|^2 \big( \triangle + (\eta+\etab) \c )$, where $\lap$ denotes $\lap_1$ or $\lap_2$ if it applies to $\pf$ or $\qf^\F$ respectively.
 By multiplying equations \eqref{final-eq-1} and \eqref{final-eq-2} by $|q|^2$ we obtain
\bea
|q|^2 \squared_1\pf-i  2a\cos\th\nab_T \pf  -(|q|^2V_1)  \pf &=&4Q^2 \frac{\ov{q}^3 }{|q|^3} \left(  \ov{\DD} \c  \qf^\F  \right) + |q|^2L_\pf[\Bfr, \Ffr] \label{eq:first-q^2}\\
|q|^2\squared_2\qf^\F-i  4a\cos\th \nab_T \qf^\F -(|q|^2V_2)  \qf^\F &=&-   \frac 1 2\frac{q^3}{|q|^3} \left(  \DD \hot  \pf  -\frac 3 2 \left( H - \Hb\right)  \hot \pf \right) + |q|^2L_{\qf^\F}[\Bfr, \Ffr].\label{eq:second-q^2}
 \eea
We now apply the operator $\OO$ and consider the left hand side of the two equations. Using that 
 \beaa
 \OO(f \psi )  &=& f\OO( \psi)+ \OO(f) \psi +2 |q|^2 \nab f \c \nab \psi 
 \eeaa
 we obtain for the first equation
\beaa
\OO(\text{LHS of }\eqref{eq:first-q^2})&=& |q|^2 \squared_1(\OO\pf)+ [\OO, |q|^2 \squared_1]\pf -i \OO( 2a\cos\th\nab_T \pf ) -\OO((|q|^2V_1)  \pf )\\
&=& |q|^2 \squared_1(\OO\pf)-i 2a\cos\th \OO( \nab_T \pf )-i 2a\OO(\cos\th) \nab_T \pf -2|q|^2i \nab( 2a\cos\th) \c\nab(\nab_T \pf )\\
&&   -(|q|^2V_1) \OO( \pf )-\OO(|q|^2V_1)  \pf -2|q|^2\nab (|q|^2V_1) \c \nab \pf + O( a r^{-1}) \  \dk^{\leq 1}\pf \\
&=& |q|^2 \squared_1(\OO\pf)-i 2a\cos\th  \nab_T\OO( \pf )-|q|^2V_1 \OO( \pf )-4a|q|^2i \nab(\cos\th) \c\nab(\nab_t \pf )\\
&&+ O( a r^{-1}) \  \dk^{\leq 1}\pf,
\eeaa
where we used that $[\OO, \nab_T]\psi= O( a r^{-3})  \dk^{\leq 1} \psi$, see  \cite{GKS2}. Similarly for the left hand side of \eqref{eq:second-q^2}.

 We now consider the right hand side of the equations. We make use of Lemma \ref{lemma:commutators-OO-DD} to obtain for the first equation:
 \beaa
 \OO(\text{RHS of }\eqref{eq:first-q^2}) &=&4Q^2 \frac{\ov{q}^3 }{|q|^3} \big( \ov{\DD}\c \OO( \qf^\F)-3|q|^2\Kh (\ov{\DD}\c \qf^\F)\big)  -4Q^2 \Big[ \frac{\ov{q}^3 }{|q|}i (\atrch\nab_3+\atrchb \nab_4) ( \ov{\DD}\c \qf^\F)\\
 &&-2|q|^2 \nab(\frac{\ov{q}^3 }{|q|^3})\c \nab\left(  \ov{\DD} \c  \qf^\F  \right)- \frac{\ov{q}^3 }{|q|}\DD (|q|^{-2}) \c  \OO( \qf^\F) \Big]+O(ar^{-1}) \  \dk^{\leq 1} \qf^\F+\OO( |q|^2L_\pf[\Bfr, \Ffr]),
 \eeaa
 and for the second equation
 \beaa
\OO(\text{RHS of }\eqref{eq:second-q^2})&=& -   \frac 1 2\frac{q^3}{|q|^3} \Big(  \DD\hot\OO(\pf) -\frac 3 2 \left( H - \Hb\right)  \hot  (\OO\pf) +3 |q|^2 \Kh (\DD\hot \pf)\Big)\\
&& -   \frac 1 2 \Big[ \frac{q^3}{|q|}i(\atrch\nab_3+\atrchb \nab_4) (\DD\hot\pf)+2 |q|^2\nab \Big(\frac{q^3}{|q|^3}\Big) \c \nab \left(  \DD \hot  \pf  \right)+\frac{q^3}{|q|}\DD(|q|^{-2})\hot (\OO\pf) \Big]  \\
&&+O(ar^{-1}) \  \dk^{\leq 1}\pf+ \OO(|q|^2L_{\qf^\F}[\Bfr, \Ffr]).
\eeaa
By combining the above computations, we obtain for $\OO\pf$ and $\OO\qf^\F$ respectively:
\beaa
\begin{split}
 \squared_1(\OO\pf)-i \frac{2a\cos\th}{|q|^2}  \nab_T(\OO \pf )-V_1( \OO \pf )&=4Q^2 \frac{\ov{q}^3 }{|q|^5} \big( \ov{\DD}\c (\OO \qf^\F)-3|q|^2\Kh (\ov{\DD}\c \qf^\F)\big) +\tilde{N}_1[\pf, \qf^\F]\\
 &+O(ar^{-3}) \ \big( \dk^{\leq 1}\pf ,  \dk^{\leq 1} \qf^\F\big)+|q|^{-2}\OO( |q|^2L_\pf[\Bfr, \Ffr])\\
\squared_1(\OO \qf^\F)-i \frac{4a\cos\th}{|q|^2}  \nab_T\OO(  \qf^\F)-V_2 \OO(  \qf^\F )&=  -   \frac 1 2\frac{q^3}{|q|^5} \Big(  \DD\hot (\OO\pf) -\frac 3 2 \left( H - \Hb\right)  \hot  (\OO\pf) +3 |q|^2 \Kh (\DD\hot \pf)\Big)\\
&+ \tilde{N}_2[\pf, \qf^\F] +O(ar^{-3}) \ \big( \dk^{\leq 1}\pf ,  \dk^{\leq 1} \qf^\F\big)+|q|^{-2} \OO(|q|^2L_{\qf^\F}[\Bfr, \Ffr])
\end{split}
  \eeaa
where 
\beaa
\tilde{N}_1[\pf, \qf^\F]&=&  -4Q^2\Big[\frac{\ov{q}^3 }{|q|^3}i (\atrch\nab_3+\atrchb \nab_4) ( \ov{\DD}\c \qf^\F)-2 \nab(\frac{\ov{q}^3 }{|q|^3})\c \nab\left(  \ov{\DD} \c  \qf^\F  \right)\\
&&- \frac{\ov{q}^3 }{|q|^3}\DD (|q|^{-2}) \c  \OO( \qf^\F)  \Big] +4a   i \nab(\cos\th) \c\nab(\nab_T \pf ), \\
\tilde{N}_2[\pf, \qf^\F]&=&-   \frac 1 2  \Big[ \frac{q^3}{|q|^3}i(\atrch\nab_3+\atrchb \nab_4) (\DD\hot\pf)+2 \nab \Big(\frac{q^3}{|q|^3}\Big) \c \nab \left(  \DD \hot  \pf  \right)\\
&&+\frac{q^3}{|q|^3}\DD(|q|^{-2})\hot (\OO\pf)  \Big] +  8a i \nab(\cos\th) \c\nab(\nab_T  \qf^\F).
\eeaa
Observe the terms involving $\Kh$ on the right hand side of both equations. In order to absorb them, we combine the above with the gRW equations commuted with $|q|^2 \Kh=1+O(a^2r^{-2})$, i.e. 
{\small{
\beaa
\squared_1(|q|^2\Kh\pf)-i\frac{  2a\cos\th}{|q|^2}\nab_T(|q|^2\Kh \pf)  -V_1 |q|^2\Kh \pf &=&4Q^2 \frac{\ov{q}^3 }{|q|^5} \left(|q|^2\Kh  \ov{\DD} \c ( \qf^\F ) \right) + |q|^2\Kh L_\pf[\Bfr, \Ffr]\\
&&+ O( a^2 r^{-5})    \ \big( \dk^{\leq 1}\pf ,  \dk^{\leq 1} \qf^\F\big) \\
\squared_2(|q|^2\Kh\qf^\F)-i  \frac{4a\cos\th}{|q|^2} \nab_T(|q|^2\Kh \qf^\F) -V_2 |q|^2\Kh \qf^\F &=&-   \frac 1 2\frac{q^3}{|q|^5} \left( |q|^2\Kh \DD \hot ( \pf ) -\frac 3 2 \left( H - \Hb\right)  \hot (|q|^2\Kh\pf )\right) \\
&&+ |q|^2\Kh L_{\qf^\F}[\Bfr, \Ffr]+ O( a^2 r^{-5})    \ \big( \dk^{\leq 1}\pf ,  \dk^{\leq 1} \qf^\F\big) .
 \eeaa}}
We therefore obtain, for $\widehat{\pf}:=\big(  \OO +(c+3)|q|^2\Kh \big) \pf$ and $\widehat{\qf^\F}:=\big(  \OO +(c)|q|^2\Kh \big) \qf^\F$, combining the above:
{\small{
\beaa
 \squared_1\widehat{\pf}-i  \frac{2a\cos\th}{|q|^2}\nab_T \widehat{\pf}  -V_1  \widehat{\pf}&=&4Q^2 \frac{\ov{q}^3 }{|q|^5} \big( \ov{\DD}\c (\OO \qf^\F+c|q|^2\Kh  \qf^\F)\big) +\tilde{N}_1[\pf, \qf^\F]+O(ar^{-3}) \ \big( \dk^{\leq 1}\pf ,  \dk^{\leq 1} \qf^\F\big)\\
 &&+|q|^{-2}\OO( |q|^2L_\pf[\Bfr, \Ffr])+(c+3)|q|^2\Kh L_\pf[\Bfr, \Ffr]\\
 \squared_2\widehat{\qf^\F}-i  \frac{4a\cos\th}{|q|^2}\nab_T \widehat{\qf^\F} -V_2  \widehat{\qf^\F} &=&  -   \frac 1 2\frac{q^3}{|q|^5} \Big(  \DD\hot \big(\OO\pf+(c+3)|q|^2\Kh\pf\big) -\frac 3 2 \left( H - \Hb\right)  \hot \big(\OO\pf+(c+3)|q|^2\Kh\pf\big) \Big)\\
&&+ \tilde{N}_2[\pf, \qf^\F] +O(ar^{-3}) \ \big( \dk^{\leq 1}\pf ,  \dk^{\leq 1} \qf^\F\big)\\
&&+|q|^{-2} \OO(|q|^2L_{\qf^\F}[\Bfr, \Ffr])+ c|q|^2\Kh L_{\qf^\F}[\Bfr, \Ffr]
\eeaa}}
which proves the proposition.
 \end{proof}
 
   \subsubsection{The other symmetry operators for the gRW system}
  
   The pairs of tensors obtained in Proposition \ref{proposition:modified-OO-system}, i.e.
 \beaa
\big(\widehat{\pf}, \widehat{\qf^\F}\big) =\big(  \OO\pf +(c+3)|q|^2\Kh\pf  ,  \ \ \OO \qf^\F +(c)|q|^2\Kh \qf^\F \big)
 \eeaa
 represents the symmetry operators associated to the Carter operator $\OO$ for the system. In addition to them, one can define the second order operator associated to $T$ and $Z$. 
 
 To maintain clear the difference between the operators applied to the 1-tensor $\pf$ and to the 2-tensor $\qf^\F$, we define the following couples of operators $(\SS, \WW)_{\aund}=(\SS_{\aund}, \WW_{\aund})$ for $\aund=1,2, 3, 4$:
\begin{equation}
  \begin{split}
    \SS_1 &= \nab_T \nab_T\\
    \SS_2 &=a \nab_T \nab_Z\\
    \SS_3&= a^2 \nab_Z \nab_Z\\
    \SS_4&=\OO +(c+3)|q|^2\Kh
  \end{split}
\quad\quad\quad
  \begin{split}
   \WW_1 &= \nab_T \nab_T\\
    \WW_2 &=a \nab_T \nab_Z\\
    \WW_3&= a^2 \nab_Z \nab_Z\\
    \WW_4&=\OO +c\ |q|^2\Kh
  \end{split}
\end{equation}
where $c$ is any number, the $\SS$ operators are applied to the $\pf$ and the $\WW$ operators are applied to $\qf^\F$.

 We also define the corresponding symmetric tensors\footnote{Observe the difference in the presence of $a$ and $a^2$ in the definition of $S_2$ and $S_3$ respectively. This is important in the case of tensors in order to obtain $O(a)$ lower order terms in the curvature.}
\bea
S_1^{\a\b}=T^\a T^\b, \qquad S_2^{\a\b} =aT^{(\a}Z^{\b)}, \qquad S_3^{\a\b}=a^2Z^\a Z^\b, \qquad S_4^{\a\b}=O^{\a\b}.
\eea

With the above definition, from \eqref{definition-RR-tensor}, one can write 
\bea\label{eq:RR-ab-RR-aund-new}
\RR^{\a\b}=  -(r^2+a^2)^2 S_1^{\a\b}-2(r^2+a^2)S_2^{\a\b}-  S_3^{\a\b}+ \Delta O^{\a\b}=:  \RR^\aund S_\aund^{\a\b}, 
\eea
with 
\bea
\RR^1= -(r^2+a^2)^2, \qquad \RR^2=-2(r^2+a^2), \qquad \RR^3=-1, \qquad \RR^4=\Delta.
\eea

We can then relate the symmetry operators $(\SS_\aund, \WW_\aund)$ to the symmetric tensors $S^{\a\b}$.

\begin{lemma}\label{lemma:SS-Db-system} We have for $\aund=1,2,3,4$:
\bea
\SS_\aund&=&|q|^2\Db_\a(|q|^{-2}S_\aund^{\a\b}  \Db_\b)+\de_{\aund4}(c+3)|q|^2\Kh , \\
 \WW_\aund&=&|q|^2\Db_\a(|q|^{-2}S_\aund^{\a\b}  \Db_\b)+ \de_{\aund4} c |q|^2 \Kh
\eea
where $\de_{\aund4}=0$ for $\aund=1,2,3$, and $\de_{44}=1$.
\end{lemma}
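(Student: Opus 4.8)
The plan is to mimic the scalar Lemma \ref{lemma:D-S-SS}, splitting into the cases $\aund=1,2,3$ (built from the Killing fields) and $\aund=4$ (built from the Carter tensor), and to observe that the $\Kh$-terms appearing in the statement are not a computational output but are already built into the definitions of $\SS_4$ and $\WW_4$, so that the only genuine content is an algebraic identity for the modified Laplacian on horizontal tensors.

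For $\aund=1,2,3$ one has $\SS_\aund=\WW_\aund$ and $\de_{\aund4}=0$, with $S_1^{\a\b}=T^\a T^\b$, $S_2^{\a\b}=aT^{(\a}Z^{\b)}$, $S_3^{\a\b}=a^2Z^\a Z^\b$. I would expand $|q|^2\Db_\a\big(|q|^{-2}S_\aund^{\a\b}\Db_\b\psi\big)$ exactly as in Lemma \ref{lemma:D-S-SS}: the factor $|q|^{-2}$ is annihilated because $T(|q|^2)=Z(|q|^2)=0$ (as $|q|^2=r^2+a^2\cos^2\th$ does not depend on $t,\vphi$), and the divergences $\D_\a T^\a$, $\D_\a Z^\a$ vanish since $T$ and $Z$ are Killing. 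What remains is $\nab_T\nab_T\psi$, $\tfrac a2(\nab_T\nab_Z+\nab_Z\nab_T)\psi$ and $a^2\nab_Z\nab_Z\psi$; since $T$ and $Z$ are commuting Killing fields that preserve the principal null frame, $\nab_T$ and $\nab_Z$ commute on horizontal tensors, so the middle expression equals $a\nab_T\nab_Z\psi=\SS_2\psi$. This establishes the claim for $\aund=1,2,3$.

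For $\aund=4$, the definitions $\SS_4=\OO+(c+3)|q|^2\Kh$ and $\WW_4=\OO+c\,|q|^2\Kh$ reduce both identities to the single statement
\[
\OO = |q|^2\Db_\a\big(|q|^{-2}O^{\a\b}\Db_\b\big) \quad \text{as operators on } \sk_k,
\]
with $\OO(\psi)=|q|^2\big(\lap_k\psi+(\eta+\etab)\cdot\nab\psi\big)$ as in Section \ref{section:modified-laplacian}. To prove this I would repeat the frame computation from the end of the proof of Lemma \ref{lemma:D-S-SS}: write $|q|^{-2}O^{\a\b}\Db_\b\psi=\de^{ab}e_a^\a\,\nab_b\psi$ from \eqref{eq:first-definition-O}, and compute the (projected) spacetime divergence of the $\sk_k$-valued vector field $W=\de^{ab}e_a\otimes\nab_b\psi$. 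The Ricci formulae \eqref{conn-coeff} give $\D_ae_b=\nab_ae_b+\tfrac12\chi_{ab}e_3+\tfrac12\chib_{ab}e_4$, $\D_3e_b=\nab_3e_b+\eta_be_3+\xib_be_4$, $\D_4e_b=\nab_4e_b+\etab_be_4+\xi_be_3$, hence $\D_\a e_b^\a=\de^{cd}\g(\nab_ce_b,e_d)+\eta_b+\etab_b$; contracting the $\eta_b+\etab_b$ part with $\nab_b\psi$ produces exactly $(\eta+\etab)\cdot\nab\psi$. The remaining terms, namely $\de^{ab}e_a^\a\,{}^{(h)}\!\big(\D_\a(\nab_b\psi)\big)$ together with $\de^{cd}\g(\nab_ce_b,e_d)\nab_b\psi$, reorganize — using $\nab_{e_1}e_1=\nab_{e_1}e_2=0$, $\nab_{e_2}e_1=\La e_2$, $\nab_{e_2}e_2=-\La e_1$ from \eqref{eq:expressions-nab11} — into $\de^{ab}\nab_a\nab_b\psi=\lap_k\psi$; note that no curvature term is generated, because the indices are contracted against $\de^{ab}$ rather than antisymmetrized, so the Gauss equation \eqref{Gauss-eq-real-tensors} is never invoked. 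Multiplying by $|q|^2$ yields $\OO(\psi)$, and adding back $(c+3)|q|^2\Kh\psi$, resp. $c\,|q|^2\Kh\psi$, gives $\SS_4$, resp. $\WW_4$.

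The main obstacle is interpretive rather than computational: one must fix the convention that the outer contracted derivative $\Db_\a(\,\cdot\,)^\a$ is the full projected spacetime divergence, so that the $e_3$- and $e_4$-components of $\D e_b$, which carry the $\eta_b+\etab_b$ data, are retained — this is precisely what makes the scalar identity of Lemma \ref{lemma:D-S-SS} persist verbatim in the tensorial setting and why no extra $\Kh$-curvature is produced beyond the one already encoded in the definitions of $\SS_4$ and $\WW_4$. A secondary point to check carefully is the commutation $\nab_T\nab_Z=\nab_Z\nab_T$ on $\sk_k$, which follows from $T$ and $Z$ being commuting Killing fields that preserve the horizontal structure.
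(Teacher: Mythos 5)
Your proposal is correct, and it reaches the same two-step structure as the paper's (very terse) proof: for $\aund=1,2,3$ one repeats the scalar argument of Lemma \ref{lemma:D-S-SS}, and for $\aund=4$ everything reduces to the identity $|q|^2\Db_\a(|q|^{-2}O^{\a\b}\Db_\b)=\OO$ on $\sk_k$, after which the $\Kh$-terms are immediate from the definitions of $\SS_4$ and $\WW_4$. The difference is in how that identity is obtained: the paper simply asserts it (its scalar counterpart was derived in Lemma \ref{lemma:D-S-DS} via the Carter tensor and the relation $\OO=\KK+(a^2\cos^2\th)\square_\g$, a route that does not transfer verbatim to horizontal tensors), whereas you prove the tensorial version directly by computing the projected divergence of $\de^{ab}e_a\otimes\nab_b\psi$ with the Ricci formulae \eqref{conn-coeff}. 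That computation is right: $\D_\mu e_b^\mu=\de^{cd}\g(\nab_c e_b,e_d)+\eta_b+\etab_b$, the $\La$-contributions from $\de^{cd}\g(\nab_c e_b,e_d)$ and from $\nab_{\nab_a e_a}\psi$ cancel so that the second-order part is exactly $\de^{ab}\nab_a\nab_b\psi=\lap_k\psi$, and no Gauss-curvature term appears because the $\de^{ab}$ contraction is symmetric, so \eqref{Gauss-eq-real-tensors} is never needed; multiplying by $|q|^2$ gives $\OO$. You also correctly isolate the two points the paper's "as in Lemma \ref{lemma:D-S-SS}" glosses over in the tensorial setting: the interpretation of the outer $\Db_\a$ as the full (projected) spacetime divergence, which is what produces $(\eta+\etab)\c\nab\psi$, and the commutation $\nab_T\nab_Z=\nab_Z\nab_T$ on $\sk_k$ needed to identify the symmetrized expression coming from $S_2^{\a\b}=aT^{(\a}Z^{\b)}$ with $\SS_2=a\nab_T\nab_Z$; the latter holds because $T,Z$ commute with the frame \eqref{eq:Out.PGdirections-Kerr}--\eqref{eq:Out.PGdirections-Kerr-12}, the connection coefficients $\g(\D_T e_a,e_b)$, $\g(\D_Z e_a,e_b)$ are $t,\vphi$-independent and antisymmetric in $a,b$, and is consistent with \eqref{eq:commutator-SS1-SS2-SS3}. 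So your argument is a self-contained filling-in of the paper's proof, slightly more general in that it treats scalars and tensors uniformly without invoking the Carter tensor.
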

\begin{proof} For $\SS_\aund$, $\ZZ_\aund$ with $\aund=1,2,3$ it is proved as in Lemma \ref{lemma:D-S-SS}. Using that $|q|^2\Db_\a(|q|^{-2}O^{\a\b}  \Db_\b)=\OO$, we obtain the expression for $\SS_4$ and $\WW_4$ from their definition. 
\end{proof}

As in the case of $\OO$ in \eqref{eq-commutator-OO-q^2square}, in the case of tensors, the commutation between the operators $\SS_\aund$ and $\squared_k$ gives rise to terms involving the curvature, which are all $O(a)$ lower order terms, more precisely, see \cite{GKS2}:
  \beaa
\, [\SS_1, \squared_2]\psi&=&  O(a m r^{-4}) \dual (\nab_{(T}\nab_{1)})  \psi+O(a m r^{-4}) \dual (\nab_{(r}\nab_{T)}) \psi+O(a m r^{-5}) \dk^{\leq 1} \psi\\
\, [\SS_2, \squared_2]\psi&=&  O(am r^{-2}) \dual (\nab_{(T}\nab_{1)}) \psi+O(a^3 r^{-4}) \dual (\nab_{(T}\nab_{r)} ) \psi\\
&&+O(a^2 m r^{-4}) \dual (\nab_{(Z}\nab_{1)})  \psi+O(a^2 m r^{-4}) \dual (\nab_{(Z}\nab_{r)}) \psi+O(amr^{-3})\dk^{\leq 1} \psi.\\
\, [\SS_3, \squared_2]\psi&=&O(a^2m r^{-2}) \dual (\nab_Z\nab_1+\nab_1\nab_Z) \psi+O(a^4 r^{-4}) \dual(\nab_{(Z}\nab_{r)}) \psi+O(a^2mr^{-3})\dk^{\leq 1} \psi. 
\eeaa

As in \eqref{eq:def-psia}, we define
\bea
\pfa:= \SS_{\aund}(\pf), \qquad \qfa=\WW_{\aund}(\qf^\F), \qquad \aund=1,2, 3, 4.
\eea
Then, from Proposition \ref{proposition:modified-OO-system} and the above commutators, the couple of tensors $(\pfa, \qfa)$ for $\aund=1,2,3,4$ satisfies the gRW system, with possibly corrections in $O(a)$ lower order terms from the above commutators.

 \subsection{The positivity of the principal terms in the Morawetz estimates for the commuted system}
 
 We now describe how to apply the Andersson-Blue method to obtain the Morawetz estimates for the gRW system of equations for tensors. For its application to the gRW equation in Kerr see \cite{GKS2}.

From the combined energy-momentum tensor for the system given in Section \ref{section:combined-energy-momentum-tensor}, i.e. 
 \beaa
\QQ[\pf, \qf^\F]_{\mu\nu}&:=& \QQ[\pf]_{\mu\nu}+8Q^2 \QQ[\qf^\F]_{\mu\nu},
\eeaa
we define, as for the scalar wave equation in Section \ref{section:generalized-vectorfield-method}, the generalized energy-momentum tensor for the gRW system:
\bea
\QQ_{\aund\bund\mu\nu}[\pf, \qf^\F]&:=& \frac 1 4 \big(\QQ_{\mu\nu}[\pfa+\pfb, \qfa+\qfb]-\QQ_{\mu\nu}[\pfa-\pfb, \qfa-\qfb]\big)
\eea
and the generalized current:
\beaa
 \PP_\mu^{(\mathbf{X}, \mathbf{w})}[\pf, \qf^\F]&:=&\QQ_{\aund \bund \mu\nu}[\pf, \qf^\F] X^{\aund\bund\nu} +\frac 1 2  w^{\aund\bund} \big( \pfa \c \Db_\mu \pfb +8 \qfa \c \Db_\mu \qfb  \big) -\frac 1 4(\Db_\mu w^{\aund \bund} )\big( \pfa  \c \pfb +8\qfa  \c \qfb\big)\\
&=& \PP_\mu^{(\mathbf{X}, \mathbf{w})}[\pf]+8 Q^2 \PP_\mu^{(\mathbf{X}, \mathbf{w})}[ \qf^\F].
\eeaa
Its divergence \eqref{eq:def-GG} is given by, see \eqref{eq:GG-intermediate},
\beaa
\GG^{(\mathbf{X}, \mathbf{w})}[\pf, \qf^\F] &=& \D^\mu  \PP_\mu^{(\mathbf{X}, \mathbf{w})}[\pf]+ 8Q^2\D^\mu  \PP_\mu^{(\mathbf{X}, \mathbf{w})}[\qf^\F]\\
&=& \EE^{(\mathbf{X}, \mathbf{w})}[\pf]+ 8Q^2\EE^{(\mathbf{X}, \mathbf{w})}[\qf^\F] \\
   && - \frac{2a\cos\th}{|q|^2}\Im\Big(  \big(X^{\aund\bund}( \overline{\pfa} )+\frac 1 2   w \overline{\pfa}\big)\c \nab_T \pfb+ 16Q^2 \big(X^{\aund\bund}( \overline{\qfa} )+\frac 1 2   w \overline{\qfa}\big)\c \nab_T \qfb \Big)+\lot+\mbox{Bdr}
\eeaa
where
\beaa
  \EE^{(\mathbf{X}, \mathbf{w})}[\pf]&=& \frac 1 2 \QQ[\pf]_{\aund\bund}  \c \D_{(\mu} X^{\aund\bund}_{\nu)}- \frac 1 2 X^{\aund\bund}( V_1 ) \pfa \c \pfb -\frac 1 4 \square_\g w^{\aund\bund} \pfa \c \pfb+\frac 12  w^{\aund\bund} \LL_1[\pfa, \pfb],\\
 \EE^{(\mathbf{X}, \mathbf{w})}[\qf^\F]&=& \frac 1 2 \QQ[\qf^\F]_{\aund\bund}  \c \D_{(\mu} X^{\aund\bund}_{\nu)}- \frac 1 2 X^{\aund\bund}( V_2 ) \qfa \c \qfb -\frac 1 4 \square_\g w^{\aund\bund} \qfa \c \qfb+\frac 12  w^{\aund\bund} \LL_1[\qfa, \qfb]
\eeaa
with
\beaa
      \LL_1[\pfa, \pfb]&=&\g^{\a\b} \Db_\a \pfa \c \Db_\b  \pfb+ V_1\pfa \c \pfb\\
        \LL_2[\qfa, \qfb]&=&\g^{\a\b} \Db_\a \qfa \c \Db_\b  \qfb+ V_2\qfa \c \qfb.
\eeaa

Through a similar derivation, we obtain the corresponding of Proposition \ref{proposition:Morawetz1}.

  \begin{proposition}   
       Let $z$ be a given function of $r$.   Let  $u^{\aund\bund}$  be a given double-indexed function of $r$. Then for
\bea
X^{\aund\bund} =\FF^{\aund\bund} \pr_r, \qquad \quad \FF^{\aund\bund}= z u^{\aund\bund}, \qquad \quad w^{\aund\bund}=z \partial_r u^{\aund\bund} ,
\eea
we obtain
\beaa
  \EE^{(\mathbf{X}, \mathbf{w})}[\pf]&=&\AA^{\aund\bund} \, \nab_r\pfa \nab_r\pfb + \UU^{\a\b\aund\bund} \, \Db_\a \pfa \, \D_\b \pfb  +\VV_1^{\aund\bund} \, \pfa \c \pfb,\\
 \EE^{(\mathbf{X}, \mathbf{w})}[\qf^\F]&=&\AA^{\aund\bund} \, \nab_r\qfa \nab_r\qfb + \UU^{\a\b\aund\bund} \, \Db_\a \qfa \, \D_\b \qfb  +\VV_1^{\aund\bund} \, \qfa \c \qfb
\eeaa
   where
\bea
  \AA^{\aund\bund}&=&z^{1/2}\Delta^{3/2} \partial_r\left( \frac{ z^{1/2}  u^{\aund\bund} }{\Delta^{1/2}}  \right) ,
 \\
  \UU^{\a\b\aund\bund}&=& -  \frac{ 1}{2}  u^{\aund\bund} \pr_r\left( \frac z \De\RR^{\a\b}\right),\\
   \VV_{1,2}^{\aund\bund}&=&-\frac 1 4 \pr_r \Big(\De \pr_r w^{\aund\bund} \Big)-\frac 1 2  \Big( X^{\aund\bund}\big(|q|^2 V_{1,2}\big)    + |q|^2  w^{\aund\bund}_{red} V_{1,2} \Big)\\
   &=& -\frac 1 4 \pr_r\big(\De \pr_r \big(
 z \pr_ru^{\aund\bund}  \big)  \big) -\frac 1 2  \Big( X^{\aund\bund}\big(|q|^2 V_{1,2}\big)    + |q|^2  w^{\aund\bund}_{red} V_{1,2} \Big).
\eea
where $w^{\aund\bund}_{int}=u^{\aund\bund} \pr_r z $.

        \end{proposition}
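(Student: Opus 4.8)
The plan is to carry out the computation of Proposition~\ref{proposition:Morawetz1} in the tensorial setting, using the energy-momentum tensor and current for complex horizontal tensors recalled in Section~\ref{section:combined-energy-momentum-tensor} in place of the scalar ones, with the complexified quadratic forms $\Db_\lambda\psi\c\Db^\lambda\ov\psi$ and $\psi\c\ov\psi$ replacing $\pr_\lambda\psi\,\pr^\lambda\psi$ and $\psi^2$. By the bilinearity of $\QQ$, $\PP$ and $\LL$ in their tensor arguments, together with the polarization definition of $\QQ_{\aund\bund\mu\nu}[\pf,\qf^\F]$ recalled before the statement, it suffices to compute $\EE^{(X,w)}[\psi]$ for a single horizontal tensor $\psi$ with potential $V$ (taking $V=V_1$, $\psi=\pf$, and $V=V_2$, $\psi=\qf^\F$) and then read off the double-indexed coefficients by the substitution $\psi=\pfa$, $\ov\psi=\ov\pfb$ (resp.\ $\qfa$, $\ov\qfb$) followed by symmetrization in $\aund,\bund$.

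First I would record that for $X=\FF(r)\pr_r$ the deformation tensor $\piX$ is exactly the one already computed in the scalar case, since it depends only on the metric; hence, using $\QQ[\psi]\c\piX=\piX^{\a\b}\Re(\Db_\a\psi\c\Db_\b\ov\psi)-(\div X)\LL[\psi]$ and $\g_{\mu\nu}\piX^{\mu\nu}=2\div X$, the identity \eqref{eq:PPpi(X)} carries over verbatim with $|\pr_r\psi|^2$ read as $\nab_r\psi\c\nab_r\ov\psi$ and $\pr_\lambda\psi\,\pr^\lambda\psi$ as $\Db_\lambda\psi\c\Db^\lambda\ov\psi$. Then, exactly as in the scalar proof, I would start from $\EE^{(X,w)}[\psi]=\tfrac12\QQ[\psi]\c\piX-\tfrac12 X(V)|\psi|^2+\tfrac12 w\LL[\psi]-\tfrac14\square_\g w|\psi|^2$, multiply by $|q|^2$, substitute $|q|^2\g^{\a\b}=\De\pr_r^\a\pr_r^\b+\tfrac1\De\RR^{\a\b}$ from \eqref{inverse-metric-Kerr}, and introduce the intermediate function through $\tfrac12\big(X(|q|^2)-|q|^2\div X+|q|^2 w\big)=:-\tfrac12|q|^2 w_{int}$. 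The kinetic part then reorganizes as before: the choices $\FF^{\aund\bund}=zu^{\aund\bund}$ and $w_{int}^{\aund\bund}=u^{\aund\bund}\pr_r z$ cancel the coefficient of $\RR^{\a\b}/\De$, force $w^{\aund\bund}=z\,\pr_r u^{\aund\bund}$, and produce $\AA^{\aund\bund}$ and $\UU^{\a\b\aund\bund}$ literally as in \eqref{eq:coefficients-commuted-eq-AA}--\eqref{eq:coefficients-commuted-eq-UU}, while $-\tfrac14\square_\g w^{\aund\bund}$ yields the $-\tfrac14\pr_r(\De\pr_r w^{\aund\bund})$ piece of $\VV_{1,2}^{\aund\bund}$ as in \eqref{eq:coeeficientsUUAAVV3}.

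The only genuinely new contribution is the potential, which enters through the $V\psi\c\ov\psi$ term inside $\LL[\psi]$ together with the explicit $-\tfrac12 X(V)|\psi|^2$. Gathering the three zero-th order terms $-\tfrac12 X^{\aund\bund}(V_{1,2})$, $+\tfrac12 w^{\aund\bund}V_{1,2}$ (from $\tfrac12 w\LL$) and $-\tfrac12(\div X^{\aund\bund})V_{1,2}$ (from $-\tfrac12(\div X)\LL$ inside $\tfrac12\QQ\c\piX$), multiplying by $|q|^2$ and using $w^{\aund\bund}-\div X^{\aund\bund}=-|q|^{-2}X^{\aund\bund}(|q|^2)-w_{int}^{\aund\bund}$, these collapse to $-\tfrac12\big(X^{\aund\bund}(|q|^2 V_{1,2})+|q|^2 w_{int}^{\aund\bund}V_{1,2}\big)$, which together with the $-\tfrac14\pr_r(\De\pr_r w^{\aund\bund})$ term above gives the claimed $\VV_{1,2}^{\aund\bund}$. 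I do not expect a serious obstacle: the whole argument is a bookkeeping extension of Proposition~\ref{proposition:Morawetz1}. The two points needing care are (i) verifying that passing from scalars to horizontal tensors introduces no extra curvature into $\piX$ or into the algebraic manipulations — which holds because $z$, $u^{\aund\bund}$ and the tensors $S_\aund^{\a\b}$ contain no $\pr_r$ and so commute freely past $\pr_r$, and the deformation tensor is curvature-free — and (ii) tracking the conformal factor so that $|q|^2 V_{1,2}$, rather than $V_{1,2}$, appears in $\VV_{1,2}^{\aund\bund}$.
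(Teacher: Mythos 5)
Your proposal is correct and follows essentially the route the paper intends: the paper offers no separate proof, asserting only that the result follows ``through a similar derivation'' to Proposition \ref{proposition:Morawetz1}, and your argument supplies precisely that — the kinetic part carries over verbatim because $\piX$ is metric-only and the choices $\FF^{\aund\bund}=zu^{\aund\bund}$, $w^{\aund\bund}_{int}=u^{\aund\bund}\pr_rz$ cancel the $\RR^{\a\b}/\De$ coefficient as before, while your collection of the three zeroth-order potential contributions into $-\tfrac12\big(X^{\aund\bund}(|q|^2V_{1,2})+|q|^2w^{\aund\bund}_{int}V_{1,2}\big)$ via $w^{\aund\bund}-\div X^{\aund\bund}=-|q|^{-2}X^{\aund\bund}(|q|^2)-w^{\aund\bund}_{int}$ is exactly the new bookkeeping needed. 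The only cosmetic remark is that, as in the scalar case, the identity you obtain is for $|q|^2\,\EE^{(\mathbf{X},\mathbf{w})}$, which is consistent with the coefficients as stated (the conformal factor is implicit on the left-hand side of the proposition, and the paper's $w^{\aund\bund}_{red}$ is its $w^{\aund\bund}_{int}$).
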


Recall from Section \ref{section:Morawetz-estimates} that the crucial step in deriving Morawetz estimates for the commuted system was to perform an integration by parts  in the principal term $\UU^{\a\b\aund\bund} \, \pr_\a \psia \, \pr_\b \psib$, which allows to create a positive definite term for a trapped combination of $\psia$, denoted $\Psi$. We now show how this property can be extended to the gRW system and its combined energy-momentum tensor. 

We write the principal terms above as 
\beaa
\UU^{\a\b\aund\bund} \, \Db_\a \pfa \, \D_\b \pfb&=&  -  \frac{ 1}{2}  u^{\aund\bund} \RRtp'^{\a\b} \Db_\a \pfa \, \D_\b \pfb= -  \frac{ 1}{2}  u^{\aund\bund}\RRtp'^{\cund} S_{\cund}^{\a\b} \Db_\a \pfa \, \D_\b \pfb,\\
\UU^{\a\b\aund\bund} \, \Db_\a \qfa \, \D_\b \qfb&=&  -  \frac{ 1}{2}  u^{\aund\bund} \RRtp'^{\a\b} \Db_\a \qfa \, \D_\b \qfb= -  \frac{ 1}{2}  u^{\aund\bund}\RRtp'^{\cund} S_{\cund}^{\a\b} \Db_\a \qfa \, \D_\b \qfb.
\eeaa
By performing the integration by parts in $\Db_\a$, we obtain
  \beaa
  |q|^{-2} \UU^{\a\b\aund\bund} \, \Db_\a \pfa \, \D_\b \pfb    &=&-  \frac{ 1}{2} \D_\a\big( |q|^{-2} u^{\aund\bund} \RRtp'^{\cund} S_{\cund}^{\a\b} \pfa \, \Db_\b \pfb\big)+ \frac{ 1}{2}  u^{\aund\bund} \RRtp'^{\cund}  \pfa \, \Db_\a\big( |q|^{-2}S_{\cund}^{\a\b} \Db_\b \pfb\big).
  \eeaa
Now consider $\Db_\a\big( |q|^{-2}S_{\cund}^{\a\b} \Db_\b \pfb\big)$. Recall Lemma \ref{lemma:SS-Db-system}, i.e.
\beaa
\SS_\aund&=&|q|^2\Db_\a(|q|^{-2}S_\aund^{\a\b}  \Db_\b)+\de_{\aund4}(c+3)|q|^2\Kh
\eeaa
Therefore we can write
\beaa
 \Db_\a(|q|^{-2} \Sc^{\a\b}   \, \Db_\b \pfb)&=& \Db_\a(|q|^{-2} \Sc^{\a\b}   \, \Db_\b \SS_\bund\pf)\\
 &=&|q|^{-2}\SS_\cund \SS_\bund\pf- |q|^{-2}(\de_{\cund4}(c+3)|q|^2\Kh\SS_\bund \pf)\\
 &=&|q|^{-2}\SS_\bund \SS_\cund\pf+|q|^{-2}[\SS_{\cund}, \SS_{\bund}]\pf- |q|^{-2}(\de_{\cund4}(c+3)|q|^2\Kh\SS_\bund \pf)\\
 &=& \Db_\a(|q|^{-2} \Sb^{\a\b}   \, \Db_\b \pfc)+|q|^{-2}[\SS_{\cund}, \SS_{\bund}]\psi\\
 &&+ |q|^{-2}(\de_{\bund4}(c+3)|q|^2\Kh\SS_\cund \pf)- |q|^{-2}(\de_{\cund4}(c+3)|q|^2\Kh\SS_\bund \pf).
\eeaa
Thus, repeating the integration by parts procedure, and for  $u^{\aund\bund}=-h \RRtp'^{\aund} \LL^{\bund}$ and recalling that $  \LL^{\bund}   \Sb^{\a\b}=L^{\a\b}  $, we obtain
\beaa
  \UU^{\a\b\aund\bund} \, \Db_\a \pfa \, \D_\b \pfb&=&\frac 1 2  h  L^{\a\b}  \Db_\a (\RRtp'^{\aund} \pfa )    \Db_\b(\RRtp'^{\cund} \pfc) -\frac 1 2  h \RRtp'^{\cund}  \LL^{\bund}(\RRtp^{\aund}  \pfa )\c [\SS_{\cund}, \SS_{\bund}]\pf\\
&& +|q|^2\Db_\a \left(|q|^{-2} \frac 1 2  h  \RRtp'^{\cund}  \LL^{\bund} (\RRtp^{\aund} \pfa )\left(\Sc^{\a\b}   \, \Db_\b \pfb-  \Sb^{\a\b}   \Db_\b \pfc \right) \right ) \\
&&- \frac{ (c+3)}{2} h|q|^2\Kh (\RRtp'^{\aund} \pfa) \LL^{\bund} \RRtp'^{\cund}  \, \big( \de_{\bund4} \pfc-\de_{\cund4} \pfb \big), 
\eeaa
and similarly for $\qf^\F$.
By denoting $\Phi:=\RRtp'^{\aund} \pfa$ and $\Psi:=\RRtp'^{\aund} \qfa$, we respectively obtain
\bea
  \UU^{\a\b\aund\bund} \, \Db_\a \pfa \, \D_\b \pfb&=&\frac 1 2  h  L^{\a\b}  \Db_\a \Phi \Db_\b\Phi  -\frac 1 2  h\Phi \c \RRtp'^{\cund}  \LL^{\bund}[\SS_{\cund}, \SS_{\bund}]\pf -|q|^2\D_\a \BB^\a[\pf]\nonumber \\
&&- \frac{ (c+3)}{2}\big( h|q|^2\Kh\big) \ \Phi \c  \LL^{\bund} \RRtp'^{\cund}  \, \big( \de_{\bund4} \pfc-\de_{\cund4} \pfb \big), \label{eq:principal-term-p=}\\
  \UU^{\a\b\aund\bund} \, \Db_\a \qfa \, \D_\b \qfb&=&\frac 1 2  h  L^{\a\b}  \Db_\a \Psi \Db_\b\Psi  -\frac 1 2  h\Psi \c  \RRtp'^{\cund}  \LL^{\bund}[\WW_{\cund}, \WW_{\bund}]\qf^\F -|q|^2\D_\a \BB^\a[\qf^\F]\nonumber \\
&&- \frac{ c}{2} \big( h|q|^2\Kh\big) \ \Psi \c \LL^{\bund} \RRtp'^{\cund}  \, \big( \de_{\bund4} \qfc-\de_{\cund4} \qfb \big).\label{eq:principal-term-q=}
\eea
For the second line in each of the above expressions, we write 
\beaa
 \LL^{\bund} \RRtp'^{\cund}  \, \big( \de_{\bund4} \pfc-\de_{\cund4} \pfb \big)&=&     \LL^{\bund} \de_{\bund4} (\RRtp'^{\cund}\pfc)- \RRtp'^{\cund}\de_{\cund4} (\LL^{\bund}\pfb) =   \LL^{4}  \Phi- \RRtp'^{4}( \LL^{\bund}\pfb)\\
 \LL^{\bund} \RRtp'^{\cund}  \, \big( \de_{\bund4} \qfc-\de_{\cund4} \qfb \big)&=& \LL^{\bund}   \de_{\bund4}(\RRtp'^{\cund}  \qfc)- \RRtp'^{\cund} \de_{\cund4}(\LL^{\bund} \qfb)=\LL^{4}  \Psi- \RRtp'^{4}( \LL^{\bund}\qfb).
\eeaa

Now observe that with the choice $z=z_1$ as in \eqref{eq:definition-z1}, we have, see \eqref{eq:values-RRtp'z1},  $\RRtp'^4[z_1] =   - \frac{2\TT}{ (r^2+a^2)^3}\big( 1+O(\ep r^{-2})\big)$, and with the choice of $\LL^\bund$ as in \eqref{eq:choice-L}, we write, using \eqref{eq:Psi-choice-z}:
\beaa
 \LL^{\bund} \RRtp'^{\cund}  \, \big( \de_{\bund4} \pfc-\de_{\cund4} \pfb \big)&=&\LL^{4}  \Phi- \RRtp'^{4}( \LL^{\bund}\pfb)\\
 &=& \Phi - \frac{2\TT}{ (r^2+a^2)^3}\big( 1+O(\ep r^{-2})\big)(\ep \SS_1\psi +\OO\psi)\\
&=&\Phi - \big( \Phi -   \frac{4ar}{(r^2+a^2)^2} \big(1+O(\ep r^{-2}) \big)  \nab_{\That} \nab_Z\pf\big)=  \frac{4ar}{(r^2+a^2)^2} \big(1+O(\ep r^{-2}) \big)  \nab_{\That} \nab_Z\pf, \\
 \LL^{\bund} \RRtp'^{\cund}  \, \big( \de_{\bund4} \qfc-\de_{\cund4} \qfb \big)&=&    \frac{4ar}{(r^2+a^2)^2} \big(1+O(\ep r^{-2}) \big)  \nab_{\That} \nab_Z\qf^\F,
\eeaa
Also, using that, see \cite{GKS2},
\bea\label{eq:commutator-SS1-SS2-SS3}
\, [\SS_1, \SS_2]\psi=[\SS_1, \SS_3]\psi=[\SS_2, \SS_3]\psi=0
\eea
and 
\beaa
\, [\SS_4, \SS_1]\psi&=& O(ma r^{-2} )\nab_1 \nab_T \psi + O(ma r^{-4}) \dk^{\leq 1}\psi \\
\, [\SS_4, \SS_2]\psi&=& O(m a )\nab_1\nab_T \psi + O(m a^2 r^{-2}) \nab_1 \nab_Z \psi + O(m a^2 r^{-2}) \dk^{\leq 1} \psi \\
\, [\SS_4,\SS_3]\psi&=&  O(ma^2 ) \nab_1\nab_Z \psi+ O(ma^3 r^{-2}) \dk^{\leq 1}\psi.
\eeaa
and similarly for $\WW_{\aund}$, we can write for the commutator terms:
 \beaa
 -\frac 1 2  h\Phi \c \RRtp'^{\cund}  \LL^{\bund}[\SS_{\cund}, \SS_{\bund}]\pf &=&-\frac 1 2  h\Phi \Big( \big(\RRtp'^{1}  \LL^{4}- \RRtp'^{4}  \LL^{1}\big)[\SS_{1}, \SS_4]\pf+\RRtp'^{2}  \LL^{4}[\SS_{2}, \SS_4]\pf+ \big( \RRtp'^{3}  \LL^{4}-  \RRtp'^{4}  \LL^{3}\big) [\SS_{3}, \SS_4]\pf\Big)\\
    &=&-\frac 1 2  h\Phi \c \Big( O(a r^{-3} )\nab_1 \nab_T \pf+ O( a^2 r^{-3}) \nab_1 \nab_Z \pf + O( a^2 r^{-5}) \dk^{\leq 1} \pf \Big),\\
     -\frac 1 2  h\Psi \c \RRtp'^{\cund}  \LL^{\bund}[\SS_{\cund}, \SS_{\bund}]\qf^\F     &=&-\frac 1 2  h\Psi\c \Big( O(a r^{-3} )\nab_1 \nab_T \qf^\F+ O( a^2 r^{-3}) \nab_1 \nab_Z \qf^\F + O( a^2 r^{-5}) \dk^{\leq 1} \qf^\F \Big).
 \eeaa

We conclude from \eqref{eq:principal-term-p=} and \eqref{eq:principal-term-q=}, that the principal terms can be written as 
\beaa
  \UU^{\a\b\aund\bund} \, \Db_\a \pfa \, \D_\b \pfb&=&\frac 1 2  h  L^{\a\b}  \Db_\a \Phi \Db_\b\Phi   -|q|^2\D_\a \BB^\a[\pf]\\
&&-\frac 1 2  h\Phi \c \Big(O(ar^{-3}) \nab_{\That}\nab_Z \pf+ O(a r^{-3} )\nab_1 \nab_T \pf+ O( a^2 r^{-3}) \nab_1 \nab_Z \pf + O( a^2 r^{-5}) \dk^{\leq 1} \pf \Big), \\
  \UU^{\a\b\aund\bund} \, \Db_\a \qfa \, \D_\b \qfb&=&\frac 1 2  h  L^{\a\b}  \Db_\a \Psi \Db_\b\Psi   -|q|^2\D_\a \BB^\a[\qf^\F] \\
&&-\frac 1 2  h\Psi \c \Big(O(ar^{-3}) \nab_{\That}\nab_Z\qf^\F+ O(a r^{-3} )\nab_1 \nab_T \qf^\F+ O( a^2 r^{-3}) \nab_1 \nab_Z \qf^\F + O( a^2 r^{-5}) \dk^{\leq 1} \qf^\F \Big).
\eeaa
In both cases, the last line has the same structure as the one in Kerr, and can easily be bounded by Cauchy-Schwarz by $O(a) \big( |\Phi|^2+|\pf|^2_{\SS}+|\Psi|^2+|\qf^\F|^2_{\SS}\big)$, where all the terms appear without degeneracy in the trapping region in the Morawetz bulk. They therefore can be absorbed by the Morawetz bulk for very small angular momentum. 

The analysis of those lower order terms can be treated as in the case of Kerr, see \cite{GKS2}. From the analysis of the coupled system in Reissner-Nordstr\"om \cite{Giorgi7a}, where the quadratic form is shown to be positive for $a=0$, we deduce it to be positive for very small $|a|/M$. This ends the description of how to apply Theorem \ref{theo:Carter-operator-commutes-KN} to obtain physical-space analysis of the generalized Regge-Wheeler equations for electromagnetic-gravitational perturbations of Kerr-Newman. 

The full derivation of the Morawetz-Energy estimates for the gRW system in Kerr-Newman will follow the Morawetz-Energy estimates for the gRW equation in Kerr in \cite{GKS2} and will appear in a future work.
 
 \small
 
 \appendix
 
 \section{Proof of Proposition \ref{commutation-KK-square}}\label{proof-prop-commutators}

First of all, observe that as a consequence of the definition \eqref{Killing-equation}, contracting $\D_{\mu} K_{\nu\rho}+\D_{\nu} K_{\rho\mu}+\D_{\rho} K_{\mu\nu} = 0$ with $\g^{\mu\nu}$, one immediately derives
\bea\label{eq:divergence-K}
\D_\mu K^{\mu\nu}=-\frac 1 2 \D^\nu (\tr K),
\eea
where $\tr K=\g^{\mu\nu} K_{\mu\nu}$ is the trace of $K$.
By applying another derivative to the above and antisymmetrizing, one also obtains
\bea\label{eq:property-Killing-tensor}
\D^\mu\D_\a K^{\a\nu}-\D^\nu\D_\a K^{\a\mu}&=&-\frac 1 2 \D^\mu \D^\nu (\tr K)+\frac 1 2 \D^\nu \D^\mu (\tr K)=0.
\eea

 To obtain the commutator $[\KK, \square_\g]$, we first collect the following preliminary computations.

\begin{lemma}\label{general-commutator} The following commutation formulas hold for a scalar $\phi$, a 1-tensor $X$ and a 2-tensor $\Psi$ in $\MM$:
\beaa
\, [\D_\mu, \square_\g]\phi&=& - \R_{\mu\a} \D^\a \phi, \\
 \, [\D_\mu, \square_\g ]X^\mu &=&\R_{\mu\a} \D^\mu X^\a + \left(\D_\a \R -\D^\mu \R_{\mu\a} \right) X^\a,  \\
  \, [\D_\mu, \square_\g]{\Psi^{\mu}}_{\de}&=&2 \R_{\mu \a \de \ep} \D^\a \Psi^{\mu\ep} + \R_{\a\ep} \D^\a {\Psi^{\ep}}_{\de} + \D^\a{ \R_{\a\ep} } {\Psi^{\ep}}_{\de}+(\D_\epsilon \R_{\delta\mu}-\D_\delta \R_{\mu\epsilon})\Psi^{\mu\ep},
\eeaa
where $\R$ denotes the Riemann curvature, the Ricci curvature or the scalar curvature depending if it appears as a 4-tensor, a 2-tensor or a scalar respectively. 
\end{lemma}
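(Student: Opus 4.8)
The final statement to prove is Lemma~\ref{general-commutator}, the three commutation identities between $\D_\mu$ and $\square_\g$ acting on scalars, $1$-tensors, and $2$-tensors. The plan is to derive these as purely algebraic consequences of the Ricci identity, proceeding by increasing tensor rank so that each case can feed into the next.

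\textbf{Scalars.} First I would establish $[\D_\mu,\square_\g]\phi = -\R_{\mu\a}\D^\a\phi$. Starting from $\square_\g \D_\mu \phi = \D^\a\D_\a\D_\mu\phi$, I commute $\D_\a$ past $\D_\mu$ on the innermost pair using the Ricci identity on the $1$-form $\D\phi$, namely $\D_\a\D_\mu(\D\phi)_\de - \D_\mu\D_\a(\D\phi)_\de = -{\R_{\a\mu\de}}^{\ep}(\D\phi)_\ep$, then trace with $\g^{a\mu}$-type contractions; since $\D_\a\D_\mu\phi$ is symmetric in $\a,\mu$ (torsion-free), the first rewriting is immediate, and after contracting the Riemann tensor collapses to the Ricci tensor. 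This is routine.

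\textbf{$1$-tensors and $2$-tensors.} For $[\D_\mu,\square_\g]X^\mu$ I would write $\square_\g X^\mu = \D^\a\D_\a X^\mu$, commute to move a $\D_\a$ to the outside, and then use the Ricci identity twice — once to commute $\D_\a$ and $\D_\mu$, once to deal with the resulting $\D\D X$ terms — keeping careful track of the contracted second Bianchi identity $\D^\mu \R_{\mu\a} = \tfrac12 \D_\a \R$ to recognize the scalar-curvature term $\D_\a\R - \D^\mu\R_{\mu\a}$. The $2$-tensor case ${\Psi^\mu}_\de$ is structurally the same but bookkeeping-heavy: commuting derivatives past a rank-$2$ object produces two Riemann contractions, one on each index of $\Psi$; one of these keeps a free Riemann index (giving the $\R_{\mu\a\de\ep}\D^\a\Psi^{\mu\ep}$ term), the other, together with traces, produces the Ricci terms $\R_{\a\ep}\D^\a{\Psi^\ep}_\de$, $\D^\a\R_{\a\ep}{\Psi^\ep}_\de$, and a Bianchi-type combination $\D_\ep\R_{\de\mu} - \D_\de\R_{\mu\ep}$ acting on $\Psi^{\mu\ep}$. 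I expect the main obstacle to be purely notational: correctly symmetrizing/antisymmetrizing the index pairs so the Riemann contractions combine into exactly the stated Ricci-plus-derivative-of-Ricci terms, and invoking the second Bianchi identity in the right slot. There is no conceptual difficulty — the whole lemma is an exercise in the Ricci identity plus the contracted Bianchi identity $\D^\mu\R_{\mu\a} = \tfrac12\D_\a\R$.

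Once Lemma~\ref{general-commutator} is in hand, the proof of Proposition~\ref{commutation-KK-square} itself (deferred to the same appendix) proceeds by expanding $\KK(\square_\g\phi) - \square_\g(\KK\phi)$ using $\KK\phi = \D_\mu(K^{\mu\nu}\D_\nu\phi) = K^{\mu\nu}\D_\mu\D_\nu\phi + (\D_\mu K^{\mu\nu})\D_\nu\phi$, applying the three commutators above to the pieces $K^{\mu\nu}$ (a $2$-tensor), $\D_\mu K^{\mu\nu}$ (a $1$-tensor), and the scalar $\square_\g\phi$, then using the Killing equation \eqref{Killing-equation} together with its consequences \eqref{eq:divergence-K} and \eqref{eq:property-Killing-tensor} to kill the terms that involve derivatives of $K$ transverse to the symmetrization, and finally collecting everything into the stated combination of Ricci, $\D\R$, and $\D K$ terms. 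The $\tfrac43$ and $\tfrac23$ coefficients in the final formula should emerge from the symmetrization $\D_{(\mu}K_{\nu\rho)} = 0$ used to trade a "bad" ordering $\D_\mu K_{\nu\rho}$ for $-\tfrac12(\D_\nu K_{\rho\mu} + \D_\rho K_{\mu\nu})$ at the appropriate moments.
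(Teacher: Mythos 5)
Your plan is correct and follows essentially the same route as the paper's appendix proof: apply the Ricci identity to commute $\D_\mu$ past $\D^\a\D_\a$ on objects of increasing rank, contract, and use the (once-contracted) second Bianchi identity $\D^\a\R_{\a\mu\nu\gamma}=\D_\nu\R_{\mu\gamma}-\D_\gamma\R_{\mu\nu}$ to turn divergences of the Riemann tensor into derivatives of Ricci. The only cosmetic difference is that the combination $\D_\a\R-\D^\mu\R_{\mu\a}$ arises directly from tracing that identity rather than from invoking $\D^\mu\R_{\mu\a}=\tfrac12\D_\a\R$, which the paper deliberately leaves unsimplified.
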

\begin{proof} For a scalar $\phi$, we have by definition of Riemann curvature
\beaa
\, [\D_\a, \D_\b] \phi &=&0, \qquad  [\D_\a, \D_\b] \D_\gamma \phi = {\R_{\a\b\gamma}}^\delta \D_\delta \phi.
\eeaa
We therefore compute 
\beaa
\, [\D_\nu, \square_\g]\phi&=&  [\D_\nu, \D^\a \D_\a]\phi=[\D_\nu, \D^\a] \D_\a\phi + \D^\a[\D_\nu,  \D_\a]\phi=\g^{\a\mu}{\R_{\nu\mu \a}}^\de \D_\de\phi = - {\R_{\nu}}^\de \D_\de \phi.
\eeaa
For a 1-tensor $X$ we have
\beaa
\, [\D_\a, \D_\b] X_\gamma  &=& {\R_{\a\b\gamma\epsilon }} X^\epsilon, \qquad  [\D_\a, \D_\b] \D_\delta X_\gamma  = {\R_{\a\b\delta \epsilon }} \D^\epsilon X_\gamma+ {\R_{\a\b\gamma\epsilon }} \D_\delta X^\epsilon.
\eeaa
We then compute
\beaa
\, [\D_\mu, \square_\g ]X_\b &=&  [\D_\mu, \D^\a \D_\a]X_\b =[\D_\mu, \D^\a] \D_\a X_\b+ \D^\a ([\D_\mu, \D_\a]X_\b )\\
&=&\g^{\a\ze} ({\R_{\mu\ze\a}}^\ep \D_\ep X_\b+ {\R_{\mu\ze\b}}^\ep \D_\a  X_\ep )+ \D^\a ({\R_{\mu\a\b}}^\ep X_\ep )\\
&=& -{\R_{\mu}}^\ep \D_\ep X_\b +{{{\R_{\mu}}^\a}_{\b}}^\ep \D_\a  X_\ep + \D^\a {\R_{\mu\a\b}}^\ep X_\ep + {\R_{\mu\a\b}}^\ep \D^\a X_\ep .
\eeaa
Using the second Bianchi identity $\D^\a\R_{\a\mu\nu\gamma}=\D_\nu \R_{\mu\gamma}-\D_\gamma \R_{\mu\nu}$, we obtain
\beaa
\, [\D_\mu, \square_\g ]X_\b &=&2{\R_{\mu\a\b}}^\ep \D^\a X_\ep  -{\R_{\mu}}^\ep \D_\ep X_\b  + \left(\D^\ep \R_{\mu\b}-\D_\b {\R_\mu}^\ep \right) X_\ep. 
\eeaa
By contracting with $\g^{\mu\beta}$ we finally obtain
\beaa
\, [\D_\mu, \square_\g ]X^\mu &=&2{\R_{\a}}^\ep \D^\a X_\ep  -{\R_{\mu}}^\ep \D_\ep X^\mu  + \left(\D^\ep {\R_{\mu}}^{\mu}-\D^\mu {\R_\mu}^\ep \right) X_\ep \\
&=&{\R_{\mu}}^\ep \D^\mu X_\ep + \left(\D^\ep \R -\D^\mu {\R_\mu}^\ep \right) X_\ep .
\eeaa
For a 2-tensor $\Psi$ we have
\beaa
\,[ \D_\nu , \D_\a ] \Psi_{\ga\de} &=&{ \R_{\nu\a \ga} }^\ep \Psi_{\ep\de}+{ \R_{\nu\a\de }}^\ep\Psi_{\gamma\ep} \\
\, [\D_\nu, \D_\ze ] \D_\a\Psi_{\ga\de}&=& { \R_{\nu\ze \a} }^\ep \D_\ep \Psi_{\ga\de}+{ \R_{\nu\ze \gamma} }^\ep \D_\a \Psi_{\ep\de}+{ \R_{\nu\ze \de} }^\ep \D_\a \Psi_{\gamma\ep}.
\eeaa
We then compute
\beaa
\, [\D_\nu, \square_\g]\Psi_{\ga\de}&=& \, [\D_\nu, \D^\a \D_\a]\Psi_{\ga\de}= [\D_\nu, \D^\a ] \D_\a\Psi_{\ga\de}+ \D^\a [\D_\nu,  \D_\a]\Psi_{\ga\de}\\
&=&\g^{\a\ze} \left( { \R_{\nu\ze \a} }^\ep \D_\ep \Psi_{\ga\de}+{ \R_{\nu\ze \gamma} }^\ep \D_\a \Psi_{\ep\de}+{ \R_{\nu\ze \de} }^\ep \D_\a \Psi_{\gamma\ep} \right)+ \D^\a({ \R_{\nu\a \ga} }^\ep \Psi_{\ep\de}+{ \R_{\nu\a\de }}^\ep\Psi_{\gamma\ep})\\
&=& -{ \R_{\nu} }^\ep \D_\ep \Psi_{\ga\de}+{{{ \R_{\nu}}^\a}_{ \gamma} }^\ep \D_\a \Psi_{\ep\de}+{{{ \R_{\nu}}^\a}_{ \de} }^\ep \D_\a \Psi_{\gamma\ep} \\
&&+ \D^\a{ \R_{\nu\a \ga} }^\ep \Psi_{\ep\de}+{ \R_{\nu\a \ga} }^\ep \D^\a\Psi_{\ep\de}+\D^\a{ \R_{\nu\a\de }}^\ep\Psi_{\gamma\ep}+{ \R_{\nu\a\de }}^\ep \D^\a \Psi_{\gamma\ep}\\
&=&2{{{ \R_{\nu}}^\a}_{ \gamma} }^\ep \D_\a \Psi_{\ep\de}+2{{{ \R_{\nu}}^\a}_{ \de} }^\ep \D_\a \Psi_{\gamma\ep}  -{ \R_{\nu} }^\ep \D_\ep \Psi_{\ga\de}+ \D^\a{ \R_{\nu\a \ga} }^\ep \Psi_{\ep\de}+\D^\a{ \R_{\nu\a\de }}^\ep\Psi_{\gamma\ep}
\eeaa
By contracting with $\g^{\gamma\nu}$ we finally obtain
\beaa
\, [\D_\nu, \square_\g]{\Psi^\nu}_{\de}&=&2 {\R_{\nu}}^{\a\nu\ep} \D_\a \Psi_{\ep\de}+2{{{ \R_{\nu}}^\a}_{ \de} }^\ep \D_\a {\Psi^{\nu}}_{\ep}  -{ \R_{\nu} }^\ep \D_\ep {\Psi^\nu}_{\de}+ \D^\a{ \R_{\nu\a} }^{\nu\ep} \Psi_{\ep\de}+\D^\a{ \R_{\nu\a\de }}^\ep{\Psi^\nu}_{\ep}\\
&=&2{{{ \R_{\nu}}^\a}_{ \de} }^\ep \D_\a {\Psi^{\nu}}_{\ep} + \R^{\a\ep} \D_\a \Psi_{\ep\de} + \D^\a{ \R_{\a} }^{\ep} \Psi_{\ep\de}+\D^\a{ \R_{\nu\a\de }}^\ep{\Psi^\nu}_{\ep}
\eeaa
Again using the Bianchi identity to write ${\D^\a\R_{\nu\a\delta}}^{\epsilon}=\D^\epsilon \R_{\delta\nu}-\D_\delta {\R_{\nu}}^{\epsilon}$, we prove the Lemma.
\end{proof}

We use the above Lemma to compute the commutator between the Killing differential operator $\KK$ as defined by \eqref{definition-KK} and the D'Alembertian operator $\square_\g$.

 We compute for a scalar function $\phi$:
\beaa
[\KK, \square_\g]\phi &=& [ \D_{\mu} K^{\mu\nu} \D_\nu, \square_\g]\phi=\D_{\mu} [K^{\mu\nu} \D_\nu, \square_\g]\phi+  [ \D_{\mu}, \square_\g] K^{\mu\nu} \D_\nu\phi\\
&=&\D_{\mu}( K^{\mu\nu} \D_\nu \square_\g \phi- \square_\g (K^{\mu\nu} \D_\nu \phi))+  [ \D_{\mu}, \square_\g] K^{\mu\nu} \D_\nu\phi.
\eeaa
Writing $\square_\g(K^{\mu\nu} \D_\nu \phi)= (\square_\g K^{\mu\nu}) \D_\nu \phi+2 \D^\a K^{\mu\nu} \D_\a \D_\nu \phi+ K^{\mu\nu} \square_\g \D_\nu \phi $, we have
\beaa
[\KK, \square_\g]\phi &=&\D_{\mu}( K^{\mu\nu}[ \D_\nu,  \square_\g] \phi- \square_\g K^{\mu\nu} \D_\nu \phi-2 \D^\a K^{\mu\nu} \D_\a \D_\nu \phi)+  [ \D_{\mu}, \square_k] K^{\mu\nu} \D_\nu\phi.
\eeaa
Applying Lemma \ref{general-commutator} to $\phi$ and to $X^\mu=K^{\mu\nu} \D_\nu \phi$, we have
\beaa
\,  K^{\mu\nu}[ \D_\nu,  \square_\g] \phi&=&- K^{\mu\nu}  {\R_{\nu}}^\ep \D_\ep \phi, \\
 \, [\D_\mu, \square_\g ]K^{\mu\nu} \D_\nu \phi &=&{\R_{\mu}}^\ep \D^\mu (K_{\ep\nu} \D^\nu \phi) + \left(\D^\ep \R -\D^\mu {\R_\mu}^\ep \right) K_{\ep\nu} \D^\nu \phi.
 \eeaa
We therefore obtain
\beaa
[\KK, \square_\g]\phi &=&\D_{\mu}( - \square_\g K^{\mu\nu} \D_\nu \phi-2 \D^\a K^{\mu\nu} \D_\a \D_\nu \phi)-\D_{\mu}( K^{\mu\nu}  {\R_{\nu}}^\ep \D_\ep \phi)\\
&&+{\R_{\mu}}^\ep \D^\mu (K_{\ep\nu} \D^\nu \phi) + \left(\D^\ep \R -\D^\mu {\R_\mu}^\ep \right) K_{\ep\nu} \D^\nu \phi.
\eeaa
From \eqref{Killing-equation}, i.e. $\D_{\a} K^{\mu\nu}+\D^{\mu} {K^{\nu}}_{\a}+\D^{\nu} {K^{\mu} }_{\a}= 0$, 
we have
\beaa
(\D_\a K^{\mu\nu}) \D^\a \D_\nu \phi&=&  -\D^{\mu} {K^{\nu}}_{\a} \D^\a \D_\nu \phi-\D^{\nu} {K^{\mu} }_{\a} \D^\a \D_{\nu} \phi= -\D^{\mu} {K^{\nu}}_{\a} \D^\a \D_\nu \phi-\D_{\a} K^{\mu\nu}  \D^\a \D_\nu \phi,
\eeaa
which gives
\beaa
(\D_\a K^{\mu\nu}) \D^\a \D_\nu \phi&=& - \frac 1 2 \D^{\mu} {K^{\nu}}_{\a} \D^\a \D_\nu \phi.
\eeaa
This implies
\beaa
[\KK, \square_\g]\phi &=&\D_{\mu}( - \square_\g K^{\mu\nu} \D_\nu \phi+ \D^{\mu} {K^{\nu}}_{\a} \D^\a \D_\nu \phi)-\D_{\mu}( K^{\mu\nu}  {\R_{\nu}}^\ep \D_\ep \phi)\\
&&+{\R_{\mu}}^\ep \D^\mu (K_{\ep\nu} \D^\nu \phi) + \left(\D^\ep \R -\D^\mu {\R_\mu}^\ep \right) K_{\ep\nu} \D^\nu \phi.
\eeaa
Observe that in expanding the $\D_\mu$ derivative there is a cancellation of the term $(\square_\g K^{\mu\nu})\D_\mu\D_\nu \phi$:
\bea\label{eq-intermediate-3-proof}
[\KK, \square_\g]\phi &=& - \D_{\mu}(\square_\g K^{\mu\nu}) \D_\nu \phi- \square_\g K^{\mu\nu} \D_{\mu}\D_\nu \phi+\D_{\mu} \D^{\mu} {K^{\nu}}_{\a}  \D_\nu \D^\a\phi+ \D^{\mu} {K^{\nu}}_{\a} \D_{\mu}\D^\a \D_\nu \phi \nonumber\\
&&-\D_{\mu}( K^{\mu\nu}  {\R_{\nu}}^\ep \D_\ep \phi)+{\R_{\mu}}^\ep \D^\mu (K_{\ep\nu} \D^\nu \phi) + \left(\D^\ep \R -\D^\mu {\R_\mu}^\ep \right) K_{\ep\nu} \D^\nu \phi \nonumber\\
&=& - \D_{\mu}(\square_\g K^{\mu\nu}) \D_\nu \phi+ \D^{\mu} {K^{\nu}}_{\a} \D_{\mu}\D^\a \D_\nu \phi-\D_{\mu}( K^{\mu\nu}  {\R_{\nu}}^\ep \D_\ep \phi)\\
&&+{\R_{\mu}}^\ep \D^\mu (K_{\ep\nu} \D^\nu \phi) + \left(\D^\ep \R -\D^\mu {\R_\mu}^\ep \right) K_{\ep\nu} \D^\nu \phi. \nonumber
\eea
We now derive an expression for the two terms appearing in the first line above. For the second term we have, using again the Killing equation \eqref{Killing-equation},
\beaa
(\D_\a K^{\mu\nu}) \D^\a \D_{\mu}\D_\nu \phi&=& (-\D^{\mu} {K^{\nu}}_{\a}-\D^{\nu} {K^{\mu} }_{\a}) \D^\a \D_{\mu}\D_\nu \phi= -\D^{\mu} {K^{\nu}}_{\a} \D^\a \D_{\mu}\D_\nu \phi-\D^{\nu} {K^{\mu} }_{\a} \D^\a \D_{\nu}\D_\mu \phi\\
&=&-2\D^{\mu} {K^{\nu}}_{\a} \D^\a \D_{\mu}\D_\nu \phi=-2\D^{\mu} {K^{\nu}}_{\a}( \D_\mu \D^{\a}\D_\nu \phi+[\D^\a, \D_{\mu}]\D_\nu \phi ) \\
&=&-2\D^{\mu} {K^{\nu}}_{\a}( \D_\mu \D^{\a}\D_\nu \phi+{ {\R^\a}_{\mu \nu}}^\delta \D_\delta \phi) 
\eeaa
Observe that the first term on the right hand side is the same as the term on the left hand side. We therefore obtain:
\beaa
(\D^\mu {K^{\nu}}_{\a}) \D_\mu \D^{\a}\D_\nu \phi&=&-\frac 2 3 \D^{\a} {K^{\nu}}_{\mu}{ {\R^\mu}_{\a \nu}}^\delta \D_\delta \phi.
\eeaa
For the first term, starting with $\D_{\mu} K_{\nu\rho}+\D_{\nu} K_{\rho\mu}+\D_{\rho} K_{\mu\nu}=0$
and applying $\D^\mu$ we have
\beaa
0&=&\D^\mu \D_{\mu} K_{\nu\rho}+\D^\mu \D_{\nu} K_{\rho\mu}+\D^\mu \D_{\rho} K_{\mu\nu} \\
&=&\D^\mu \D_{\mu} K_{\nu\rho}+ \D_{\nu} \D^\mu K_{\rho\mu}+ {{\R^{\mu}}_{\nu\rho}}^\ep K_{\ep \mu}+ {{\R}_{\nu}}^\ep K_{\rho \ep}+ \D_{\rho} \D^\mu K_{\mu\nu} + {{\R^{\mu}}_{\rho \nu}}^\ep K_{\mu \ep} + {{\R}_{\rho}}^\ep K_{\ep \nu}.
\eeaa
This gives
\beaa
\D^\a \D_{\a} K_{\mu\nu}&=&- \D_{\nu} \D^\a K_{\mu\a}- \D_{\mu} \D^\a K_{\a\nu} - {{\R^{\a}}_{\nu\mu}}^\ep K_{\ep \a}- {{\R^{\a}}_{\mu \nu}}^\ep K_{\a \ep}-{{\R}_{\nu}}^\ep K_{\mu \ep}-{{\R}_{\mu}}^\ep K_{\ep \nu}.
\eeaa
Using \eqref{eq:property-Killing-tensor}, we can write $\D_\mu\D^\a K_{\a\nu}+\D_\nu\D^\a K_{\a\mu}= 2\D_\mu\D^\a K_{\a\nu}$,
which gives
\beaa
\D^\a \D_{\a} K_{\mu\nu}&=&- 2\D_\mu\D^\a K_{\a\nu} - \left({{\R^{\a}}_{\nu\mu}}^\ep+ {{\R^{\a}}_{\mu \nu}}^\ep \right) K_{\a \ep}-{{\R}_{\nu}}^\ep K_{\mu \ep}-{{\R}_{\mu}}^\ep K_{\ep \nu}.
\eeaa
Applying $\D^\mu$ to the above we have
\beaa
\D^\mu\D^\a \D_{\a} K_{\mu\nu}&=&- 2\D^\mu\D_\mu\D^\a K_{\a\nu} - \left({{\R^{\a}}_{\nu\mu}}^\ep+ {{\R^{\a}}_{\mu \nu}}^\ep \right) \D^\mu K_{\a \ep}\\
&&- \left(\D^\mu{{\R^{\a}}_{\nu\mu}}^\ep+ \D^\mu{{\R^{\a}}_{\mu \nu}}^\ep \right)  K_{\a \ep}- \D^\mu \left({{\R}_{\nu}}^\ep K_{\mu \ep}+{{\R}_{\mu}}^\ep K_{\ep \nu}\right),
\eeaa
which can be written as 
\begin{equation}
    \label{eq-intermediate-proof}
\begin{split}
\D^\mu\D^\a \D_{\a} K_{\mu\nu}+2\D^\mu\D_\mu\D^\a K_{\a\nu}&= - \left({{\R^{\a}}_{\nu\mu}}^\ep+ {{\R^{\a}}_{\mu \nu}}^\ep \right) \D^\mu K_{\a \ep}\\
&- \left(\D^\mu{{\R^{\a}}_{\nu\mu}}^\ep+ \D^\mu{{\R^{\a}}_{\mu \nu}}^\ep \right)  K_{\a \ep}- \D^\mu \left({{\R}_{\nu}}^\ep K_{\mu \ep}+{{\R}_{\mu}}^\ep K_{\ep \nu}\right).
\end{split}
\end{equation}
The left hand side of \eqref{eq-intermediate-proof} is given by
\beaa
\D^\mu\D^\a \D_{\a} K_{\mu\nu}+2\D^\mu\D_\mu\D^\a K_{\a\nu}&=&3\D^\a \D_{\a}  \D^\mu K_{\mu\nu}+[\D^\mu, \D^\a \D_{\a}] K_{\mu\nu}.
\eeaa
Using  Lemma \ref{general-commutator} to write
\beaa
  \, [\D^\mu, \square_\g]K_{\mu\nu}&=&2{{\R_{\a}}^{\mu}}_{\nu\ep} \D_\mu K^{\a\ep } + \R^{\a\ep} \D_\a K_{\ep\nu} + \D^\a{ \R_{\a} }^{\ep} K_{\ep\nu}+\D^\a{ \R_{\mu\a\nu }}^\ep{K^\mu}_{\ep},
\eeaa
we have 
\begin{equation}
\label{eq-intermediate-2-proof}
\begin{split}
\D^\mu\D^\a \D_{\a} K_{\mu\nu}+2\D^\mu\D_\mu\D^\a K_{\a\nu}&=3\D^\a \D_{\a}  \D^\mu K_{\mu\nu}+2{{\R_{\a}}^{\mu}}_{\nu\ep} \D_\mu K^{\a\ep }\\
&+ \R^{\a\ep} \D_\a K_{\ep\nu} + \D^\a{ \R_{\a} }^{\ep} K_{\ep\nu}+\D^\a{ \R_{\mu\a\nu }}^\ep{K^\mu}_{\ep}.
\end{split}
\end{equation}
Plugging in the above as the left hand side in \eqref{eq-intermediate-proof}, we arrive to
\beaa
\D^\a \D_{\a}  \D^\mu K_{\mu\nu}&=& - \left(\frac 1 3 {{\R^{\a}}_{\nu\mu}}^\ep+ {{\R^{\a}}_{\mu \nu}}^\ep \right) \D^\mu K_{\a \ep}+\frac 1 3 \Big[- \left(\D^\mu{{\R^{\a}}_{\nu\mu}}^\ep+ \D^\mu{{\R^{\a}}_{\mu \nu}}^\ep \right)  K_{\a \ep}\\
&&- \D^\mu \left({{\R}_{\nu}}^\ep K_{\mu \ep}+{{\R}_{\mu}}^\ep K_{\ep \nu}\right)- \R^{\a\ep} \D_\a K_{\ep\nu} - \D^\a{ \R_{\a} }^{\ep} K_{\ep\nu}-\D^\a{ \R_{\mu\a\nu }}^\ep{K^\mu}_{\ep} \Big].
\eeaa
Writing again from \eqref{eq-intermediate-2-proof},
\beaa
\D^\mu\D^\a \D_{\a} K_{\mu\nu}&=&\D^\a \D_{\a}  \D^\mu K_{\mu\nu}+2{{\R_{\a}}^{\mu}}_{\nu\ep} \D_\mu K^{\a\ep }+ \R^{\a\ep} \D_\a K_{\ep\nu} + \D^\a{ \R_{\a} }^{\ep} K_{\ep\nu}+\D^\a{ \R_{\mu\a\nu }}^\ep{K^\mu}_{\ep},
\eeaa
we obtain
\beaa
\D^\mu \square_\g K_{\mu\nu}&=&- \left(\frac 1 3 {{\R^{\a}}_{\nu\mu}}^\ep- {{\R^{\a}}_{\mu \nu}}^\ep \right) \D^\mu K_{\a \ep}+\frac 1 3 \Big[- \left(\D^\mu{{\R^{\a}}_{\nu\mu}}^\ep+ \D^\mu{{\R^{\a}}_{\mu \nu}}^\ep \right)  K_{\a \ep}\\
&&- \D^\mu \left({{\R}_{\nu}}^\ep K_{\mu \ep}+{{\R}_{\mu}}^\ep K_{\ep \nu}\right)+2 \R^{\a\ep} \D_\a K_{\ep\nu} +2 \D^\a{ \R_{\a} }^{\ep} K_{\ep\nu}+2\D^\a{ \R_{\mu\a\nu }}^\ep{K^\mu}_{\ep} \Big].
\eeaa
Plugging the above in \eqref{eq-intermediate-3-proof}, we obtain
\begin{equation}\label{eq:intermediate-4-proof}
\begin{split}
[\KK, \square_\g]\phi &= \D^\mu K_{\a \ep}  \left(\frac 1 3 {{\R^{\a\de}}_{\mu}}^\ep- {{\R^{\a}}_{\mu }}^{\de\ep} \right) \D_\de \phi-\frac 2 3 \D^{\mu} {K^{\ep}}_{\a}{ {\R^\a}_{\mu \ep}}^\delta \D_\delta \phi \\
& - \frac 1 3 \Big[- \left(\D^\mu{{\R^{\a}}_{\nu\mu}}^\ep+ \D^\mu{{\R^{\a}}_{\mu \nu}}^\ep \right)  K_{\a \ep}- \D^\mu \left({{\R}_{\nu}}^\ep K_{\mu \ep}+{{\R}_{\mu}}^\ep K_{\ep \nu}\right)\\
&+2 \R^{\a\ep} \D_\a K_{\ep\nu} +2 \D^\a{ \R_{\a} }^{\ep} K_{\ep\nu}+2\D^\a{ \R_{\mu\a\nu }}^\ep{K^\mu}_{\ep} \Big] \D_\nu \phi\\
&-\D_{\mu}( K^{\mu\nu}  {\R_{\nu}}^\ep \D_\ep \phi)+{\R_{\mu}}^\ep \D^\mu (K_{\ep\nu} \D^\nu \phi) + \left(\D^\ep \R -\D^\mu {\R_\mu}^\ep \right) K_{\ep\nu} \D^\nu \phi.
\end{split}
\end{equation}
We now show that the first line of \eqref{eq:intermediate-4-proof} cancels out. 
Using that ${ {\R^\a}_{\mu \ep}}^\delta=-{{ {\R^\a}_{\mu}}^\delta}_{\ep}$ we have
\beaa
&& \D^\mu K_{\a \ep}  \left(\frac 1 3 {{\R^{\a\de}}_{\mu}}^\ep- {{\R^{\a}}_{\mu }}^{\de\ep} \right) \D_\de \phi-\frac 2 3 \D^{\mu} {K^{\ep}}_{\a}{ {\R^\a}_{\mu \ep}}^\delta \D_\delta \phi\\
&=& \frac 1 3 \D^\mu K_{\a \ep}  \left( {{\R^{\a\de}}_{\mu}}^\ep- {{\R^{\a}}_{\mu }}^{\de\ep} \right) \D_\de \phi=\frac 1 3 (-\D_{\ep} {K_{\a}}^{\mu}{{\R^{\a\de}}_{\mu}}^\ep+\D_{\a} {K^{\mu}}_{\ep}{{\R^{\a}}_{\mu }}^{\de\ep}) \D_\de \phi=0
\eeaa
where we wrote once again $\D_{\mu} K_{\ep\a}= -\D_{\ep} K_{\a\mu}-\D_{\a} K_{\mu\ep} $,
and observe that the first term is symmetric in $\a\mu$ while the second Riemann tensor term in antisymmetric in $\a\mu$, and the second term in symmetric in $\mu\ep$ while the first Riemann tensor is antisymmetric in $\mu\nu$. 
Writing $\D^{\nu} {K_{\a}}^{\mu} {{\R}^{\a \de}}_{\mu\nu}=\D^{\nu} {K_{\a}}^{\mu} {{\R}_{\mu\nu}}^{\a \de}=\D^{\a} {K_{\nu}}^{\mu} {{\R}_{\mu\a}}^{\nu \de}=-\D^{\a} {K_{\nu}}^{\mu} {{\R}_{\a\mu}}^{\nu \de}$ we obtain the cancellation of the first line.

We now simplify the remaining three lines on the RHS of \eqref{eq:intermediate-4-proof}. We have
\beaa
RHS &=& - \frac 1 3 \Big[- \left(\D^\mu{{\R^{\a}}_{\nu\mu}}^\ep+ \D^\mu{{\R^{\a}}_{\mu \nu}}^\ep \right)  K_{\a \ep}- \D^\mu \left({{\R}_{\nu}}^\ep K_{\mu \ep}+{{\R}_{\mu}}^\ep K_{\ep \nu}\right)+2 \R^{\a\ep} \D_\a K_{\ep\nu} +2 \D^\a{ \R_{\a} }^{\ep} K_{\ep\nu}\\
&&+2\D^\a{ \R_{\mu\a\nu }}^\ep{K^\mu}_{\ep} \Big] \D_\nu \phi-\D_{\mu} K^{\mu\nu}  {\R_{\nu}}^\ep \D_\ep \phi- K^{\mu\nu} \D_{\mu} {\R_{\nu}}^\ep \D_\ep \phi- K^{\mu\nu}  {\R_{\nu}}^\ep \D_{\mu}\D_\ep \phi\\
&&+{\R_{\mu}}^\ep \D^\mu K_{\ep\nu} \D^\nu \phi +{\R_{\mu}}^\ep K_{\ep\nu}\D^\mu  \D^\nu \phi + \left(\D^\ep \R -\D^\mu {\R_\mu}^\ep \right) K_{\ep\nu} \D^\nu \phi\\
&=& \Big[ \frac 1 3 \left(\D^\mu{{\R^{\a}}_{\nu\mu}}^\ep+ \D^\mu{{\R^{\a}}_{\mu \nu}}^\ep \right)  K_{\a \ep}+  \frac 1 3 \D^\mu \left({{\R}_{\nu}}^\ep K_{\mu \ep}+{{\R}_{\mu}}^\ep K_{\ep \nu}\right)+{\R_{\mu}}^\ep \D^\mu {K_{\ep}}^{\nu}- \frac 2 3  \R^{\a\ep} \D_\a K_{\ep\nu} \\
&&- \frac 2 3  \D^\a{ \R_{\a} }^{\ep} K_{\ep\nu}- \frac 2 3 \D^\a{ \R_{\mu\a\nu }}^\ep{K^\mu}_{\ep} -\D_{\mu} K^{\mu\a}  {\R_{\a}}^\nu- K^{\mu\a} \D_{\mu} {\R_{\a}}^\nu+ \left(\D^\ep \R -\D^\mu {\R_\mu}^\ep \right) {K_{\ep}}^{\nu} \Big] \D_\nu \phi.
\eeaa
This gives
\beaa
RHS &=& \Big[ \frac 1 3 \left(\D^\mu{{\R^{\a}}_{\nu\mu}}^\ep- \D^\mu{{\R^{\a}}_{\mu \nu}}^\ep -2 \D^\a {{\R}_{\nu}}^\ep\right)  K_{\a \ep}+\big( \D^\a \R-\frac 4 3  \D^\mu {{\R}_{\mu}}^\a \big) K_{\a\nu}\\
&&-  \frac 2 3 {{\R}_{\nu}}^\ep  \D^\mu K_{\mu \ep}+ \frac 2 3{{\R}_{\mu}}^\ep  \D^\mu  K_{\ep \nu} \Big] \D^\nu \phi.
\eeaa
By writing  $\D^\mu{{\R^{\a}}_{\mu \nu}}^\ep=-\D_\nu \R^{\a\ep}+\D^\ep {\R^\a}_{\nu}$ and
$\D^\mu{{\R^\a}_{\nu\mu}}^{\ep}=\D^\a {\R^\ep}_{\nu}-\D_\nu \R^{\ep\a}$ we have
\beaa
RHS &=& \Big[- \frac 2 3 \D^\a {\R^\ep}_{\nu}   K_{\a \ep}+\left( \D^\a \R-\frac 4 3  \D^\mu {{\R}_{\mu}}^\a \right) K_{\a\nu}-  \frac 2 3 {{\R}_{\nu}}^\ep  \D^\mu K_{\mu \ep}+ \frac 2 3{{\R}_{\mu}}^\ep  \D^\mu  K_{\ep \nu} \Big] \D^\nu \phi,
\eeaa
as stated.

 \small

\end{document}